\documentclass[12pt]{amsart}
\usepackage{amsfonts,amssymb,graphicx,mystyle, xcolor, appendix}{\tiny {\normalsize }}
\usepackage{tikz, pstricks, egameps}

 \let\backslash=\setminus \let\ge=\geqslant \let\le=\leqslant

\def\a{\alpha} \def\b{\beta} \def\d{\delta} \def\e{\varepsilon}
   \def\g{\gamma} 
\def\l{\lambda}   \def\s{\sigma} 
 \def\t{\tau} \def\z{\zeta}

\def\D{\Delta} \def\G{\Gamma} \def\S{\Sigma}{\tiny }
 \def\E{\mathcal{E}}  \def\K{\mathcal{K}}

\def\Q{\mathcal{Q}}
\def\L{\mathcal{L}} \def\M{\mathcal{M}} 
\def\T{\mathcal{T}}  \def\P{\mathcal{P}}
\def\calQ{\mathcal{Q}} \def\GG{\mathcal{G}}  
 \def\Re{\mathbb{R}} 
\def\BR{\text{BR}}
\def\bbP{\mathbb{P}}

\def\X {\mathcal X}

\def\bbB{\mathbb{B}}

\def\bbBR{\mathbb{BR}}
\def\bbF{\mathbb{F}}

\def\sn{\scriptscriptstyle \mathcal{N}}
\def\so{\scriptscriptstyle O}
\def\Z{\mathcal{Z}}

\def\proj{\textrm{proj}}

\begin{document}\openup 1\jot

\title[Axiomatic Equilibrium Selection]{Axiomatic Equilibrium Selection: \\ The Case of Generic Extensive-Form Games}
\author[S.~Govindan]{Srihari Govindan}
\address{Department of Economics, University of Rochester, NY 14627, USA.}
\email{s.govindan@rochester.edu}
\author[R.~Wilson]{Robert Wilson}
\address{Stanford Business School, Stanford, CA 94305, USA.}
\email{rwilson@stanford.edu}

\begin{abstract}
	A solution concept that is a refinement of Nash equilibria selects for each finite game a nonempty collection of closed and connected subsets of Nash equilibria as solutions.  We impose three axioms for such solution concepts.  The axiom of backward induction requires each solution to contain a quasi-perfect equilibrium.  Two invariance axioms posit that solutions of a game are the same as those of a game obtained by the addition of strategically irrelevant strategies and players. Stability \cite{M1989, M1991} satisfies these axioms; and any solution concept that satisfies them must, for generic extensive-form games, select from among its stable outcomes.  A strengthening of the two invariance axioms provides an analogous axiomatization of components of equilibria with a nonzero index. 	
\end{abstract}

%\date{}

\thanks{We are grateful to Paulo Barelli for his help with this project. We would also like to thank Lucas Pahl and Carlos Pimienta for their comments.}

\maketitle

\section{Introduction}

Kohlberg and Mertens (\cite{KM1986}, henceforth KM) argued that refinements of Nash equilibria should derive from axioms adapted from single-person decision theory, and they listed properties that the axioms should imply.  The solution concept they defined, stability, satisfied most but not all of their properties.  Mertens (\cite{M1989, M1991}) showed that his reformulated definition of stability satisfied these properties, and more. Neither of these papers provided an axiomatic approach (KM did	 ``not feel ready for such an [axiomatic] approach"). In Govindan and Wilson (\cite{GW2012}, henceforth GW) we provided a first step in implementing KM's program, in generic two-player games. In this paper we take another step by allowing for any finite number of players. Specifically, we show that, in generic $N$-player games, axioms capturing backward induction and invariance of solutions to the addition of irrelevant strategies or players characterize {\it stable outcomes}; and  stronger versions of the invariance axioms characterize {\it essential outcomes}.\footnote{As explained in our discussion of the invariance axioms, such a strengthening reveals a shortcoming of essentiality in comparison to stability.}

A solution concept for finite games that refines the set of Nash equilibria selects for each finite normal-form game a collection of subsets of Nash equilibria, called solutions. As KM showed, solutions must be subsets if one wants to ensure that every game has a solution where players play undominated strategies. As in GW, we define solutions to be nonempty collections of closed and connected subsets of Nash equilibria. Connectedness forces a solution to lie within a single component of the Nash equilibria and, thus, is necessary for a selection among equilibrium outcomes and payoffs. Asking for solutions to be closed subsets, while a purely topological property, seems to be a mild technical requirement, which, actually, is not needed for our results concerning outcomes. 

We impose  three axioms for a solution concept. The first axiom, Axiom B, captures the logic of backward induction: every solution of the normal form of an extensive-form game should contain a mixed strategy profile that is equivalent to a quasi-perfect equilibrium of the extensive form.   

The other two axioms concern the property of invariance for solution concepts. We apply the principle that a subset of equilibria selected by a refinement should inherit the invariance properties of Nash equilibria. We invoke two invariance properties. Each is an instance of the general decision-theoretic requirement that there are no framing effects (Tversky and Kahneman \cite{TK1981}); that is, equilibria are not affected by embedding the given game in a larger one that does not alter the original players' strategies and payoffs.\footnote{Framing effects refers to the empirical phenomenon that some people react differently to the same information depending on how it is presented.} 
For each of these invariance properties, we invoke an axiom requiring that each player's preference ranking over his actions is unaffected by the embedding.

The first invariance axiom (Axiom D$^*$) considers the addition of weakly dominated strategies that are strategically irrelevant---see Definition \ref{def saids}---and requires the solutions of a game be the same as those of a game obtained by adding such strategies. The second invariance axiom (Axiom I$^*$) considers the addition of players whose strategies are strategically irrelevant---see Definition \ref{def strong embedding}---and requires a similar invariance of solutions.

Stable sets as defined by Mertens \cite{M1989, M1991} satisfies all three axioms.  Our main result, Theorem \ref{thm main 1}, concerns the implications of the axioms for an extensive-form game with perfect recall and generic payoffs.  Such games have finitely many equilibrium outcomes, i.e.~probability distributions over the terminal nodes induced by the equilibria. The fact that solutions must be connected implies that there is a unique equilibrium outcome associated with each solution.  Theorem \ref{thm main 1} shows that if a solution concept satisfies our axioms, then every solution induces a stable outcome, i.e.~an outcome associated with a stable set.

How does Theorem \ref{thm main 1} compare with the result in GW?  That paper, too, imposed three axioms: Axiom B, a weaker version of Axiom I$^*$, and Admissibility, which requires each solution to consist only of undominated strategy profiles.  Their result, however, is a little stronger: any solution concept that satisfies those three axioms must select from among the stable sets. If we eliminate the admissibility axiom from their list, then we get Theorem \ref{thm main 1} for two-player games.  In terms of solutions themselves, as opposed to the outcomes they induce, the best we can say is that every solution  of a solution concept that satisfies the three axioms of this paper must contain a stable set.  Alternatively, if we impose a minimality axiom, that solutions are as ``small'' as possible, then for any generic game we must get a selection of its stable sets.\footnote{In the two-player case, admissibility can be replaced with minimality to produce the same result.}

One could insist on a stronger invariance principle by considering embeddings where the best-reply correspondence is preserved on the restriction of the larger game to the original game and allowing it to be arbitrary otherwise---whereas, by contrast, our original invariance principle would ask for the best-reply correspondences of the two games to be related globally over the strategy spaces.  Such an expansive notion (see Definitions \ref{def waids} and \ref{def weak embedding}) results in stronger versions of the invariance axioms, called Axioms D and I. Replacing Axioms D$^*$ and I$^*$ with these axioms produces an analogous characterization of essentiality, which is the solution concept that selects components of equilibria with nonzero indices---see Theorem \ref{thm main 2}. 

A common feature that both essentiality and stability share is that they insist on an algebraic-topological version of robustness---essentiality in (co)homology---which is much stronger than the usual $\epsilon$-$\delta$ version. None of the axioms calls on a perturbation in its requirement;\footnote{One could demur that quasi-perfection is defined using a sequence of perturbations of strategies, but there is an equivalent LPS-based definition of it---cf. Section 3 for more on this point.} and yet the combination of the axioms imply this strong robustness property.  The proofs of our two main theorems, of course, lay out the logic behind this implication. Here, we offer a heuristic explanation.  As Mertens \cite{M1991} points out in his concluding remarks in Section 7, essentiality allows us to solve many fixed point problems, the existence of a backward induction equilibrium being an exemplar of such a problem. Our finding, both in GW and here, is that more than just being exemplary, backward induction gives us the canonical problem to solve.  The invariance axioms generate a whole family of such problems (one for each larger game we obtain) that is so rich that its solution represents a sufficient condition for essentiality.   

The main point of divergence between essentiality and stability is the difference in their domains of perturbations.  Essentiality admits all possible payoff perturbations of the game, while stability allows strategy perturbations, which correspond to a much smaller class of payoff perturbations. Comparisons of their respective invariance axioms allows us to recast this technical difference between the concepts in a decision-theoretic framework. Intuitively speaking, when we enlarge a game $G$ to a larger game $\bar G$ with more strategies or players, stability requires the solutions of the two games to be the same when there is a one-to-one correspondence between the beliefs---both on- and off-the-equilibrium-path---and the preference orderings of actions of the original players in the two games, whereas essentiality does not insist on this relation ``away'' from the face of the strategy space of $\bar G$ that corresponds to the original game $G$.    

The rest of the paper is organized as follows.  Section 2 lays out the notation that we use and also the definitions of the key concepts. Sections 3 through 5 describe the axioms.  Section 6 states the two main results of the paper, for stability and essentiality. These results are proved in Sections 9 and 11, respectively. Sections 7 and 8 do the required preparatory work for Section 9: the former gives a characterization of consistent beliefs for extensive-form games, which is used in the latter to give a characterization of stability for generic extensive-form games.  Section 10 provides a similar characterization of essentiality for generic games, which is used in Section 11. Finally, Section 12 offers a few concluding remarks concerning the question of where we go from here.

\section{Finite Games}

We denote a finite $N$-player game in normal form as follows.  The set of players is $\N = \{\, 1, \ldots, N \, \}$. For each player $n$, his set of pure strategies is a finite set $S_n$.   The payoff function is $G: S \to \Re^{\N}$, where $S = \prod_n S_n$. We refer to such a game simply by its payoff function $G$. We use $\S_n$ to denote the set of $n$'s mixed strategies, and $\S = \prod_n \S_n$. 

We denote by $\G$ the game tree of a game in extensive form with perfect recall. For each player $n$, $H_n$ is the collection of his information sets. For each $n$ and $h_n \in H_n$,  $A_n(h_n)$ is the set of actions available at $h_n$.  Assume that the actions are all labeled differently across all $n$, $h_n$. For each $a_n$,  let $h_n(a_n)$ be the information set such that $a_n \in A_n(h_n)$, and let $A_n = \cup_{h_n} A_n(h_n)$ and $A = \cup_n A_n$. The payoff function is $u: Z \to \Re^N$, where $Z$ is the set of terminal nodes of $\G$.
$\G(u)$ refers to the extensive-form game with payoff function $u$.  We say that $u$ is in ``generic'' position if it lies outside a lower-dimensional semialgebraic subset of $\Re^{N \times Z}$.\footnote{In proving our main results, we invoke genericity in several intermediary steps, and each time the set where the property of genericity holds is different; however, as there are only finitely many such instances, it is clear that everything in this paper holds for games in an open set whose complement is a closed, lower-dimensional semialgebraic set.}

A pure strategy for player $n$ in an extensive-form game $\G$ is a function that assigns to each $h_n \in H_n$ an action $a_n \in A_n(h_n)$. We again use $G$ to represent the normal form of the extensive-form game $\G(u)$ with pure strategies $S_n$  for each $n$.  A behavioral strategy for $n$ is a function that assigns to each $h_n \in H_n$ a probability distribution over $A_n(h_n)$. Let $B_n$ be the set of $n$'s behavioral strategies, and $B = \prod_n B_n$.
A mixed strategy induces a payoff-equivalent behavioral strategy, and vice versa by Kuhn's Theorem \cite{K1953}.

A third representation of strategies---which has the dimensional advantage of behavioral strategies over mixed strategies, and the property of linearity of payoffs, as mixed strategies do---invokes the concept of enabling strategies (cf.~Govindan and Wilson, \cite{GW2002}) which we now describe.  Say that $a_n \in A_n$ is a last action of player $n$ if it is the last action for $n$ on some path from the root of the tree to a terminal node.  Let $L_n$ be the set of last actions of player $n$.  For each $a_n \in L_n$, let $S_n(a_n)$ be the set of pure strategies of $n$ that choose all actions leading up to, and including, $a_n$.  Then there is a linear map $\pi_n: \S_n \to {[0, 1]}^{L_n}$ given by $\pi_{n, a_n}(\s_n) = \sum_{s_n \in S_n(a_n)} \s_{n, s_n}$ for all $a_n \in L_n$. 
$P_n = \pi_n(\S_n)$ is the set of $n$'s enabling strategies, and $P = \prod_n P_n$. %and for each $n$, let $P_{-n} = \prod_{m \neq n} P_m$.  
Each $P_n$, and thus $P$, is a compact and convex polyhedron.  
For each $n$ and $z$, let $a_n(z)$ be last action of $n$ on the path to $z$, in case $n$ has at least one action on this path; and for $p_n \in P_n$, we use $p_n(a_n(z))$ to denote the probability of $a_n(z)$ under $p_n$, with the convention that $p_n(a_n(z)) = 1$ if $n$ has no action on this path.  Use $p_0(z)$ to denote the product of the probabilities of Nature's choices on the path to $z$, letting this probability be $1$ if no such choice exists.  Given $p$, for each $n$ and $z$, let
$p_{-n}(z) = p_0(z)\prod_{m \neq n}p_m(a_m(z))$.  The game $\G(u)$ can be defined in strategic form over $P$ with player $n$'s payoff from a profile $p$ being $\sum_{z \in Z}p_{-n}(z)p_n(a_n(z))u(z)$---see Pahl \cite{P2023}.

We now provide the definitions of three refinements of Nash equilibria that are important in this paper.  The first is the concept of quasi-perfect equilibria, due to van Damme \cite{vD1984}, that is invoked in our Backward Induction Axiom.  The other two concepts, which we characterize axiomatically, are essentiality and stability. 

Given a completely mixed behavioral strategy profile $b_{-n}$ of $n$'s opponents and an information set $h_n$ of $n$, say that an action $a_n$ at $h_n$ is optimal against $b_{-n}$ if at $h_n$ there exists a continuation strategy that chooses action $a_n$ at $h_n$ and is optimal against $b_{-n}$.  

\begin{definition}\label{def quasi-perfection}
	A behavioral strategy profile $b$ is a \emph{quasi-perfect equilibrium} of a game $\G(u)$ if there exists a sequence $b^k$ of completely mixed behavioral strategies converging to $b$ such that for each $n$ and $h_n$, $b_n$ assigns positive probability only to actions in $A_n(h_n)$ that are optimal against $b_{-n}^k$ for all $k$.  
\end{definition}

Given a game $G$ and a component $\S^*$ of Nash equilibria of $G$, we can assign an index, denoted $\text{Ind}(\S^*)$, which is a measure of its robustness to perturbations---see Gul et al.~\cite{GPS1993} or Ritzberger \cite{R1994}. This index is essentially independent of the map used in its definition (DeMichelis and Germano \cite{DG2000}) .  In fixed point theory there is an equivalence between robustness of fixed points and nonzero index (O'Neill \cite{O1953}).  Govindan and Wilson \cite{GW2005} show that this is ``almost'' the case with games: if we allow for the addition of duplicate strategies and then consider the perturbations of those as well, we obtain the concept of uniform hyperstability, which is equivalent to nonzero index.\footnote{Pahl and Pimienta \cite{PP2024} prove an exact equivalence for generic two-person games in extensive form; that is, a component has a nonzero index iff it is hyperstable in the sense of KM. It should be possible to extend the result to generic $N$-player games in extensive form using a characterization of essentiality along the lines of that given in Section 10. But, we conjecture that the equivalence would not extend to all finite games and thus that there is a ``gap'' between these two robustness ideas in game theory.}   In order not to add terminology, we simply use the term essentiality rather than uniform hyperstability.

\begin{definition}\label{def essentiality}
	A component of equilibria is \emph{essential} if its index is nonzero. 
\end{definition}

KM argued that stability of a subset of equilibria requires every nearby game generated by small strategy perturbations to have a nearby equilibrium. (That is, the subset must be `truly perfect'.) 
Mertens \cite{M1989, M1991} strengthened this to apply to deformations of the graph of the equilibrium correspondence. His definition was motivated by the fact that the formulation in KM did not satisfy all the criteria that they listed and that to obtain those properties, the solution concept had to satisfy a strong fixed point property---see, for example, the proof of the backward induction property of stability in Section 3 of Mertens \cite{M1989}.  %Govindan and Mertens \cite{GM2003} showed that Mertens' definition can be expressed in terms of perturbations of the players' best-reply correspondences.

Mertens defined many variants of stability by the choice of the coefficient module in the cohomology theory used.  The one we axiomatize is called $0$-stability, which uses \v{C}ech cohomology with coefficients in the field $\mathbb{Q}$ of rational numbers.\footnote{$p$-stability, where $p$ is a prime number, is defined using the field $\mathbb{Z}_p$ of integers modulo $p$. For each such $p$, $p$-stability satisfies the axioms as well and for generic games yields a selection of $0$-stable sets.}  Given a normal form game $G$, for each $0 < \e \le 1$, let $T_\e$ be the set of all vectors $\d\t$, with $0 \le \d \le \e$ and $\t \in \S$, and let $\partial T_\e$ be its boundary.  For each $\d\t \in T_1$, we have a perturbed game $G(\d\t)$ where each player $n$ is constrained to choosing $\s_n \in \S_n$ such that $\s_n \ge \d\t_n$.  Let $E$ be the graph of equilibria over $T_1$, i.e.\ the set of $(\eta, \s) \in T_1 \times \S$ such that $\s$ is an equilibrium of the perturbed game $G(\eta)$.  Let $\proj: E \to T_1$ be the natural projection. For each $X \subseteq E$ and $\e > 0$, let $(X_\e, \partial X_\e) = \proj^{-1}(T_\e, \partial T_\e) \cap X$ and for $\e = 0$, let $X_0 = \proj^{-1}(0) \cap X$.  In the following definition $\check{H}$ refers to \v{C}ech cohomology with rational coefficients. 

\begin{definition}\label{def stability}
	$\S^* \subseteq \S$ is a \emph{stable set} if there exists $X \subseteq E$ such that:
	\begin{enumerate}
		\item $\S^* = \{\, \s \mid (0, \s) \in X \,\}$;
		\item connexity: for each neighborhood $V$ of $X_0$ in $X$, there is a connected component of $V \backslash \partial X_1$ whose closure is a neighborhood of $X_0$ in $X$.
		\item essentiality: $\proj^*: \check{H}^*(T_\e, \partial T_\e) \to \check{H}^*(X_\e, \partial X_\e)$ is nonzero for some $\e > 0$.
	\end{enumerate}
\end{definition} 

A weaker form of the essentiality condition in  the definition of stability is the following: the projection map from $X_\e$ to $T_\e$ is surjective for some $\e > 0$.  A stable set in the sense of KM is a minimal set of equilibria $\S^*$ that satisfies condition (1) and the weaker form of essentiality.\footnote{Typically, KM-stable sets are not connected, because of the minimality requirement.  By contrast, in  Mertens's definition, the connexity requirement serves the purpose of making stable sets as minimal as possible.} Unfortunately, asking merely for surjectivity of the projection map means that KM stable sets fail many of the properties that are desirable---for e.g., they need not contain a backward induction equilibrium.

The axiomatic approach to refinements, as envisaged by KM, aims to select from the Nash equilibria of a game, using a set of axioms of rationality. From this viewpoint, we have the following definition for a solution concept.

\begin{definition}\label{def solution concept}
	A {\it solution concept} $\varphi$ is a correspondence that assigns to each finite game $G$ a nonempty collection $\varphi(G)$ of closed and connected subsets of the Nash equilibria of $G$, called {\it solutions}.  
\end{definition}

Our focus in this paper is on two solution concepts. The first is called essentiality, which assigns to each game $G$ its essential components of equilibria; and the second is called stability, which assigns to each game $G$ its stable sets. The next three sections describe the axioms we impose on a solution concept $\varphi$.

\section{Backward Induction}
The Backward Induction Axiom states that a solution should contain a ``backward induction'' equilibrium. As KM argue, the concept of a sequential equilibrium captures backward induction rationality in games with imperfect information and is thus the right notion of a backward induction equilibrium.  Ideally, we would like to state the axiom using sequential equilibria.  Unfortunately, our method of proof requires us to use instead quasi-perfect equilibria in the axiom. We hope that future work will show that sequential equilibria suffice.

\begin{axiom}[Axiom B]\label{axm B}
	If $G$ is the normal form of an extensive form game $\G$ with perfect recall, then every solution contains a strategy profile that is equivalent to a quasi-perfect equilibrium of $\G$.  
\end{axiom}

A quasi-perfect equilibrium yields a sequential equilibrium for which, following each information set, a player’s strategy is admissible in the continuation. 
When payoffs are generic, all sequential equilibria are quasi-perfect (see \cite{HKS2002, PS2014}). Thus in this case the role of Axiom B is to ensure that a solution’s outcome is consistent with sequential rationality.

Axiom B is the only axiom that explicitly considers perturbations: a quasi-perfect equilibrium must be a best-reply to a sequence of behavioral strategies converging to it---cf. Definition \ref{def quasi-perfection}.  However, as we explained in GW, there is an equivalent characterization of quasi-perfection using lexicographic probability systems (LPSs)---cf. Govindan and Klumpp \cite{GK2002}.  Moreover, there are now explicit bounds to the number of levels in the LPS---cf. Callejas et al. \cite{CGP2021}. Thus, even this axiom is free from statements involving perturbations of either the payoffs or the strategy sets.
 
Trembling hand perfect equilibria (Selten \cite{S1975}), which are the perfect equilibria of the agent normal form of a game, represent another strengthening of sequential equilibria and as such could be used in the formulation of Axiom B. Such an axiom would be substantially different from our axiom.  Indeed, Mertens \cite{M1995} constructs an example where the set of trembling hand perfect equilibria consists entirely of profiles that are equivalent to weakly dominated strategies and is disjoint from the set of quasi-perfect equilibria.  This example implies that stability would not satisfy a reformulated Axiom B that uses trembling hand perfection.  However, essentiality satisfies either version of Axiom B and Theorem \ref{thm main 2} goes through if we use trembling hand perfection. 

\section{Addition of Dominated Strategies}
The Axiom of Iterated Dominance figures prominently in the list of desiderata drawn up by KM  for stable sets.  The axiom states that a stable set of a game $G$ contains a stable set of any game $G'$ obtained by the deletion of (weakly) dominated strategies. KM consider, but eventually drop, a dual requirement on the addition of a dominated strategy. Their reasoning is that such an axiom would be inconsistent with admissibility, as the following example, which is taken from KM, demonstrates.  In the two-player game:
\[
\left(
\begin{array}{ll}
	3,2 & 2,2 
\end{array}
\right)
\]	
every strategy profile  is reasonable as a solution.  But when we add a strictly dominated strategy:
\[
\left(
\begin{array}{ll}
	3,2 & 2,2 \\
	1,1 & 1,0 
\end{array}
\right)
\]	
only the (Top, Left) equilibrium is admissible. 

We propose two counterarguments that form the bases for two versions of a relational axiom concerning the addition of dominated strategies. (1) One could drop the axiom of admissibility. (2) Alternatively, one could restrict the class of games obtained by the addition of dominated strategies.  We elaborate on these ideas sequentially, after we consider the more general problem of the relationship between the solutions of any two games where one is obtained from the other by adding strategies. Studying the problem at this abstract level will hopefully persuade the reader of the reasonableness of our axiom about dominated strategies and also of how it naturally leads to the formulation of the invariance axiom in the next section, where we add players to the game.

Consider the following question.  Suppose we have two games $(N, S, G)$ and $(N, \bar S, \bar G)$ with $S \subsetneq \bar S$ and the restriction of the payoff function $\bar G$ to $S$ is $G$;  under what conditions could we require axiomatically that the two games have the same solutions?  If for each $n$, the strategies in $\bar S_n \backslash S_n$ are duplicates of mixed strategies in $\S_n$, then the axiom of invariance, as postulated by KM, says that the games should have the same solution---see also Mertens \cite{M2004} for a discussion of ordinality in games.  But what if the additional strategies are not duplicates? Since we are dealing with the question of selecting from Nash equilibria, for us to answer this question, we must at least insist on property P1 below. 

\smallskip

\noindent {\bf Property P1:} {\slshape The two games $G$ and $\bar G$ have the same set of Nash equilibria.}

\smallskip

Using the techniques of Govindan et al.~\cite{GLP2023}, we can construct games $G$ and $\bar G$ with the same set of equilibria, but where the indices of some of the equilibria are changed.  Thus, if we want to pick equilibria with a positive index then such a solution concept would violate an axiom that relied on Property P1 alone.  A more concrete example that we now describe also points to a need for something stronger than P1.  The game $G$ with the extensive form in Figure 1 has two components of equilibria, with outcomes $(3,6)$ and $(4,3)$, respectively. Only the first outcome survives the iterative elimination of dominated strategies and hence has index $+1$; the second outcome has index $0$.

\begin{small}
\begin{center}
\begin{tikzpicture}[auto, scale=1.5]
\draw (3,4) node{Figure 1};
\filldraw (0,2) circle (1pt);
\draw (0,2) -- (2,2);
\draw (0,2) -- (1,1);
\filldraw (2,2) circle (1pt);
\draw (2,2) -- (3,3);
\draw (2,2) -- (4,2);
\filldraw (4,2) circle (1pt);
\draw (4,2) -- (5,3);
\draw (4,2) -- (5,1);
\draw (5,3) -- (6,3.5);
\draw (5,3) -- (6,2.5);
\draw (5,1) -- (6,1.5);
\draw (5,1) -- (6,0.5);
\draw (0,2.2) node{$1$};
\draw (2,2.2) node{$2$};
\draw (4,2.2) node{$1$};
\draw (4.8,2) node{$2$};
\draw[dotted] (5,1) -- (5,3);
\filldraw (5,1) circle (1pt);
\filldraw (5,3) circle (1pt);
\draw (1.4,1) node{$3,6$};
\draw (3.4,3) node{$4,3$};
\draw (6.4,3.5) node{$1,1$};
\draw (6.4,2.5) node{$2,4$};
\draw (6.4,1.5) node{$2,2$};
\draw (6.4,0.5) node{$1,1$};
\end{tikzpicture}
\end{center}
\end{small}

Now add a move for Player 1 to obtain the game $\bar G$ in Figure 2. 

\smallskip

\begin{small}
\begin{center}
\begin{tikzpicture}[auto, scale=1.5]
\draw (3,4) node{Figure 2};
\filldraw (0,2) circle (1pt);
\draw (0,2) -- (2,2);
\draw (0,2) -- (1,1);
\filldraw (2,2) circle (1pt);
\draw (2,2) -- (3,3);
\draw (2,2) -- (4,2);
\filldraw (4,2) circle (1pt);
\draw (4,2) -- (5,3);
\draw (4,2) -- (5,2);
\draw (4,2) -- (5,1);
\draw (5,3) -- (6,3.3);
\draw (5,3) -- (6,2.7);
\draw (5,2) -- (6,2.3);
\draw (5,2) -- (6,1.7);
\draw (5,1) -- (6,1.3);
\draw (5,1) -- (6,0.7);
\draw (0,2.2) node{$1$};
\draw (2,2.2) node{$2$};
\draw (4,2.2) node{$1$};
\draw (4.8,2.4) node{$2$};
\draw[dotted] (5,1) -- (5,3);
\filldraw (5,1) circle (1pt);
\filldraw (5,2) circle (1pt);
\filldraw (5,3) circle (1pt);
\draw (1.4,1) node{$3,6$};
\draw (3.4,3) node{$4,3$};
\draw (6.4,3.3) node{$1,1$};
\draw (6.4,2.7) node{$2,4$};
\draw (6.4,2.3) node{$2,2$};
\draw (6.4,1.7) node{$1,1$};
\draw (6.4,1.3) node{$1,1$};
\draw (6.4,0.7) node{$3,0$};
\end{tikzpicture}
\end{center}
\end{small}

This game has two components as well, with the same outcomes.  But now $(4,3)$ is the outcome that survives iterative elimination of dominated strategies and thus has index $+1$ and $(3,6)$ has index zero. Of course, even though the equilibrium payoffs of the two games are the same, the game $\bar G$ has a strictly larger set of equilibria and, therefore, does not provide conclusive proof that essentiality would require an axiom based on more than P1. But, the example is suggestive of the imperative of asking for a stronger link between $G$ and $\bar G$.

A stronger connection between $G$ and $\bar G$ than P1 exists when the following additional property holds . 

\smallskip

\noindent {\bf Property P2:} {\slshape The restriction of the best-reply correspondence of $\bar G$ to $\S$ is the best-reply correspondence of $G$.}

\smallskip

P2 is satisfied when, for instance, all the strategies that are added are strictly dominated (in which case, P1 holds as well).  A weaker assumption than P2 would be

\smallskip

\noindent {\bf Property P3:} {\slshape The best-reply correspondence of $G$ is a selection of the restriction of the best-reply correspondence of $\bar G$ to $\S$.}

\smallskip

P3 holds when, for example, the added strategies are weakly dominated (but now P3 is independent of P1).  {\it Prima facie}, it seems reasonable to impose an axiom that says that when P1 and P3 hold, $G$ and $\bar G$ have the same solutions.  Essentiality does satisfy such an axiom. Rather than imposing that axiom directly, we choose to impose a weaker one, asking for P1 to hold when we have added weakly dominated strategies.  To state the axiom,  we state formally what it means to add a dominated strategy.

\begin{definition}\label{def ads}
	A game $(N, \bar S, \bar G)$ is obtained from $(N, S, G)$ by the addition of (weakly) dominated strategies if:
	\begin{enumerate}
		\item $S \subsetneq \bar S$, and for each $n$, the strategies in $\bar S_n \backslash S_n$ are (weakly) dominated;
		\item the restriction of $\bar G$ to $S$ is $G$.
	\end{enumerate}
\end{definition}	

To restate what we observed earlier, if $\bar G$ is obtained from $G$ by the addition of strictly dominated strategies, both P2 and P1 hold.  When some of the added strategies are only weakly dominated P3 holds,  but P1 might not: it is possible that $\bar G$ has a strictly bigger set of Nash equilibria.  For each solution of $G$ one could insist that there be a solution of $\bar G$ that contains it (the reverse direction of the one proposed by KM).  We do not know the implications of this axiom. For example, we do not know if essentiality satisfies it, or if it is even consistent with existence.   However, when the addition of weakly dominated strategies does not expand the set of Nash equilibria, i.e.~when P1 holds, the picture is clearer and our Axiom D imposes this restriction. 

\begin{definition}\label{def waids}
	$(N, \bar S, \bar G)$ is obtained from $(N, S, G)$ by the addition of weakly irrelevant dominated strategies if:
	\begin{enumerate}
		\item $\bar G$ is obtained from $G$ by the addition of weakly dominated strategies;
		\item $\bar G$ and $G$ have the same set of Nash equilibria.
	\end{enumerate}
\end{definition}

Condition 1 of Definition \ref{def waids} implies P3, and its condition 2 is P1. With this definition of weakly irrelevant strategies, we have the following axiom.

\begin{axiom}[Axiom D] \label{axm D}
	$\varphi(G) = \varphi(\bar G)$ whenever $\bar G$ is obtained from $G$ by the addition of weakly irrelevant dominated strategies.
\end{axiom}

As we show in Appendix A, essentiality satisfies Axiom D. Of course, as the example of KM discussed at the beginning of this section shows, Axiom D comes at the expense of admissibility. Our own view is that admissibility is the more important of the two axioms as an expression of rationality, and therefore we are not entirely satisfied with Axiom D.  There are other grounds to be skeptical of this axiom.  Suppose $\bar G$ is obtained from $G$ by the addition of strictly dominated strategies.  The restriction of the best-reply correspondence of $\bar G$ to $\S$ is the best-reply correspondence of $G$.  But what is the behavior of the correspondence on $\bar \S \backslash \S$?  This set contains the set of all completely-mixed strategy profiles, which generate the consistent beliefs when $\bar G$ has nontrivial extensive form.  Absent a way to relate such profiles in $\bar G$ with profiles in $G$, there is no meaningful way to compare beliefs in the two games.  Applications have taught us that identification of the ``rational'' outcomes among all Nash outcomes in games quite often turns on the beliefs supporting these outcomes. In not allowing us to create a correspondence between beliefs in equivalent games, Axiom D relies on an equivalence relation among games that is too ``coarse.'' This inability to relate beliefs between the two games is the game-theoretic cost of working with the much richer set of payoff perturbations.   

The second approach to getting around the problem identified by KM is motivated by the criticism made in the last paragraph. What is required is a refinement of the notion of equivalent games we have defined combining P1 and P3 that would make beliefs across equivalent games comparable, thus making preference rankings over strategies in $S_n$ comparable.  One way to do this is to  ask for a continuous map $f_n : \bar \S_n \to \S_n$ for each $n$ that, from the perspective of players $m \neq n$, means that whenever $n$ plays $\bar \s_n \in \bar \S_n$, it is as if he is playing $\s_n=f_n(\bar \s_n)$ in $G$. Obviously, the restriction of $f_n$ to $\S_n$ must be the identity.  Given such mappings $f_{m}$ for $n$'s opponents $m$, there is a mapping between $n$'s beliefs in $\bar G$ and $G$. We also then need that his preference rankings are the same in the two games. Expressed in payoff terms we get property P4:

\smallskip

\noindent {\bf Property P4:} {\slshape For each $n$, there is a continuous map $f_n: \bar \S_n \to \S_n$ such that: (a) the restriction of $f_n$ to $\S_n$ is the identity; (b) for $\s_n, \t_n \in \S_n$ and $\bar \s_{-n}$, $\bar G_n(\s_n , \bar \s_{-n}) \ge \bar G_n(\t_n, \bar \s_{-n})$ iff  $G_n(\s_n, f_{-n}(\bar \s_{-n})) \ge G_n(\t_n, f_{-n}(\bar \s_{-n}))$.}

\smallskip

Clearly, if $G$ and $\bar G$ satisfy Property P4, and if each $f_n$ is linear, then there exists a game $\bar G'$ obtained from $\bar G$ by an affine transformation of payoffs  such that Property P4 holds for the pair $(G, \bar G')$ and such that condition (b) can be sharpened to: $\bar G_n(\s_n, \bar \s_{-n}) = G_n(\s_n, f_{-n}(\bar \s_{-n}))$ for all $\s_n \in \S_n$ and $\bar \s_{-n} \in \bar \S_{-n}$. 

The linearity of $f$ is an implication of P4 holding universally.  Specifically, suppose we have continuous maps $(f_n)$ that satisfy condition (a) of Property P4  and such that for each payoff function $G$ there is a $\bar G$ for which condition (b) of Property P4 holds as well.\footnote{If there exists a $\bar G$, then there exists one that is obtained from $G$ by the addition of weakly dominated strategies.}  It is easy to show that then the $f_n$'s must be linear in $\s_n$. Thus, if the maps $f_n$ are to serve ``globally'' on the space of games (with strategy sets $\bar \S$ and $\S$) as a way to generate equivalence classes, then we get linearity as a consequence.

To highlight what P4 entails consider the example from KM discussed at the beginning of the section. Suppose we have the following payoffs when we add a dominated strategy for player 1:
\[
\left(
\begin{array}{ll}
	3,2 & 2,2 \\
	1,2 & 1,2 
\end{array}
\right)
\]	
From player 2's perspective, the newly added strategy is just a duplicate of 1's strategy in the original game; his payoffs have been duplicated; and all beliefs that can generated in the new game can be generated in the original game (and vice versa).  It is reasonable to insist that the two games have the same set of solutions.\footnote{Section 4, p. 710 of Mertens \cite{M1991} makes this argument: if we duplicate a strategy for a player and then subtract a constant from it just for this player, to make it strictly dominated, the resulting game should have the same solutions as the original game. De Stefano Grigis \cite{D2014} also considers the addition of strictly dominated strategies in his KM-stability style refinement called G-stable sets.}  

One way to get around KM's objection, then, is to ask that $G$ and $\bar G$ have the same solutions when $\bar G$ is obtained from $G$ by the addition of weakly dominated strategies and P4 holds with the maps $f_n$ being linear.  However, we want to weaken two aspects of P4.  First,  we could allow $f_n$ be a  function of $\bar \s$ and not just $\bar \s_n$; second we could insist on condition (b) of P4 holding not globally, but in a neighborhood of $\S$ in $\bar \S$, since this set is what is relevant for beliefs at the equilibria (which, given P1, are in $\S$). Thus, we have:
\smallskip

\noindent {\bf Property P5:} {\slshape There exists a neighborhood $\bar U$ of $\S$ in $\bar \S$ and a multilinear map $f: \bar \S \to \S$ such that: 
	\begin{enumerate}
		\item[(a)] the restriction of $f$ to $\S$ is the identity; 
		\item[(b)] for each $n$, $\bar \s \in \bar U$, $\s_n, \t_n \in \S_n$, $\bar G_n(\s_n , \bar \s_{-n}) \ge  \bar G_n(\t_n , \bar \s_{-n})$ iff $G_n(\s_n, f_{-n}(\bar \s_{n})) \ge  G_n(\t_n, f_{-n}(\bar \s_{n}))$.
	\end{enumerate}
}
\smallskip

The map $f$ creates a ``dictionary'' that identifies for each $\bar \s \in \bar \S$ an equivalent profile in $G$.\footnote{We have imposed the assumption of multilineary in P5.  Like with P4, it is a consequence of P5 holding over a ``large'' domain of games.  Given a continuous function $f$, suppose for all $G$ in a linear subspace of the space of games over $S$ generated by a game form, there is a $\bar G$ for which P5 holds, then we could replace $f$ with a multilinear function $g$ for which P5 holds as well.}  
It allows us to map beliefs in $\bar G$ to those in $G$ and to compare the preference orderings over a player's strategies against equivalent beliefs. Obviously, given the assumption of independence of the players' actions,  a player $n$'s beliefs over the action of his opponents is independent of his choice. Given that we are allowing for $f_m$ to depend on $\bar \s_n$ for $m \neq n$, the inclusion of $\bar \s_n$ in the profile $\bar \s$ in Condition (b) of P5 makes sure that this dependence of $f_n$ on $\bar \s_n$ is decision-theoretically irrelevant.  Unless the payoffs of the games $G$ and $\bar G$ are pathological, condition (b) of Property P5, therefore, would amount to saying that $m$'s choice is payoff irrelevant to $n$ if $f_m$ depends on $\s_n$.   

%Unlike P4, Condition (b) of P5 cannot be reduced to a  statement about payoffs, i.e.~ of the form $\bar G_n(\s_n, \bar \s_{-n}) = G_n(\s_n, f_{-n}(\bar \s_n))$, since $f_{n}$ could depend on $\bar \s_n$.  

Property P5 allows us to strengthen the requirements for obtaining a game $\bar G$ from $G$ by the addition of irrelevant strategies in the following definition.

\begin{definition}\label{def saids}
	$(N, \bar S, \bar G)$ is obtained from $(N, S, G)$ by the addition of strongly irrelevant dominated strategies if:
	\begin{enumerate}
		\item $\tilde G$ is obtained from $G$ by the addition of weakly irrelevant dominated strategies;
		\item there exists a neighborhood $\bar U$ of $\S$ in $\bar \S$ and a multilinear function $f: \bar \S \to \S$ such that:
		\begin{enumerate}
			\item[(a)] the restriction of $f$ to $\S$ is the identity; 
			\item[(b)] for each $n$, $\bar \s \in \bar U$, $\s_n, \t_n \in \S_n$, $\bar G_n(\s_n , \bar \s_{-n}) \ge  \bar G_n(\t_n , \bar \s_{-n})$ iff $G_n(\s_n, f_{-n}(\bar \s_{n})) \ge  G_n(\t_n, f_{-n}(\bar \s_{n}))$.
		\end{enumerate}   
	\end{enumerate}
\end{definition}

Suppose $\bar G$ is obtained from $G$ by the addition of dominated strategies.  What is the effect of Condition 2(b) of Definition \ref{def saids}? We provide some examples by way of an answer. As might be expected, there is a substantial difference between two-player games and $N$-player games with $N > 2$. Therefore, we treat them separately.

\begin{example}
	Suppose we fix a two person  game tree $\G$ and consider the space of all games $G$ with this tree.  Let $\bar S$ be obtained from $S$ by the addition of strategies. Given a multilinear function $f: \bar \S \to \S$ suppose for each $G$ there is a $\bar G$ for which P5 holds. Then, $n$'s strategy is payoff relevant to $m$ and P5 holds only when  $f_m(\bar s_m, \s_n)$ is independent $\s_n$; it could still depend on $\bar s_n \notin S_n$, but that is payoff irrelevant.   Thus from $n$'s perspective each $\bar s_m \notin S_m$ corresponds to a duplicate in $G$. 
	%If $G$ is a non-trivial bimatrix game, each player's strategy is relevant to the other player.  Condition 2(b) then says that $f_n$ is independent of $s_m$ for $m \neq n$.  Therefore, the function $f_n$ can be written as a triple of functions $f_n^1: \bar \S_n \to \S_n$, $f_n^2: S_n \times \tilde S_m$, $\tilde S_m \subseteq \bar S_m \backslash S_m$,  and $f_n^3: \bar S_n \backslash S_n \times \bar S_m \backslash S_m \to \S_n$. The function $f_n^1$ captures the values at all pairs $(\bar s_n, s_m)$ for each $\bar s_n$ as it is independent of $s_m \in S_m$. Its restriction  to  $\S_n$ is the identity function, thanks to condition 2(a);  it  maps each $\bar s_n \notin S_n$ to a mixed strategy $f_n(\bar s_n)$;  condition 2(c) says that from $m$'s perspective each newly added strategy $\bar s_n$ is a duplicate of a mixed strategy $f_n(\bar s_n)$.  The function $f_n^2$ gives the mixed strategy that is played independent of $n$'s choice in $S_n$ if $m$ plays a dominated strategy in $\tilde S_m$ and with the resulting payoff consequence for $n$;  finally, the map $f_n^3$ imposes no payoff-restrictions other than that the strategies be dominated with those payoffs. (The functions $f_n^2$ and $f_n^3$ are typically irrelevant from $m$'s perspective as strategies in $\bar S_m \backslash S_m$ are dominated). 
	
	For example, consider the game $\bar G$
	\[
	\left(
	\begin{array}{lll}
		4,2 & 0,0 & 4,1 \\
		0,0 & 2,4 & 0,1 \\
		1,2 & 1,0 & 0,0
	\end{array}
	\right)
	\]	
	The last strategy of each player is dominated by a mixture of the other two; a player's payoff if he plays an undominated strategy against the dominated strategy of the other is what he gets if the other player were to play his first strategy. The payoffs when both play a dominated strategy are irrelevant.  If we let $G$ be the game obtained by the deletion of the last row, then $\bar G$ obtains from $G$ by the addition of an irrelevant dominated strategy. $\Box$
\end{example}

\begin{example}\label{example bq game}
	By way of an example with more than two players, view the Beer-Quiche game as a three-player game G (as in Figure 3, where Weak, Strong and Receiver are players 1, 2, and 3, respectively).

\begin{small}
\begin{center}
\begin{tikzpicture}[auto, scale=1.5]
\draw (4,4.5) node{Figure 3};
\filldraw (4,2) circle (1pt);
\draw (4,2) -- (4,3.5);
\draw (4,2) -- (4,0.5);
\filldraw (4,3.5) circle (1pt);
\draw (2,3.5) -- (4,3.5);
\draw (4,3.5) -- (6,3.5);
\filldraw (2,3.5) circle (1pt);
\filldraw (6,3.5) circle (1pt);
\filldraw (4,0.5) circle (1pt);
\draw (2,0.5) -- (4,0.5);
\draw (4,0.5) -- (6,0.5);
\filldraw (2,0.5) circle (1pt);
\filldraw (6,0.5) circle (1pt);
\draw[dotted] (2,0.5) -- (2,3.5);
\draw[dotted] (6,0.5) -- (6,3.5);
\draw (2,0.5) -- (1,1);
\draw (2,0.5) -- (1,0);
\draw (2,3.5) -- (1,4);
\draw (2,3.5) -- (1,3);
\draw (6,0.5) -- (7,1);
\draw (6,0.5) -- (7,0);
\draw (6,3.5) -- (7,4);
\draw (6,3.5) -- (7,3);
\draw (4.3,2.75) node{$0.1$};
\draw (4.3,1.25) node{$0.9$};
\draw (4,3.8) node{Weak};
\draw (4,0.2) node{Strong};
\draw (2.5,2) node{Receiver};
\draw (5.5,2) node{Receiver};
\draw (1.5,4) node{F};
\draw (6.5,4) node{F};
\draw (1.5,1) node{F};
\draw (6.5,1) node{F};
\draw (1.5,3) node{R};
\draw (6.5,3) node{R};
\draw (1.5,0) node{R};
\draw (6.5,0) node{R};
\draw (0.6,4) node{$0,0,1$};
\draw (0.6,3) node{$2,0,0$};
\draw (0.6,1) node{$0,1,0$};
\draw (0.6,0) node{$0,3,1$};
\draw (7.4,4) node{$1,0,1$};
\draw (7.4,3) node{$3,0,0$};
\draw (7.4,1) node{$0,0,0$};
\draw (7.4,0) node{$0,2,1$};
\draw (3,3.7) node{Beer};
\draw (5,3.7) node{Quiche};
\draw (3,0.3) node{Beer};
\draw (5,0.3) node{Quiche};
\end{tikzpicture}
\end{center}
\end{small}

Define a game $\bar G$ is as follows. The strategy sets of Strong and Receiver are as in $G$. Weak has his two strategies as before but also a third strategy, denoted $s_w^0$. This strategy corresponds to the uniform mixture over Quiche and Beer. $\bar G$ in extensive form is played as follows (see Figure 4).  First Weak chooses whether to play $s_w^0$ or not; if he plays $s_w^0$, then the mixed strategy $\s_w^*$ executed. If he chooses not to, then the moves of the game $G$ occur. Specifically, Nature chooses Weak and Strong to play with probability $.1$ and $.9$ respectively.  When chosen, Strong and Weak decide whether to play Beer or Quiche.  Player two then moves, having only observed Beer or Quiche.  Figure 4  describes the game; it has as its root the node where Weak decides whether to play $s_w^0$.  

There is a multilinear map $f$ such that $\bar G$ is obtained from $G$ by the addition of strongly irrelevant strategies.  The map is specified by the following values at a pure strategy profile $\bar s$.  $f_n(\bar s) = \bar s_n$ if $n$ is the receiver or if $\bar s_W \neq s_w^0$.  Otherwise it is the uniform mixture over beer and quiche. 

The example highlights a couple of aspects of Property P5. First, in this example, unlike with P4, condition (b) of P5 cannot be reduced to a statement of the form: $\bar G_n(\s_n, \bar \s_{-n}) = G_n(\s_n, f_{-n}(\bar \s))$. Second, observe that the preference orderings are preserved as long as Weak does not play his dominated strategy; in other words, the neighborhood $\bar U$ of $\S$ is a strict subset of $\bar \S$.  $\Box$
\end{example}

The definition of strongly irrelevant dominated strategies now gives us the following variant of Axiom D, called Axiom D$^*$. As the set of embeddings is smaller, Axiom D$^*$ is weaker than Axiom D.

\begin{axiom}[Axiom D$^*$] \label{axm D}
	$\varphi(G) = \varphi(\bar G)$ whenever $\bar G$ is obtained from $G$ by the addition of strongly irrelevant dominated strategies.
\end{axiom}

\begin{small}
\begin{center}
\begin{tikzpicture}[auto, scale=1.5]
\draw (4,6.5) node{Figure 4};
\filldraw (4,2) circle (1pt);
\draw (4,2) -- (4,3.5);
\draw (4,2) -- (4,0.5);
\filldraw (4,3.5) circle (1pt);
\draw (2,3.5) -- (4,3.5);
\draw (4,3.5) -- (6,3.5);
\filldraw (2,3.5) circle (1pt);
\filldraw (6,3.5) circle (1pt);
\filldraw (4,0.5) circle (1pt);
\draw (2,0.5) -- (4,0.5);
\draw (4,0.5) -- (6,0.5);
\filldraw (2,0.5) circle (1pt);
\filldraw (6,0.5) circle (1pt);
\draw[dotted] (2,0.5) -- (2,3.5);
\draw[dotted] (6,0.5) -- (6,3.5);
\draw (2,0.5) -- (1,1);
\draw (2,0.5) -- (1,0);
\draw (2,3.5) -- (1,4);
\draw (2,3.5) -- (1,3);
\draw (6,0.5) -- (7,1);
\draw (6,0.5) -- (7,0);
\draw (6,3.5) -- (7,4);
\draw (6,3.5) -- (7,3);
\draw (4.3,2.75) node{$0.1$};
\draw (4.3,1.25) node{$0.9$};
\draw (4,3.8) node{Weak};
\draw (4,0.2) node{Strong};
\draw (2.5,2) node{Receiver};
\draw (5.5,2) node{Receiver};
\draw (1.5,4) node{F};
\draw (6.5,4) node{F};
\draw (1.5,1) node{F};
\draw (6.5,1) node{F};
\draw (1.5,3) node{R};
\draw (6.5,3) node{R};
\draw (1.5,0) node{R};
\draw (6.5,0) node{R};
\draw (0.6,4) node{$0,0,1$};
\draw (0.6,3) node{$2,0,0$};
\draw (0.6,1) node{$0,1,0$};
\draw (0.6,0) node{$0,3,1$};
\draw (7.4,4) node{$1,0,1$};
\draw (7.4,3) node{$3,0,0$};
\draw (7.4,1) node{$0,0,0$};
\draw (7.4,0) node{$0,2,1$};
\draw (2.7,3.7) node{Beer};
\draw (5,3.7) node{Quiche};
\draw (3,0.3) node{Beer};
\draw (5,0.3) node{Quiche};
\draw (3,4.5) -- (4,2);
\filldraw (3,4.5) circle (1pt);
\draw (3,4.5) -- (4,5.5);
\filldraw (4,5.5) circle (1pt);
\draw (4,5.5) -- (6,5.5);
\filldraw (6,5.5) circle (1pt);
\draw (4,5.5) -- (2,5.5);
\filldraw (2,5.5) circle (1pt);
\draw[dotted] (2,3.5) -- (2,5.5);
\draw[dotted] (6,3.5) -- (6,5.5);
\draw (2.6,4.5) node{Weak};
\draw (3,5.7) node{$\frac{1}{2}$};
\draw (5,5.7) node{$\frac{1}{2}$};
\draw (2,5.5) -- (1,6);
\draw (2,5.5) -- (1,5);
\draw (6,5.5) -- (7,6);
\draw (6,5.5) -- (7,5);
\draw (3.2,5.1) node{$s^0_w$};
\draw (6.5,6) node{F};
\draw (1.5,6) node{F};
\draw (6.5,5) node{R};
\draw (1.5,5) node{R};
\draw (0.4,6) node{$-\e,0.9,0.1$};
\draw (0.2,5) node{$.2-\e,2.7,0.9$};
\draw (7.8,6) node{$.1-\e,0,0.1$};
\draw (7.8,5) node{$.3-\e,1.8,0.9$};

\end{tikzpicture}
\end{center}
\end{small}

\begin{example}\label{example bq Axiom D}
	As a precursor to how Axiom D$^*$ and Axiom B are useful in eliminating outcomes that are not stable, let us return to Example \ref{example bq game}, and in particular to consider the following extensive-form of the game $\bar G$.  Players Strong and Weak simultaneously decide first.  For Weak, first he decides whether to play Quiche or not.  If he decides not to play Quiche, then he decides between $s_w^0$ and Beer.  Strong just decides between Beer and Quiche.  The Receiver then moves, having seen just seen the signal of Beer or Quiche.  Quiche is not a sequential equilibrium outcome in the game $\bar G$ when $\e$ is sufficiently small.  Indeed, consider any equilibrium supporting Quiche.  At the node where Weak decides between Beer and $s_w^0$, the latter is strictly better for all small $\e$.  Then from 2's perspective, it is as if when he sees Beer, he believes it is strong with a high probability and he would retreat, thus upsetting the equilibrium. 
	
	In this example, Strong's strategy is payoff-irrelevant to Weak, since they do not move in succession in $G$.  For an arbitrary extensive-form game, however, we would not be able to use this feature.  What works is an addition not just of dominated strategies, but also of players, as considered in the next section about our final axiom.\footnote{The definition of G-stability in De Stefano Grigis \cite{D2014} requires the use of both the addition of strictly dominated strategies and irrelevant players in order to get a player-splitting property: in games like signaling games where different agents of a player do not move in succession the solutions of the normal and agent normal form of the game are the same.}  $\Box$
\end{example}

\section{Invariance}

An invariance axiom says that the solutions of a game $G$ are the projections of the solutions of a game $\tilde G$ that ``embeds'' it. In KM, embedding just meant the addition of duplicate strategies. In GW we allowed for both the addition of duplicate strategies and other players  whose choices were payoff irrelevant to the original players.  Here, we are going to take an even more expansive notion of an embedding.  As with the addition of dominated strategies, there are going to be two versions of an embedding that nest one another, yielding two axioms. 

We first recall the definition of an embedding in GW and then give the modified version.  

\begin{definition}\label{def embedding}
	A game $\bar G: \bar \S \equiv \bar \S_\N \times \bar \S_O \to \Re^{\N \cup O}$, where $\bar \S_{\N} = \prod_{n \in \N} \bar \S_n$, and multilinear maps $f_n: \bar \S_n \times \bar \S_O \to \S_n$, $n \in \N$, embed a game $G: \S \to \Re^{\N}$ if:
	\begin{enumerate}
		\item for each $n$ and $\bar \s_O \in \bar \S_O$, the map $f_n(\cdot, \bar \s_O): \bar \S_n \to \S_n$ is surjective;
		\item $\tilde G = G \circ f$, where $f = \prod_n f_n$. 
	\end{enumerate}
\end{definition}

As a matter of terminology, we call the players in $\N$, the ``insiders,'' and those in $O$, the ``outsiders.''  As a matter of notation, for any profile $\bar \s_{\N}$ of the insiders, there is an induced game among the outsiders, denoted $\bar G_O^{\bar \s_{\N}}$.   While the payoffs of the outsiders could depend on the strategies of the insiders---and thus $\bar G_O^{\bar \s_\N}$ is a nontrivial function of $\bar \s_\N$---the properties of $f$ in the definition of an embedding ensure that the insiders' strategic interaction is not affected by the outsiders. To see this, it is best to consider an equivalent definition of the function $f$. Proposition 3.2 of GW shows Property (1) of Definition \ref{def embedding} holds iff there is a bijection $i_n$ between a subset $\bar S_n^*$ of $\bar S_n$ and $S_n$  such that for each $n$ and $\bar s_n \in \bar S_n^*$, we have $f_n(\bar s_n, \cdot) = i_n(\bar s_n)$.  That is, there is a subset of pure strategies whose payoff implications for the insiders are not distorted by the actions of the outsiders, while the strategies in $\bar S_n \backslash \bar S_n^*$ are mapped to mixed strategies in a way that possibly depends on the choice of the outsiders. The payoff evaluations in $\bar G$ are the same as in $G$ using $f$ to identify equivalent profiles. Thus, the outsiders are strategically irrelevant for the insiders.  It is as if the insiders are ``upstream'' players and the outsiders, ``downstream.''

If $\bar G$ embeds $G$ via the map $f$, then the Invariance Axiom in GW requires that the solutions of $G$ are the $f$-images of the solutions of $\bar G$. We want to strengthen the axiom by allowing for a larger class of games to embed a given game $G$.  The idea we use draws on the procedure we used in dealing with the addition of dominated strategies. Suppose $\bar G$ embeds $G$ via a map $f$ in the sense of Definition \ref{def embedding}.  Now take a game $\tilde G$ obtained from $\bar G$ by adding strategies only for the outsiders. We use $\tilde \S$ to denote the strategy space of $\tilde G$---for each $n$, $\tilde \S_n = \bar \S_n$. Suppose, too, that there is an extension of each $f_n$ to a multilinear map $\bar f_n$ over $\bar \S_n \times \tilde \S_O$.\footnote{Multilinearity in this setting, like in Section 4, is also a consequence of continuity and a requirement that the property hold universally over the domain of games being considered.}   When could we consider the newly added strategies irrelevant? That is, when is it reasonable to insist that  the $\bar f$-images of solutions of $\tilde G$ are the same as the $f$-images  of the solutions of $\bar G$?   Obviously $\bar G$ and $\tilde G$ must satisfy Property P1, i.e.~ they must have the same set of Nash equilibria.  But, if this property has to hold for some extension $\tilde G$ for any arbitrary specification of payoffs for $G$, then, as any profile in $\S$ could arise as an equilibrium for the upstream players,  Property P1 must hold in a stronger sense:

\smallskip

\noindent {\bf Property P6:} {\slshape For each $\bar \s_\N \in \bar \S_\N$, the games $\bar G_O^{\bar \s_\N}$ and $\tilde G_O^{\bar \s_\N}$ have the same set of Nash equilibria.}

\smallskip

If we want a counterpart of Axiom $D^*$, then a version of Property P5 is necessary as well:

\smallskip

\noindent {\bf Property P7:} {\slshape There exists a neighborhood $\tilde U$ of $\bar \S$ in $\tilde \S$, and for each $n$, a  multilinear map $\tilde f_n: \tilde \S_n \times \tilde \S_O \to \S_n$ such that for each $n$:
	\begin{enumerate}
		\item[(a)] the restriction of $\tilde f_n$ to $\bar \S_n \times \bar \S_O$ is $f_n$; 
		\item[(b)] for each $\t_n \in \tilde \S_n$, $\tilde \s \in \tilde U$,  $\tilde G_n(\tilde \s) \ge \tilde G_n(\tilde \t_n, \tilde \s_{-n})$ iff $G_n(\tilde f(\tilde \s)) \ge G_n(\tilde f(\tilde \t_n, \tilde \s_{-n}))$. 
	\end{enumerate}			
}

\smallskip

We combine Definition \ref{def embedding} with Properties P6 and P7 to allow for a richer class of embeddings.  Rather than first embedding $G$ via $f$ in a game $\bar G$ in the sense of Definition \ref{def embedding} and then extending $\bar G$ to a game $\tilde G$, we will directly state the properties for embedding $G$ in a game $\tilde G$ with the ``subgame'' $\bar G$ being defined implicitly. 

\begin{definition}\label{def strong embedding}
	A game $\tilde G: \tilde \S_\N \times \tilde \S_O \to \Re^{\N \cup O}$, where $\tilde \S_\N = \prod_{n \in \N} \tilde \S_n$, and multilinear maps $f_n: \tilde  \S_n \times \tilde \S_O \to \S_n$, $n \in \N$, strongly embed a game $G: \S \to \Re^{\N}$ if:
	\begin{enumerate}
		\item there is a face $\bar \S_O$ of $\tilde \S_O$ such that for each $\tilde \s_\N \in \tilde \S_\N$, the set of Nash equilibria of the induced game $\bar G_O^{\tilde \s_\N}$ among the outsiders $O$ is contained in  $\bar \S_O$;
		\item for each $n$, $\tilde \s_{O} \in \bar \S_O$, $f_n(\cdot, \tilde \s_O)$ is surjective; 
		\item there exists a neighborhood $\tilde U$ of $\tilde \S_{\N} \times \bar \S_O$ in $\tilde \S_\N \times \tilde \S_O$ such that for each $\t_n \in \tilde \S_n$, $\tilde \s \in \tilde U$,  $\tilde G_n(\tilde \s) \ge \tilde G_n(\tilde \t_n, \tilde \s_{-n})$ iff $G_n(f(\tilde \s)) \ge G_n(f(\tilde \t_n, \tilde \s_{-n}))$. 
	\end{enumerate}
\end{definition}

\begin{example}\label{example invariance}
	We now give an example of a strong embedding. Let $G$ be the beer-quiche game of Example \ref{example bq game}. Suppose now that $G_O$ is a completely independent game played by two players $o_1$ and $o_2$ whose payoffs are as follows:
	\[
	\left(
	\begin{array}{ll}
		2,2 & 2,2\\
		3,0 & 0,1  \\
		0,2 & 3,0   
	\end{array}
	\right)
	\]	
	The game $G_O$ has a unique equilibrium outcome $(2,2)$.  Suppose that when player $o_1$ plays his second strategy, Middle, then in the game $G$, Strong  loses agency, in that regardless what he chooses, Beer is what the receiver observes. Similarly, if $o_1$ chooses Bottom, then Weak's choice is overridden and Beer is chosen. It seems a reasonable requirement of a solution concept $\varphi$ that the solutions of the game $\bar G$ involving both games $G$ and $G_O$ are just the products of the solutions of the two games. $\Box$
\end{example}

With this definition of embedding, have the following invariance axiom.\footnote{One could create a {\it portmanteau} axiom that combines Axioms $I^*$ and $D^*$ to allow addition of players and strategies simultaneously.  But that seems too heavy-handed for our purposes: even a stronger version of the axiom about adding strategies that dispenses with the requirement of dominance is already too much. Such a strengthening might be needed to extend the axiomatization to games with nongeneric payoffs.}

\begin{axiom}[Axiom I$^*$]\label{axm I}
	The $\varphi$-solutions of a game $G$ are the $f$-images of the $\varphi$-solutions of any game $\tilde G$ that strongly embeds it via a map $f$.
\end{axiom}

As we did in the previous section, if we ask for Condition (3) of Definition \ref{def strong embedding} to hold only for profiles in $\bar \S$, then we get a larger class of embeddings.

\begin{definition}\label{def weak embedding}
	A game $\tilde G: \tilde \S_\N \times \tilde \S_O \to \Re^{\N \cup O}$, where $\tilde \S_\N = \prod_{n \in \N} \tilde \S_n$, and a multilinear maps $f_n: \tilde  \S_n \times \tilde \S_O \to \S_n$, $n \in \N$, weakly embed a game $G: \S \to \Re^{\N}$ if:
	\begin{enumerate}
		\item there is a face $\bar \S_O$ of $\tilde \S_O$ such that for each $\tilde \s_\N \in \tilde \S_\N$, the set of Nash equilibria of the induced game $\bar G_O^{\tilde \s_\N}$ among the outsiders $O$ is contained in  $\bar \S_O$;
		\item for each $n$, $\tilde \s_{O} \in \bar \S_O$, $f_n(\cdot, \tilde \s_O)$ is surjective; 
		\item  for each $\t_n \in \tilde \S_n$, $\tilde \s_\N$, $\tilde \s_O \in \bar \S_O$,  $\tilde G_n(\tilde \s) \ge \tilde G_n(\tilde \t_n, \tilde \s_{-n})$ iff $G_n(f(\tilde \s)) \ge G_n(f(\tilde \t_n, \tilde \s_{-n}))$.
	\end{enumerate}
\end{definition}

Weak embeddings allow us to formulate a stronger axiom of invariance. As with Axiom D, a big problem with Axiom I is that it does not allow us to compare off-equilibrium-path beliefs in the games $\tilde G$ and $G$.

\begin{axiom}[Axiom I] \label{axm I}
	The $\varphi$-solutions of a game $G$ are the $f$-images of the $\varphi$-solutions of any game $\bar G$ that weakly embeds it via a map $f$.
\end{axiom}	

\section{The Main Results}

For generic $u$, $\G(u)$ has finitely many outcomes. Since solutions are connected sets, to each solution there corresponds a unique outcome. Thus, in this case, we can talk of the outcomes selected by a solution concept. The outcomes associated with the solutions selected by stability (resp.~essentiality) for a generic game are called stable (resp.~essential) outcomes.  We are now ready to state the main theorems of the paper.  We start with stability.

\begin{theorem}\label{thm main 1}
	Stability satisfies Axioms B, $D^\ast$, and $I^\ast$.  Any solution concept that satisfies these axioms must, for each generic game $\G(u)$, select from among its stable outcomes. 
\end{theorem}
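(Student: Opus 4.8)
The plan is to prove the two parts separately. For the first part --- that stability satisfies the three axioms --- I would invoke the known backward-induction property of stable sets from Mertens \cite{M1989} to get Axiom B, and for Axioms $D^\ast$ and $I^\ast$ I would verify that the defining conditions of a stable set (condition (1), connexity, and the cohomological essentiality condition) are preserved under the maps $f$ arising in Definitions \ref{def saids} and \ref{def strong embedding}. The key point is that Property P1 (same Nash equilibria) together with the local best-reply-preservation in a neighborhood $\tilde U$ means that the graph $E$ of the equilibrium correspondence over the strategy-perturbation simplex of the larger game, restricted appropriately, is homeomorphic (via the projection induced by $f$) to that of $G$ near the relevant face; one then pushes the \v{C}ech-cohomology condition through this homeomorphism. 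The detailed verification is deferred to an appendix (the excerpt indicates essentiality's version is done in Appendix A, and stability's is presumably analogous or easier since strategy perturbations form a subclass of payoff perturbations).

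For the second, substantive part, let $\varphi$ be any solution concept satisfying Axioms B, $D^\ast$, $I^\ast$, fix a generic extensive-form game $\G(u)$, and let $\S^* \in \varphi(G)$ be a solution with associated outcome $\omega$ (unique by connectedness). The goal is to show $\omega$ is a stable outcome. The strategy --- following GW --- is to use the invariance and dominance axioms to embed $G$ in a suitably rich family of larger games so that the existence of a solution containing a quasi-perfect equilibrium in every such game becomes equivalent to the cohomological essentiality condition defining stability for the outcome $\omega$. Concretely: (i) use the characterization of stability for generic extensive-form games developed in Sections 7--8 of the paper (consistent beliefs plus the index/cohomology condition over the relevant component), which reduces "$\omega$ is stable" to a statement about a fixed-point index of the quasi-perfect/sequential-equilibrium correspondence restricted to the component carrying $\omega$; (ii) show that if $\omega$ were \emph{not} stable, one could build --- via addition of strategically irrelevant dominated strategies and irrelevant outsider players, exactly the embeddings licensed by $D^\ast$ and $I^\ast$ --- a larger game $\tilde G$ in which the corresponding component has a well-defined but \emph{essential-free} structure, so that no solution of $\tilde G$ lying in the $f$-preimage of $\S^*$ can contain a quasi-perfect equilibrium, contradicting Axiom B applied to (an extensive form of) $\tilde G$; (iii) conclude, via $D^\ast$ and $I^\ast$, that $\varphi(G) = f(\varphi(\tilde G))$ forces $\omega$ to have been stable all along.

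The technical heart --- and the main obstacle --- is step (ii): constructing, for a non-stable outcome $\omega$, the explicit embedding $\tilde G$ whose backward-induction solution is incompatible with containing $\S^*$'s component. This is where the "richness" of the family of embeddings must be exploited: one needs that the family of fixed-point problems generated by all admissible embeddings is rich enough that solvability of all of them (which Axiom B guarantees, since each $\tilde G$ has a quasi-perfect equilibrium in each of its solutions) implies the strong cohomological robustness in Mertens's definition, not merely the weaker $\epsilon$-$\delta$ or surjectivity version. Making this precise requires the characterization of consistent beliefs for generic extensive forms (Section 7) to control how beliefs --- and hence quasi-perfection --- behave in the enlarged game near the face corresponding to $G$, and then a degree-theoretic / \v{C}ech-cohomological argument (in the spirit of Mertens's proof that essentiality solves fixed-point problems) to convert "a quasi-perfect selection exists in every embedding" into "$\proj^*$ is nonzero." I expect the argument for $N > 2$ to diverge from GW precisely in handling the outsider players of Axiom $I^\ast$, since with more than two insiders the embeddings that split or augment a player's decision nodes interact nontrivially with the consistency constraints on beliefs.
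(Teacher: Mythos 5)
You have reproduced the paper's architecture faithfully: Axiom B for stability via Mertens's theorem that a stable set contains a proper (hence quasi-perfect) equilibrium, appendix verifications of $D^\ast$ and $I^\ast$, and for the converse a contrapositive argument in which a solution lying in a component with no stable set is shown to violate Axiom B after suitable embeddings. But your step (ii) is precisely where the entire content of Sections 7--9 lives, and you leave it as a named obstacle rather than an argument. Concretely, two things are missing. First, the reduction of ``$\S^*$ contains no stable set'' to a usable topological datum: the paper builds the belief space $\bbP$ and proves it is stratified by manifolds (Proposition \ref{prop belief manifold}), then characterizes stability as essentiality of $\proj:(\calQ^i,\partial\calQ^i)\to(P,\partial P)$ where both pairs are $d$-pseudomanifolds of the \emph{same} dimension (Proposition \ref{prop Q}); only because the dimensions match can one pass from cohomological inessentiality to homotopic inessentiality (Mertens \cite{M1991}, Section 4) and then, via Lemma A.2 of \cite{GW2008}, to a map $f:\calQ^*\to P\backslash\partial P$ with no point of coincidence with the projection. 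Second, the explicit construction that turns this coincidence-free map into a failure of Axiom B: triangulate $P$ and $\calP$, replace $f$ by a multisimplicial approximation $g$, adjoin an independent signaling game with one sender per insider, adjoin for each insider the weakly dominated ``revision'' strategies $(t_n,h_n)$ that implement a fixed enabling strategy and trigger the off-path signal (this is what makes Axiom $D^\ast$ applicable, ties the insiders' trembles to a common scale, and is the reason quasi-perfection yields points of $\calQ^\e$), and adjoin outsiders $\bar o_0,\bar o_1,\bar o_2$ and belief-tracking outsiders whose payoffs force them to report the induced likelihood ratios and to play $g$; a quasi-perfect equilibrium of the resulting game projecting into $P^*$ would then place $g(p,\bar\l)$ and $p$ in a common simplex, contradicting the choice of $f$. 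Without this construction, your assertion that the family of embeddings is ``rich enough'' is exactly the claim to be proved, not a proof.

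Two smaller points. Your remark that the $D^\ast$/$I^\ast$ verification for stability is ``presumably analogous or easier since strategy perturbations form a subclass of payoff perturbations'' has the logic backwards: a smaller perturbation class makes the robustness property harder, not easier, to transport across embeddings, and the paper's appendix needs ball bundles, the Thom isomorphism, $p$-essentiality, equivalent reformulations of the perturbation space, and a Lojasiewicz inequality for $I^\ast$; the short homotopy-of-best-replies argument you sketch is the one that works for essentiality (Axiom D), not for stability. Also, the paper never converts ``a quasi-perfect selection exists in every embedding'' into ``$\proj^\ast$ is nonzero'' directly, as your step suggests; it proves only the contrapositive, exhibiting for each non-stable component a single carefully built embedding (dominated strategies plus outsiders) in which Axiom B fails, which is all the theorem's outcome statement requires.
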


Theorem \ref{thm main 1} differs from the corresponding theorem in GW for two-player games both in terms of the assumptions and the conclusion.  GW also impose Axiom B but they invoke a weaker version of Axiom $I^*$---the definition of an embedding is as in Definition \ref{def embedding}---and they impose admissibility rather than Axiom $D^*$.  The result in GW is that the solutions themselves, rather than just their outcomes, of a solution concept satisfying their axioms must be selections from among the stable sets.  In our case, if we want a statement involving the solutions, it would be of the following form: every solution of a generic game $\G(u)$ assigned by a solution concept satisfying Axioms B, $D^*$ and $I^*$ must contain a stable set.  If we impose a minimality axiom---that solutions are as small as possible---then we get a parallel result for the $N$-player case.\footnote{In GW, admissibility could be replaced with minimality as well.} 

The main reason that GW are able to get a stronger result is that for two-player games, a solution concept satisfying Axiom B and their invariance axiom must choose as solutions sets  that contain a component of admissible equilibria. Admissibility and connectedness then imply that solutions have to be components of admissible equilibria; add to this the facts that stable sets in such games are also components of admissible equilibria and, as here, solutions must contain stable sets, we get the stronger result for two-player games.

In two-player games, but not in games with more players, an equilibrium is admissible iff it is perfect.   But it does not appear that this is the source of the distinction between the results in the two cases: even if we strengthen admissibility to requiring perfection,  it is not clear that stable sets for generic $u$ have to be connected components of perfect equilibria. If we do not want to impose minimality, then perhaps we  need to view a (perfect) equilibrium as a lexicographic equilibrium \cite{BBD1991,GK2002} and require connectedness of such solutions.\footnote{This fix brings with it mathematical issues: do we allow for only LPSs with finitely many levels?  infinite levels? what is the appropriate topology for such sets? Conditional probability systems, CPSs, are better behaved than LPSs, as they form a ball \cite{M1989a}, but while they are useful for this paper, they seem inadequate for the general case.}

\begin{example}\label{example main}
	We illustrate the main idea in proving Theorem \ref{thm main 1} using an example.  Theorem \ref{thm main 2} is proved analogously, but using payoff perturbations instead of strategy perturbations.  
	
	Consider the Beer-Quiche game of Example \ref{example bq game}. The Quiche outcome is not stable as it relies on the receiver assigning more probability to weak rather than strong if he sees Beer, yet Beer is never a best reply to the equilibrium outcome.  Backward induction does not help by itself since all beliefs are consistent.  As we saw in Example \ref{example bq Axiom D}, we could eliminate this outcome with just Axioms B and D$^*$.  Here will use Axiom I$^*$ as well and give a different construction.
	
	Take the game $G_O$ developed in Example \ref{example invariance}.  It is useful to represent $G_O$ as an outside option game where the outside option gives $(2,2)$ and otherwise there is a $2 \times 2$ subgame.  In the unique belief supporting the outcome as a sequential equilibrium,  player $o_2$ assigns  probability $2/3$ to $M$ and probability $1/3$ to $B$.
	
	As in Example \ref{example invariance} embed $G$ in the game $\tilde G$ using $G_O$: the choice of $M$ (resp.~$B$) triggers Beer for Strong (resp.~Weak).  Consider the game $\bar G$ obtained from $\tilde G$ by adding a dominated strategy for Weak as in Example \ref{example bq Axiom D} but now the dominated strategy immediately overrides $o_1$'s choice (but not Strong's choice) and implements Bottom. 
	
	We claim that there is no sequential equilibrium of $\bar G$ that induces Quiche.  Indeed, observe first that in any such equilibrium, $o_2$'s beliefs must assign probability $2/3$ to $M$. There are two sources of the belief: the direct choices of $o_1$ or the choice induced by Weak playing his dominated strategy.  Either way, this implies that the Receiver of $G$ assigns a strictly greater probability to Strong than Weak, upsetting the equilibrium.
	
	We need a combination of both the dominated action for Weak and the effect of the unused actions of $o_1$ for this proof to work. Even if we did not have one of these  two features, the receiver could assign more weight to Weak because there is nothing disciplining the relative trembles of Strong and Weak.  $\Box$
\end{example}

Essential outcomes are examples of nontrivial selections from stable outcomes in Theorem \ref{thm main 1}. Axiomatically, they are characterized by using Axioms D and I in place of $D^*$ and $I^*$, respectively, as our second theorem shows.

\begin{theorem}\label{thm main 2}
 Essentiality satisfies Axioms B, D, and I.  Any solution concept that satisfies Axioms B,  D, and I must, for each generic game $\G(u)$, select from among its  essential outcomes. 
\end{theorem}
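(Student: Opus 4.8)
\emph{Proof plan.} The argument has two halves, paralleling the proof of Theorem \ref{thm main 1}, with payoff perturbations now playing the role that strategy perturbations play there. \textbf{Essentiality satisfies the axioms.} For Axiom B, one shows that every component of equilibria with nonzero index contains a quasi-perfect equilibrium. Using the LPS characterization of quasi-perfection (Govindan and Klumpp \cite{GK2002}) together with the a priori bound on the number of lexicographic levels (Callejas et al.\ \cite{CGP2021}), the quasi-perfect equilibria of $\G(u)$ are exactly the limits of fixed points of a suitable family of perturbed best-reply maps on behavioral strategies (equivalently, on the enabling-strategy polytope $P$); since the index of a component is a fixed-point index, a component with nonzero index cannot be perturbed away, so every map in the family has a fixed point in any prescribed neighborhood of the component, and upper semicontinuity of the equilibrium correspondence keeps a limit of such fixed points inside the component---producing a quasi-perfect equilibrium there. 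For Axiom D, when $\bar G$ is obtained from $G$ by adding weakly dominated strategies without enlarging the Nash set, one constructs a homotopy of displacement maps on $\bar\S$ connecting an extension of that of $G$ to that of $\bar G$ whose only zeros lie in the common Nash set---weak dominance prevents the added strategies from ever becoming best replies along the homotopy in a way that creates new zeros---so $\mathrm{Ind}_G$ and $\mathrm{Ind}_{\bar G}$ agree component by component and $\varphi(G)=\varphi(\bar G)$; this is the content of Appendix A.

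For Axiom I, a game $\tilde G$ that weakly embeds $G$ via $f$ has all of its equilibria in $\tilde\S_\N\times\bar\S_O$, and on that face the equilibrium correspondence of $\tilde G$ fibers---via $f$ restricted to the insiders---over the equilibrium correspondence of $G$, with fibers given by the equilibrium sets of the outsiders' games $\bar G_O^{\tilde\s_\N}$. Since the total equilibrium index of any finite game is $+1$, this fibration leaves indices unchanged, so $f$ gives a bijection between the essential components of $\tilde G$ and those of $G$, and $\varphi(G)$ is precisely the collection of $f$-images of the members of $\varphi(\tilde G)$ (closedness and connectedness are preserved because $f$ is continuous and $\tilde\S$ is compact). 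Equivalently, one may invoke the identification of essentiality with uniform hyperstability (Govindan and Wilson \cite{GW2005}) and verify directly that the latter property is invariant under both operations, payoff perturbations of $G$ and of the enlarged game corresponding to each other.

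\textbf{Any $\varphi$ satisfying Axioms B, D and I selects essential outcomes.} Fix a generic $u$, let $G$ be the normal form of $\G(u)$, take a solution $\S^*\in\varphi(G)$, and let $\mu$ be the outcome it induces, which is unique by connectedness and genericity. Suppose, for contradiction, that $\mu$ is not an essential outcome. By the characterization of essentiality for generic games developed in Section 10, the non-essentiality of $\mu$ produces a payoff perturbation that ``kills'' $\mu$---a game $G'$ arbitrarily close to $G$ in payoffs with no equilibrium that induces $\mu$---and this perturbation can be taken in a structured form realizable through the operations our axioms permit. Concretely, one builds $G'$ from $G$ by (i) adjoining outsider players whose own interaction is an outside-option gadget with a unique equilibrium mixed strategy $q$, chosen so that the payoff shift that $q$ induces on the insiders is exactly the killing perturbation---this is a weak embedding, conditions (1)--(2) being met by the outside-option structure---and (ii) adjoining weakly dominated strategies, the whole construction being performed inside the extensive form (in the spirit of Examples \ref{example bq Axiom D} and \ref{example main}) so that the resulting game $\hat G$ is the normal form of a perfect-recall extensive-form game and the finitely many genericity conditions remain intact. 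By Axioms I and D applied along this construction, there is $\hat\S^*\in\varphi(\hat G)$ with $f(\hat\S^*)=\S^*$, so the insider part of the outcome induced by $\hat\S^*$ is $\mu$. But in $\hat G$ the insiders effectively face $G$ perturbed by $q$, namely $G'$, which has no equilibrium inducing $\mu$; since a quasi-perfect equilibrium is in particular a Nash equilibrium, no quasi-perfect equilibrium of $\hat G$ induces a lift of $\mu$. This contradicts Axiom B, which forces $\hat\S^*$ to contain a profile equivalent to a quasi-perfect equilibrium of $\hat G$'s extensive form and hence one inducing a lift of $\mu$. Therefore $\mu$ must be essential; since $\S^*$ was arbitrary, every solution of $\varphi(G)$ induces an essential outcome, and at least one essential outcome exists because the total index is $+1\neq0$.

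\textbf{Where the difficulty lies.} The crux is the realizability step in the second half: taking the \emph{arbitrary} payoff perturbation that kills $\mu$, as delivered by the Section 10 characterization, and reproducing it faithfully within the class of operations sanctioned by Axioms D and I---as the normal form of a perfect-recall extensive-form game---while simultaneously satisfying conditions (1)--(3) of a weak embedding and the requirements defining weakly irrelevant dominated strategies, preserving the correspondence between $\varphi(\hat G)$ and $\varphi(G)$ through the $f$-images, and keeping the game generic. This is the payoff-perturbation analogue of the strategy-perturbation construction that drives the proof of Theorem \ref{thm main 1} in Section 9, and it is where most of the work resides; by comparison, the index-theoretic facts used in the first half, while not routine, are more standard.
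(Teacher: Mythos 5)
Your first half is broadly in line with the paper's Appendices A and B (your Axiom D argument is exactly the homotopy-of-best-replies proof; your Axiom I justification via ``the total index is $+1$ so the fibration preserves indices'' is looser than the retraction-and-homotopy argument actually needed, and Axiom B is obtained in the paper more simply from the fact that an essential component contains a stable set, hence a proper---so quasi-perfect---equilibrium). The genuine gap is in the second half, and it is twofold. First, the characterization in Section 10 does \emph{not} deliver ``a game $G'$ arbitrarily close to $G$ in payoffs with no equilibrium inducing $\mu$.'' Index zero is not known to yield such a killing payoff perturbation of $G$ itself (that is precisely why one must pass to uniform hyperstability, i.e.\ add duplicates, in \cite{GW2005}, and why the paper conjectures a gap beyond the two-player generic case). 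What Propositions \ref{prop essentiality} and \ref{prop zero index} actually provide is the factorization $\text{Ind}(B^*)=\text{Ind}(b^{*,+})\cdot\text{Ind}(B^{*,-})$ together with a \emph{point-dependent} bonus function $g$ on the compactified space $\bbB^*$ of off-path strategies, consistent beliefs and deviation weights $\d$, vanishing near $\partial\bbB^*$, such that the $g$-perturbed off-path best-reply correspondence has no fixed point (via the Hopf extension theorem). Your plan replaces this with a single constant perturbation that eliminates the \emph{Nash} equilibria inducing $\mu$, which is a different and unavailable object.

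Second, even granting such a $G'$, your realizability step is internally inconsistent with the axioms you invoke. Condition (3) of Definition \ref{def weak embedding} forces the insiders' preference rankings on the face $\tilde\S_\N\times\bar\S_O$ to coincide with those in $G$, and Definition \ref{def waids} requires the Nash set to be unchanged; consequently \emph{every} game reachable from $G$ by these operations has Nash equilibria lifting every equilibrium of $G$, in particular equilibria inducing a lift of $\mu$. So the conclusion ``$\hat G$ has no equilibrium inducing $\mu$, hence no quasi-perfect equilibrium inducing $\mu$'' cannot hold: if the outsiders' equilibrium play really shifted the insiders' payoffs to $G'$, the construction would simply fail to be a weak embedding. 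The paper's Section 11 construction avoids this by applying the bonus $g$ only at nodes following \emph{off-equilibrium} messages of the signaling senders, with mimicking outsiders (built on the triangulations of Step 2) reading off the off-path behavior, beliefs and $\d$, and the dominated strategies of Step 7 disciplining the relative trembles. This leaves the Nash equilibria and on-path preferences untouched---so Axioms I and D apply---while a quasi-perfect equilibrium of $\bar\G$ with outcome $\mu$ would force its off-path data to be a fixed point of the $g$-perturbed correspondence, contradicting Proposition \ref{prop zero index}. In short, the contradiction with Axiom B must be engineered at the level of conditional (off-path) optimality, not at the level of Nash existence, and your proposal as written does not reach it.
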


Theorem \ref{thm main 2} can be strengthened to show that the solutions themselves have to be essential sets, and minimality can replace connectedness in the theorem.\footnote{A relationship between connectedness and minimality exists in fixed point theory: sets that are minimal w.r.t to robustness of perturbations of the fixed-point map are connected components of fixed points with nonzero index.}   The proof of Theorem \ref{thm main 2} follows the same road map as that of Theorem \ref{thm main 1}. The main difference between them arises from the different characterizations of the two concepts---compare Propositions \ref{prop Q} and \ref{prop essentiality}.

\section{The Structure of Consistent Beliefs}

We can  use enabling strategies to encode beliefs and optimal continuation strategies at information sets.
For $n, h_n$, let $L_n(h_n)$ be the set of  last actions  following $h_n$ and let $Z(h_n)$ be the set of terminal nodes following $h_n$. Let $P_n(h_n)$ be the projection of $P_n$ to ${[0, 1]}^{L_n(h_n)}$. Let $\S_n(h_n)$ be the set of mixed strategies of $n$ that choose all the actions leading to $h_n$ with probability one. Recall that $\pi_n: \S_n \to P_n$ is the map sending mixed strategies to enabling strategies. The projection $P_n^1(h_n)$ to $[0, 1]^{L_n(h_n)}$ of $\pi_n(\S_n(h_n))$ is the  maximal face of $P_n(h_n)$ that does not contain zero. (If $\S_n \neq \S_n(h_n)$, then $P_n^1(h_n)$ is a proper face of $P_n(h_n)$.) 

Suppose $p$ is a profile such that $p_{-n}(z) > 0$ for some $z \in Z(h_n)$. Then the optimal continuation from $h_n$ for player $n$ depends on maximizing $\sum_{z \in Z(h_n)}p_{-n}(z)p_n^1(z)u_n(z)$ by choosing $p_n^1 \in P_n^1(h_n)$.  What matters for optimal behavior is the distribution  over $Z(h_n)$ that is induced by $p_{-n}$.  By going to limits of completely mixed strategies we get consistent beliefs encoded in these distributions. We study the structure of these beliefs below.

Let $\L$ be the set of all ordered pairs $(a_m, a_n)$ with $a_m \in L_m$,  $a_n \in L_n$ for some $m, n$, (with possibly $m = n$ as well).  For each $n$, let $\Z_{-n}$ be the set of all pairs $(z_0, z_1)$ that belong to $Z(h_n)$ for some $h_n$, and write $\Z$ for $\vee_n \Z_{-n}$.   Each $p$ in the relative interior $P^0$ of $P$ induces for each $(a_m, a_n) \in \L$ a conditional probability 
\[
\l(a_m, a_n) \equiv \frac{p_m(a_m)}{p_m(a_m) + p_n(a_n)},
\] 
and for each $n$ and $(z_0, z_1) \in \Z_{-n}$
\[
\l_{-n}(z_0, z_1) \equiv \frac{p_{-n}(z_0)}{p_{-n}(z_0) + p_{-n}(z_1)}.
\]

Let $\bbP^0$ be the set of all pairs $(p, \l) \in P^0 \times {[0, 1]}^{\L \cup \Z}$ where $\l$ is the vector of conditional probabilities derived from $p$. Let $\bbP$ be the closure of $\bbP^0$.  Denote by $\partial \bbP$ the set $\bbP \backslash \bbP^0$. Let $\proj:(\bbP, \partial \bbP) \to (P,\partial P)$ be the natural projection.   Since $\bbP^0$ is homeomorphic to $P^0$ under $\proj$, and $(\bbP, \partial \bbP)$ is a semialgebraic pair,  $(\bbP, \partial \bbP)$ is an orientable $d$-pseudomanifold with boundary, where $d$ is the dimension of $P$.  But that does not imply that $\partial \bbP$ is also a pseudomanifold, a weaker version of which property we need and which we now prove.\footnote{What our analysis here shows is that each connected component of $\partial \bbP$ is a $(d-1)$-pseudomanifold. It is not difficult to show that $\partial \bbP$ is connected and hence, itself, a pseudomanifold. McLennan \cite{M1989a, M1989b} shows that if the beliefs encode all likelihood ratios, not just for last actions, then $\partial \bbP$ is a sphere; it is not clear if this property is true  for $\partial \bbP$ as currently defined.}

For each triple $R = (R^0, R^1, R^2)$ that partitions $\L \cup \Z$, let $\bbP(R)$ be the set of $(p, \l) \in \partial \bbP$ such that $\l(\cdot)$ belongs to $(0, 1)$ on $R^0$, is one on $R^1$ and zero on $R^2$.  Obviously the collection of the sets $\bbP(R)$ that are nonempty partition $\partial \bbP$.  Moreover for each  subset $\bbP(R)$ of $\partial \bbP$ there is a face of $\partial P$ whose interior contains $\proj(\bbP(R))$.  The following lemma describes the structure of these sets.

\begin{proposition}\label{prop belief manifold}
 	 Each $\bbP(R)$ is a boundaryless manifold and belongs to the closure of one that is a $(d-1)$-dimensional manifold; moreover, if $\bbP(R)$ is $(d-2)$-dimensional, it belongs to the closure of exactly two such manifolds. 
\end{proposition}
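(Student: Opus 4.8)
The plan is to analyze each stratum $\bbP(R)$ by parametrizing $\bbP^0$ explicitly and then passing to the limit that defines $\partial\bbP$. First I would set up coordinates: since $\bbP^0$ is homeomorphic to $P^0$ via $\proj$, write a point of $P^0$ as $p$ with all last-action probabilities strictly positive, and encode approaches to the boundary by a ``deletion order'' on the last actions---i.e., a filtration $A = A^{(0)} \supsetneq A^{(1)} \supsetneq \cdots \supsetneq A^{(K)}$ together with, for each level, a probability vector on the actions entering at that level. Standard consistency arguments (as in Kreps--Wilson, or McLennan \cite{M1989a}) show that the conditional probabilities $\l(a_m,a_n)$ and $\l_{-n}(z_0,z_1)$ are then determined: $\l(a_m,a_n)\in(0,1)$ exactly when $a_m,a_n$ drop out at the same level, and is $0$ or $1$ according to which drops out first; similarly for the $\Z$-coordinates, whose limiting values are products/ratios of the level-probabilities of the actions of the other players on the two paths. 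This shows that each nonempty $\bbP(R)$ is the image of a semialgebraic chart, and---crucially---that the defining equations among the level-probabilities are multiplicative (monomial) relations coming from the products $p_{-n}(z)=p_0(z)\prod_{m\ne n}p_m(a_m(z))$. After a logarithmic change of coordinates these become linear, so each $\bbP(R)$ is (the relative interior of) a polytope intersected with a linear subspace, hence a boundaryless manifold; the dimension count gives $\dim\bbP(R)\le d-1$, with the generic strata---those where exactly one action is deleted at a time, so $R^0=\varnothing$---having dimension exactly $d-1$.

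Next I would prove the two ``closure'' assertions by a merging/splitting argument on deletion orders. For the first claim, given any $\bbP(R)$, I would produce a $(d-1)$-dimensional stratum in its closure by refining the deletion order: take the filtration defining $R$ and split every level that deletes more than one action into a sequence of single-action deletions, ordered arbitrarily; letting the ratios between successively-deleted actions tend to $0$ recovers, in the limit, the given point of $\bbP(R)$, exhibiting $\bbP(R)$ in the closure of the finer (generic) stratum. For the second claim, suppose $\bbP(R)$ is $(d-2)$-dimensional. A $(d-1)$-stratum whose closure contains it corresponds to a deletion order that is a one-step refinement of $R$'s order: the codimension-one defect means exactly one extra linear relation is lost, which happens precisely when two actions that are deleted simultaneously in $R$'s order are instead deleted consecutively---and there are exactly two ways to order that pair. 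One must check these are the only one-step refinements producing a $(d-1)$-stratum (refining by separating a block into a non-adjacent pair, or across different players whose actions don't interact through any $Z(h_n)$, either changes dimension by more than one or produces the same stratum), and that both resulting strata genuinely have $\bbP(R)$ in their closure and are distinct.

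The main obstacle I expect is the dimension bookkeeping in the two closure statements: one has to be careful that splitting a simultaneous deletion into an ordered pair drops the dimension by \emph{exactly} one (not zero, not two), which requires showing the relevant monomial relation among the level-probabilities is genuinely independent of the others and that no \emph{other} relation is simultaneously created or destroyed. This is where genericity of $u$ does \emph{not} help---the statement is purely about the combinatorial/semialgebraic geometry of $\bbP$---so the argument must be structural, tracking which coordinates in $\L\cup\Z$ become ``active'' (i.e., lie in $R^0$) under each refinement. A clean way to organize this is to identify the tangent space of each stratum, in log-coordinates, with the kernel of an explicit incidence-type matrix indexed by the deleted blocks and the pairs in $\L\cup\Z$, and to read off the rank changes directly; the $(d-2)\Rightarrow$ ``exactly two'' statement then becomes the assertion that a corank-one such matrix admits exactly two minimal rank-restoring refinements, corresponding to the two orderings of the unique offending pair.
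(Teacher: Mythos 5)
Your setup (log-coordinates on the vanishing actions, strata cut out by sign patterns of affine forms) is broadly in the spirit of the paper's proof, but the combinatorial heart of your argument has the dimension relation inverted, and this breaks both closure claims. A stratum's dimension over a fixed face of $P$ equals the number of independent coordinates of $\L\cup\Z$ whose limits stay \emph{finite} (interior), and this is maximal precisely when the escape direction is maximally tied: if $A^0$ is the set of vanishing actions, a $(d-1)$-dimensional stratum corresponds to a \emph{single} escape direction $\vartheta$ (a vertex of the polyhedral subdivision of the direction simplex) at which $|A^0|-1$ independent forms vanish, so that the finite limit values supply $|A^0|-1$ free parameters. A ``generic'' one-action-at-a-time deletion order is the opposite extreme: almost all ratios degenerate to $0$ or $1$, the fiber over $p$ is a point, and the stratum has dimension $\dim(P^1)$, which is as small as possible. (Simplest test case: one player, three last actions, $p$ the vertex using only the third; the stratum with $\l(a,b)\in(0,1)$ -- simultaneous deletion -- is the $(d-1)$-dimensional one, while the single-deletion strata $\l(a,b)\in\{0,1\}$ are $(d-2)$-dimensional and lie in \emph{its} closure, not conversely.) So your first closure argument runs backwards: refining a level into consecutive single deletions lowers the stratum dimension and puts the refined stratum in the closure of the original one, not the original in the closure of the refined one. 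The correct move is to \emph{coarsen} (impose additional ties), and the nontrivial step -- which the paper handles by showing the closed cell $\bar\Theta(R;A^0)$ of escape directions has a vertex in the \emph{open} simplex -- is that this coarsening can be done without forcing some action's probability to stop vanishing (i.e., without the vertex escaping to the boundary of the direction simplex).

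The second claim has a further gap that the ordering-of-a-pair picture cannot see: when $\bbP(R)$ has dimension $d-2$, its cell of escape directions is a segment, and the two $(d-1)$-manifolds containing it in their closures correspond to the two endpoints of that segment. One endpoint may lie on the boundary of the direction simplex, in which case the second $(d-1)$-manifold does not live over the same face $P^1$ at all but over a different face of $P$ (a smaller set $\hat A^0$ of vanishing actions) -- exactly as in the three-action example above, where the second $(d-1)$-stratum through $\{\l(a,b)=1\}$ sits over the open edge where $a$ has positive probability. Your count of ``exactly two orderings of the offending pair'' misses this cross-face case entirely, and even within a fixed face it is not the right enumeration, since moving up in dimension requires adding a vanishing relation (a tie), not resolving one. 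Relatedly, your claim that the limiting conditionals are determined by the deletion filtration plus level-probabilities is too quick for the $\Z$-coordinates: these compare \emph{sums} of log-magnitudes across several players, so ties (hence interior limits) can occur across different levels, and the stratification by the sign patterns $R$ is genuinely finer than the stratification by deletion orders. This last point is repairable, but the inverted dimension bookkeeping and the missing different-face alternative in the $(d-2)$ case are essential and would have to be redone along the lines of the dual polyhedral analysis.
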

	
\begin{proof}
	We will first study, for a fixed $p \in \partial P$, the structure of the set of $\l$ such that $(p, \l) \in \bbP$. Let $A^0$ be the set of all last actions across all players that are used with zero probability under $p$.  As $p$ belongs to $\partial P$, it  cannot be completely mixed, and therefore $A^0$ is nonempty.  Given $\theta \in \Re_{++}^{A^0}$, for each $n$ and $a_n \in L_n$, let $\tilde p_n(a_n; \theta)$ equal $p_n(a_n)$ if $p_n(a_n) > 0$; otherwise let it equal $\theta_n(a_n)$. We define $\tilde p_{-n}(z; \theta)$ similarly, using $\tilde p_m$ rather than $p_m$ for each $m \neq n$. Let $\tilde \l(\theta)$ be the likelihood ratios defined as follows. For any pair $(a_m, a_n) \in \L$:
	\[
	\tilde \l(a_m, a_n; \theta) = \frac{\tilde p_m(a_m; \theta)}{\tilde p_n(a_n; \theta)};
	\] 
	and for each $n$ and $(z_0, z_1) \in \Z_{-n}$:
	\[
	\tilde \l_{-n}(z_0, z_1; \theta) = \frac{\tilde p_{-n}(z_0; \theta)}{\tilde p_{-n}(z_1; \theta)}.
	\]
	From $\tilde \l$ we can compute $\l(a_m, a_n; \theta)$ as usual: $\l(a_m, a_n; \theta) = \tilde \l(a_m, a_n; \theta){(1+\tilde \l(a_m, a_n; \theta))}^{-1}$ and similarly for $\l_{-n}(z_0, z_1; \theta)$. 
	
	We claim that $(p, \l)$ belongs to $\bbP$ iff there exists a sequence $\theta \to 0$ such that $\l(\cdot; \theta) \to \l$. Indeed, to prove the necessity of the condition, suppose $(p, \l)$ belongs to $\bbP$. There exists a sequence $(p^k, \l^k)$ converging to $(p, \l)$ with $p^k \in P^0$ for all $k$. For each $k$ and $a^0 \in A^0$, let $\theta_{a^0}^k = p_{a^0}^k$. Clearly $\theta^k \to 0$. Using $p$ along with $\theta^k$ to generate  sequences $\tilde p_n(\cdot; \theta^k)$ and $\tilde p_{-n}(\cdot; \theta^k)$ for each $n$,   we see that the corresponding sequence $\l(\cdot; \theta^k)$ converges to $\l$.  To prove sufficiency, given a sequence $\theta^k \to 0$ generating a convergent sequence of beliefs $\l(\cdot, \cdot; \theta^k) \to \l$, define for each $k$ a completely mixed behavioral strategy profile $b^k$  using the likelihood ratios $\l(\cdot, \theta^k)$. The sequence of $b^k$'s induces a corresponding sequence $p^k$ of enabling strategies, which converges to $p$. The beliefs associated with $p^k$ are $\l(\cdot, \cdot; \theta^k)$, which completes the proof of the claim.
	
	To compute the set of $(p, \l) \in \partial \bbP$ for this given $p$, we therefore need to study the limits of the sequences $\tilde \l(\cdot, \cdot; \theta^k)$ as $\theta^k \to 0$. Taking the log of $\tilde \l(\cdot,\cdot;\theta)$ in the formula defining it, and using a change of variable $\vartheta_{a^0} \equiv \ln(\theta_{a^0})$ for each $a^0 \in A^0$, we get that the log-likelihood ratios are affine in the variable $\vartheta$.  Call this function $L^p(\vartheta)$.  $L^p$ is an affine function from $\Re^{A^0}$ to $\Re^{\L \cup \Z}$.  For each $\vartheta$, it assigns to each $(a_m, a_n)$  (resp.~each  $n$ and $(z_0, z_1)$ in  $\Z_{-n}$) a log-likelihood ratio $L_{a_m, a_n}^p(\vartheta)$ (resp.~$L_{-n, z_0,z_1}^p(\vartheta)$).  
	
	Let $E^p \equiv L^p(\Re^{A^0})$ and let ${[-\infty, \infty]}^{\L \cup \Z}$ be obtained by the two-point compactification $[-\infty, \infty]$ of each factor. Let $\bar E^p$ be the closure of $E^p$ in ${[-\infty, \infty]}^{\L \cup \Z}$. Let $\bar E_\infty^p$ be the set of $\eta \in \bar E^p$ that are limits of sequences $L^p(\vartheta^k)$ with $\vartheta_a^k \to -\infty$ for each $a \in A^0$.  $\bar E_\infty^p$ is homeomorphic to the set of $(p, \l) \in \partial \bbP$. 
	
	Let $\bar \Theta(A^0)$ be the unit simplex in $\Re^{A^0}$, i.e.~the set of $\vartheta \in \Re_{+}^{A^0}$ such that $\sum_{a^0} \vartheta_{a^0 \in A^0} = 1$, and let $\Theta(A^0)$ be its relative interior.  Denote by $\bar L^p$ the linear function on $\Re^{A^0}$ that is the derivative of the affine function $L^p$. For a triplet  $R \equiv (R^0, R^1, R^2)$ that is a partition of $\L \cup \Z$ into three nonempty subsets, let $\Theta(R; A^0)$ be the set of $\vartheta \in \Theta(A^0)$ such that $\bar L_{r}^p(\vartheta)$ is resp., zero, negative, and positive, depending on whether $r$ belongs, resp., to $R^0$, $R^1$ and $R^2$.  Obviously the sets $\Theta(R; A^0)$ obtained by considering all possible triples $R$ partition $\Theta(A^0)$. 
	
	For each triplet $R$ for which $\Theta(R; A^0)$ is nonempty, let $E^p(R)$ be the set $L_{R^0}(\Re^{A^0}) \times {\{\, \infty \, \}}^{R^1} \times {\{\, -\infty \, \}}^{R^2}$. We claim that the sets $E^p(R)$ partition $\bar E_\infty^p$. The sets $E^p(R)$ are obviously disjoint.  Thus to prove the claim we have to show that $\bar E_\infty^p$ is the union of these sets.  First take $\eta$ in $\bar E_\infty^p$. There is a unique $R$ such that $\eta_r$ is resp., finite, $+\infty$, and $-\infty$ depending on whether $r$ belongs, resp., to $R^0$, $R^1$, and $R^2$.  Obviously there is a $\vartheta$ such that $L_{R^0}(\vartheta)$ is the projection of $\eta$ to $R^0$ and thus $\eta$ belongs to $E^p(R)$.    Going the other way, given $\eta \in E^p(R)$ for some $R$, there exists $\vartheta$ such that $\eta_{R^0}  = L_{R^0}^p(\vartheta)$. Let $\vartheta'$ be a point in $\Theta(R)$. Obviously $\eta$ is the limit of $L^p(\vartheta + \a \vartheta')$ as $\a \to -\infty$. Thus, $\bar E^p_\infty$  contains the union of the sets $E^p(R)$ and our claim is proved.
	
	To analyze the structure of the sets $E^p(R)$, we first analyze their duals $\Theta(R; A^0)$. For each $R$, letting $d^0(R; A^0)$ be the dimension of $L_{R^0}^p(A^0)$, we see that  $\Theta(R; A^0)$ is a linear manifold of dimension $|A^0| - 1  - d^0(R; A^0)$. The closure $\bar \Theta(R; A^0)$ of each $\Theta(R; A^0)$ in $\bar \Theta(A^0)$  is a polytope and the relative interior of each face of it that does not belong to the boundary of $\bar \Theta(A^0)$ is a set of the form $\Theta(R'; A^0)$.   We claim that $\bar \Theta(R; A^0)$ has at least one vertex in $\Theta(A^0)$ (which then corresponds to some $\Theta(R'; A^0)$). Indeed, to show this, choose $R'$ such that $\Theta(R'; A^0)$ is a face of $\Theta(R; A^0)$ with the smallest possible dimension among faces that are not in contained in the boundary of $\bar \Theta(A^0)$. We claim that $\Theta(R'; A^0)$ is a singleton. Otherwise, there is a line segment in  $\bar \Theta(R', \Theta)$ and a pair $(a_m^0, a_n^0)$ of last actions (possibly of the same player) such that $\vartheta_{a_m^0} - \vartheta_{a_n^0} = \bar L_{a_m, a_n}^p(\vartheta)$ is positive at one end point of the line segment and negative at the other, which is impossible by the definition of the set $\Theta(R'; A^0)$.  Thus $\bar \Theta(R; A^0)$ has a vertex in $\Theta(A^0)$. If $\Theta(R; A^0)$ is 1-dimensional and has a vertex $\vartheta^0$ in $\bar \Theta(A^0) \backslash \Theta(A^0)$, then $\vartheta^0$ is a singleton set $\Theta(\hat R; \hat A^0)$ in $\Theta(\hat A^0)$ for some $\hat R$ and $\hat A^0 \subsetneq A^0$.

	The properties of $\Theta(R; A^0)$ described above yield the following properties for the sets $E^p(R)$. Each set $E^p(R)$ is a linear manifold of dimension $d^0(R; A^0)$. And its closure is a union of manifolds of the form $E^p(R')$ where $R'$ is such that $\bar \Theta(R'; A^0)$ contains $\Theta(R; A^0)$ in its closure.  Every $E^p(R)$ belongs to the closure of a manifold of dimension $|A^0| - 1$. If $d^0(R; A^0) = |A^0|-2$, then it belongs to the closure of either one or two such manifolds $E^p(R')$ of dimension $|A^0|-1$, depending on whether the set $\Theta(R; A^0)$ has one or two end points in $\Theta(A^0)$. When the set $\Theta(R; A^0)$ has one end point $\vartheta^0$ in $\bar \Theta(A^0) \backslash \Theta(A^0)$, there exists $\hat R$ and $\hat A^0 \subsetneq A^0$ such that $\vartheta^0$ is the set  $\Theta(\hat R, \hat A^0)$. It is easily checked that for each  $\eta \in E^p(R)$, there then  exists a sequence $p^k \to p$ with $p_{a_n}^k > 0$ iff $a_n \notin \hat A^0$, and a  corresponding sequence of $\eta^k$ (in the $(|\hat A^0| -1)$-dimensional manifold) $E^{p^k}(\hat R)$ converging to $\eta$.   %If  for each $R$  we let $\bbP(R, p)$ be the image of $E^p(R)$ in $\bbP$, then the properties of $E^p(R)$ carry over to the sets $\bbP(R, p)$.  
	All the qualitative results about the structure of the set of points $(p, \l)$ for a fixed $p$ that we have described hold uniformly across all profiles with the same support as $p$.
	
	We are now ready to prove the stated properties of the sets $\bbP(R)$.  Take $\bbP(R)$ in $\partial \bbP$.  There exists a unique proper face $P^1$ of $P$ whose interior $P^{1,0}$ contains $\proj(\bbP(R))$.  Let $A^0$ be the set of unused actions in $P^1$.  $\bbP(R)$ is then $\cup_{p \in P^{1,0}} \bbP(R, p)$, where for each $p$,  $\bbP(R, p) = \proj^{-1}(p) \cap \bbP(R)$, and it is homeomorphic to $\cup_{p \in P^{1,0}} (\{\,  p \, \} \times E^p(R))$  and thus manifold of dimension  $d^0(R; A^0) + \dim(P^1)$. As $d^0(R; A^0) \le |A^0| - 1$ and $\dim(P^1) = d - |A^0|$, $\dim(\bbP(R)) \le d-1$. If $d^0(R; A^0) = |A^0| -1$, then $\dim(\bbP(R)) = d -1$.  If $\dim(\bbP(R)) < d-1$, then it belongs to the closure of a manifold $\bbP(R')$ of dimension $d-1$ as the polytope $\bar \Theta(R, A^0)$ has a vertex  in $\Theta(A^0)$ that is of the form $\Theta(R'; A^0)$. If $\bbP(R)$ has dimension $d-2$,  then it either belongs to the closure of  two such manifolds whose projection to $P$ is $P^{1,0}$,  or if it belongs to the closure of just one such manifold and then it belongs to the closure of one projecting to a face   $P^2$ that is either a  face of $P^1$ or has $P^1$ as a face. 
\end{proof}
	
\section{A Characterization of Stability}

Given $(q, \l) \in \bbP$, we have for each $n$ and information set $h_n$, a well-defined belief over the continuation strategy of $n$'s opponents.  Indeed, since we have all the conditional probabilities $\l(z_0, z_1)$ as $(z_0, z_1)$ range over pairs of terminal nodes in $Z(h_n)$, we can compute a ``probability distribution'' $q_{-n}(z; \l, h_n)$ over all terminal nodes  $z$ in $Z(h_n)$. $p_n^1 \in P_n^1(h_n)$ is optimal against $(q, \l)$ at $h_n$ if it maximizes $\sum_{z \in Z(h_n)}\tilde p_n^1(z)q_{-n}(z; \l, h_n)u_n(z)$ subject to  $\tilde p_n^1 \in P_n^1(h_n)$.   We say that an action $a_n \in L_n$ is optimal against $(q, \l)$ in $\G(u)$ if for each $h_n$ that precedes $a_n$ (including the information set where $a_n$ is available) there is an optimal continuation strategy $p_n^1 \in P_n^1(h_n)$ for which $p_n^1(a_n) = 1$.  When $q_{-n}$ is a completely-mixed strategy profile, the requirement that $a_n$ is optimal is simply a restatement of the requirement that there exists $s_n \in S_n(a_n)$ that is a best reply to $q_{-n}$. We say that $p$ is a best reply to $(q, \l)$ if each  $a_n$ in the support of $p_n$ (i.e.~$p_n(a_n) > 0$) is optimal against $(q, \l)$.  

For each game $\G(u)$, let $\calQ(u)$ be the closure of the set of $(p, (p^*, \l)) \in  (P \backslash \partial P) \times \bbP$ such that:  (1) $p^*$ is a best reply to $(p^*, \l)$ in the game $\G(u)$; (2) for every  $a_m$ and $a_n$,  we have $\l(a_m, a_n) \ge \frac{p_m(a_m)}{p_m(a_m) + p_n(a_n)}$ if $a_n$ is not a best reply to $(p, \l)$ in $\G(u)$.  Let $\partial \calQ(u)$  be the set of $(p, (p^*, \l)) \in \calQ(u)$ such that $p \in \partial P$. We write $(\calQ, \partial \calQ)$ for $(\calQ(u), \partial \calQ(u))$ when the game $u$ is implicitly given. 

\begin{proposition}\label{prop Q}
	If $u$ is generic, there exists a finite partition $(\calQ^1, \partial \calQ^1), \ldots, (\calQ^k, \partial \calQ^k)$ of $(\calQ, \partial Q)$ such that:
	\begin{enumerate}
		\item for each $i$, $(\calQ^i, \partial Q^i)$ is a $d$-pseudomanifold;
		\item $\S^*$ is a stable set of $\G(u)$ iff there exists $i$ such that: 
		\begin{enumerate}
			\item $\pi(\S^*) = \{\, q \in P \mid  \exists \, (p, (q, \l)) \in \calQ^i \, \}$;
			\item $\proj: (\calQ^i, \partial \calQ^i) \to (P, \partial P)$ is essential.
		\end{enumerate}	
	\end{enumerate}
\end{proposition}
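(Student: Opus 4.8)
The plan is to realize the pair $(\calQ(u),\partial\calQ(u))$, together with the projection $\proj$ onto its first ($P$-valued) coordinate, as the enabling-strategy reduction of Mertens's equilibrium graph over the strategy-perturbation space, and then to read off the characterization of stable sets directly from Definition \ref{def stability}. First I would do the semialgebraic bookkeeping: the set defining $\calQ(u)$ before taking closure is cut out inside $(P\setminus\partial P)\times\bbP$ by the polynomial equalities and inequalities expressing that $p^*$ is a best reply to $(p^*,\l)$, that a given last action is or is not a best reply to $(p,\l)$, and the inequalities in (2); together with the stratum-by-stratum description of $\bbP$ furnished by Proposition \ref{prop belief manifold}, this exhibits $(\calQ,\partial\calQ)$ as a semialgebraic pair, with $\proj$ finite-to-one over a dense open subset of $P$ when $u$ is generic, hence of dimension $d=\dim P$. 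Genericity of $u$ is invoked here to rule out coincidences among terminal payoffs, to guarantee that along each belief stratum $\bbP(R)$ every continuation best-reply problem has a solution set of the expected dimension with no accidental ties, and to put the binding patterns of the inequalities in (2) in general position. Under these hypotheses the natural stratification of $\calQ(u)$---indexed by the relatively open face of $P$ containing $p$, the belief stratum $\bbP(R)$ containing $(p^*,\l)$ (the second component, written $q$ in part (2)), the sets of last actions optimal against $(p,\l)$ and against $(p^*,\l)$, and the subset of the inequalities in (2) holding with equality---has every top stratum a $d$-manifold and every lower stratum a boundaryless manifold.

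The next step is to assemble these strata into pseudomanifolds. The decisive incidence input is the final clause of Proposition \ref{prop belief manifold}: a $(d-2)$-dimensional belief stratum lies in the closure of exactly two $(d-1)$-dimensional ones. Fibering the best-reply conditions (1) and (2) over the belief strata and over the faces of $P$, and using that for generic $u$ the best-reply boundaries and the equality loci of (2) are of codimension one and in general position, I would check that every codimension-one stratum of $\calQ(u)$ not contained in $\partial\calQ$ lies in the closure of exactly two $d$-dimensional strata, while those contained in $\partial\calQ$ lie in the closure of exactly one. Taking $(\calQ^1,\partial\calQ^1),\dots,(\calQ^k,\partial\calQ^k)$ to be the closures of the connected components of the union of the $d$-dimensional strata, with $\partial\calQ^i=\calQ^i\cap\partial\calQ$, then yields a finite partition into $d$-pseudomanifolds with boundary; orientability is inherited from that of $(\bbP,\partial\bbP)$ and of $P$. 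This establishes part (1).

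For part (2) I would show that $\proj:\calQ(u)\to P$ is the enabling-strategy reduction of Mertens's graph $E\to T_1$. A normal-form perturbation $\d\t$ affects beliefs and continuation incentives only through the point of $P$ and the limiting likelihood ratios that it induces via the maps $\pi_n$; so for generic $u$ the equilibria of the perturbed games $G(\d\t)$, together with their limiting consistent beliefs, are exactly the pairs $(p^*,\l)$ with $(p,(p^*,\l))\in\calQ(u)$ for $p$ the induced reduced tremble---condition (1) being the equilibrium requirement and condition (2) encoding that suboptimal last actions are trembled no more than $p$ dictates---which is where the groundwork of Section 7 and the enabling-strategy machinery of \cite{GW2002, P2023} enters. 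Since $(T_\e,\partial T_\e)$ and $(P,\partial P)$ are both semialgebraic relative disks and the reduction is compatible with the two projections, a \v{C}ech-cohomology and deformation-retraction argument shows that any $X\subseteq E$ realizing a stable set $\S^*$ in the sense of Definition \ref{def stability} reduces to a union of some of the $\calQ^i$, that the connexity requirement forces this union to be a single $\calQ^i$, that $\pi(\S^*)$ is then the $p^*$-coordinate projection of that $\calQ^i$, and that essentiality of $X$ over $T_\e$ holds iff $\proj^*: \check{H}^{d}(P,\partial P)\to\check{H}^{d}(\calQ^i,\partial\calQ^i)$ is nonzero; conversely an essential $\calQ^i$ pulls back to such an $X$. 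This yields the stated biconditional.

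The hardest parts will be the pseudomanifold assembly of the second paragraph---verifying that the inequalities in (2) cut out strata for which the ``exactly two top faces per interior codimension-one face'' property survives, which is precisely where the ``exactly two'' clause of Proposition \ref{prop belief manifold} is indispensable and where genericity must be used most delicately---and the faithful transfer of Mertens's essentiality data, namely showing that the dimension drop from the normal-form perturbation space $T_1$ (of dimension $\sum_n(|S_n|-1)+1$) down to $P$ (of dimension $d$) loses none of the \v{C}ech-cohomological content of condition (3) of Definition \ref{def stability}; this last point rests on the generic-extensive-form fact that the equilibrium correspondence over $T_1$ genuinely factors through the enabling-strategy reduction.
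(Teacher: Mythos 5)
Your plan for part (1)---stratifying $\calQ(u)$ by the face of $P$ containing $p$, the belief stratum $\bbP(R)$, the best-reply pattern and the equality pattern in condition (2), then using the incidence statement of Proposition \ref{prop belief manifold} and genericity of $u$ to verify the pseudomanifold incidences---is essentially the paper's Lemma \ref{lem Q partition} and is sound in outline. The genuine gap is in how you produce the partition and, consequently, in part (2). You define the $\calQ^i$ intrinsically, as closures of connected components of the union of the $d$-dimensional strata of $\calQ$. First, this need not be a partition: the ``exactly two'' clause only governs interior codimension-one strata, and closures of distinct top-dimensional components can still meet along lower-dimensional or boundary strata. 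Second, and more importantly, the biconditional in part (2) requires the pieces to be exactly the $\e\to 0$ limits of the connected components of the equilibrium graph over small strategy perturbations; your steps ``connexity forces a single $\calQ^i$'' and ``an essential $\calQ^i$ pulls back to a stable $X$'' both presuppose this identification, and nothing in your sketch supplies it. The paper instead works with the graph $E(u)$ over $[0,1]\times P$ (with the beliefs appended), invokes Lemma E.3 of \cite{GW2008} to characterize stability through finitely many components $X^i$ of this graph, \emph{defines} $\calQ^i:=X_0^i$, and proves the $X^i$ are pairwise disjoint by the interpolation-in-$(p,q)$ plus payoff-perturbation plus generic-local-triviality argument---a step the paper explicitly notes fails without the belief coordinates, so it cannot be taken for granted.

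The second gap is the cohomological transfer. Your appeal to a ``deformation-retraction'' is not available: the graph is not locally trivial at $\e=0$, and the zero slice $X_0^i=\calQ^i$ is in general not homeomorphic to the nearby slices $X_\eta^i$. The paper replaces this with two separate arguments: generic local triviality over $(0,\eta_0]$ together with a quotient-space homotopy argument showing that essentiality of $\proj$ on $(X_{[0,\eta]}^i,\partial X_{[0,\eta]}^i)$ over $[0,\eta]\times P$ is equivalent to essentiality of the $\eta$-slice (Lemma \ref{lem E partition}), and then a homology comparison of the $0$- and $\eta$-slices inside the slab to pass to $(\calQ^i,\partial\calQ^i)$ (Lemma \ref{lem Q essentiality}). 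Finally, the reduction from the mixed-strategy perturbation space $T_1$ of Definition \ref{def stability} to enabling-strategy perturbations, which you correctly single out as delicate, is not proved in your proposal either; in the paper it enters through the cited characterization in \cite{GW2008} rather than through a new factorization argument, so you would need either to prove that reduction or to cite it explicitly.
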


We prove Proposition \ref{prop Q} via  a sequence of lemmas. For each $u \in  \Re^{NZ}$, let $E(u)$ be the closure of the set of $(\e, p, (q, \l)) \in (0, 1) \times (P \backslash \partial P) \times \bbP$ such that for all $n$ and $a_n \in L_n$, $q_{n,a_n} \ge \e  p_{n,a_n}$, with an equality if $a_n$ is not a best reply to $q$. For simplicity, we write $E$ for $E(u)$ when the utility function we are talking about is understood. For each $0 < \eta \le 1$ (resp.~$0 \le \eta \le 1$) and each subset $X$ of $E$, let $X_{[0, \eta]}$ (resp.~$X_\eta$) be the set of $(\e, p, (q, \l))$ in $X$ such that $\e \le \eta$ (resp.~$\e = \eta$) and let $\partial X_{[0, \eta]}$ (resp.~$\partial X_{\eta}$) be the set of $(\e, p, (q, \l))$  in $X_{[0, \eta]}$ (resp.~$X_\eta$) such that $(\e, p) \in \partial ([0, \eta] \times  P)$ (resp.~$p \in \partial P$).

\begin{lemma}\label{lem E partition}
	If $u$ is generic, there exist $0 < \eta_0 < 1$ and a partition of $(E_{[0, \eta_0]}, \partial E_{[0, \eta_0]})$ into finitely many pairs $(X_{[0, \eta_0]}^i, \partial X_{[0, \eta_0]}^i)$, $i = 1, \ldots, k$ such that for all $i$ and $0 < \eta \le \eta_0$:
	\begin{enumerate}
	\item  $(X_{[0, \eta]}^i, \partial X_{[0, \eta]}^i)$ is a $(d+1)$-pseudomanifold and $(X_\eta^i, \partial X_\eta^i)$ is a $d$-pseudomanifold;
	\item $\proj:(X_{[0, \eta]}^i, \partial X_{[0, \eta]}^i) \to ([0, \eta] \times P, \partial([0, \eta] \times P))$ is essential in cohomology iff $\proj:(X_\eta^i, \partial X_\eta^i) \to (P, \partial P)$ is essential in cohomology.   
	\end{enumerate}
\end{lemma}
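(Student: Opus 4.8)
The plan is to build the partition out of the cell structures already in hand and then check the pseudomanifold and essentiality claims cell by cell. First I would fix a generic $u$ and describe $E$ near $\e = 0$. For $\e > 0$, the defining condition $q_{n,a_n} \ge \e p_{n,a_n}$ with equality off the best-reply set is exactly the graph of equilibria of a strategy-perturbed game over the enabling-strategy polytope $P$, lifted to record the beliefs $\l$ via $\bbP$. The key structural input is Proposition \ref{prop belief manifold}: over each open face of $\partial P$ the fibre of $\bbP$ decomposes into the manifolds $\bbP(R)$, each in the closure of a $(d-1)$-manifold, with the codimension-one incidence property. Combining this with the semialgebraic description of the best-reply correspondence---generic $u$ guarantees that, as $(\e,p)$ varies, the partition of last actions into ``best reply'' versus ``not best reply'' against the induced beliefs is locally constant on the strata of a semialgebraic stratification of $[0,\eta_0]\times P$---I would produce a finite semialgebraic stratification of $E_{[0,\eta_0]}$ for some small $\eta_0$, and then take the $(X^i_{[0,\eta_0]},\partial X^i_{[0,\eta_0]})$ to be the closures of the top-dimensional strata (equivalently, the closures of the connected components of the smooth locus of $E_{[0,\eta_0]}\setminus\partial E_{[0,\eta_0]}$). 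Shrinking $\eta_0$ if necessary, the combinatorial type of this stratification is constant in $\e$, which is what makes parts (1) and (2) uniform in $\eta \le \eta_0$.

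For part (1), I would verify the pseudomanifold property by a dimension count plus a local-degree-theory argument. Each $X^i_{[0,\eta]}$ is the closure of a $(d+1)$-manifold (one dimension for $\e$, $d$ for the equilibrium graph over $P$); that its boundary-of-the-pair together with the locus where two top strata meet along codimension one, exhaust the ``boundary,'' and that each interior codimension-one face is a face of exactly two top cells, follows from the codimension-one incidence statements in Proposition \ref{prop belief manifold} together with the fact that for generic $u$ the best-reply stratification is itself ``locally bipartite'' across codimension-one walls (this is where genericity is used to rule out higher-order degeneracies). Orientability and the pseudomanifold conclusion for $X^i_\eta$ then come from transversality of the slice $\{\e = \eta\}$ to the stratification, which again holds for all $\eta \le \eta_0$ after shrinking.

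For part (2) I would use the deformation-retraction idea underlying the usual ``essentiality is a homotopy invariant of the projection'' arguments. The inclusion $X^i_\eta \hookrightarrow X^i_{[0,\eta]}$, covering $P \hookrightarrow [0,\eta]\times P$ as the slice at the free end, admits a fibre-preserving deformation retraction: push $\e$ monotonically from $0$ up to $\eta$ along the graph. Concretely, I would show that the map $(\e', p, (q,\l)) \mapsto$ (follow the stratum to its value at $\e' = \eta$) is a well-defined continuous retraction of pairs $(X^i_{[0,\eta]},\partial X^i_{[0,\eta]}) \to (X^i_\eta,\partial X^i_\eta)$ compatible with $\proj$, using that within a single cell $X^i$ the stratum is a graph over $[0,\eta]\times(\text{a piece of }P)$. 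Since $([0,\eta]\times P,\partial([0,\eta]\times P))$ likewise deformation retracts onto $(P,\partial P)$ at $\e = \eta$, the two projection maps are homotopic through maps of pairs, so one is nonzero on \v{C}ech cohomology iff the other is.

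The main obstacle I expect is the genericity-plus-local-structure step in part (1): showing that for generic $u$ the combined stratification coming from the belief manifolds $\bbP(R)$ and the best-reply partition is ``clean'' enough that $E_{[0,\eta_0]}$ is genuinely a pseudomanifold---i.e. that no interior codimension-one face is shared by an odd number $\ne 2$ of top cells, and that the projection to $[0,\eta]\times P$ does not acquire extra branching. Proposition \ref{prop belief manifold} gives exactly the one-or-two incidence count for the belief side, so the work is to show the best-reply side contributes no new odd-valence walls, which is a transversality/dimension argument on the finitely many semialgebraic conditions that could fail, each cutting out a lower-dimensional set of payoffs $u$. The uniformity in $\eta$ is comparatively routine once $\eta_0$ is chosen small enough that the stratification is a product in the $\e$-direction.
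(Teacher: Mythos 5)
There is a genuine gap in your argument for part (2). The two essentiality statements live in different cohomological degrees: $\check H^*([0,\eta]\times P,\partial([0,\eta]\times P))$ is concentrated in degree $d+1$ while $\check H^*(P,\partial P)$ is concentrated in degree $d$, so the two projections are maps into \emph{different} pairs and cannot be ``homotopic through maps of pairs''; the equivalence is a suspension-type statement, not an instance of homotopy invariance. Moreover, the fibre-preserving retraction you propose (``push $\e$ from $0$ up to $\eta$ along the graph'') does not exist: the product structure in the $\e$-direction, which comes from Generic Local Triviality, holds only over $(0,\eta_0]$, and the fibre $X_0^i$ over $\e=0$ (the limit equilibrium set) is in general not homeomorphic to the generic slice $F^i$, so points of $X_0^i$ have no canonical, continuously varying image in $X_\eta^i$. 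Note also that $\partial X_{[0,\eta]}^i$ contains the whole slices $X_0^i$ and $X_\eta^i$, so a retraction onto $X_\eta^i$ could not even be a map of the pairs in question. The paper instead collapses $X_0^i$ and $\{0\}\times P$ to points, uses the trivialization over $(0,\eta_0]$ (chosen compatibly with the faces of $P$) to build a map $\hat\proj(\e,(\eta,p,(q,\l)))=(\e\eta,p)$ linearly homotopic to the collapsed projection, and only then reads off the equivalence with essentiality of the slice; some version of this collapsing/suspension step is unavoidable and is missing from your proposal.

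Part (1) and the word ``partition'' also need more than you supply. You assert the pseudomanifold property of the closures of the top strata from the incidence counts of Proposition \ref{prop belief manifold} plus an unproved ``locally bipartite'' property of the best-reply stratification; in the paper this is exactly the content imported from Lemma E.3 of \cite{GW2008} (each $X_{[0,\eta]}^i$ is a homology manifold for generic $u$), together with Generic Local Triviality to transfer the property to the slices $X_\eta^i$ --- Proposition \ref{prop belief manifold} is actually used later, for the structure of $\calQ$, not here. Finally, closures of components of the smooth locus typically overlap along codimension-one walls, so disjointness of the $X^i$'s is a real issue; the paper proves it with a dedicated argument (log-linear interpolation of curves plus local triviality over the space of payoffs) that works precisely because the belief coordinates $\l$ are carried along in $E$, and your proposal does not address it.
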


\begin{proof}
	Lemma E.3 of \cite{GW2008} shows the existence of $\eta_1$ and subsets $X^1, \ldots, X^k$ of $E$ such that: $\cup_i (X_{[0, \eta_1]}^i, \partial X_{[0, \eta_1]}^i) = (E_{[0, \eta_1]}, \partial E_{[0, \eta_1]})$; and for $i \neq j$, $(X_{[0, \eta_1]}^i \cap  X_{[0, \eta_1]}^j) \subseteq \partial E_1$. We need to show that, in addition,  these sets are  are disjoint.\footnote{In \cite{GW2008} the set $E$ consisted of tuples $(\e, p, q)$ without the induced beliefs, hence this stronger property does not hold in their setup.} Furthermore, the paragraph following the proof of Lemma E.3 (loc. cit.) shows that for $0 < \eta \le \eta_1$, each  $X_{[0, \eta]}^i$ is a $d$-homology manifold and, hence, a pseudomanifold as well, which proves Property (1) for these sets.

	We now show that there exists $0 < \eta_0 \le \eta_1$ such that for all $i$ and $0 < \eta \le \eta_0$, $X_\eta^i$ is a pseudomanifold and that property (2) holds. To show point (1) for the sets $X_\eta^i$ for small $\eta > 0$, we argue as follows. 
	For each $i$, let $\phi^i:  X_{[0, \eta_1]}^i \to [0, \eta_1]$ be the map that sends $(\e,  (p, (q, \l)))$ to $\e$. By the Generic Local Triviality Theorem (Theorem 9.3 of \cite{BCR1998}) there exists $0 < \eta_0 \le \eta_1$, and for each $i$, a semialgebraic pair $(F^i, \partial F^i)$, and a homeomorphism  $h^i: (0, \eta_0] \times (F^i, \partial F^i) \to (X_{[0, \eta_0]}^i \backslash X_0^i, \partial X_{[0, \eta_0]}^i \backslash \partial X_0^i)$ such that $\phi^i\circ h^i (\eta, f^i) = \eta$ for each $(\eta, f^i) \in F^i$.  Looking ahead to our proof of point (2), the function $\phi^i$ can be chosen in such a way that for each face $P_0$ of $P$, there is a subset $F_0^i$ of $F^i$ such that $h^i((0, \eta_0] \times F_0^i)) = \proj^{-1}((0, \eta_0] \times P_0) \cap X_{[0, \eta_0]}^i$. As $X_{[0, \eta_0]}^i$ is a (d+1)-homology manifold, $[0, \eta_0] \times F^i$ is a $(d+1)$-homology manifold as well. Hence, $F^i$ is a $d$-homology manifold and $(F^i, \partial F^i)$ is a $d$-pseudomanifold. $(X_\eta, \partial X_\eta)$ is homeomorphic to $(F^i, \partial F^i)$ by the restriction of $h^i$ to $\{\, \eta \, \} \times (F^i, \partial F^i)$, which completes the proof of point (1). 
	
	To prove point (2), now fix $i$ and  $0 < \eta \le \eta_0$. Let $\bar P_\eta$ be the quotient space of $[0, \eta] \times P$ obtained by collapsing $\{\, 0 \, \} \times P$ to a point $0_P$ and let $\partial \bar P_\eta$ be the image of the boundary under this quotient map.  Likewise let $\bar X_{[0, \eta]}^i$ be the quotient space obtained by collapsing $X_0^i$ to a point $0_{X^i}$ and denote by $\bar X_{[0, \eta]}^i$ its boundary.  Let $\bar \proj$ be the projection from $\bar X_{[0, \eta]}^i$ to $\bar P_\eta$.  The essentiality of $\proj$ for an $(X_{[0, \eta^i]}^i, \partial X_{[0, \eta]}^i)$ is equivalent to that of $\bar \proj$ for the corresponding set $(\bar X_{0, \eta]}^i, \partial \bar X_{[0, \eta]}^i)$.

	 %$(X_\eta^i, \partial X_\eta^i)$ is homeomorphic to $(F^i, \partial F^i)$ by a homeomorphism $\psi_\eta^i$.   Also, 
	 The map $H_\eta^i: (0, \eta] \times X_\eta^i \to \bar X_{[0, \eta]}^i \backslash 0_{X^i}$ that sends $(\e, (\eta, p, (q, \l))$ to $h^i(\e, {(h^i)}^{-1}(\eta, p, (q, \l)))$ is a homeomorphism that extends to a map to their closures.  Hence $H^i$ induces an isomorphism of the cohomology groups.  Define $\tilde \proj: [0, \eta] \times X_\eta^i$ to $\bar P_\eta$ by $\tilde \proj = \bar \proj \circ H^i$. Then $\bar \proj$ is essential iff $\tilde \proj$ is and thus $\proj:(X_{[0, \eta]}, \partial X_{[0, \eta]}) \to ([0, \eta] \times P, \partial ([0, \eta] \times P))$  is essential iff $\tilde \proj$ is.  Let $\hat \proj: [0, \eta] \times X_\eta^j \to  \bar P_\eta$ be the map given by $\hat \proj(\e, (\eta, p, (q, \l)) = (\e\eta, P)$ if $\e > 0$ and $0_P$ otherwise. Since for each face $P^0$ of $P$, there exists  $F_0^i$ of $F^i$ such that $h^i((0, \eta_1] \times F_0^i) = \proj^{-1}((0, \eta_1] \times P_0) \cap X_{[0, \eta_1]}^i$, $\hat \proj$ is linearly homotopic to $\tilde \proj$.  Hence the latter map is essential iff the former is.  Thus $\proj:(X_{[0, \eta]}, \partial X_{[0, \eta]}) \to ([0, \eta] \times P, \partial ([0, \eta] \times P))$ is essential iff $\hat \proj$ is. Obviously $\hat \proj$ is essential iff $\proj: (X_\eta^i, \partial X_\eta^i) \to (P, \partial P)$ is essential, which proves point (2) of the lemma.
	 
	 To complete the proof, there remains to show that the sets $X_{[0, \eta_0]}^i$ are disjoint. As we saw earlier,  the sets $X_{[0, \eta_0]}^i \backslash \partial X_{[0, \eta_0]}^i$ are disjoint. Take then $(\eta, p, (q, \l)) \in \partial X_{[0, \eta_0]}^i$ for some $i$.  If it belongs to $X_{[0, \eta_0]}^j$ for some $j \neq i$, then we can find curves $(\eta^i(t), p^i(t), (q^i(t), \l^i(t))$ and $(\eta^j(t), p^j(t), (q^j(t), \l^j(t))$, $t \in [0, 1]$, that lie in $X_{[0, \eta_0]}^i$  and $X_{[0, \eta_0]}^j$, respectively and  such that: (a) their values at $t = 0$ equals $(\eta, p, (q, \l))$;  (b) $p^i(t), p^j(t) \notin \partial E_1$ for all $t$; (c) $\eta^i(t) = \eta_j^i(t)$ for all $t$.  For $\a \in [0, 1]$, let $p(t; \a)$ (resp.~$(q(t; \a)$) be the point in $P$ that is closest to  ${({(p_{a_n}^i(t))}^{(1-\a)}{(p_{a_n}^j(t))}^\a)}_{a_n \in A}$ (resp.~${({(q_{a_n}^i(t))}^{(1-\a)}{(q_{a_n}^j(t))}^\a)}_{a_n \in A}$) among all $p'$ for which $p_{a_n}^{'}$ equals ${(p_{a_n}^i(t))}^{(1-\a)}{(p_{a_n}^{j}(t))}^\a$ (resp.~${(q_{a_n}^i(t))}^{(1-\a)}{(q_{a_n}^{j}(t))}^\a$) for each $n, a_n$ such that $q_{a_n} = \eta p_{n, a_n}$.  For small $t> 0$, $p(t; \a)$ and $q(t; \a)$ are well-defined for all $\a \in [0,1]$. Moreover, the limit as $t \to 0$ of the sequence $(\eta(t), p(t; \a), (q(t;\a), \l(t;a)))$ is $(\eta, p, (q, \l))$ for all $\a$.  For each $\a$ and small $t$, there is now a small perturbation  of payoffs to $u(t; \a))$ that puts this point in $E(u(t; \a))$. Moreover $u(t; \a)$ can be chosen continuously in $(t, \a)$.  In \cite{GW2008} Lemma E.3 was proved using a generic local triviality argument: in a neighborhood $U$ of $u$, there exist fiber $(F, \partial F)$ and a homeomorphism $h: U \times  (F, \partial F) \to (E(U), \partial E(U))$ such that $\proj \circ h$ is the projection onto the first factor, and the fiber $(F, \partial F)$ is the union of the fibers $(F^1, \ldots, F^k)$.   Therefore, for small $t$ and all $\a$, $(\eta(t;\a), p(t; \a), (q(t;\a), \l(t; \a))$ belongs to the same component of $(F, \partial F)$, which contradicts the assumption that the two curves are in two different sets.
\end{proof}

For each $i = 1, \ldots, k$, let $\calQ^i$ be the set of $(p, (q, \l))$ such that $(0, p, (q, \l)) \in X_0^i$ and let $\partial \calQ^i$ the subset of $\calQ^i$ consisting of points $(p, (q, \l))$ such that $p \in \partial P$. Since the sets ($X_{0, \eta_0]}^1, \partial X_{[0, \eta_0]}^1), \ldots, (X_{[0, \eta_0]}^k, \partial X_{[0, \eta_0]}^k)$ partition $E_{[0, \eta]}$, by Lemma \ref{lem E partition} the $\calQ^i$'s are  connected sets that are pairwise disjoint.
	
\begin{lemma}\label{lem Q partition}	
	If $u$ is generic, then  $(\calQ, \partial \calQ) = \cup_{i = 1}^k (\calQ^i, \partial \calQ^i)$ and for each $i$, $(\calQ^i, \partial \calQ^i))$ is a $d$-pseudomanifold with boundary.
\end{lemma}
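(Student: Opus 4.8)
The plan is to split the statement into the set-theoretic identity and the pseudomanifold property, deducing both from Lemma~\ref{lem E partition} together with a direct analysis of $\calQ(u)$ at the slice $\e=0$. For the identity, note that since the $X^i_{[0,\eta_0]}$ partition $E_{[0,\eta_0]}$, intersecting with $\{\e=0\}$ gives $\bigcup_i X^i_0=E_0:=\proj^{-1}(0)\cap E$; and $\calQ^i$ is by definition $X^i_0$ with the $\e$-coordinate deleted and $q$ renamed $p^*$, while its disjointness and connectedness have already been recorded. So the first half of the lemma reduces to showing that $E_0(u)$, so re-coordinatized, is exactly $\calQ(u)$. The second half then has to promote the $(d+1)$-pseudomanifold structure of $X^i_{[0,\eta_0]}$ (Lemma~\ref{lem E partition}(1)) to a $d$-pseudomanifold structure on its $\e=0$ face $X^i_0=\calQ^i$, with boundary $\partial\calQ^i=X^i_0\cap\{p\in\partial P\}$.

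For $E_0\subseteq\calQ$ I would fix $(0,p,(q,\l))\in E_0$ and an approximating sequence $(\e^k,p^k,(q^k,\l^k))$ from the relative interior of $E$, with $\e^k\downarrow0$ and $q^k$ completely mixed. Any $a_n\in\supp q_n$ has $q^k_{n,a_n}>\e^kp^k_{n,a_n}$ eventually, hence is a best reply to $(q^k,\l^k)$, hence, by closedness of the best-reply relation over $\bbP$, a best reply to $(q,\l)$; so $q=p^*$ is a best reply to $(q,\l)$, which is condition~(1). For condition~(2) I would unwind the likelihood ratios $\l^k(a_m,a_n)=q^k_m(a_m)/\bigl(q^k_m(a_m)+q^k_n(a_n)\bigr)$ on the actions $q$ does not use, where the constraint forces $q^k_{n,a_n}=\e^kp^k_{n,a_n}$, and read off the lower bound $p_m(a_m)/(p_m(a_m)+p_n(a_n))$ in the limit. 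For $\calQ\subseteq E_0$ I would, given a point in the relative interior of $\calQ$, build an approximating sequence in the relative interior of $E$ explicitly: take completely mixed $q^k\to q$ realizing the prescribed $\l$, choose $\e^k\downarrow0$, and push $q^k_{n,a_n}$ down to $\e^kp^k_{n,a_n}$ precisely on the actions that are not best replies to $(p,\l)$; then pass to closures. The delicate point is matching the trigger ``$a_n$ is not a best reply'' in the definition of $E$ with the trigger ``$a_n$ is not a best reply to $(p,\l)$'' in condition~(2) of $\calQ$; I expect this to work only because $u$ is generic, via the same transversality of best-reply conditions that Proposition~\ref{prop Q} relies on.

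For the pseudomanifold property, recall from the proof of Lemma~\ref{lem E partition} that $X^i_{[0,\eta_0]}$ is a compact $(d+1)$-dimensional homology manifold with boundary, that its boundary decomposes into the three faces $\{\e=0\}$, $\{\e=\eta_0\}$ and $\{p\in\partial P\}$, and that away from $\{\e=0\}$ it is a product $(0,\eta_0]\times F^i$ with $F^i$ a $d$-pseudomanifold homeomorphic to every slice $X^i_\eta$, $0<\eta\le\eta_0$. A collaring argument at the $\{\e=0\}$ face then identifies a neighborhood of $X^i_0$ in $X^i_{[0,\eta_0]}$ with $X^i_0\times[0,\varepsilon)$, so $X^i_0$ is itself a compact $d$-dimensional homology manifold with boundary $X^i_0\cap\{p\in\partial P\}$, hence a $d$-pseudomanifold with boundary. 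Where the collar needs care, namely at the corner $\{\e=0\}\cap\{p\in\partial P\}$, I would fall back on re-running the local computation behind Proposition~\ref{prop belief manifold}, now with the best-reply (in)equalities defining $\calQ$ adjoined, to check directly, using genericity, that $\calQ^i$ is pure $d$-dimensional and that each of its $(d-1)$-cells bounds exactly two $d$-cells in the interior and exactly one on $\partial\calQ^i$.

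The hard part will be controlling the passage $\e\to0$: the generic local triviality of Lemma~\ref{lem E partition} is only asserted over the half-open interval $(0,\eta_0]$, so a priori the limit fiber $X^i_0=\calQ^i$ could acquire lower-dimensional strata or branch along a $(d-1)$-set, and ruling this out, along with pinning down the corner structure of $X^i_{[0,\eta_0]}$ along $\{\e=0\}\cap\{p\in\partial P\}$, is precisely where the genericity of $u$ must be used. This is the same genericity that makes the trigger-matching in the set-identification step go through.
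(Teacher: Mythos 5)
There is a genuine gap in both halves of your plan, and in both cases the missing content is the same machinery, which is in fact the bulk of the paper's proof. For the pseudomanifold half, the collar at $\{\e=0\}$ is not available: the Generic Local Triviality invoked in Lemma \ref{lem E partition} gives a product structure only over $(0,\eta_0]$, and the fiber of a semialgebraic family over the excluded parameter value typically degenerates (drops dimension, branches), so nothing identifies a neighborhood of $X_0^i$ with $X_0^i\times[0,\varepsilon)$ or makes $X_0^i$ a $d$-dimensional homology manifold with boundary $X_0^i\cap\{p\in\partial P\}$; moreover, even for an honest $(d+1)$-manifold with boundary, the subset of the boundary cut out by $\e=0$ need not be a $d$-pseudomanifold. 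You concede this at the end, but your "fallback" (re-running the computation of Proposition \ref{prop belief manifold} with best-reply conditions adjoined) is not a patch at a corner---it is the entire argument. The paper never derives the structure of $\calQ^i$ from the slices $X_\eta^i$ at all: it stratifies $\calQ(u)$ directly into manifolds $\calQ(P^j,\bbP(R),\tilde P^j,u)$ indexed by a belief stratum $\bbP(R)$ of Proposition \ref{prop belief manifold}, the face $P^j$ of best replies, and an auxiliary face $\tilde P^j$ recording which likelihood-ratio constraints hold strictly; genericity enters by working over the whole payoff space (the manifolds $\M(P^j,\bbP(R))\subseteq\Re^{NZ}\times\bbP(R)$, whose closure relations are established by a Kreps--Wilson-type payoff perturbation as in \cite{KW1982}) and then applying Generic Local Triviality in $\Re^{NZ}$ to obtain, for generic $u$, the dimensions $d(R)-\dim(P^j)$ and the adjacency counts (each stratum lies in the closure of a $d$-dimensional one; each $(d-1)$-dimensional one in the closure of exactly two). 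None of this bookkeeping---in particular the role of $\tilde P^j$---appears in your proposal, and without it the purity and branching statements you need are unsupported.

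The same omission undermines your set-theoretic identity. Your explicit construction for $\calQ\subseteq E_0$ does not produce points of $E$: membership in $E$ requires $q^k_{n,a_n}=\e^k p^k_{n,a_n}$ on the actions that are not best replies to the perturbed point $(q^k,\l^k)$, not to the limit $(q,\l)$. Best-reply sets are only upper hemicontinuous in $(q,\l)$, so an action in the support of $q$ (hence a best reply at the limit, by condition (1) of the definition of $\calQ$) may cease to be a best reply at $(q^k,\l^k)$, and the equality constraint then forces $q^k$ away from $q$; this is exactly the degenerate-belief boundary case where the inclusion is hard. The paper resolves it with the stratification above: for generic $u$, every point of $M(P^j,\bbP(R),u)$ is a limit of points of $M(P^j,\bbP(R'),u)$ with less degenerate beliefs and the \emph{same} best-reply face $P^j$, so one can approximate inside $\calQ^*(u)$ while keeping $q$ a best reply throughout. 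Appealing to "the same transversality that Proposition \ref{prop Q} relies on" is circular, since Proposition \ref{prop Q} is what this lemma (together with Lemma \ref{lem E partition}) is meant to establish. In short, your skeleton (reduce to $E_0\cong\calQ$ and analyze the $\e=0$ fiber) is right, but the two mechanisms you propose---the explicit approximating sequence and the collar---both fail as stated, and the payoff-space stratification that actually carries the proof is missing.
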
	

\begin{proof}
	Let $\calQ^*(u)$ be the  set of $(p, (q, \l)) \in (P\backslash \partial P) \times (\bbP \backslash \partial \bbP)$ such that, for every pair $(a_m, a_n)$, if $a_n$ is not a best reply to $(q, \l)$ in $\G(u)$ then $\l(a_m, a_n) \ge \frac{p_m(a_m)}{p_m(a_m) + p_n(a_n)}$.   Let $\calQ_0(u)$ be the closure of the set of points $(p, (q, \l)) \in (P \backslash \partial P) \times  \bbP$ such that there is a sequence $(p^k, (q^k, \l^k))$ in $\calQ^*(u)$ converging to it against which $q$ is a best reply. We claim that $\calQ_0(u)$ is the set of  $(p, (q, \l))$ such that $(0, p, (q, \l))$ belongs to $E_0(u)$. To show that $\calQ_0(u)$ is a subset of $E_0(u)$, it is sufficient to show that $(p, (q, \l)) \in \calQ_0(u)$ with $p \notin \partial P$ belongs to $E_0(u)$, as $\calQ_0(u)$ is the closure of such points and $E_0(u)$ is closed.  Take such a point $(p, (q, \l))$. If $q \in P \backslash \partial P$, then $q$ is a best reply to itself, i.e.~ a completely mixed equilibrium.  Therefore, $(\e, p, (q, \l))$ belongs to $E_\e$ for all small $\e$ and hence $(0, p, (q, \l))$ belongs to $E_0(u)$.  Suppose $q \in \partial P$. Then $q \neq p$ and for all large $k$, $q^k \neq p^k$. For each such $k$, as $q^k$ and $p^k$ are in $P \backslash \partial P$, there exists a unique $\e^k \in (0, 1)$ such that $\frac{1}{1-\e^k}(q^k - \e^kp^k)$ belongs to $\partial P$.  Then $(\e^k, p^k, (q^k, \l^k))$ belongs to $E(u)$. As $q^k$ converges to $q$, $\e^k \to 0$.  Thus we have shown that $\calQ_0(u)$ contains only points $(p, (q, \l))$ such that $(0, p, (q, \l))$ belongs to $E_0(u)$.  To prove the reverse inclusion, if $(\e^k, p^k, (q^k, \l^k))$ is a sequence in $E(u)$ converging to $(0, p, (q, \l))$ with $ p \in \bbP \backslash \partial \bbP$, then obviously $q$ is a best reply to $(q^k, \l^k)$ as $q^k$ puts $\e$ weight on suboptimal last actions. Therefore, $(p, (q, \l)) \in \calQ_0(u)$, implying as well that every point in $E_0(u)$ is also in $\calQ_0(u)$, as both sets are closed.
	
	Obviously, $\calQ_0(u)$ is contained in $\calQ(u)$.  Thus, to complete the proof, there remains to show the reverse inclusion and that $\calQ(u)$ is a finite union of pseudomanifolds.  As we saw earlier, $\partial \bbP$ is expressible as the union of manifolds $\bbP(R)$, where the triples $R = (R^0, R^1, R^2)$ partition $\L \cup \Z$.  For $R = (\L \cup \Z, \emptyset, \emptyset)$, we have the manifold $\bbP(R) \equiv \bbP \backslash \partial  \bbP$. Thus $\bbP$ is a finite disjoint union of semialgebraic manifolds, $\bbP(R)$, where $R$ ranges over all triples that either partition $\L \cup \Z$ or is the triple with $R^0 = \L \cup \Z$. Every manifold belongs to the relative interior of a face of $P$, and letting $d(R)$ be the dimension of $\bbP(R)$,   we have that if $d(R) < d(R')$, $\bbP(R)$ is either contained in the closure of $\bbP(R')$ or disjoint from it.  Observe that this decomposition of $\bbP$ depends only on the game tree $\G$ and not on the payoffs.
	
	Fix a  manifold $\bbP(R)$; and fix a face $P^j$ of $P$ such that if $\bbP(R) \neq \bbP \backslash \partial \bbP$, then the face that contains $\bbP(R)$ in its interior is a subset of $P^j$. Let $\M(P^j, \bbP(R))$ be the set of $(u, (p^*, \l)) \in \Re^{NZ} \times \bbP(R)$ such that the strategies in $P^j$ are the best replies to $(p^*, \l)$. %and $p^*$ belongs to $P^j$.  
	Clearly, $\M(P^j, \bbP(R))$ is a semialgebraic manifold of dimension $NZ + d(R) - \dim(P^j)$. Suppose $(u, (p^*, \l)) \in  \M(P^j, \bbP(R))$ and $\bbP(R')$ contains $\bbP(R)$ in its closure with $\bbP(R')$ also belonging to the face $P^j$ if it does not equal $\bbP \backslash \partial \bbP$.  If we have a sequence $(p^k, \l^k))$ converging to $(p^*, \l))$ such that $(p^k, \l^k)$ belongs to $\bbP(R')$ for all $k$, then, as in \cite{KW1982} we can construct a  sequence $u^k \to u$ such that $(u^k, (p^k, \l^k))$ belongs to $\M(P^j, \bbP(R'))$ for all $k$.  Thus, $\M(P^j, \bbP(R))$ belongs to the closure of $\M(P^j, \bbP(R'))$ if $\bbP(R)$ belongs to the closure of $\bbP(R')$. Similarly, if $P^{j'}$ is a face of $P^j$ that also contains $\bbP(R)$, then $\M(P^{j'}, \bbP(R))$ contains $\M(P^j, \bbP(R))$ in its closure.

	For each $u \in \Re^{NZ}$, if we let $M(P^j, \bbP(R), u)$ be the set of $(p^*, \l)$ such that $(u, (p^*, \l)) \in \M(P^j, \bbP(R))$, then by Generic Local Triviality (loc.~cit.) for all $u$ outside of a closed semialgebraic subset $U(P^j, \bbP(R))$, $M(P^j, \bbP(R), u)$ is a semi-algebraic manifold of dimension $d(R) - \dim(P^j)$, and the sets $M(P^j, \bbP(R), u)$ and $M(P^j, \bbP(R), u')$ are homeomorphic for $u$ and $u'$ belonging to the same connected component of $\Re^{NZ} \backslash U(P^j, \bbP(R))$. If $u$ does not belong to $U(P^j, \bbP(R))$ for any pair $(P^j, \bbP(R))$, then: (1) each $M(P^j, \bbP(R), u)$ is a manifold of dimension $d(R) - \dim(P^j)$; (2) if $M(P^j, \bbP(R), u)$ is nonempty for some $(P^j, \bbP(R))$, then for each $\bbP(R')$ (respectively $P^{j'}$) whose closure contains $\bbP(R)$ (resp.~that is a face of $P^j$)  the closure of $M(P^j, \bbP(R'), u)$  (resp.~$M(P^{j'}, \bbP(R), u)$) contains $M(P^j, \bbP(R), u)$.      
	
	Fix a  manifold $\bbP(R)$; a nonempty face $P^j$ of $P$;  let $\tilde P^j$ be any (possibly empty) face of $P^j$.  Let $\calQ(P^j, \bbP(R), \tilde P^j, u)$ be the set of $(p, (p^*, \l)) \in (P \backslash \partial P) \times \bbP(R)$ such that: (1) $(p^*, \l) \in M(P^j, \bbP(R), u)$; (2) if $a_n$ is not a best reply, then $\l(a_m, a_n) > \frac{p_m(a_m)}{p_m(a_m) + p_n(a_n)}$ iff $a_m$  is a best reply and has zero probability on the face $\tilde P^j$.   Obviously, $\calQ(P^j, \bbP(R), \tilde P^j, u)$ is nonempty only if $\tilde P^j$ is not a face of the face of $P$ that contains $\bbP(R)$ in its relative interior. 
	
	$\calQ(u)$ is the closure of the union of the sets $\calQ(P^j, \bbP(R), \tilde P^j, u)$ as we range over all triples $(P^j, \bbP(R), \tilde P^j)$ with $\bbP(R) \neq P \backslash \partial P$. By Property (2) of the last paragraph, if $u$ does not belong to $U(P^j, \bbP(R))$ for any $(P^j, \bbP(R))$, then if $(p, (p^*, \l))$ belongs to $\calQ(P^j, \bbP(R), \tilde P^j, u)$ with $\bbP(R) \neq P \backslash \partial P$, there is a sequence $(p^k (q^k, \l))$ converging to it that belongs to $\calQ^*(u)$.  Hence $\calQ(u)$ is contained in  $\calQ_0(u)$.  To finish the proof we have to show that $\calQ(u)$ is a pseudomanifold, for which it is sufficient to show that each of the sets $\calQ(P^j, \bbP(R), \tilde P^j, u)$  belongs to the closure of one with dimension $d$; and that if the dimension of one is $d-1$ it belongs to the closure of exactly two such $d$-manifolds. 
	
	Fix $P^j, \bbP(R), \tilde P^j$. As we saw, $M(P^j, \bbP(R), u)$ is $(d(R) - \dim(P^j))$-dimensional. For each $(p^*, \l) \in M(P^j, \bbP(R), u)$, the set of $p$ such $(p, (p^*, \l))$ belongs to $\calQ(P^j, \bbP(R), \tilde P^j, u)$ is a manifold of dimension  $\dim(P^j) - \dim(\tilde P^j)$, with the convention that $\dim(\tilde P^j)$ is $-1$ (resp.~0) if $\tilde P^j$ is an empty set and $\dim(P^j) < d$ (resp.~$\dim(P^j) = d)$.  Thus, $\calQ(P^j, \bbP(R), \tilde P^j, u)$ is a manifold of dimension $d(R)  - \dim(\tilde P^j)$. If $d(R) = d$, or $d(R) =  d-1$ and $\tilde P^j$ is empty, then dimension of $\calQ(P^j, \bbP(R), \tilde P^j) = d$ and we are done.  If $d(R) < d-1$ or $\tilde P^j$ is nonempty,  there is a $(d-1)$-dimensional manifold $\bbP(\hat R)$ that is  contained in the interior of the same face, call it $\hat P^j$, that contains $\bbP(R)$ and that has $\bbP(R)$ in its closure. $\calQ(P^j, \bbP(R), \tilde P^j)$ is then contained in the closure of the $d$-dimensional manifold $\calQ(\hat P^j, \bbP(\hat R), \emptyset)$.  Finally, if $d(R) - \dim(\tilde P^j)$ is $(d-1)$-dimensional, to show that it belongs to the closure of two $d$-manifolds, observe first that, then, either $d(R) = d-1$ and  $\tilde P^j$ is a singleton or $d(R) = d-2$ and $\tilde P^j$ is empty.  As we saw, there is a $\bbP(\hat R)$ in the same face $\hat P^j$ that contains $\calQ(P^j, \bbP(R), \tilde P^j)$ in its closure. If $\tilde P^j$ is a singleton, there is also a $(d-1)$-dimensional $\bbP(\check{R})$ in the face $\check{P}^j$ spanned by $\hat P^j$ and $\tilde P^j$ such that $\calQ(P^j, \bbP(R), \tilde P^j)$ is in the closure of $\calQ(\check{P}^j, \bbP(\check{R}), \emptyset)$; if $\tilde P^j$ is empty, then there is another $(d-1)$-dimensional manifold $\bbP(\check{R})$ in the relative interior of a face $\hat P^j$  that has $\bbP(R)$ as a face, and then $\calQ(P^j, \bbP(R), \tilde P^j)$ is in the closure of $\calQ(\check{P}^j, \bbP(\check{R}), \emptyset)$, which completes the proof. 
\end{proof}

\begin{lemma}\label{lem Q essentiality}
	For each $i$ and and all small $0 < \eta \le \eta_0$, $\proj: (X_\eta^i, \partial X_\eta^i) \to (P, \partial P)$ is essential iff $\proj: (\calQ^i, \partial \calQ^i) \to (P, \partial P)$ is.
\end{lemma}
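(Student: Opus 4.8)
The plan is to relate the essentiality of $\proj$ over the slice $X_\eta^i$ and over the slice $\calQ^i=X_0^i$ through the essentiality of $\proj$ over the ``tube'' $X_{[0,\eta]}^i$, reading each of these off as the non‑vanishing of a degree. First I would invoke part (2) of Lemma \ref{lem E partition}: for $0<\eta\le\eta_0$, the map $\proj:(X_\eta^i,\partial X_\eta^i)\to(P,\partial P)$ is essential in cohomology iff $\proj:(X_{[0,\eta]}^i,\partial X_{[0,\eta]}^i)\to([0,\eta]\times P,\partial([0,\eta]\times P))$ is. Since $(\calQ^i,\partial\calQ^i)$ is by construction $(X_0^i,\partial X_0^i)$ under the identification $(p,(q,\l))\leftrightarrow(0,p,(q,\l))$, and under this identification $\proj:\calQ^i\to P$ is the restriction of $\proj$ composed with $\{0\}\times P\cong P$, it then suffices to prove that $\proj$ over the tube is essential iff $\proj$ over $(\calQ^i,\partial\calQ^i)$ is.

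Next I would set up the degree interpretation. Since $P$ is a $d$‑dimensional compact convex polytope, $(P,\partial P)$ and $([0,\eta]\times P,\partial([0,\eta]\times P))$ are a $d$‑ball and a $(d+1)$‑ball, so their $\mathbb{Q}$‑coefficient cohomology $\check{H}^*$ is $\mathbb{Q}$ in the top degree and $0$ otherwise. By Lemmas \ref{lem E partition} and \ref{lem Q partition}, $(X_{[0,\eta]}^i,\partial X_{[0,\eta]}^i)$ and $(\calQ^i,\partial\calQ^i)$ are connected (being pieces of the partition built in Lemma \ref{lem E partition}) orientable semialgebraic pseudomanifolds‑with‑boundary, of dimensions $d+1$ and $d$. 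For any such compact pair $(Y,\partial Y)$ of dimension $m$ one has $\check{H}^m(Y,\partial Y;\mathbb{Q})\cong\mathbb{Q}$ (\v{C}ech cohomology agrees with singular here by triangulability, and over $\mathbb{Q}$ the universal‑coefficient Ext term vanishes), generated by the Kronecker dual of the fundamental class $[Y]\in H_m(Y,\partial Y;\mathbb{Q})$. Hence for a proper map $f:(Y,\partial Y)\to(B,\partial B)$ onto a ball of the same dimension, $f^*$ on top cohomology is multiplication by the degree $\deg(f)$ defined by $f_*[Y]=\deg(f)\,[B]$, so $f$ is essential in cohomology iff $\deg(f)\ne 0$. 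Writing $\delta$ and $\delta^0$ for the degrees of $\proj$ over the tube and over $\calQ^i$, the claim reduces to $\delta\ne0\iff\delta^0\ne0$.

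For this I would run a cobordism argument inside the tube. Up to a set of dimension $\le d-1$, the boundary $\partial X_{[0,\eta]}^i$ is the union of its three codimension‑$0$ faces $X_0^i$, $X_\eta^i$ and $S^i:=\proj^{-1}([0,\eta]\times\partial P)\cap X_{[0,\eta]}^i$, which $\proj$ carries into the faces $\{0\}\times P$, $\{\eta\}\times P$ and $[0,\eta]\times\partial P$ of the $d$‑sphere $\partial([0,\eta]\times P)$. By naturality of the connecting homomorphism, $\proj_*(\partial[X_{[0,\eta]}^i])=\partial(\proj_*[X_{[0,\eta]}^i])=\delta\,\partial[[0,\eta]\times P]$ in $H_d(\partial([0,\eta]\times P);\mathbb{Q})$. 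Restricting both sides to the open face $\{0\}\times(P\setminus\partial P)$, and using that the fundamental class of a pseudomanifold‑with‑boundary restricts on each codimension‑$0$ face of its boundary to that face's fundamental class up to sign, the left side becomes $\pm\proj_*[\calQ^i]=\pm\delta^0[P]$ and the right side $\pm\delta[P]$; thus $\delta^0=\pm\delta$, so $\delta^0\ne0\iff\delta\ne0$, which completes the proof. (Restricting instead to $\{\eta\}\times(P\setminus\partial P)$ gives $\delta^\eta=\pm\delta$, re‑deriving Lemma \ref{lem E partition}(2) in this range.)

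The main obstacle is carrying out this last computation rigorously in the pseudomanifold — rather than manifold — category: one must verify that $X_0^i$, $X_\eta^i$ and $S^i$ really are the codimension‑$0$ faces of $\partial X_{[0,\eta]}^i$ and meet pairwise only in dimension $\le d-1$, so that the restriction of the boundary fundamental class to each of them is well defined, and one must keep the orientation conventions straight — the two ends $\{0\}\times P$ and $\{\eta\}\times P$ of the cobordism inherit opposite orientations, matched by the opposite orientations induced on $X_0^i$ and $X_\eta^i$ — although, since only the vanishing of the degrees matters, the resulting signs are immaterial. Semialgebraic triangulability together with the pseudomanifold structure supplied by Lemmas \ref{lem E partition} and \ref{lem Q partition} is exactly what is needed for these points.
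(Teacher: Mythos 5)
Your argument is correct in substance but follows a genuinely different route from the paper's. The paper never passes through the tube with its full boundary: it compares the two ends directly inside the ``cylinder'' pair $(X^i_{[0,\eta]},\bar\partial X^i_{[0,\eta]})$, where $\bar\partial X^i_{[0,\eta]}$ is only the lateral part lying over $\partial P$. It observes that the base inclusions $(\{\e\}\times P,\{\e\}\times\partial P)\hookrightarrow([0,\eta]\times P,[0,\eta]\times\partial P)$, $\e=0,\eta$, are isomorphisms on $H_d$, asserts (with a footnote) that the images of $H_d(X^i_0,\partial X^i_0)$ and $H_d(X^i_\eta,\partial X^i_\eta)$ inside $H_d(X^i_{[0,\eta]},\bar\partial X^i_{[0,\eta]})$ coincide and are $\approx\mathbb{Z}$, and then concludes by naturality of $\proj$, finally converting homological to cohomological essentiality; in particular it does not invoke Lemma~\ref{lem E partition}(2) at all. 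You instead go slice-at-$\eta$ $\leftrightarrow$ tube via Lemma~\ref{lem E partition}(2) and then transfer to the $\e=0$ end by a degree/cobordism computation, restricting $\partial[X^i_{[0,\eta]}]$ to the face $X^i_0=\calQ^i$. This buys the explicit relation $\delta^0=\pm\delta$ (and, as you note, re-derives Lemma~\ref{lem E partition}(2) on this range), at the cost of needing fundamental classes: you cite Lemmas~\ref{lem E partition} and~\ref{lem Q partition} for connected \emph{orientable} pseudomanifold pairs with $\check H^{\mathrm{top}}\cong\mathbb{Q}$, but those lemmas only give the pseudomanifold structure; orientability (equivalently, nonvanishing top homology over $\mathbb{Q}$) is doing real work in your degree argument and is not literally established. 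To be fair, the paper's own proof leans on the analogous unproven ``$\approx\mathbb{Z}$'' assertions, so your proposal is at a comparable level of rigor; the remaining bookkeeping you flag (the three codimension-zero faces meeting only in dimension $\le d-1$, excision onto $(X^i_0,\partial X^i_0)$, and the $\pm1$-coefficient chain representing $\pm[\calQ^i]$) does go through for semialgebraic (hence triangulable) pairs as you indicate.
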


\begin{proof}
	For $0 < \eta \le 1$, and $\e = 0, \eta$, letting $i_\e: (\{\, \e \, \} \times P, \{\, \e \, \} \times \partial  P) \to ([0, \eta] \times P,  [0, \eta] \times \partial P)$, we have an isomorphism $i_{\e, d}: H_d((\{\, \e \, \} \times P, \{\, \e \, \} \times \partial  P)) \approx H_d(([0, \eta] \times P , [0, \eta] \times \partial P)) \approx \mathbb{Z}$ in dimension $d$.  With respect to the set $X^i$, let $\bar \partial X_{[0, \eta]}^i$ be the set of $(\e, p, (q, \l)) \in X_{[0, \eta]}^i$ such that $p \in \partial P$, and for $\e = 0, \eta$, let $i_\e:(X_\e^i, \partial X_\e^i) \to (X_{[0, \eta]}^i, \bar \partial X_{[0, \eta]}^i)$ be the inclusion map. Then $i_{0,d}(H_\d((X_0^i, \partial X_0^i)) = i_{1,d}(H_d((X_\eta^i, \partial X_\eta^i)) \approx \mathbb{Z}$ as well.\footnote{It is not difficult to show that these maps are isomorphisms as well.} Therefore, $\proj: (X_0, \partial X_0) \to (P, \partial P)$ is essential in homology iff $\proj:(X_\eta, \partial X_\eta) \to (P, \partial P)$.  Finally, the essentiality of projection in homology is equivalent to the essentiality of projection in cohomology. 
\end{proof}

We are ready to assemble the lemmas to prove Proposition \ref{prop Q}. It follows from Lemma E.3 of \cite{GW2008} that a subset $\S^*$ is stable iff there exists $X^i$ such that: (1) $\pi(\S^*)$ is the set of $p^*$ for which there exists a point $(0, p, (p^*, \l)) \in X_0^i$; and (2) the projection from $(X_{[0, \eta]}^i, \partial X_{[0, \eta]}^i) \to ([0, \eta] \times P, \partial ([0, \eta] \times \partial P))$ is essential in cohomology. By Lemma \ref{lem Q partition}, condition (1) here is the same as  the requirement that $\pi(\S^*)$ be the set of $q$ such that $(p, (q, \l)) \in \calQ^i$.  By Lemma \ref{lem E partition}, condition (2) is the same as the requirement that the projection $\proj:(X_\eta^i, \partial X_\eta^i) \to (P, \partial P)$  be essential.  By Lemma \ref{lem Q essentiality}, this requirement is the same as asking for $\proj: (\calQ^i, \partial \calQ^i) \to (P, \partial P)$ to be essential. 

\section{Proof of Theorem \ref{thm main 1}}

Theorem 6, Section 3, of Mertens \cite{M1989} shows that a stable set  contains a proper equilibrium. A proper equilibrium is equivalent to  a quasi-perfect equilibrium in the extensive-form representation of the game. Therefore, stability satisfies Axiom B.  In the appendix we show that stability satisfies Axioms $D^\ast$ and $I^\ast$.  Here we prove the second statement of Theorem \ref{thm main 1}, which is that any solution concept satisfying Axioms B, $D^\ast$, and $I^\ast$ must select from among its stable outcomes when the game is generic.

Fix now a generic game $\G(u)$ and a component $\S^*$ of its equilibria. Let $P^*$ be the corresponding set of enabling strategies.  Suppose $\S^*$ does not contain a stable set.  Fix a solution concept $\varphi$ that satisfies Axiom $D^*$ and $I^*$ and that picks a solution contained in $\S^*$. We show in a sequence of steps that $\varphi$ fails Axiom B.  

\smallskip

\noindent {\bf Step 1---Fixed-Point Preliminaries.}  Denote by $(\calQ^*, \partial \calQ^*)$ the union of the manifolds $(\calQ^j, \partial \calQ^j)$ such that  for each $(p, (q, \l)) \in \calQ^j$, $q \in P^*$.   Since $\S^*$ does not contain a stable set,  the projection map $\proj: (\calQ^*, \partial \calQ^*) \to (P, \partial P)$ is  inessential in cohomology.  By the Theorem in Section 4 of Mertens \cite{M1991} it is inessential in homotopy.  Therefore, by Lemma A.2 of  Govindan and Wilson \cite{GW2008} there  exists a map $f: \calQ^* \to P$ that does not have a point of coincidence with $\proj: \calQ^* \to  P$. We can assume that $f$ maps $\calQ^*$ into $P \backslash \partial P$: indeed, for a small $\e > 0$ and $p^0 \in P \backslash \partial P$, the map $(1-\e)f + \e p^0$  does not have a point of coincidence with $\proj$ if $f$ does not, and its image is contained in the interior of $P$.    We exploit the function $f$ to first construct a strong embedding of $\G(u)$ and then add some dominated strategies such that the resulting game has no quasi-perfect equilibrium that projects to an equilibrium in $P^*$. 

Observe that for any $(p, (q, \l)) \in P \times \bbP$,  the vector $\l$ of conditional probabilities already has all the information required to compute $q$, i.e.~ the projection from $P \times \bbP$ to $\calP \equiv  P \times {[0, 1]}^{\L \cup \Z}$ is a homeomorphism onto its image $\hat \calP \subsetneq \calP$. Therefore, we can view $\calQ^*$ as a subset of $\calP$ and $f$ as being defined from this set to $P$. For any point $(p, \l) \in \hat \calP$, we use $q(\l)$ to denote the enabling strategy defined by $\l$. 

The map $f$ can be extended to the whole of $\calP$ such that, letting $\proj$ still  be  the natural projection onto  $P$: there is no point of coincidence between $f$ and $\proj$ in a neighborhood $U$ of $\calQ^*$; and  $f$ maps $\calP$ into $P \backslash \partial P$. In particular, there exist $U \subset \calP$ and $\d > 0$ such that: (1) $U$ contains the $2\d$-neighborhood of $\calQ^*$; (2) ${\Vert f(p, (q, \l)), p' \Vert}_\infty > 2\d$ for all $(p, (q, \l)) \in \calP$ and $p' \in \partial P$;  (3) ${\Vert f(p,  \l),  p) \Vert}_\infty > 2\d$ for all $(p,  \l) \in U$. 

For each $\e > 0$, let $\Q^\e$ be the set of $(p, \l) \in P \times \hat \calP$ such that: (1) $q(\l) \in P^*$ and  it is a best reply to $(q(\l), \l)$; (2) if $a_n$ is not an $\e$-best reply to $(q(\l), \l)$ then for each $a_m$, $\l(a_m, a_n) \ge \frac{p_m(a_m)}{p_m(a_m) +  p_n(a_n)}$. There exists $\e > 0$ small enough such that $Q^\e$ is contained in the $\d$-neighborhood of $\calQ^*$.

\smallskip

\noindent {\bf Step 2---Triangulations.} Take a simplicial triangulation $\K(P)$ of $P$ such that the diameter of each simplex is less than $\d$.   $\calP$ is a product space $P \times {[0, 1]}^{\L \cup \Z}$ and therefore we can take a multisimplicial triangulation $\L(\P)$ of $\P$ such that the diameter of each multisimplex is less than $\d$ and such that  there is a multisimplicial approximation $g: \calP \to P$ to $f$ when $\calP$ is the space of $\L(\P)$ and the range is the space of $\K(P)$ (see Appendix B of Govindan and Wilson \cite{GW2005}).  Then, for every $(p, \l) \in \calP$,   $g(p, \l)$ belongs to a simplex that does not intersect $\partial P$; and for no $(p, \l)$ in a multisimplex that intersects $\calQ^\e$ does there exist a simplex $K$ of $\K$ that contains both $g(p, \l)$ and $p$.  Let $\T(\calP)$ be a polyhedral subdivision of $\L(\calP)$. Let $\g: \calP \to \Re_+$ be a piecewise-linear convex function that is linear precisely on the polyhedra of $\T(\calP)$---see Appendix B of \cite{GW2005} for details on how to construct such a function. With these preliminaries out of the way, we are ready to construct an embedding of $\G$.

\smallskip

\noindent {\bf Step 3---A Family of Embeddings.}  We construct a family of strong embeddings of $G$. The procedure involves first adding a signaling game $\G^o$ that is completely independent of $\G$ and which has a unique equilibrium component. 

In the signaling game $\G^0$, there are $N$ types for the sender, indexed $o_n$, $n \in \N$, and there is a receiver, called $o_{N+1}$.  The distribution over the types of the sender is uniform.  The sender has two signals, $h$ and $l$. The game ends if signal $h$ is sent, with the sender, independent of his type, and the receiver receiving a payoff of $N$ (the receiver's payoff here is irrelevant).  After observing a signal of $l$, the receiver has $N$ possible responses, indexed $r_1, \ldots, r_N$. If the receiver chooses $r_n$, then: (1)  type $o_n$ receives a payoff of $0$ from $l$ while all types receive a payoff of $N+1$ from $l$; (2) the receiver receives a payoff of $1$  if $l$ was sent by type $o_n$ and receives $0$ otherwise. We will view $\G^o$ as an $(N+1)$-person game, with player set $O$: each type of the sender is treated as a separate player, i.e.~we will use the agent normal form of the game. 

$\G^0$ has a unique equilibrium component, where all types of player 1 pool in sending $h$.  Every equilibrium is sequential, but  the beliefs at the receiver's information set following a message of $l$ is unique across all the equilibria, and it assigns equal probability to all types. 
 
We  construct a family of games $\tilde G(p)$, indexed by  $p \in P$, that are derived from the product game $\G \times \G^o$  in two steps.  Fix $p \in P$ and consider first the following game $\tilde G^1(p)$.  The senders $o_n$ move first, then the players in $\N$ play their game $G$ in normal form, and finally the receiver moves. Each player $n$ is told if player $o_n$ has chosen $h$ or not.  If $o_n$ chose $h$, then $n$ has access to all his strategies and chooses a pure strategy $s_n \in S_n$; otherwise, the mixed strategy equivalent to the enabling strategy $p_n$ is automatically implemented. Finally, the receiver moves if one of the senders has chosen $l$.  The game $\tilde G^1(p)$ strongly embeds $G$ for each $p \in P$. By Axiom $I^*$, there is a solution of $\tilde G^1(p)$ that projects to a subset of $\S^*$.

\smallskip

\noindent {\bf Step 4---Adding Weakly Dominated Strategies.} Now we modify $\tilde G^1(p)$ to obtain the game $\tilde G(p)$ as follows.   Each player $n$ chooses a strategy $s_n \in S_n$ in two stages. In the first stage, he makes a provisional choice of a strategy $s_n$. In the second stage, following the choice of an $s_n$, he gets an opportunity to revise his choice. If he does not want to revise his strategy, then $s_n$ is implemented.  If he wants to change his choice, he gets an opportunity to choose one of a finite set of strategies labeled $(t_n, h_n)$, where $t_n \in S_n$ and $h_n$ is an information set enabled by $t_n$. Choosing $(t_n, h_n)$ has the following consequences.  (1) It gives $n$ the same payoff as choosing $t_n$ except that it is reduced by $\e$ at nodes following $h_n$ (where $\e$ is the positive number chosen at the end of Step 1); (2) from the perspective of players $m \neq n$, the strategy $(t_n, h_n)$ implements $p_n$; (3) player $o_n$'s choice of $h$ in the signaling game $\G^0$  is nullified and $l$ is chosen instead.

If $p \notin \partial P$, then we claim that the set of Nash equilibria of $\tilde G(p)$ is the same as that of $\G \times \G^o$.  Indeed, if some player $n$ plays some $(t, h_n)$ with positive probability in an equilibrium $\tilde \s$ of $\tilde G(p)$, then for each $m$ who does not play a $(t, h_m)$ with positive probability, the corresponding receiver $o_m$ must play $l$ with positive probability, implying that every information set of $\G$ that is enabled by $(t, h_n)$ is reached with positive probability and $(t, h_n)$ would be an inferior reply.  Thus, in no equilibrium of $\tilde G(p)$ does a player $n \in \N$ use a dominated strategy $(t_n, h_n)$. By Axiom $D^\ast$,  therefore, the solutions of $G$ are the projections of the solutions of $\tilde G(p)$ for each $p \notin \partial P$.

\smallskip

\noindent{\bf Step 5---Adding More Players.} Over the next three steps, we construct a game $\bar G$ by adding a set $\bar O$ of players to  $\N \cup O$, which is the set of players in the game $G \times G^0$. $\bar O$ consists of the following players: three players, denoted $\bar o_0$, $\bar o_1$, and $\bar o_2$;  for each $(a_n, a_m) \in \L$, a player $\bar o_{a_n, a_m}$; and for each $(n, z_1, z_2) \in \Z$, a player $\bar o_{n,z_1, z_2}$. 

The pure strategy sets of the players in $\bar O$ are defined by making use of the triangulations constructed in Step 2 and are as follows.  Outsider $\bar o_0$ chooses a full-dimensional polyhedron $T(\P)$ of $\T(\P)$. $\bar o_{1}$ and $\bar o_2$ choose a vertex of $\K$, which is a triangulation of $P$. The other outsiders choose a vertex of the triangulation of their  copy ${[0,1]}$ (in the multisimplicial triangulation $\L(\P)$) corresponding to the coordinate (which is either $(a_m, a_n)$ of $(n, z_1, z_2))$.  For each outsider $\bar o$, we use $\bar S_{\bar o}$ and $\bar \S_{\bar o}$ to denote his pure and mixed strategy sets, respectively.

\smallskip

\noindent{\bf Step 6---Timing of the Moves.}
For each outsider $\bar o_{i}$, $i = 1, 2$, a pure strategy $\bar s_{\bar o_i}$ corresponds to a point $\phi_{\bar o_i}(\bar s_{\bar o_i}) \equiv p \in P$ with $\phi_{\bar o_i, n} \in P_n$ being the coordinates of $A_n$. Hence there exists a unique affine extension of this function to $\bar \S_{\bar o_i}$, denoted still by $\phi_{\bar o_i}$.  Likewise, for each $i$ of the form $(a_m, a_n)$ or $(n, z_1, z_2)$, there is an affine function $\phi_{i}: \bar \S_{\bar o_i} \to [0, 1]$.  We use $\phi_{-\bar o_2}$ to denote the collection all these functions other than $\phi_{\bar o_2}$ and view $\phi_{-\bar o_2}$ as a function from $\S_{\bar O}$, the mixed strategy space of the players in $\bar O$,  to $\calP$.

The timing of the game in the extensive form $\bar \G$ of $\bar G$ is as follows. Players in $\bar O$ play simultaneously and ahead of the other players. The players $\N \cup O$ then play their game unaware of the choices of the players in $\bar O$ and their strategy sets are as in the game $\tilde G(p)$ (for any $p$). The choice of outsider $\bar o_{1}$ determines which of the games $\tilde G(p)$ is actually played by the players in $\N \cup O$, i.e.~if he chooses a pure strategy $\bar s_{\bar o_1}$ then the game played is $\tilde G(\phi_{\bar o_1}(\bar s_{\bar o_1}))$.  

\smallskip

\noindent{\bf Step 7---Payoffs of the Outsiders.} Let $\bar S$ be the set of pure strategy profiles of $\bar G$ and let $\bar \S$ be the set of mixed strategy profiles.  For each $n$,  each profile $\bar s$ induces a point $q_n \equiv \psi_n(\bar s)$ as follows: $q_n = \phi_{\bar o_1, n}(\bar s_{\bar o_1})$ if $o_n = l$ or $\bar s_n = (t_n, h_n)$ for some $(t_n, h_n)$; otherwise it equals $\pi_n(\bar s_n)$, the enabling strategy equivalent to $\bar s_n$. Likewise, each profile $\bar s$ also induces a strategy $s_{o_n} = \psi_{o_n}(\bar s)$ of player $o_n$: $s_{o_n} = l$ if $\bar s_{o_n} = l$ or $\bar s_n = (t_n, h_n)$ for some $(t_n, h_n)$; otherwise it is $h$. The maps $\psi_n$ and $\psi_{o_n}$ extend multilinearly to the whole of $\bar \S$.

We now describe the payoffs of the outsiders. $o_0$'s payoff function depends on the choices of players in $\bar O \backslash \{\, \bar o_2 \, \}$.  Fix a full-dimensional polyhedron $T$ of $\T(\P)$. Recall from Step 2 that we have a piecewise-linear function $\g: \P \to \Re_{+}$ that is linear precisely on the polyhedra of $\T(P)$. The restriction of  $\g$ to $T$ is therefore linear, and it has a unique extension $\g_T$ to the whole of $\P$.  Given a strategy profile $\bar s$ for the outsiders where $\bar s_{\bar o_0}$ is $T$, $o_{0}$'s payoff from choosing $T$ is $\g_T( \phi_{-\bar o_2})$.  
 
Player $\bar o_1$'s payoff depends on his choice and that of player $\bar o_2$. If he chooses a vertex $v$ and $o_{2}$ chooses $w$, then his payoff is one if $v = w$ and zero otherwise. Thus, $\bar o_{1}$ wants to mimic $\bar o_{2}$.  

Player $\bar o_{a_m, a_n}$'s payoffs depend on the relative probabilities of these last actions for the original players.  Given a profile $\bar s$, then letting $\l = \phi_{\bar o_{a_m, a_n}}(\bar s_{\bar o_{a_m, a_n}})$, $q_m = \phi_m(\bar s)$, and $q_n = \psi_n(\bar s)$,   $o_{a_m, a_n}$'s payoff is $2\l_m q_m(a_m) + 2(1-\l_m)q_n(a_n) - (\l^2 + {(1-\l)}^2)(q_m(a_n) + q_n(a_n))$.   Player $\bar o_{n, z_1, z_2}$'s payoffs are defined exactly like those of $\bar o_{a_m, a_n}$, but using instead of $q_m$ and $q_n$, the probabilities $q_{-n}(z_1)$ and $q_{-n}(z_2)$.

Finally player $\bar o_{2}$'s payoff depends on all the other outsiders.  Suppose $\bar o_{2}$ chooses a vertex $w$.  If $\bar o_0$ chooses a polyhedron $T$, there exists a multisimplex $L$ that contains $T$. Let $u_{2}(\cdot, w, T)$ be the function on the vertex set of $L$ that assigns one to a vertex $v$ if $w$ is the image of $v$ under $g$ and zero otherwise.  $u_{2}(\cdot, w, T)$ extends uniquely to a multilinear function on the whole of $\calP$. Given a profile $\bar s$, if $\bar s_{\bar o_0}$ is a polyhedron $T$ of $o_0$,  the payoff to $w$ is $u_{2}(\phi_{-\bar o_{2}}(\bar s),  w, T)$.

\smallskip

\noindent {\bf Step 8---The Relationship between $\bf G$ and $\bf \bar G$.}  The game, call it $\hat G$, obtained from $\bar G$ by deleting the dominated strategies $(t_n, h_n)$ of each player $n$ strongly embeds $\G$ as well as the signaling game $\G^0$. By Axiom $I^\ast$, there is a solution of $\hat G$ that projects to a subset of $\S^*$ and to a subset the unique equilibrium component of $\G^0$. In the game $\bar G$,  Player $o_2$'s pure strategies that are best replies to a profile are all vertices of simplices that are contained in $P \backslash \partial P$. Hence, in any equilibrium $\bar \s$,  $\phi_{\bar o_2}(\bar \s_{\bar o_2}) \in P \backslash \partial P$. As $o_1$ mimics $o_2$,  $\bar p \equiv \phi_{\bar o_1}(\bar \s_{\bar o_1}) \in P \backslash \partial P$ as well.  As we saw in Step 4, the game $\tilde G(\bar p)$ does not have an equilibrium where an insider plays a dominated strategy.  Therefore,  by Axiom $D^\ast$, the solutions of $\bar G$ project to solutions of $\hat G$. 

\smallskip

\noindent{\bf Step 9---Failure of Axiom B.} Since the solution concept $\varphi$ picks a subset of $\S^*$ as a solution of $G$, and since it satisfies Axioms $D^*$ and $I^*$, there is a solution $\bar \S^*$ of $\bar G$ that projects to a subset of $\S^*$.  To complete the proof of Theorem \ref{thm main 1}, we will show that $\bar \S^*$ does not contain a profile that is equivalent to a quasi-perfect equilibrium of $\bar \G$. Suppose to the contrary that there is a backward induction equilibrium in $\bar \S^*$. Then there is a  quasi-perfect equilibrium $\bar b$ in behavioral strategies of $\bar G$ such that letting $\bar \s$ be the equivalent mixed strategy profile, we have that $\psi_{\N}(\bar \s) \in P^*$.  Let  $\bar \s^\eta$ be a sequence of $\eta$-quasi perfect equilibria converging to $\bar \s$.  For each $\eta$, let $q^\eta = \psi_{\N}(\bar \s^\eta)$, $p^\eta = \phi_{\bar o_1}(\bar \s_{\bar o_1}^\eta)$, and $\tilde \s_{O}^\eta = \psi_{O}(\bar \s^\eta)$. Also, for each $n$ and $\eta$, we can express $\bar \s_n$ as a convex combination $(1-\zeta^\eta)(\tilde \s^\eta) + \zeta^\eta \bar \t_n^\eta$ where $\tilde \s^\eta$ has as its support the original strategy set $S_n$ and $\bar\t_n^\eta$ has as its support the strategies $(t_n, h_n)$.  For each $n$, then, $q_n^\eta$ is of the form  $(1-\tilde \s_{o_n}^\eta(l))\tilde q_n^\eta + \tilde \s_{o_n}^\eta(l) p_n^\eta$, where $\tilde q_n = \pi_n(\tilde \s_n^\eta)$. Let $q$ and $p$ be the limits of $q^\eta$ and $p^\eta$, resp.  Let $\l^\eta$ be the conditional probabilities induced by $q^\eta$ and, if necessary by going to a subsequence,  let $\l$ be its limit. 

If a strategy $s_n$ of player $n$ is not $\e$-optimal against $\bar \s^\eta$, then at the stage where he is considering playing $s_n$ or one of the dominated strategies $(t_n, h_n)$, he would assign zero probability to $s_n$ under $\bar b$. Therefore, for each $n$ and each action $a_n$ that is not $\e$-optimal against $\l$, $\lim_{\eta \to 0} \frac{\tilde q_n^\eta(a_n)}{\tilde \s_{o_n}^\eta(l)} = 0$. For each pair $m, n$, $\tilde \s_{o_m}^\eta(l) = \tilde \s_{o_n}^\eta(l)$ all along the sequence, since otherwise one of the responses, say $r_n$, is not optimal against $\bar \s^\eta$, implying that $l$ is a better response for $o_n$ than $h$, which is impossible. Putting these two facts together,  for each $m, n$ and action $a_n$ that is not optimal, 
\[
\lim_{\eta \to 0} \frac{(1-\tilde \s_{o_m}^\eta(l))\tilde q_m^\eta + \tilde \s_{o_m}^\eta(l) p_m^\eta}{(1-\tilde \s_{o_n}^\eta(l))\tilde q_n^\eta + \tilde \s_{o_n}^\eta(l) p_n^\eta} \ge \lim_{\eta \to 0}  \frac{\tilde \s_{o_m}^\eta(l) p_m^\eta}{(1-\bar \s_{o_n}^\eta(l))\tilde q_n^\eta + \tilde \s_{o_n}^\eta(l) p_n^\eta} = \frac{p_m(a_m)}{p_n(a_n)},
\]
showing that $(p, (q, \l) \in \calQ^\e$. 

For each $(a_m, a_n)$, letting $\bar \l(a_m, a_n) = \phi_{\bar o_{a_m, a_n}}(\bar \s_{\bar o_{a_m, a_n}})$, the payoff structure implies that $\bar \l(a_m, a_n)$ is within $\d$ of $\l(a_m, a_n)$. A similar conclusion holds for the players $\bar o_{n,z_1, z_2}$.  Thus $(p, \bar \l)$ belong to $U$. The support of $\bar o_0$'s strategy $\bar s_{\bar o_0}$ is a subset of the  polyhedra containing $(p, \bar \l)$.   Therefore, the support of $o_{2}$'s  strategy is contained in the set of  vertices that are in the support of $g(p, \bar \l)$ in the triangulation $\K(P)$.  We now have that $o_{1}$ would mimic $o_{2}$, implying that $p$ belongs to the simplex of vertices in the support of  $o_{2}$'s strategy. Consequently, $g(p, \bar \l)$ and $p$ belong to a simplex, which is impossible.  Therefore,  $\bar \S^*$ does not contain a backward induction equilibrium.
Thus $\varphi$ fails Axiom B, implying that $\S^*$ does not contain a $\varphi$-solution of $G$.

\section{A Characterization of Essentiality}

In this section we provide a characterization of essentiality, which is then used in the next section to prove Theorem \ref{thm main 2}. The ideas in these two sections mirror those in the last three sections but are adapted to deal with payoff---rather than strategy---perturbations. Also, we work with behavioral---rather than enabling---strategies.

Conceptually, our characterization is similar to the one that Pahl and Pimienta \cite{PP2024} obtain for the two-player case.  In a generic extensive-form game, the index of a component can be computed as the product of the indices of two fixed-point problems: one for the pruned game  obtained by deleting all paths that are not in the support of the outcome; and another that captures off-the-equilibrium-path behavior of the best-reply correspondence. The index under the first problem is always $\pm 1$; thus, the essentiality of the component depends on the second problem. In the two-person case, unlike here, Pahl and Pimienta are able to decompose this latter problem into two subproblems---called  ``excluded games''---analyzing the strategic interaction that occurs after just one of the two players deviates. 

Let $\G(u)$ be an extensive form game.  Let $\BR^\G: B \to B$ be  the best-reply correspondence of $\G(u)$ in behavioral strategies.  $\BR^\G$ is a contractible-valued correspondence with nonempty, compact and convex values.  Therefore, one can assign an index $\text{Ind}(B^*)$ to each component of equilibria in behavioral strategies.  Let $G$ be the normal form of $\G(u)$.  We know that we have an index for each component $\S^*$ of equilibria in mixed strategies.    The following proposition shows that there is a 1-1 correspondence between the components of equilibria in these two representations and that their indices agree.\footnote{This result is certainly part of the folklore of the subject, but we  are recording it here since we need it and have not seen it in print.}  As a matter of notation, for each component $\S^*$ of equilibria of $G$, let $B(\S^*)$ be the set of behavioral strategy profiles that are equivalent to some profile in $\S^*$.

\begin{proposition}\label{prop same index}
	 The components of equilibria in behavioral strategies of the extensive form are of the form $B(\S^*)$ where $\S^*$ is a component of equilibria of $G$. Furthermore, $\text{Ind}(\S^*)= \text{Ind}(B(\S^*))$ for each component $\S^*$ of equilibria of $G$. 
\end{proposition}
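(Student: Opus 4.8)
The plan is to route both the normal form $G$ and the extensive form $\G(u)$ through the enabling-strategy space $P$ of Section~2, over which $\G(u)$ is a strategic-form game $\G_P$ with multilinear payoffs (Pahl \cite{P2023}), and to show that $\S$ and $B$ each compare with $P$ in exactly the same way. First I would record the two continuous surjections of compact polyhedra $\pi=\prod_n\pi_n\colon\S\to P$ (the linear map of Section~2) and $\rho=\prod_n\rho_n\colon B\to P$ (a behavioral profile to its vector of realization probabilities of last actions). By perfect recall the probability of a terminal node $z$ equals $p_0(z)\prod_n\pi_n(\s_n)(a_n(z))$ under a mixed profile and $p_0(z)\prod_n\rho_n(b_n)(a_n(z))$ under a behavioral one, so every player's payoff factors through $P$; thus $\pi$ and $\rho$ are payoff-preserving for $\G_P$, and by Kuhn's Theorem two profiles are equivalent iff they share the same image in $P$, so in particular $B(\S^*)=\rho^{-1}(\pi(\S^*))$. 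Three consequences are then immediate: (i) a profile lies in $\mathrm{NE}(G)$ (resp.\ $\mathrm{NE}(\G(u))$) iff its $\pi$- (resp.\ $\rho$-) image lies in $\mathrm{NE}(\G_P)$; (ii) the best-reply correspondence $\BR^{\G_P}$ is face-valued, since each player's payoff is linear in his own $P_n$-coordinate, hence upper semicontinuous and convex-valued; and (iii) $\BR^G_n(\s)=\pi_n^{-1}\bigl(\BR^{\G_P}_n(\pi\s)\bigr)$ and $\BR^{\G}_n(b)=\rho_n^{-1}\bigl(\BR^{\G_P}_n(\rho b)\bigr)$, because a reply is optimal iff its image in $P_n$ attains the $\G_P$-maximum against the opponents' images.

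For the first assertion I would observe that $\pi$ restricts to a continuous surjection $\mathrm{NE}(G)\to\mathrm{NE}(\G_P)$ with convex (polytopal) fibers $\pi^{-1}(q)$, which by~(i) lie in $\mathrm{NE}(G)$ whenever $q\in\mathrm{NE}(\G_P)$; likewise $\rho$ restricts to a continuous surjection $\mathrm{NE}(\G(u))\to\mathrm{NE}(\G_P)$ whose fibers fix the behavior at the information sets that are reached and are free simplices elsewhere, hence convex, and again lie in $\mathrm{NE}(\G(u))$. Since $\mathrm{NE}(\G_P)$ is semialgebraic its components are finite in number and clopen, and the elementary lemma that the preimage of a clopen connected set under a continuous surjection of compacta with connected fibers is clopen and connected (a splitting of it would, via the connected fibers, descend to a splitting of the target component) shows that $\pi$ and $\rho$ induce bijections between connected components. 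Composing, $\S^*\mapsto\pi(\S^*)\mapsto\rho^{-1}(\pi(\S^*))=B(\S^*)$ is a bijection from the components of $\mathrm{NE}(G)$ onto those of $\mathrm{NE}(\G(u))$, which is the first claim.

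For the indices I would rewrite (iii) as $\BR^G=\Pi\circ\BR^{\G_P}\circ\pi$, where $\Pi\colon P\rightrightarrows\S$, $q\mapsto\pi^{-1}(q)$, is upper semicontinuous and convex-valued, and note that $\pi\circ(\Pi\circ\BR^{\G_P})=\BR^{\G_P}$. The commutativity (composition) property of the fixed-point index for admissible self-maps of compact ANRs then equates the index of the isolated component $\S^*$ of $\mathrm{Fix}\,\BR^G$ with that of the isolated component $\pi(\S^*)$ of $\mathrm{Fix}\,\BR^{\G_P}$, i.e.\ $\text{Ind}(\S^*)=\text{Ind}(\pi(\S^*))$. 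Running the same argument with $\rho$ for $\pi$ gives $\text{Ind}(B(\S^*))=\text{Ind}(\rho(B(\S^*)))=\text{Ind}(\pi(\S^*))$, and the two equalities combine to $\text{Ind}(\S^*)=\text{Ind}(B(\S^*))$.

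I expect the index step to be the main obstacle: one has to check that the three best-reply correspondences compose exactly as in~(iii), that $\Pi$ and the fibers of $\rho$ are genuinely acyclic so that the composed correspondences are admissible in the sense required by the version of the fixed-point index invoked, and that the commutativity formula really applies componentwise to these polyhedrally-valued, non-manifold correspondences. Everything else---the factorization of $\G(u)$ through $\G_P$ and the bijection of components---is routine bookkeeping.
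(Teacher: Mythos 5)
Your skeleton---route both $\S$ and $B$ through the enabling-strategy game and compare each with it---is the same as the paper's, and the component half of your argument is fine: the fibers of $\pi$ are polytopes and the fibers of $\rho$ are products of a determined part and free simplices at own-excluded information sets, so the connected-fiber lemma for surjections of compacta gives the bijection of components (a legitimate alternative to the paper's argument, which gets connectedness of $B(P^*)$ as a decreasing limit of closures of homeomorphic images of neighborhoods). The genuine gap is exactly the step you flag and then leave unresolved: the index equality via ``commutativity'' of the fixed-point index for compositions with set-valued maps. For the mixed leg the composite $q\mapsto\pi^{-1}\bigl(\BR^{\G_P}(q)\bigr)$ is at least convex-valued (since $\pi$ is linear), so one can hope to reduce to single-valued commutativity by graph approximation; but for the behavioral leg the composite $q\mapsto\rho^{-1}\bigl(\BR^{\G_P}(q)\bigr)$ is \emph{not} convex-valued: $\BR_n^{\G_P}(q)$ is a face of $P_n$, and the $\rho_n$-preimage of, say, the face $\{p_n: p_n(a_n)=0\}$ is the set of $b_n$ giving probability zero to at least one of $n$'s actions on the path to $a_n$---a union of slices. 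So the standard index for u.s.c.\ convex-valued correspondences and its axioms do not apply as invoked; you would need an index for acyclic-valued (or decomposable/admissible) correspondences, a commutativity theorem at that level of generality, a proof that the values $\rho_n^{-1}(F)$ are indeed acyclic, and a verification that this index coincides with the game-theoretic index of \cite{GPS1993, DG2000} that the proposition is about. None of this is supplied, and it is the heart of the second assertion.

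The paper's proof is organized precisely to avoid this machinery. It disposes of the normal-form-versus-enabling comparison (components and indices) by citing Proposition 3.10 of \cite{P2023}, and for enabling versus behavioral strategies it restricts to the polytopes $B^\e=\{b: b_{n,a_n}\ge\e \ \forall n,a_n\}$ and the corresponding $P^\e$, on which the realization map $\psi$ is a homeomorphism that conjugates the two restricted best-reply correspondences $\BR^{P,\e}$ and $\BR^{\G,\e}$; since for small $\e$ the index of $P^*$ (resp.\ $B(P^*)$) is computed from these restricted correspondences over suitable neighborhoods, index equality follows from invariance under a homeomorphism commuting with the correspondences---no composition/commutativity axiom and no acyclicity of preimages of faces is needed. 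If you want to keep your route, the missing work is exactly to establish the commutativity statement for the maps above (or to replace it by this conjugation-on-the-$\e$-interior device).
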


\begin{proof}
	The components of equilibria in enabling strategies are in a 1-1 correspondence with the components of equilibria of the normal form and their indices coincide under this identification---cf. Proposition 3.10  in \cite{P2023}. Thus it is sufficient to show the claim replacing mixed strategies with enabling strategies in its statement. 
	
	For each component $P^*$ of equilibria in enabling strategies, let $B(P^*)$ be the set of behavioral strategy profiles that are equivalent to it.  Let $\phi: B \to P$ be the map that sends behavioral strategies to equivalent enabling strategies.  Then for each component $B^*$ of equilibria of $\G(u)$, $\phi(B^*)$ is a connected set of equilibria and is contained in a component $P^*$ of equilibria.  Obviously, $B(P^*)$ contains $B^*$.  To prove the first statement, then, it is sufficient to show that $B(P^*)$ is connected for a component $P^*$.  For each $\e > 0$, let $U^{\e}$ be the $\e$-neighborhood of $P^*$.  There is a homeomorphism map $\psi: P \backslash \partial P \to B \backslash \partial P$ that sends $p$ to an equivalent profile $\psi(p)$ of behavioral strategies.  For each $\e$, the closure $B^\e$ of $\psi(U^\e \backslash \partial P)$ is a connected set that contains $B(P^*)$.  Since $B(P^*)$ is $\lim_{\e \downarrow 0} B^\e$, it is connected.
	
	To prove the second statement, take a component $P^*$ of equilibria.  Take a neighborhood $U$ of $P^*$ that is disjoint from the other components.  Let $V$ be the closure of $\psi(U \backslash \partial P)$, where $\psi$ is as in the previous paragraph. $V$ is a neighborhood of $B(\S^*)$ that is disjoint from the other components. Fix $\e > 0$ and let $B^\e$ be the set of $b \in B$ such that for each $n, a_n$, $b_{n,a_n} \ge \e$. Let $P^\e$ be the set of enabling strategies $p$ such that $\psi(p) \in B^\e$. $P^\e$ is then a polyhedron.  Use $\BR^{P, \e}$ (resp.~$\BR^{\G, \e}$) to denote the best-reply correspondence with this restricted strategy space.  If $\e$ is small, the index of $P^*$ (resp.~$B(\S^*)$) can be computed as the index of $\BR^{P, \e}$ (resp.~$\BR^{\G, \e}$) over $U$ (resp.~$V$). The result now follows as $\psi$ is homeomorphism that commutes with the best reply correspondences. 
\end{proof}

There is a convex-valued selection of $\BR^\G$ that is useful for this section.  For each $b$ and each information set $h_n$ of $n$ that is reachable by $b_{-n}$ (if $n$ chooses all actions leading to $h_n$), as beliefs are well-defined there, say that an action $a_n \in A_n(h_n)$ is optimal if there is a continuation strategy at $h_n$ that is optimal given his beliefs at $h_n$ and in which he chooses $a_n$. Then $\BR(b)$ assigns to each information set the set of mixtures over optimal actions against $b_{-n}$.  The fixed points of $\BR$ are a subset of the set of Nash equilibria.\footnote{This selection is almost the one that gives us  perfect Bayesian equilibria. If, in addition, we ask for optimality at singleton information sets that are excluded by the opponents, one would perfect Bayesian equilibria. An equivalent way to formulate the stronger requirement of perfect Bayesian rationality is to ask for optimality at information sets where the beliefs are constant across all consistent beliefs associated with the strategy profile.} % From now on, a best-reply correspondence in an extensive form game means this selection $\BR$.

Let $B^*$ be a component  of equilibria in behavioral strategies of a game $\G(u)$ and denote its index by $\text{Ind}(B^*)$. We assume that $u$ is generic so that every equilibrium in $B^*$ induces the same outcome, and hence the equilibrium payoff of each player $n$, call it $v_n^*$, is unique. For each $n$, let $H_n^+$ be the set of information sets of player $n$ that are reached with positive probability under the equilibria of $B^*$. For each $n$, the mixture at each information set in $H_n^+$ is constant across all equilibria in $B^*$. Let $A_n^+$ (resp.~$A_n^0$) be the set of actions $a_n$ that belong to some $h_n \in H_n^+$ and that are chosen with positive (resp.~zero) probability by the equilibria of $B^*$.  For each $n$, let $H_n^1$ be the set of $h_n  \notin H_n^+$ such that at each $h_n' \in H_n^+$ that precedes it, $n$'s action that leads to $h_n$ is in $A_n^+$---thus, the information sets in $H_n^1$ are excluded under $B^*$ by $n$'s opponents. For each $a_n^0 \in A_n^0$, let $H_n(a_n^0)$ be the collection of information sets that succeed $a_n^0$ and let $H_n^0 = \cup_{a_n \in A_n^0} H_n(a_n^0)$.  The information sets in $H_n^0$ are reached only if $n$ chooses a nonequilibrium action $a_n^0$. Let $H_n^- = H_n^0 \cup H_n^1$. Obviously, the sets $H_n^+, H_n^0, H_n^1$ partition $H_n$.\footnote{For simplicity in notation, this section and the next are written as if for each $n$, all three collections, $H_n^+$, $H_n^0$, and $H_n^1$, and hence all objects defined using them, are nonempty. If one of them, especially something that shows up as a factor in a product space, is empty, it is to be simply ignored.}

Every $b_n \in B_n$ can be written uniquely as a triple $(b_n^+, b_n^0, b_n^1)$ where $b_n^+$ is the restriction of $b_n$ to information sets in $H_n^+$, and for $i = 0, 1$,  $b_n^i$ is the restriction of $b_n$ to $H_n^i$; alternatively, we also write $b_n$ as $(b_n^+, b_n^-)$ where $b_n^- = (b_n^0, b_n^1)$. For each $n$, let $B_n^+$ be the set of $b_n^+$ whose support is contained in $A_n^+$, and let $B^+ = \prod_n B_n^+$. Observe that the projection of $B^*$ to $B^+$ is a singleton, denoted $b^{*,+}$, with $n$'s component being $b_n^{*,+}$.   

For $i = 0, 1$, let $B_n^i$ be the restriction of $B$ to the information sets in $H_n^i$.  $B_n^- \equiv B_n^0 \times B_n^1$. Also, for $i = 0, 1$, let $B^i = \prod_n B_n^i$, and $B^- = B^0 \times B^1$. Let the projection of $B^*$ to $B^-$ be denoted $B^{*, -}$. $B^{*,-}$ is of the form $B^0 \times B^{*, 1}$. If the payoffs are generic, then  $B^{*,1}$ is a full-dimensional subset of $B^1$ consisting of strategies that deter deviation by players: indeed, the set is defined by a set of polynomial inequalities, and generically it is the closure of the set of profiles that satisfy the inequalities strictly. Denote by $\partial B^{*,-}$ the boundary of $B^{*, -}$ in $B^-$: $\partial B^{*,-} = B^0 \times \partial B^{*,1}$, where $\partial B^{*, 1}$ is the boundary of $B^{*,1}$ in $B^1$.

Fix a closed neighborhood $U^{*,+}$ of $b^{*,+}$ in $B^+$ such that: (1) for each $b^+ \in U^{*,+}$, the support is $A^+$; (2) for every other equilibrium component, if the equilibrium outcome has the same support as that for $B^*$, then its projection to $B^+$ is outside $U^{*,+}$ or coincides with that of $B^*$.\footnote{For generic two-player games, no two components can induce the same equilibrium distribution.  But, with more than two-players, that property seems unlikely to be true, though we do not know of an example.}  Given $b^+ \in U^{*,+}$, computing the optimal  action in the set $A_n^+ \cap A_n(h_n)$ for each $h_n \in H_n^+$ involves only the information about $b^+$.  Therefore, we have a well-defined best-reply correspondence $\BR^{*,+}: U^{*,+} \to B^+$, which can be viewed as the best-reply correspondence of the pruned game obtained by deleting all paths that are not in the support of the equilibrium distribution associated with $B^*$.  If $u$ is generic, then $b^{*,+}$ is the unique fixed point of $\BR^{*,+}$ in $U^{*,+}$ and its index, denoted $\text{Ind}(b^{*,+})$, is $\pm 1$---indeed, viewed as an equilibrium of the pruned game tree, it is a regular equilibrium. If $H_n^1$ is empty for each $n$, then either $H_n^0$ is empty for each $n$ and $B^*$ is an isolated equilibrium  or $B^* = \{\, b^{*,+} \, \} \times B^0$ and either way $\text{Ind}(B^*) = \pm 1$. This case is uninteresting for the purposes of Theorem \ref{thm main 2}.  Therefore, we will assume that $H_n^1$ is nonempty for some $n$, and thus $B^{*,1}$ is a nonempty set.

Fix now a closed neighborhood $U^{*,-}$ of $B^{*, -}$ in $B^-$ such that  $U^{*,+} \times U^{*,-}$ is disjoint from all other components of equilibria of $\G(u)$. Our objective now is to define consistent beliefs at information sets in $H^-$ for profiles of the form $(b^*, b^-)$ with $b^- \in U^{*, -}$. (The beliefs at information sets in $H^+$ are given by the equilibrium strategies.)  Let $A^0 = \cup_n A_n^0$ and let $\D$ be the set of all probability distributions over $A^0$. Given $(b^0, b^1, \d) \in U^{*,-} \times \D$ and $\e$ a small positive number,  consider the strategy profile  $b^{\e, \d} \in B$ defined as follows.  For each $n, h_n \in H_n^+$, we perturb the strategy $b_n^{*,+}$ so that each $a_n^0 \in A_n(h_n) \cap A_n^0$ is played with probability $\e\d_{a_n^0}$ and with probability $(1-\sum_{a_n^0 \in A_n(h_n)} \e\d_{n,a_n^0})$, the mixture $b_n^{*,+}(h_n)$ is played; at all other information sets, players randomize according to $(b^0, b^1)$. For each small $\e$, $b^{\e, \d}$ induces, for each $n$, beliefs at all information sets in $H_n^0 \cup H_n^1$ that are reachable by $b_{-n}^{\e, \d}$ (even if $n$'s strategy excludes them). Letting $\e$ goes to zero, we get a limiting vector of beliefs $\mu_n^{b_{-n}^-, \d}$ over all such  information sets in $H_n^-$.  Note that, $\mu_n^{b_{-n}^-, \d}$ need not specify beliefs at all information sets in $H_n^-$, only those that are enabled by $b_{-n}^{\e, \d}$ for $\e > 0$. We use $\mu^{b^-, \d}$ to denote the vector of beliefs of all players at the profile $(b^-, \d)$. 

For $(b^-, \d) \in B^{*,-} \times \D$, let $H_n^0(b^-, \d)$ be the information sets in $H_n^0$ where player $n$'s beliefs $\mu_n^{b_{-n}^-, \d}$ are well-defined and assign zero probability to any node that is reached by some opponent $m$ choosing a non-equilibrium action $a_m^0$; likewise let $H_n^1(b^-, \d)$ be the information sets in $H_n^1$   where the beliefs are well-defined and assign zero probability to nodes that are are reached only when two or more of his opponents deviate to a non-equilibrium action.   At an information set in $H_n^i(b^-, \d)$, $i = 0, 1$, the beliefs are continuous in $(b^-, \d)$. To see this, suppose $h_n \in H_n^0(b^-, \d)$, then the information set is reachable by $(b_{-n}^*, b_{-n}^1)$ (if $n$ were to choose all his actions leading to it) and $\mu^{b^-, \d}$ assigns positive probability only to nodes that are reached thus; clearly the beliefs are continuous at $(b^-, \d)$; a similar argument works to show continuity at information sets in $H_n^1(b^-, \d)$. 

We will say that a mixture at an information set $h_n \in H_n^0 \cup H_n^1$ is optimal against $(b^-, \d)$ either if $h_n \notin H^0(b^-, \d) \cup H^1(b^-, \d)$ or if it is optimal against the induced strategy-belief pair $(b^-, \mu^\d)$.  We now have a best-reply correspondence $\BR^{*,-}: B^{*,-} \times \D \to B^-$ that is {\it well-behaved}, in the sense that it is an upper semicontinuous correspondence with nonempty, compact, and convex values.

We define a correspondence $\BR^{*,\D}: U^{*, -}  \to \D$ by viewing $\d \in\D$ as being a choice variable of an outsider $O_\D$.  His payoff if he chooses $a_n^0$ is $G_\D(a_n^0) \equiv G_n(\tilde b_n(a_n^0, b_n^-), b_{-n}^{*,+}, b_{-n}^-) - v_n^*$, where $\tilde b_n(a_n^0, b^-)$ is the following strategy: at the information set $h_n(a_n^0)$, it plays $a_n^0$ and at all other information sets in $H_n^+$ it plays according to $b_n^{*, +}$; at information sets in $H_n^-$,  it plays according to $b_n^-$. Thus, his best-reply correspondence is:
$
\BR^{*,\D}(b^-) = \text{argmax}_{\tilde \d} \sum_{n,a_n^0} \tilde \d_{n, a_n^0}[G_n(\tilde b_n(a_n^0, b_n^-), b_{-n}^{*,+}, b_{-n}^-) - v_n^*]
$ 

This correspondence, too, is well-behaved. We would now like to compute the index of the fixed points of the product correspondence $\BR^{*,-} \times \BR^{*, \D}$. Since the domain is, possibly, a strict subset of the range, the first property we have to establish is that the correspondence does not have a fixed point on the boundary of the domain, which is set out in the following lemma.  

\begin{lemma}\label{lem boundary of B}
	If $u$ is generic, $\BR^{*,-} \times \BR^{*, \D}$ does not have a fixed point on $\partial B^{*,-} \times \D$.
\end{lemma}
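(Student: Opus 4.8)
The plan is to argue by contradiction. Suppose $\BR^{*,-}\times\BR^{*,\D}$ has a fixed point $(\bar b^0,\bar b^1,\bar\d)\in\partial B^{*,-}\times\D$, so $\bar b^1\in\partial B^{*,1}$. Since $\partial B^{*,-}=B^0\times\partial B^{*,1}$ and the $B^0$ factor enters none of the constraints below, it is enough to rule this out. First I would read the two fixed-point inclusions off as statements about the assessment $(\bar b,\mu)$, where $\bar b=(b^{*,+},\bar b^0,\bar b^1)$ and $\mu=\mu^{\bar b^-,\bar\d}$ is the consistent belief produced by the perturbations $b^{\e,\bar\d}$. The inclusion $\bar b^-\in\BR^{*,-}(\bar b^-,\bar\d)$ says $\bar b$ is sequentially rational at every information set in $H^+$ — automatic, since $b^{*,+}$ is a regular equilibrium of the pruned game and $U^{*,+}$ isolates it — and at every information set of $H^-$ reached under some $b^{\e,\bar\d}$. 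The inclusion $\bar\d\in\BR^{*,\D}(\bar b^-)$ says $\bar\d$ is supported on the actions $a_n^0$ that maximise the excess payoff $G_\D(a_n^0)$. Because $\bar b^-\in B^{*,-}$, every $a_n^0$ is deterred against $(b^{*,+}_{-n},\bar b^1_{-n})$, so $G_\D(a_n^0)\le0$ throughout; hence $\bar b$ is actually a Nash equilibrium of $\G(u)$, and, lying in $U^{*,+}\times U^{*,-}$, it belongs to the component $B^*$.

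Next I would exploit that $\bar b^1$ lies on the boundary of the deterrence region. Picking $b^1_k\to\bar b^1$ with each $b^1_k$ failing to deter and passing to a subsequence, there is a single player $n$ and action $a_n^0\in A_n^0$ whose best continuation against $(b^{*,+}_{-n},b^1_{k,-n})$ beats $v_n^*$ for all $k$, hence is at least $v_n^*$ against $(b^{*,+}_{-n},\bar b^1_{-n})$ in the limit; together with the deterrence inequality for $\bar b^1$ this forces that best-continuation value to equal $v_n^*$ exactly. Generically $\partial B^{*,1}$ is cut out by a single binding inequality, so this $(n,a_n^0)$ is unique and all other deviations remain strictly deterred; in the principal case this forces $G_\D$ to be maximised, with value $0$, at $a_n^0$, and $\bar\d$ to charge $a_n^0$, so the subtree following $a_n^0$ is reached under $b^{\e,\bar\d}$ and $\bar b$ is sequentially rational there. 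The conjunction of these facts — $b^{*,+}$ a regular equilibrium of the pruned tree (so the payoffs $v_n^*$ vary freely with $u$), $\bar b$ sequentially rational with consistent beliefs on the reached part of the $a_n^0$-subtree, and the resulting off-path continuation value being exactly $v_n^*$ — exhibits $u$ as a solution of a polynomial system whose solution set has dimension strictly below $NZ$; since $u$ is generic, no such fixed point exists. The residual case, in which $\bar\d$ does not charge the binding $a_n^0$, is handled by a further limiting/upper-semicontinuity argument that pushes one back onto the reachable subtree (or by a direct payoff perturbation as in the next paragraph).

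The hard part is the dimension count: one must show that the scalar equation ``off-path continuation value $=v_n^*$'' genuinely drops the dimension, i.e.\ is not already forced, on an open set of payoffs, by the sequential-rationality and consistency constraints that the fixed point imposes. This is where $H_n^1\neq\emptyset$ (so the $a_n^0$-subtree is nontrivial) is used, together with the Kreps--Wilson perturbation device already employed in Section 9 (cf.\ \cite{KW1982}): starting from any such configuration one perturbs $u$ so as to change the continuation value while preserving the relevant best-reply relations, which shows the exceptional set of payoffs has empty interior and hence, being semialgebraic, lies in a lower-dimensional subset. The accompanying technical burden is the bookkeeping of exactly which information sets of $H^-$ are reached under $b^{\e,\bar\d}$ — since it is precisely there, and only there, that the fixed-point condition imposes optimality — and the verification that the induced beliefs $\mu$ on the $a_n^0$-subtree depend algebraically on $(\bar b^0,\bar b^1,\bar\d)$, so that the whole configuration is genuinely a semialgebraic family over $u$-space.
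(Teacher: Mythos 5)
Your overall route is the paper's: convert a hypothetical boundary fixed point into a solution of a semialgebraic system in the variables $(u,\hat b^+,\hat b^-,\hat\delta)$ and show by a dimension count that its projection to payoff space has dimension below $NZ$, so generic $u$ escapes it. The gap is in the step where you reduce to a single binding deviation. You assert that ``generically $\partial B^{*,1}$ is cut out by a single binding inequality, so this $(n,a_n^0)$ is unique.'' That cannot be invoked here: the hypothetical fixed point is an endogenous object, and nothing prevents it from sitting on a lower-dimensional stratum of $\partial B^{*,1}$ where several deviations are simultaneously exactly deterred---ruling out boundary fixed points is precisely what the lemma must prove, so you cannot presuppose where on the boundary such a point would lie. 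Genericity of $u$ keeps $B^{*,1}$ full-dimensional and nicely cut out, but it does not eliminate the corners of its boundary, nor does it keep a fixed point off them.

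Once several actions bind, your count no longer closes, because you never book the dimensions contributed by $\delta$ against the number of indifference equations. This is exactly what the paper's proof does: let $\hat A^0$ be the set of \emph{all} deviations $a_n^0$ whose value against $(b_{-n}^{*,+},b_{-n}^-)$ equals $v_n^*$ at the fixed point; the fixed-point condition for $\BR^{*,\D}$ forces the support of $\delta$ to lie in $\hat A^0$, so the $\delta$-variable contributes only $|\hat A^0|-1$ dimensions, while one obtains (at least) $|\hat A^0|$ indifference equations, one per element of $\hat A^0$. Hence the solution set of equilibrium-plus-optimality-plus-indifference conditions has dimension at most $NZ-1$ whatever the size of $\hat A^0$, and projecting to $u$-space finishes the proof. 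Your ``principal case'' (one binding action, $\delta$ a point mass, one scalar equation) is the special case $|\hat A^0|=1$, and your ``residual case \ldots handled by a further limiting/upper-semicontinuity argument'' is not an argument---what is needed in the multi-binding case is the support restriction on $\delta$ together with one equation per binding action, not a limiting device. Your use of a Kreps--Wilson-style payoff perturbation to show the indifference equations are not implied by the rationality and consistency constraints is in the right spirit (uniform shifts of player $n$'s payoffs on the subtree after $a_n^0$ move the deviation value without disturbing optimality within that subtree or on the path), but it must be run for every member of $\hat A^0$, not just for one distinguished deviation.
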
   

\begin{proof} 
	Suppose $(b^-, \d) \in \partial B^{*,-} \times \D$ is a fixed point of $\BR^{*,-} \times B^{*, \D}$. For each $n$, let $\hat A_n^0$ be the set of $a_n^0 \in A_n^0$ such that $\tilde b(a_n^0, b^-)$ yields $v_n^*$ against $(b_{-n}^{*, +}, b_{-n}^-)$, where $\tilde b_n$ is as defined for the payoffs of $O_\D$. Let $\hat A^0 = \cup_n \hat A_n^0$. Since $b^-$ belongs to the boundary of $B^{*,-}$, $\hat A^0$ is nonempty.  Clearly, the support of $\d$ is contained in $\hat A^0$.    
	
	For each $n$, let $\hat H_n^0$ be the information sets in $H_n^0(b^-, \d)$ that follow some $a_n^0 \in \hat A_n^0$ and are reached by $b_n^0$.  The beliefs and the continuation payoffs at information sets in $\hat H_n^0$ are multilinear in $(b_{-n}^{*,+}, b_{-n}^1)$.  Likewise let $\hat H_n^1$ be the information sets in $H_n^1(b^-, \d)$. The beliefs and continuation payoffs at  information sets in $\hat H_n^1$ are multilinear in $(b_{-n}^{*, +}, b_{-n}^-, \d)$. Let $\hat H^- = \cup_n (\hat H_n^0  \cup \hat H_n^1)$ and let $\hat A^-$ be the set of actions at the information sets in $\hat H^-$ that are in the support of $b^-$.  
	
	Consider the following set of equations in the variables $(u, \hat b^+, \hat b^-, \hat \d)$ where  the support of $\hat b^+$ is $A^+$, the support of $\hat b^-$ is $\hat A^-$, and support of $\d$ is $\hat A^0$:   (1) $\hat b^+$ is an equilibrium of the game $\G(u)$ where actions other than $A^+$ are eliminated; (2) the actions in $\hat A^-$ are all optimal given $(b^0, b^1, \d)$; (3) the action $\hat a_n^0$ yields the same payoff as any other $a_n^+ \in A_n(h_n(a_n^0))$. The solutions to the first two set of equations is a set of dimension $|NZ| + |\hat A^0| - 1$.  However, we also have to satisfy the equations in (3),  and thus the set of solutions has dimension $|NZ| - 1$.  For generic $u$, then, there is no solution $(u, \hat b^+, \hat b^-, \hat \d)$ to the set of equations. The result now follows since $(u, b^+, b^-, \d)$ must satisfy the equations for the fixed point $(b^-, \d)$.  
\end{proof}

Thanks to the lemma, $\BR^{*, -} \times \BR^{*, \D}$ has a well-defined fixed point index for the set of fixed of points in  $B^{*,-} \times \D$. We denote this index by $\text{Ind}(B^{*,-})$.  Also, if necessary by replacing  $U^{*, -}$  with  a smaller closed neighborhood, we can assume that $\BR^{*,-} \times \BR^{*, \D}$ has no fixed points in  $(U^{*,-} \backslash B^{*, -}) \times \D$. 

Putting together all three correspondences we have defined, we have a correspondence $\BR^*: U^{*,+} \times U^{*,-}\times \D \to B^+ \times B^- \times \D$.  $\BR^*$  now has a well-defined index,  which is $\text{Ind}(b^{*,+})\times \text{Ind}(B^{*,-})$. The following proposition is our characterization of the index of components.

\begin{proposition}\label{prop essentiality}
	$\text{Ind}(B^*) = \text{Ind}(b^{*,+}) \times \text{Ind}(B^{*, -})$. 
\end{proposition}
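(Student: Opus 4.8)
The plan is to compute $\text{Ind}(B^*)$ directly from its definition as the fixed-point index of the best-reply correspondence $\BR^\G$ (restricted to the neighborhood $U^{*,+}\times U^{*,-}$ of $B^*$, which by construction is disjoint from all other equilibrium components), and to show it equals the index of the product correspondence $\BR^*$ on $U^{*,+}\times U^{*,-}\times\D$. The key device is a homotopy between $\BR^\G$ and a correspondence that splits as a product. First I would observe that on a neighborhood where every profile has the support $A^+$ at information sets in $H^+$, the computation of optimal actions in $A_n^+$ depends only on $b^+$, so the $B^+$-component of $\BR^\G$ already agrees with $\BR^{*,+}$. The genuine work is at the information sets in $H^-=H^0\cup H^1$: the behavior $\BR^\G$ prescribes there depends, via the consistent beliefs, on the full profile $b$ (in particular on the relative trembles among the $A^0$-actions), whereas the target product correspondence feeds those beliefs through the auxiliary variable $\d\in\D$ carried by the fictitious outsider $O_\D$. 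So I would introduce the variable $\d$ as a bookkeeping device: replace $B$ by $B\times\D$, let $\d$ record the limiting relative weights of the $A^0$-actions, and check that $\mu^{b^-,\d}$ as defined in the text is exactly the consistent-belief system one obtains along the perturbation $b^{\e,\d}$. This is the point where the normalization introduced before Lemma \ref{lem boundary of B} — writing each $b_n$ as $(b_n^+,b_n^0,b_n^1)$ and factoring $B^{*,-}=B^0\times B^{*,1}$ — pays off, since it exhibits the relevant neighborhood as a product on which the three correspondences $\BR^{*,+}$, $\BR^{*,-}$, $\BR^{*,\D}$ act on separate factors.

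The central step is then a homotopy argument. I would build a homotopy $H_t$, $t\in[0,1]$, on $U^{*,+}\times U^{*,-}\times\D$ from $\BR^\G$ (suitably lifted to include the $\d$-coordinate, with $H_0$'s $\D$-component being $\BR^{*,\D}$ and $H_0$'s $B$-components being $\BR^\G$ but with all off-path beliefs recomputed using $\d$ in place of the actual trembles in $b$) to the full product correspondence $\BR^*=\BR^{*,+}\times\BR^{*,-}\times\BR^{*,\D}$. The homotopy interpolates between ``beliefs at $H^-$ read off the incumbent profile'' and ``beliefs at $H^-$ read off $\d$''; at $t=0$ the two coincide by the construction of $b^{\e,\d}$ and the definition of $\mu^{b^-,\d}$, so $H_0$ is already fixed-point-equivalent to $\BR^\G$ near $B^*$ (this is a routine check: a fixed point of $\BR^\G$ in this neighborhood, augmented by the $\d$ recording its own off-path weights, is a fixed point of $H_0$, and conversely). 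For the homotopy invariance of the index I must verify the no-fixed-point-on-the-boundary condition throughout: on $\partial(U^{*,+}\times U^{*,-})\times\D$ this holds because the neighborhoods were chosen disjoint from other components (and, for the $U^{*,-}\setminus B^{*,-}$ part, by the remark following Lemma \ref{lem boundary of B} that there are no fixed points there), and on the ``new'' boundary piece $\partial B^{*,-}\times\D$ it is precisely Lemma \ref{lem boundary of B} — here I need that the homotopy stays within the class of well-behaved correspondences whose fixed points on $\partial B^{*,-}$ are still governed by the same system of polynomial equations, which is a genericity check analogous to the proof of that lemma. Granting this, homotopy invariance gives $\text{Ind}(B^*)=\text{Ind}(\BR^*)$, and since $\BR^*$ is a product of correspondences on a product space, the multiplicativity of the fixed-point index yields $\text{Ind}(\BR^*)=\text{Ind}(b^{*,+})\times\text{Ind}(B^{*,-})$, which is the assertion. (One should also invoke Proposition \ref{prop same index} at the outset to know that the index of $B^*$ in behavioral strategies is the object to compute, rather than the mixed-strategy index — but that is already in hand.)

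The main obstacle I expect is making the homotopy $H_t$ genuinely upper semicontinuous with convex, compact, nonempty values for every $t$, while simultaneously keeping it fixed-point-free on $\partial B^{*,-}\times\D$. The difficulty is that the map ``profile $\mapsto$ consistent beliefs at $H^-$'' is not continuous on all of $B$ — it is continuous only on the strata $H_n^i(b^-,\d)$ — so one cannot naively convex-combine the two belief systems; instead I would interpolate at the level of the perturbing sequences, i.e.\ define $H_t$ via limits of best replies to $b^{\e,\,\d(t)}$ where $\d(t)$ blends the incumbent off-path weights with the outsider's $\d$, and then argue upper semicontinuity of the resulting correspondence by a standard closed-graph argument on the strata (using that the beliefs are continuous on each $H_n^i(b^-,\d)$, as noted in the text). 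The boundary-freeness for intermediate $t$ then reduces, as in Lemma \ref{lem boundary of B}, to a dimension count: any boundary fixed point of $H_t$ must satisfy the same over-determined system (equilibrium on the $A^+$-pruned game; optimality of the active $H^-$-actions given the blended beliefs; and the extra equalities forcing an $A^0$-action to tie its siblings), whose solution set drops to dimension $|NZ|-1$, hence is avoided for generic $u$. Once these two technical points are secured, the rest is the formal machinery of fixed-point index theory (homotopy invariance, multiplicativity, commutation with the homeomorphism $\psi$ of Proposition \ref{prop same index}), which is routine.
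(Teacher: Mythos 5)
Your overall architecture (lift to $B\times\D$, deform to the product correspondence $\BR^{*,+}\times\BR^{*,-}\times\BR^{*,\D}$, then invoke homotopy invariance and multiplicativity) aims at the right target, but the two steps you treat as routine are exactly the substance of the proposition, and your justifications for them do not hold up. First, the claim that $H_0$ is ``fixed-point-equivalent'' to $\BR^\G$ near $B^*$ is false as stated: every equilibrium in $B^*$ puts probability zero on the actions in $A^0$, so there is no ``$\d$ recording its own off-path weights'' to augment with; and conversely a fixed point of your $\d$-belief correspondence must be optimal at the information sets in $H^-$ given $\mu^{b^-,\d}$, a condition that arbitrary members of $B^*=\{\,b^{*,+}\,\}\times B^0\times B^{*,1}$ (with $b^0\in B^0$ unrestricted) need not satisfy. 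So the index identity at the $t=0$ end is not a routine check---it is the core of what has to be proved. Second, the boundary control for intermediate $t$ cannot be obtained by a dimension count ``as in Lemma \ref{lem boundary of B}'': that lemma concerns a single polynomial system whose solution set has dimension $|NZ|-1$, so its projection to payoff space is avoided by generic $u$; once the homotopy parameter $t$ is added as a free variable the analogous solution set gains a dimension, its projection to $u$-space can be full-dimensional, and genericity of $u$ no longer excludes boundary fixed points at some intermediate $t$. Together with the discontinuity of the consistent-belief map (which, as you note, rules out naive convex combination of belief systems), the deformation you describe is not known to stay within well-behaved correspondences with controlled boundary behavior.

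The paper sidesteps both problems by perturbing the game rather than homotoping beliefs: in $\bar\G^{\e,\d}$ Nature implements each $a_n^0$ with probability $\e\d_{a_n^0}$, so for $\e>0$ the relevant information sets are reached, the (selection of the) best-reply correspondence is automatically well-behaved, and its index over a neighborhood of the duplicated component equals $\text{Ind}(B^*)$ because $\bar G^{0,\d}$ is $G$ plus duplicate strategies. The $\D$-coordinate is endogenized by a linear homotopy from the identity to $\bar\BR^{\e,\D}$, and the passage to $\e\to 0$ is done not by interpolating belief systems but by (i) a limit argument using Lemma \ref{lem boundary of B} showing that at fixed points for small $\e$ every $a_n^0$ is strictly inferior, so the tree can be pruned, and (ii) the graph-approximation property of the index: the graph of the pruned $\e$-correspondence eventually lies in any neighborhood of the graph of $\BR^{*,+}\times\BR^{*,-}\times\BR^{*,\D}$. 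To salvage your route you would need substitutes for precisely these two steps, which your outline assumes rather than proves.
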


\begin{proof}
For each $\d \in \D$ and $\e \in [0, 1)$, consider the following perturbed game $\bar \G^{\e, \d}$.  For each player $n$ and each information set $h_n \in H_n^+$ where there is at least one $a_n^0 \in A_n^0$, player $n$ first decides which of the actions $a_n^0$ there to play; and if he decides against playing any of them, then Nature chooses with probability $\e \d_{a_n^0}$ to automatically implement $a_n^0$ for each $a_n^0$ there and with the complementary probability allows $n$ to choose one of his actions $a_n^+$ there. Even if Nature implements an $a_n^0$ player $n$ still has to choose an action at  each of the succeeding information sets. $n$'s opponents are unable to distinguish between whether $n$ choose $a_n^0$ or Nature did. We represent a strategy of player $n$ as a vector $(b_n^+, b_n^0,  b_n^1, \eta_n, \hat b_n^0)$ where:  $b_n^0$ represent the choices after nature chooses an $a_n^0$ and $\hat b_n^0$ represents choices after $n$ himself chooses an $a_n^0$; $b^1$ represents $n$'s choices at information sets in $H_n^1$; $\eta_n$ is a vector in $\Re^{A_n^0}$ with $\eta_{a_n^0}$ being the probability with which $n$ (and not Nature) chooses $a_n^0 \in  A_n^0$ at the stage where $n$ decides which of the actions in $A_n^0 \cap A_n(h_n)$, if any, to play.  

Let $\bar G^{\e, \d}$ be the normal form of the game $\bar \G^{\e, \d}$. $\bar G^{0, \d}$ is obtained from $G$ by the addition of duplicate strategies (created by the sequential nature of choosing actions by first deciding on nonequilibrium actions).  Hence the index of each  component of equilibria of $\bar G^{0, \d}$ is the same as its equivalent component in $G$.  Let $\bar B^*$ be the component of equilibria of $\bar \G^{0, \d}$ that is equivalent to $B^*$, that is  $\bar B^*$ is set of all $(b^{*,+}, b^0, b^1, 0, \hat b^0)$ with $b^0, \hat b^0 \in B^0$ and $b^1 \in B^{*, 1}$. We then have that $\text{Ind}(\bar B^*) = \text{Ind}(B)$.

%Clearly, when $\e = 0$, the normal form of the game $\G^{0, \d}$ is obtained from the normal form $G$ of $\G$ by the addition of duplicate strategies. And,  for small $\e$, the normal form of $\G^{\e, \d}$ is a perturbation of the normal form of $\G^{0, \d}$. Hence, the index of the best-reply correspondence $\BR^{\e, \d}$ over a small neighborhood $U$ of $B^*$ is the index of $B^*$. 
 
Let $\bar \BR^{\e, \d}$ be the selection of the best-reply correspondence of $\bar G^{\e, \d}$ we defined at the beginning of the section. Players optimize at all information sets that are not excluded by their opponents; but, also ask for each player $n$ to optimize at information sets following Nature's choice of $a_n^0$ (even if  $\d_{a_n^0} = 0$).  There is now a neighborhood $\bar U$ of $\bar B^*$ such that for all small $\e > 0$ and any $\d \in \D$,  there are no fixed points of $\bar \BR^{\e, \d}$ on the boundary of $\bar U$ and the index of $\bar \BR^{\e, \d}$ over $\bar U$ is $\text{Ind}(B^*)$.  For such small $\e > 0$, consider now the correspondence $\bar \BR^\e \times \text{Id}$ on $U \times \D$ where $(\bar \BR^\e \times \text{Id})(\cdot, \d) = \bar \BR^{\e,\d}(\cdot) \times \{\, \d \, \}$.   The index of this correspondence over the set $\bar U \times \D$ is also $\text{Ind}(B^*)$.  Furthermore,  the index remains unchanged if we change $\text{Id}$ to any well-behaved correspondence: indeed, for the linear homotopy between $\text{Id}$ and this new correspondence, there are no fixed points in the boundary of $\bar U$ and hence the index remains constant along the homotopy. In particular, the index remains unchanged if we replace replace  $\text{Id}$ with the correspondence $\bar \BR^{\e, \D}$ defined as follows: 
$\bar \BR^{\e, \D}(b^{+}, b^0,  b^1, \eta, \hat b^0, \d) = \BR^{*,\D}(b^0, b^1).$

Suppose now that for each small $\e > 0$ we have the following property: given a fixed point $(b^+, b^0, b^1, \eta, \hat b^0, \d)$ in $\bar U^* \times \D$ of $\bar \BR^{\e} \times \bar \BR^{\e, \D}$, for each $n$, each  $a_n^0 \in A_n^0$ is an inferior action in the game $\bar \G^{\e, \d}$ against $(b^+, b^0, b^1, \eta, \hat b^0)$.  Then, the index of $\bar \BR^{\e} \times \bar \BR^{\e, \D}$ over $\bar U^* \times \D$ can be computed using the pruned tree $\hat \G^{\e, \d}$ where the actions $a_n^0$ and everything that follows them are deleted. Let $\hat \BR^{\e} \times \hat \BR^{\e, \D}$ be the best-reply correspondence with the pruned tree.  Observe that each neighborhood of the graph of $\BR^* \times \BR^{*,-} \times \BR^{*, \D}$ over $U^{*,+} \times U^{*,-} \times \D$ contains the graph of $\hat \BR^{\e} \times \hat \BR^{\e, \D}$ over $U^{*,+} \times U^{*,-} \times \D$ for all small $\e$. Therefore, the result follows.

%Let $\hat U$ be the set of $(b^+, b^0,  b^1, \eta, \hat b^0,  \d)$ such that $(b^+, (b^0, b^1)) \in U^{*,+} \times U^{*, -}$ and $\eta = 0$.  We can then view $\hat U$ as the set $U^* \times U^{*,-} \times B^0 \times \D$ with points of the form $(b^+, b^-, \hat b^0, \D)$.  For each neighborhood of the graph of $\BR^{*,+} \times \BR^{*,-}$ over $\hat U$, there is an $\e > 0$ such that graph of the restriction of $\BR^{\e, \d}$ to $\hat U$ is contained in this neighborhood for all $\d \in \D$. Therefore, if we can show that for all small $\e > 0$, in any equilibrium in $\bar U$, every $a_n^0$ is suboptimal, we are done.

There remains to show that for small $\e > 0$, in any fixed point $\bar b \in \bar U^*$, the actions $a_n^0$ are not optimal.  To do that, 
take  a sequence of fixed points $(b^{+, \e}, b^{0, \e}, , b^{1, \e},  \eta^\e, \hat b^{0, \e}, \d^\e)$ of $\bar \BR^\e \times \bar \BR^{\e, \D}$ such that, letting $\tilde b^\e$ be the equivalent behavioral strategy profile in $\G$, $\tilde b^\e$ converges to some $\tilde b \in B^*$.   Then $\tilde b$ is of the form $(b^{*,+}, \tilde b^-)$. Define $\tilde \d_{a_n^0}^\e = {\sum_{a_m^0} \tilde b^\e(a_m^0)}^{-1} \tilde b^\e(a_n^0)$  and let $\tilde \d$  be its limit.   Clearly $(b^{*,+}, \tilde b^0, \tilde b^1, \tilde \d)$ is a fixed point of $\BR^{*,+} \times \BR^{*,-} \times \BR^{*, \D}$, implying that $(\tilde b^0, \tilde b^1)$ belongs to the interior of $B^{*,-}$.   In other words, for small $\e$, the actions $a_n^0$ are suboptimal as was to be shown.
\end{proof}

We will now provide an implication---an equivalence, actually, though we do not prove the other direction---of index zero in Proposition \ref{prop zero index} that is used in the next section. For all sufficiently small $\zeta$, if we replace $\BR^{*,-} \times \BR^{*, \D}$ by $\zeta$-best reply correspondences, i.e.~ where the probability of an action that is not $\zeta$-optimal  is no more than $\zeta$,  still there is no fixed point in $\partial B^{*,-} \times \D$. Fix such a $\zeta$. let $B^{*, \zeta, -}$ be the set of $(b^0, b^1) \in B^{^*,-}$ such that $b^0(a_n) \ge \zeta$ for all $a_n \in A_n(h_n)$ for some $h_n \in H_n^0$.  Similarly, let $\D^\zeta$ be the set of $\d \in \D$ such that $\d_{a_n^0} \ge \zeta$ for all $a_n^0$.  Let $\BR^{*, \zeta, -}$ and $\BR^{*, \zeta, \D}$ be the best reply correspondences when the choice sets are restricted to be  $B^{*,\zeta, -}$ and $\D^\zeta$ respectively.  There are no fixed points in $\partial B^{*, -} \times \D$ and the fixed point index is still the same.

For each $(b^-, \d) \in B^{*,\zeta, -} \times \D^\zeta$ such that $b^1$ is completely mixed as well, and $\e > 0$, the beliefs $\mu^{\e, b^-,\d}$ are well-defined at all information sets in $H^-$.  Let $\bbB^*$ be the set of $(b^-, \mu, \d)$ that are limits of $(b^\e, \mu^{\e, b^-, \d}, \d^\e)$ as $\e \to 0$.\footnote{  $\bbB^{*}$, or rather just its first two factors, is the projection to the information sets in $H^-$ of pairs of strategy profiles and consistent beliefs where an equilibrium in $\{\, b^{*,+}\, \} \times B^{*, \zeta, -}$ is played.}  $\partial \bbB^{*, -}$ refers to the points $(b^-, \mu, \d) \in \bbB^{*}$ with $b^- \in \partial B^{*,-}$.

There is now a well-defined best-reply correspondence from $\bbB^{*} \to B^{-} \times \D$ where at each information set $h_n$ in $H_n^-$, player $n$ optimizes relative to the beliefs there and subject to choosing each action with probability at least $\zeta$ if $h_n \in H_n^0$,  and the best-reply for the factor $\D^\zeta$ is as before.  We use $\bbBR^*$ to denote this correspondence. Given a vector $g \in \Re^{A \backslash A^+}$, $\G(u) \oplus \g$ denotes the game where the payoffs to an action $a_n \notin A^+ \cup A^0$ at an information set $h_n \in H_n^-$ is augmented by a bonus $g_{a_n}$; and for the outsider controlling choices in $\D$, the payoffs to $a_{n^0}$ are augmented by $g_{a_n^0}$. We use $\bbBR_g^{*}$ to denote the best-reply correspondence when the payoffs have been augmented using $g$. The proof of the following proposition follows the logic of Govindan, Laraki and Pahl \cite{GLP2023}.

\begin{proposition}\label{prop zero index}
If $\text{Ind}(B^*) = 0$, then there exists a function $g: \bbB^{*} \to \Re^{A \backslash A^+}$ such that: 
\begin{enumerate}
	\item $g$ is zero in a neighborhood of $\partial \bbB^{*}$ in $\bbB^{*}$; 
	\item $(b^-, \d) \notin \bbBR_{g(b^-, \mu, \d)}^{*}(b^-, \mu, \d)$ for any $(b^-, \mu, \d) \in \bbB^{*}$. 
\end{enumerate}
\end{proposition}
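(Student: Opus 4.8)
The plan is to follow the strategy of Govindan, Laraki and Pahl \cite{GLP2023}: the fixed-point index of a well-behaved correspondence is exactly the obstruction to deforming it, relative to a neighbourhood of the region where it is coincidence-free, into a globally coincidence-free correspondence, and such a deformation can be realised through the bonuses $g$, because a large positive bonus on a chosen action at an information set in $H^-$ (or on a chosen $a_n^0$ for the outsider controlling $\D$) makes that action the unique optimum for the corresponding factor. The first step is to transfer the index hypothesis to $\bbBR^*$. By Proposition \ref{prop essentiality}, $\text{Ind}(B^*)=\text{Ind}(b^{*,+})\,\text{Ind}(B^{*,-})$ with $\text{Ind}(b^{*,+})=\pm 1$, so $\text{Ind}(B^*)=0$ forces $\text{Ind}(B^{*,-})=0$; after passing to the $\zeta$-restricted correspondences, $\BR^{*,\zeta,-}\times\BR^{*,\zeta,\D}$ still has index $0$ and, by Lemma \ref{lem boundary of B}, no fixed point on $\partial B^{*,-}\times\D$. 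The belief projection $\pi\colon\bbB^*\to B^{*,\zeta,-}\times\D^\zeta$ is a proper surjection whose fibres, being sets of consistent beliefs, are nonempty, compact and convex, hence a homotopy equivalence of pairs under which the fixed-point problem for $\bbBR^*$ corresponds to that for $\BR^{*,\zeta,-}\times\BR^{*,\zeta,\D}$; thus $\bbBR^*$ has fixed-point index $0$ over $\bbB^*$, and a fixed point of $\bbBR^*$ on $\partial\bbB^*$ would descend to one on $\partial B^{*,-}\times\D$, which is excluded. By compactness $\bbBR^*$ is coincidence-free with $\proj$ on a closed neighbourhood $W$ of $\partial\bbB^*$ in $\bbB^*$, with a uniform margin $\gamma>0$: for $x=(b^-,\mu,\d)\in W$ some action in the support of $\proj(x)=(b^-,\d)$ is more than $\gamma$ below optimal under the unperturbed continuation values.

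For the realisation, note first that for $x\in\bbB^*$ and an action $a_n\in A_n(h_n)$ at $h_n\in H_n^-$ carrying a well-defined belief, the continuation value $\Pi_{a_n}(x)$ is continuous in $x$ by the continuity of beliefs on the information sets of $H^0(b^-,\d)\cup H^1(b^-,\d)$, and so is $\Pi_{a_n^0}(x)=G_\D(a_n^0)$; and $\bbBR^*_{g(x)}(x)$ mixes, at each such $h_n$, over the maximizers among actions not in $A^+$ of $\Pi_{\cdot}(x)+g_{\cdot}(x)$, mixes $g$-independently at information sets carrying no belief, and on the $\D$-factor over the maximizers of $\Pi_{a_n^0}(x)+g_{a_n^0}(x)$, reducing to $\bbBR^*$ at $g\equiv 0$. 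Since $\bbBR^*$ has index $0$ on the compact semialgebraic pair $(\bbB^*,\partial\bbB^*)$ and is coincidence-free with $\proj$ on $W$, the converse to the Lefschetz--Hopf theorem for well-behaved correspondences on compact polyhedra (cf.\ Lemma A.2 of \cite{GW2008} and the Theorem in Section 4 of \cite{M1991}) shows that $\bbBR^*$ admits a coincidence-free deformation that is stationary on $W$; the point, and here the reasoning of \cite{GLP2023} enters, is that this deformation can be produced within the family $\{\bbBR^*_g\}_g$ itself. Using the local move above --- placing a large positive bonus $N(x)$ on a continuously-chosen coordinate --- one realises, skeleton by skeleton over a triangulation of $\bbB^*$, a continuous $g$ that vanishes on a neighbourhood of $\partial\bbB^*$ (where $\bbBR^*$ is already coincidence-free with margin $\gamma$), the index-zero hypothesis being exactly the vanishing of the obstruction to completing the extension. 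Then (1) holds by construction, and (2) holds because off the zero-region $\bbBR^*_{g(x)}(x)$ is the single coordinate that was chosen to differ from $\proj(x)$, on the zero-region $\bbBR^*_{g(x)}(x)=\bbBR^*(x)\not\ni\proj(x)$, and in the transition collar the margin $\gamma$ absorbs the intermediate bonus.

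The step I expect to be the main obstacle is this last realisation: establishing that the fixed-point index of $\bbBR^*$ is the precise obstruction to building such a $g$ when $\bbB^*$ is only a (possibly singular) compact semialgebraic set with boundary rather than a manifold, and that the bonus family is rich enough to carry the required coincidence-free deformation while respecting the combinatorial constraints that extensive-form best replies place on which deformations are of the form ``optimize against a bonus-augmented payoff'' --- in particular, near $W$ the coordinate on which $\bbBR^*_{g(x)}(x)$ is forced to differ from $\proj(x)$ must be taken compatible with a coordinate at which $\proj(x)$ is already suboptimal. This is precisely the content carried over from \cite{GLP2023}; a secondary point needing care, used above, is the good behaviour of the belief projection $\pi$ with respect to boundaries and fixed-point indices.
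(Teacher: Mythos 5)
There is a genuine gap: the step you yourself flag as ``the main obstacle'' is precisely the content of the proposition, and your proposal does not supply it. You reduce to the claim that a coincidence-free deformation of $\bbBR^*$, stationary near $\partial\bbB^*$, can be realized ``within the family $\{\bbBR^*_g\}_g$, skeleton by skeleton,'' with index zero as the vanishing obstruction --- but no obstruction-theoretic statement of this kind is available off the shelf for a convex-valued correspondence over the possibly singular semialgebraic pair $(\bbB^*,\partial\bbB^*)$, and the combinatorial constraint that the perturbed values must arise as best replies to bonus-augmented extensive-form payoffs is exactly what has to be checked. Moreover, your transfer of the index from $B^{*,\zeta,-}\times\D^\zeta$ to $\bbB^*$ rests on the assertion that the belief projection has nonempty compact \emph{convex} fibres and is a homotopy equivalence of pairs; the fibres are sets of limiting consistent beliefs, which are semialgebraic but not convex in general (compare the structure of $\partial\bbP$ in Section 7), so this reduction is unjustified as stated.

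The paper's proof avoids both difficulties and you should note how. It never does index theory on $\bbB^*$ at all: it replaces $\BR^{*,\zeta,-}\times\BR^{*,\zeta,\D}$ by a single-valued Nash-type map $f^*$ on the polyhedral base $B^{*,\zeta,-}\times\D^\zeta$ (same fixed points, linearly homotopic, hence index zero), applies the Hopf Extension Theorem there to obtain a fixed-point-free map $f$ agreeing with $f^*$ near the boundary, and then defines $\bbF(b^-,\mu,\d)=f(b^-,\d)$ on $\bbB^*$, which is automatically coincidence-free with the projection --- no homotopy equivalence or convexity of fibres is needed. The bonuses are then constructed explicitly, not by obstruction theory: for each information set one takes the ray from $b_n(h_n)$ through $\bbF_{h_n}$, selects via its boundary intersection a distinguished action $a_n$, and sets
$g_{a_n}=\beta^3_{h_n}\beta^2_{a_n}\bigl[v_n(b^-,\mu,h_n)-G_n(a_n,b^-,\mu,\d)+\beta^1_{h_n}\bigr]$,
defined recursively backwards over $H_n^-$ (so that continuation values already include bonuses), with the bump function $\beta^3$ killing $g$ on a neighborhood of $\partial\bbB^*$ where $f=f^*$ has no fixed points. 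If you want to salvage your outline, you need to either carry out such an explicit realization or prove the correspondence-level extension principle you invoke; as written, the argument assumes the conclusion at its crucial point.
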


\begin{proof}
	Suppose the index of $B^*$ is zero.  Then, by Proposition \ref{prop essentiality}, $\text{Ind}(B^{*, -}) = 0$. We define a map $f^*: B^{*,\zeta,-} \times \D^\zeta \to B^{\zeta,-} \times \D^{\zeta}$, which is a version of the Nash map.  For each $(b^-, \d)$ and $n$, recall that $H_n^1(b^-, \d)$ is the set of information sets where there is some node that is reached when exactly one of $n$'s opponents $m$ chooses an $a_m^0$. For each   $h_n \in H_n^1(b^-, \d)$, and a node that is reached if only one player $m$ deviated to some $a_m^0$,  we can define the probability of reaching this node by the product of the probabilities of all actions leading up to it, with $\d_{a_m^0}$ used for action $a_m^0$; we define the probability $P(h_n \mid b^-, \d)$ as the sum of these probabilities over such nodes. $f^*$ assigns the following $\tilde b(h_n)$ at $h_n$: for each action $a_n \in A_n(h_n)$, 
	\[
	\tilde b_{a_n}(h_n)  \propto b_{a_n}(h_n) + P(h_n \mid b^-, \d)\max[G_n(a_n, b^-) - G_n(b(h_n), b^-), 0]
	\] 
	where $G_n(a_n, b^-)$ is the best payoff achievable given beliefs $\mu$ conditional of being at $h_n$ and playing $a_n$; and similarly for $G_n(b(h_n), b^-)$. 
	
	An information set in $H_n^0(b^-, \d)$ has at least one node that is reached without any opponent playing an action in $A^0$.  For an information set $h_n$ in $H_n^0(b^-, \d)$, let $P_n(b^-, \d)$ be the probability of reaching this information set if $n$ chooses all the actions leading up to it. $f^*$ assigns the following strategy $\tilde b(h_n)$ at $h_n$: 
	\[
	\tilde b_{a_n}(h_n) -\zeta \propto \frac{b_{a_n}(h_n) -\zeta}{1-|A_n(h_n)|\zeta} + P(h_n \mid b^-, \d)\max[G_n(\hat b_n(a_n), b^-, \mu) - G_n(b(h_n), b^-, \mu), 0]
	\] 
	where $\hat b_n(a_n)$ is the strategy that plays actions $a_n' \neq a_n$ in $A_n(h_n)$ with probability $\zeta$ and plays $a_n$ with the remaining probability; $G_n(\hat b(a_n), b^-)$ is the best payoff achievable conditional of being at $h_n$ and playing $\hat b_n(a_n)$; and  $G_n(b(h_n), b^-)$ is defined similarly. At all other information sets $h_n \in H_n^-$, $f^*$ is the identity. 
	
	For the $\d$-coordinate,  we use the Nash map: 
	\[
	\tilde \d_{a_n^0} -\zeta  \propto  \frac{\d_{a_n^0} -\zeta}{1-|A^0|\z} + \max[G_\D(\hat \d(a_n^0)) - G_\D(\d), 0],
	\]
	where $\hat \d(a_n^0)$ plays $a_n^0$ with probability $1-(|A^0|-1)\zeta$ and every other $a_m^0$ with probability $\zeta$.
	
	The fixed points of $f^*$ are the same as those of $\BR^{*, \zeta, -} \times \BR^{*, \zeta, \D}$ and there is a linear homotopy between $f^*$ and $\BR^{*,\zeta, -} \times \BR^{*, \zeta, \D}$ that preserves the set of fixed points.  Hence  the index of $f^*$ over $B^{*,-} \times \D$ is zero as well. By the Hopf Extension Theorem (Corollary 8.1.18 in \cite{ES1966}), there now exists a map $f: B^{*, \zeta -} \times \D^\zeta \to B^{\zeta, -} \times \D^{\zeta}$ that coincides with $f^*$ in a neighborhood $V$ of $\partial B^{*, \zeta, -} \times \D^\zeta$ and that has no fixed points. In particular, if $b^-$ belongs to the boundary of $B^{*, \zeta. -}$ and $\d = f_\D(b^-, \d)$, then $f_{h_n}(b^-, \d) \neq b_{h_n}^-$ for some $n$ and $h_n \in H_n^0(b^-, \d) \cup H_n^1(b^-, \d)$. 
	
	Define $\bbF: \bbB^* \to B^{\zeta,-} \times \D^\zeta$ by $\bbF(b^-, \mu, \d) = f(b^-, \d)$. Then $\bbF$ does not have a point of coincidence with the projection map from $\bbB^*$ to $B^{\zeta,-} \times \D^{\zeta}$.  For each $h_n \in H_n^-$, let $X^0(h_n)$ be the set of $(b^-, \mu, \d)$ such that $\bbF_{h_n}(b^-, \mu, \d) = b^-(h_n)$ and let $X^1(h_n)$ be its complement.  For each $(b^-, \mu, \d) \in X^1(h_n)$, letting $\tilde b(h_n)$ be the $h_n$-coordinates under $\bbF$, we have that $\tilde b_n \neq b_n$. For such a point, take the ray from $b_n$ through $\tilde b_n$.  There is a unique point of intersection, call it $r^0(\b^-, \d, h_n)$, between this ray and the boundary of  the minimal face of the projection of $B^{\zeta, -}$ to $\D(A_n(h_n))$ containing the interval between $b_n(h_n)$ and $\tilde b_n(h_n)$ in the affine generated by this face.  Let $X_{a_n}^1$ be closure of the set of points in $X^1(h_n)$ such that $r_{a_n}^0(\cdot) \ge r_{a_n^{'}}^0(\cdot)$ for all $a_n' \in A_n(h_n)$.  We construct the sets $X^0(\D)$, $X^1(\D)$, and $X_{a_n^0}^1$ similarly.
	
	Let $\beta^1_{h_n}: \bbB^* \to [0, 1]$ be a function that is zero on $X^0(h_n)$ and strictly positive elsewhere.  For each $a_n$, let $\beta_{a_n}^2: \bbB^* \to [0, 1]$ be a function that is one on $X_{a_n}^1$ and strictly smaller than one elsewhere.  Let $\beta^3_{h_n}$ be a function that is zero on a neighborhood $U \subsetneq V$ of $\partial \bbB^*$  and one outside $V$. The functions $\beta_\D^1$, $\beta_{a_n^0}^2$, and $\beta_\D^3$ are defined similarly. We define $g_{a_n}(b^-, \mu, \d)$ recursively backwards for the collection of information sets in $H_n^-$.
	\[
	g_{a_n}(b^-, \mu, \d) = \beta^3_{h_n}(b^-, \mu, \d)\beta_{a_n}^2(b^-, \mu, \d)[v_n(b^-, \mu, h_n) - G_n(a_n, b^-, \mu, \d) + \beta^1_{h_n}(b^-, \mu, \d)]
	\]
	where $v_n(b^-, \mu, h_n)$ is the best payoff at $h_n$ against $(b^-, \mu, \d)$ conditional on reaching $h_n$ and given that bonuses have already been defined for succeeding information sets.  $g_{a_n^0}$ is defined similarly for each $a_n^0 \in A^0$. The function $g$ has the requisite properties.  
\end{proof}

\section{Proof of Theorem \ref{thm main 2}}

Essential components exist and are closed and connected sets of equilibria.  Moreover, each contains a stable set, which  contains a proper equilibrium; hence essentiality satisfies Axiom B. In the appendix, we show that essentiality satisfies Axioms D and I.  To complete the proof, fix a solution concept $\varphi$ satisfying Axioms D and I. Fix also a generic extensive form game $\G(u)$ and a component  $\S^*$ of its equilibria with index zero. Suppose $\varphi$ picks a subset of $\S^*$ as a solution. We show that $\varphi$ must fail Axiom B.

\smallskip

\noindent {\bf Step 1---Fixed Point Preliminaries.} Let $B^*$ be the set of behavioral strategies that are equivalent to mixed strategies in  $\S^*$. Let  $g: \bbB^* \to \Re^{A \backslash A^+}$ be a function given by Proposition \ref{prop zero index}.  The ambient space of $\bbB^*$ is the set $\bbB$ of all $(b, \mu, \d)$ where $(b, \d) \in B^{\zeta, -} \times \D^\zeta$ and $\mu$ specifies a probability distribution for each $h_n$ over the nodes in it.  The correspondence $\bbBR^*$ extends to a correspondence $\bbBR$  over $\bbB$ in the obvious way, as does the perturbed correspondence $\bbBR_g^*$ for any bonus vector $g$. We can extend  the function $g$ to the whole of of $\bbB$, but denoting it still $g$. There exists $ \eta_0 > 0$ such that if  $(b^-, \mu, \d)$ is within $\eta_0$ of some $(\tilde b^-, \tilde \mu, \tilde \d)$ in $\bbB^*$, and $g$ is within $\eta_0$  of $g(\tilde b^-, \tilde \mu, \tilde \d)$ then $(b^-, \mu, \d) \notin \bbBR_g(b^-, \mu, \d)$. There now exists $0 < \eta < \eta_0$ such that: (1) if $(b^-, \mu, \d)$ and $(\tilde b^-, \tilde \mu, \tilde \d)$ are within $\eta$ of one another, then their $g$ values are within $\eta_0$ of one another; (2) $g$ is zero in the $2\eta$-neighborhood of $\partial \bbB^*$ in $\bbB$.   

\smallskip

\noindent{\bf Step 2---Triangulations.} $\bbB$ is  a product of simplices: there is a simplex $\D^\zeta(A_n(h_n))$ of distributions over actions at $h_n$ for each $h_n \in H^0$, where the probability of each action is at least $\zeta$; there is a simplex $\D(A_n(h_n))$ of distributions over actions at $h_n \in H_n^1$; there is a simplex $\D(h_n)$ of distributions over the nodes $x$ of $h_n$ for each $h_n \in H_n^-$;  and there is the simplex $\D^\zeta$ of distributions over $A^0$. We can take a triangulation of each of these simplices such that the diameter of each simplex of the triangulation is less than $\eta$.  

For each $h_n \in H_n^-$, let $\X^0(h_n)$ be the polyhedral complex generated by the triangulation of the simplex of actions at $h_n$.  Likewise, let $\X^1(h_n)$ and $\X(\D)$ denote the polyhedral complexes of $\D(h_n)$ and $\D^\zeta$.   Let $\tilde S_{h_n}^0$ (resp.~$\tilde S_{h_n}^1$ and $\tilde S_\D$) be the set of full-dimensional polyhedra of $\X^0(h_n)$ (resp.~$\X^1(h_n)$ and $\X(\D)$). Let $\tilde \g_{h_n}^0$ (resp.~$\tilde \g_{h_n}^1$ and $\tilde \g_\D$) be a piecewise-linear convex function that is linear precisely on the polyhedra of  $\X^0(h_n)$ (resp.~$\X^1(h_n)$) and $\X(\D)$). 

\smallskip

\noindent{\bf Step 3---Adding Players.} As in Step 3 of the proof of Theorem \ref{thm main 1}, we introduce a signaling game $\G^o$ that is independent of $\G$. The only difference between the two games is that the one there had $N$ senders, while here the  game has $|A^0|$ senders, one per $a_n^0 \in A^0$.    In the unique equilibrium outcome, all senders send the message $h$, and the unique belief at the information set of the receiver following a message of $l$ assigns equal weight to all senders. We use $S_{a_n^0}^o$ to denote the message space of sender $a_n^0$ and $S^o$ to denote the profile of pure strategies of the players in $\G^o$.
 
We now add the following set of outsiders as well.  For each $n$ and each information set $h_n \in H_n^-$, there are three outsiders, $\hat o_{h_n}$, $\tilde o_{h_n}^0$, and $\tilde o_{h_n}^1$; and  there are two other outsiders $\hat o_0$ and $\tilde o_0$.    The pure strategy sets of these outsiders are as follows. Outsider $\hat o_{h_n}$ choose a vertex of $\D^\zeta(A_n(h_n))$ (resp.~$\D(A(h_n))$) if $h_n \in H_n^0$ (resp.~$H_n^1)$.  Outsider $\hat o_0$ chooses a vertex of $\D^\zeta$.  The pure strategy set of outsider $\tilde o_{h_n}^i$, for $i = 0, 1$ (resp.~$\tilde o_0$)  is the set $\tilde S^i(h_n)$ (resp.~$\tilde S_{\D}$) of the full-dimensional polyhedra  of $\X^i(h_n)$ (resp.~$\X(\D))$. The space of profiles of these choices is  denoted $(\hat S^o, \tilde S^o)$.

\smallskip

\noindent{\bf Step 4---A Game Tree $\hat \G$.} We now describe a game tree $\hat \G$ involving players in $\N$ and the  outsiders introduced in Step 3.  First, all the senders $a_n^0$, and the other outsiders $\hat o$, $\tilde o$ move. Each player $n$ gets to observe the messages sent by the receivers $a_n^0$. Then the extensive form game  $\G$ is played by the players in $\N$. And finally the receiver of the signaling game $\G^o$ responds to the senders' messages. 

The game $\G$ is played with a slight modification. If all senders $a_n^0$ chose $h$, then each player $n$ first chooses which actions to take at information sets in $H_n^+$.  Specifically, he decides which of the following type of strategy to use: a choice $s_n^+: H_n^+ \to A_n$ such that $s_n^+(h_n) \in A_n(h_n)$ for all $h_n \in H_n^+$.  If he decides to use one of them, then those choices are automatically implemented and he has to choose at the other information sets.  If one of the senders $a_n^0$ sends $l$, then  $n$ has a strategy, to be described below, that is automatically implemented for him.  A pure strategy for $n$ is thus naturally decomposed into $(s_n^+, s_n^0, s_n^1)$.

Fix a pure strategy profile $(s^o, \hat s^o, \tilde s^o)$ for the outsiders and a profile $s$ for the original players.  If none of the outsiders sends the message $l$, then the original game is played by the insiders and the game ends (since the receiver does not have a choice after $h$).  If, say, a subset $\hat A_n^0$ of senders $a_n^0$ use $l$,  then the following strategy is implemented for $n$ and without his opponents knowing about it: at all information sets in $H_n^+$, other than the information sets $h_n$ with $A_n(h_n) \cap \hat A_n^0 \neq \emptyset$,  the equilibrium mixture $b_n^+$ is implemented; at $h_n$ with an action $a_n^0$ in $\hat A_n^0$, with probability $\d_{n, a_n^0}$ $a_n^0$ is chosen, where $\d_{a_n^0}$ is the probability of $a_n^0$ under the choice $\hat s_o$ of $\hat o_0$, and with the remaining probability,   the equilibrium mixture is implemented; at information sets $h_n$ in  $H_n^-$, player $\hat o_{h_n}$'s choice is implemented.  
\smallskip

\noindent{\bf Step 5---The Payoffs in $\hat \G(u)$.} The payoffs of the players are as follows. For each $h_n \in H_n^-$, pick a point $\tilde b_{h_n}(\tilde s_{h_n}^0)$ in the polyhedron $\tilde s_{h_n}^o$; similarly pick a point $\tilde \mu(h_n) \in \tilde s_{h_n}^1$ and $\tilde \d(\tilde s_0) \in \tilde s_0$.  $(\tilde b^-(\tilde s^o), \tilde \mu(\tilde s^o), \tilde \d(\tilde s^o))$ denotes the point corresponding to the profile $\tilde s^o$.  At nodes following choices of $h$ by all senders, the payoffs are as in $\G$ for $\N$.  When $a_n^0$ chooses $l$, then the payoffs for each information set $h_m^- \in H_m^-$ for each $m$ and nodes at $h_m^-$ that follow the choice of $a_n^0$, the payoff of action $a_m \in A_m(h_n)$ are augmented by $g_{a_m}(\tilde b^-(\tilde s^o), \tilde \mu(\tilde s^o),\tilde \d(\tilde s^o))$.  The payoffs of $\hat o_{h_n}$ coincide with the payoffs of $n$, when outsider $a_n^o$ chooses $l$ and he gets to choose at $h_n$. 

%The payoff to player $\hat o_0$ is as follows.   If he chooses a vertex $\hat s_0$ which corresponds to a mixture $\d$ in $\D$,  then his payoff is $\sum_{a_n^0} \d_{a_n^0}[G_n(\hat b_n, s_{-n}) - v_n^*] + g_{\hat s^0}$ where $\hat b_n$ is the following strategy: at all information sets in $H_n^+$ other than $h_n(a_n^0)$ play according to $b_n^{*,+}$; play $a_n^0$ at $h_n(a_n^0)$; play according to $\hat s_{o_{h_n}}$ at  information sets $h_n \in   H_n^-$.  In other words, this is the payoff that $n$ would get when receiver $o_{a_n}^0$ chooses $l$ and given the choices of the  outsiders $\hat o_{h_n}$  

The payoff to player $\hat o_0$ is as follows.   If he chooses a vertex $\hat s_0$ which corresponds to a mixture $\d$ in $\D^\z$,  for each $a_n^0 \in A^0$ and for each terminal node that follows just sender $a_n^0$ sending the message $l$, his payoffs are the payoffs of player $n$ but with a bonus $g_{\hat s^0}$ instead of $n$'s bonus. At all other nodes, his payoffs are zero. % then his payoffs at nodes  payoff is $\sum_{a_n^0} \d_{a_n^0}[G_n(\hat b_n, s_{-n}) - v_n^*] + g_{\hat s^0}$ where $\hat b_n$ is the following strategy: at all information sets in $H_n^+$ other than $h_n(a_n^0)$ play according to $b_n^{*,+}$; play $a_n^0$ at $h_n(a_n^0)$; play according to $\hat s_{o_{h_n}}$ at  information sets $h_n \in   H_n^-$.  In other words, this is the payoff that $n$ would get when receiver $o_{a_n}^0$ chooses $l$ and given the choices of the  outsiders $\hat o_{h_n}$  

There remains to describe the payoffs of the outsiders $\tilde o$. For outsider $\tilde o_0$, he wants to mimic the choice of $\hat o_0$. The restriction of the  function $\tilde \g_\D$ to a full-dimensional polyhedron $\tilde s_{\tilde o_0}$ is affine.  It has a unique affine extension $\hat G_{\tilde s_{\tilde o_0}}$ to the whole of $\D^\zeta$. This function represents the payoff to the pure strategy $\tilde s_0$ that is the full-dimensional polyhedron.

For each polyhedron $\tilde s_{h_n}$ of $\X^1(h_n)$, $\g_{h_n}^1$ defines a linear function $\hat G_{\tilde s^1_{h_n}}$ over the polyhedron which extends to one over $\D(h_n)$.  Now extend it to the convex hull of $\D(h_n)$ and $0$ by setting it to be zero on $0$. 
For each $n$, there is a map $\hat \psi_n: \S_n \times \S^o \times \hat \S^o \to \S_n$ that is just the projection if no receiver $a_n^0$ plays $l$ with positive probability, and it is independent of $n$'s choice otherwise. If $\s_n = \psi_n(s_n, s^o,\hat s^o)$,  then for each information set $h_n \in H_n^-$, we have a vector ${(\s_n(a_n))}_{a_n \in A_n(h_n)}$ with $\s_n(a_n) = \sum_{s_n \in S_n(a_n)} \s_{n,s_n}$ being the unconditional probability of playing $a_n$. The payoff against $(s, s^o, \hat s^o)$ for $\tilde s_{h_n}$ is $\hat G_{h_n}^o({(\s_n(a_n))}_{a_n \in A_n(h_n)})$. Thus, $\tilde o_{h_n}$ wants to mimic the implied unconditional probabilities of the actions at $h_n$.

Payoffs of $\tilde o_{h_n}^1$ are defined as mimicking the beliefs at information set $h_n$.   For each polyhedron $\tilde s_{h_n}$ of $\X^1(h_n)$, $\g_{h_n}^1$ defines a linear function $\tilde G_{h_n}^1$ over the polyhedron which extends to one over $\D(h_n)$.  Now extend it to the convex hull of $\D(h_n)$ and $0$ by setting it to be zero on $0$.  For each strategy profile $(s, s^0, \hat s^0, \tilde s^0)$ if $\s = \psi(s, s^0, \hat s^o)$, then for each node $x$ of $h_n$ there is a well-defined probability of reaching $x$, giving us a vector of these probabilities.  $\tilde o_{h_n}$'s payoff from $\tilde s_{h_n}^1$ is the value of $\tilde G_{h_n}^1$  at this vector of probabilities.  

\smallskip

\noindent{\bf Step 6---The Relationship between $\G$ and $\hat \G$.} $\hat \G$ weakly embeds $\G$ via the mappings $\hat \psi_n$.  Indeed, players can implement their strategies in $\G$ if the senders choose $h$.  A fixed strategy is implemented for $n$ if one of his sender counterparts plays $l$. The payoffs of the insiders are as in the original game $\G$ unless some receiver sends $l$.  The equilibria of $\hat \G$ involve only the message $h$ from the receivers.  Thus, $\hat \G$ embeds $\G$. By Axiom I, the  $\varphi$-solutions of $\G$ are the projections of the $\varphi$-solutions of $\hat \G$.

\smallskip

\noindent {\bf Step 7---Adding Dominated Strategies.} We now modify the game $\hat \G$ by adding a set of dominated strategies for the players in $\N$. For each $n$ and a choice $s_n^+$ at the first stage, if $n$ chooses an $s_n^+$ for which $s_n^+(h_n) \in A_n^0$ for some $h_n$, then we create a choice problem where we add as many dominated strategies as there $a_n^0$'s in the choices made by $s_n^+$.  Specifically, for each $a_n^0 \in s_n^+(H_n^+)$, we create a dominated strategy $\hat a_n^0$, and  $n$ has to sequentially reject all these choices $\hat a_n^0$, one in each round, before executing $s_n^+$.  Choosing a dominated action $\hat a_n^0$ immediately triggers the message $l$  for the sender $a_n^0$ and the strategy that is implemented for $n$ is the same as well.  Only $n$'s payoffs in the continuation at $h_n(a_n^0)$ following choices in $A_n^+$  are reduced by $0 < \a \le \zeta$ to make it dominated in the continuation game, where $\a$ is chosen such that if the action $a_n^0$ is $\a$-optimal against an equilibrium in $b \in B^*$, then $b^-$ is within $\eta$ of $\partial B^{*,-}$.  Call this modified game $\bar \G$. 

Let $\bar S_n$ be the set of pure strategies of $n$ in $\bar \G$. There is a mapping $\bar \psi_n: \bar \S_n \times \S^o \times \hat \S^o  \to  \S_n$ that coincides with $\hat \psi_n$ if $n$ does not choose a dominated strategy $\hat a_n^0$ and when he does, it maps the profile to the corresponding mixture. Likewise, for each sender $a_n^0$ in the signaling game, there is a mapping $\bar \psi_{a_n^0}: \bar \S_n \times \S_{a_n^0}^0 \to \S_{a_n^0}^o$ that  is just the projection if $n$ does not choose one of the dominated actions $\hat a_n^0$ and is just $l$ otherwise. As in Step 4 of the proof of Theorem \ref{thm main 1},  it is easy to show that the equilibria of the game $\bar \G$ are the same as that of $\hat \G$. Hence, by Axiom D, the $\varphi$-solutions of the two game are the same.

\smallskip

\noindent{\bf Step 8---Failure of Axiom B.} Since $\varphi$ satisfies Axioms D and I, there is a $\varphi$-solution $\bar \S^*$ of the normal form $\bar G$ of $\bar \G(u)$ such that $\bar \psi(\bar \S^*) \subseteq \S^*$. To prove the result, we now show that $\bar \S^*$ does not contain a mixed strategy profile that is equivalent to a quasi-perfect equilibrium of $\bar \G(u)$. Suppose  otherwise.  Then  the game $\bar \G(u)$ has a sequence of $\e$-quasi-perfect equilibria $\bar b^\e$ such that, for $\e > 0$, letting $\bar \s^\e$ be an equivalent mixed-strategy profile, $\s^\e$ be $\bar \psi(\bar \s^\e)$, and $b^\e \in B$ a profile in $\G$ equivalent to $\s^e$, $b^\e$ converges to some $b^* = (b^{*,+}, b^{*,-}) \in B^*$.  Let $\mu^\e$ be the beliefs induced by $b^\e$ at information sets in $H^-$ and let $\mu^*$ be its limit. Likewise, let $\d^\e = \bar\s_{\hat o}^\e$ and let $\d^*$ be its limit.
	
For each $\tilde o_{h_n}$ a pure strategy $\tilde s_{h_n}^o$ corresponds to a mixture $\tilde b(\tilde s_{h_n}^0)$.  Hence in the sequence of $\e$-quasi-perfect equilibria, his strategy $\bar \s_{\tilde o_{h_n}}^\e$ corresponds to a point $\tilde b^\e \in \D(h_n)$.  Similarly, we have a distribution $\tilde \mu^\e$ for player $\tilde o_{h_n}^1$ and $\tilde \d^\e$ for $\tilde o_0$, with limits $\tilde b^*, \tilde \mu^*, \tilde \d^*$. By the structure of the payoffs of these players, for each $h_n \in H_n^-$, $\tilde b^*(h_n)$ and $\tilde \mu^*(h_n)$ are within $\eta$ of $b^*(h_n)$ and $\mu^*(h_n)$ respectively, and similarly  $\tilde \d^*$  is within $\eta$ of $\d^*$. Obviously, $(b^{*,-}, \mu^*, \d^*) \in \bbBR_{g(\tilde b^{*,-}, \tilde \mu^*, \tilde \d^*)}$ by the structure of the payoffs in $\bar \G$.

We claim now that in $\bar b$,  no player $n$ chooses an $s_n^+$ with an $a_n^0$ in its range, either initially or in any of the successive rounds.  Obviously under $\bar b$, no player chooses an $s_n^+$ with an $a_n^0$ its range, as it would induce a non-equilibrium outcome. Suppose now that conditional on choosing such an $s_n^+$, he does reconfirm it (i.e.~with probability one he opts out rather than reconfirm such a choice).  Then there is an $(s_n^+, s_n^0, s_n^1)$ that is $\a$-optimal against $b^*$.  This implies that $b^{*,-}$ is in the $\eta$-neighborhood of $\partial B^{*,-}$.  As $\tilde b(h_n)$ is within $\eta$ of $b(h_n)$ and $g$ is zero in the $2\eta$-neighborhood of $\partial B^{*,-}$, the bonus vector $g$ is then zero.  For each $\e$, let $\hat \d^\e \in \D$ be such that  $\hat \d_{a_n^0} \propto b_{a_n^0}^\e$ and let $\hat \d$ be its limit.  Now $(b^{*,-},\hat \d)$ is a $\zeta$-best reply to itself in $\G$, which is impossible. Hence, each $n$ avoids reconfirming $s_n^+$ with $a_n^0$ in its range. 

Under $\bar b^\e$, let $\bar b_{o_{a_n^0}}^\e$ be the probability that sender $a_n^0$ sends the message $l$; this probability is the sum of two probabilities: (1) the probability that $a_n^0$ voluntarily sends the message $l$; (2) the probability that $l$ is triggered by $n$'s use of a dominated choice $\hat a_n^0$. From the receiver's perspective, then quasi-perfection requires that $\bar b_{o_{a_n^0}}^\e = \bar b_{o_{a_m^0}}^\e$  for all $a_n^0, a_m^0$. Therefore, $\lim_{\e \to 0} \frac{b_{a_n^0}^\e}{b_{a_m^0}^\e} = \frac{\d_{a_n^0}^*}{\d_{a_m^0}^*}$ for each $a_n^0, a)m^0 \in A^0$. Hence,  $(b^{*,-}, \mu^*, \d^*)$ belongs to $\bbB^*$. 

We now have that $(\tilde b^{*,-}, \tilde \mu^*, \tilde \d^*)$ is within $\eta$  of $(b^{*,-}, \mu^*, \d^*)$, which belongs to $\bbB^*$,  and $(b^{*,-}, \mu^*, \d^*) \in \bbBR_{g(\tilde b^{*,-}, \tilde \mu^*, \tilde \d^*)}$, which is impossible. Hence $\bar \G$ does not have a quasi-perfect equilibrium projecting to $B^*$.

\section{Some Remarks on The Non-Generic Case}

The essentiality condition in Definition \ref{def stability} on the projection map $\proj: X_\e \to T_\e$ implies a strong fixed-point property: every continuous function $f: X_\e  \to T_\e$ has a point of coincidence with the $\proj$, i.e.~there exists $(\eta, \s) \in X_\e$ such that $\proj(\eta, \s) = f(\eta, \s).$\footnote{This fixed-point property is a generalization of the fixed-point property of spaces, which obtains when $X_\e = T_\e$ and $\proj$ is the identity map.}  When $X_\e$ and $T_\e$ have the same dimension, the essentiality condition is equivalent to the fixed-point property---see the Theorem of Section 4.E of \cite{M1991} and Lemma A.4 of \cite{GW2008}.  For a generic game $\G(u)$, our characterization of stability in Proposition \ref{prop stability} shows that for a transformation of $X$ and $T$, we do indeed have sets of the same dimension. Thus, if an outcome is not stable, we can construct a fixed-point problem without a solution. Axioms B, $D^*$, and $I^*$ turn the fixed-point problem into a game-theoretically meaningful one.

When the given game is nongeneric (even in the space of the extensive-form games) then $X_\e$ typically has a different dimension compared to $T_\e$. Stability asks that $X_\e$ contain a subset of the same dimension as $T_\e$ and for which the restriction of the projection is essential---see, again, the Theorem of Section 4.E of \cite{M1991}.  As yet, we do not understand axiomatically the need for this dimensionality requirement. In \cite{GW2008}, we pursued a refinement, called metastability, based directly on the fixed point property of the projection, which in a sense drops the dimensionality requirement from the definition of stability.  Metastable sets satisfy a lot of the same properties as stable sets, and they coincide with stable sets for generic extensive-form games. But, importantly, metastability satisfies only  weaker forms of the  small worlds axiom and the decomposition property, which are subsumed by Axiom $I^*$---see Theorems 4.4 and 4.5 of \cite{GW2008}.   Perhaps, metastability and Axiom $I^*$ imply stability. If that were to be true, then there is hope of using the axioms here---and maybe other axioms?---to extend the result to all games.\footnote{We are not wedded to the concept of stability---it might be that something stronger than metastability and weaker than stability is the right concept.} 

The fact that our axiomatization of stability here is only in terms of outcomes, and not the solutions themselves, is another problem for the general case. We could impose a minimality axiom to deal with it, as we could here. Any such result would not be completely satisfactory, since the only rationale for a minimality axiom is that it would whittle down solutions.  Another approach, which we footnoted in the paper, is to view an equilibrium not just as a profile but as an LPS. While this approach is intuitively appealing, it seems beset with technical problems.

With regard to the parallel axiomatization of essentiality, the picture is a little clearer.  While we did make use of the genericity assumption in our characterization of essentiality in Proposition \ref{prop essentiality}, the definition of essentiality directly involves a fixed point property---see Theorem 5.2 of \cite{O1953}---and there are no issues relating to the dimensionality of the objects. Alternate characterizations of essentiality would show what additional axioms one might need, and they would pave the way for an extension of the result to all games.

\appendix
\section{Verifying Axioms D and $D^\ast$}
In this appendix, we verify that essentiality and stability satisfy Axioms D and $D^\ast$, respectively.  

\begin{proposition}\label{prop Axiom D}
Essentiality satisfies Axiom D.
\end{proposition}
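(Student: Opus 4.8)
The plan is to show that if $\bar G$ is obtained from $G$ by the addition of weakly irrelevant dominated strategies (Definition \ref{def waids}), then the essential components of $\bar G$ are precisely the essential components of $G$, under the natural identification of equilibria. Since by hypothesis $G$ and $\bar G$ have the same Nash equilibrium set (Property P1), there is a canonical bijection between the components of equilibria of the two games, namely the identity on the face $\S$ of $\bar\S$. What must be verified is that this bijection preserves the index. Write $\BR$ and $\bar\BR$ for the best-reply correspondences of $G$ and $\bar G$ respectively. Condition (1) of Definition \ref{def waids} gives Property P3: the restriction of $\bar\BR$ to $\S$ contains $\BR$ as a selection, since a strategy dominated in $G$ remains suboptimal. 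The work is to leverage this, together with P1, into an index computation.

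The key steps, in order, are as follows. First, fix a component $\S^*$ of equilibria (common to both games) and a closed neighborhood $U$ of $\S^*$ in $\S$ that is isolated from the other components of $\BR$; by P1 and upper semicontinuity, a neighborhood $\bar U$ of $\S^*$ in $\bar\S$ can be chosen so that $\bar U\cap\S = U$ and $\bar U$ contains no equilibria of $\bar G$ outside $\S^*$. Second, construct an explicit homotopy of correspondences on $\bar\S$ that ``pushes'' the added strategies down to the face $\S$: for $t\in[0,1]$, let $r_t\colon\bar\S\to\bar\S$ be the affine retraction moving each added pure strategy's weight onto the mixed strategy of $\S$ that dominates it (weighted by $t$), so $r_0=\mathrm{id}$ and $r_1$ maps $\bar\S$ onto $\S$; then consider $\bar\BR_t \equiv r_t\circ\bar\BR$ (or, dually, precompose). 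Because a dominating mixture yields weakly higher payoffs, one checks $\bar\BR_t$ has no fixed points in $\bar U\setminus\S$ for any $t$, so the index over $\bar U$ is constant in $t$ and equals $\mathrm{Ind}_{\bar G}(\S^*)$ at $t=0$. Third, at $t=1$ the correspondence takes values in $\S$ and its fixed points coincide with those of the restriction $\bar\BR|_{\S}$; using P3 and a further linear homotopy inside $\S$ from $\bar\BR|_\S$ to $\BR$ (valid since $\BR$ is a convex-valued selection of $\bar\BR|_\S$ and both have the same fixed-point set $\S^*$ in $U$ by P1), conclude the index over $U$ equals $\mathrm{Ind}_G(\S^*)$. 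Chaining the equalities gives $\mathrm{Ind}_{\bar G}(\S^*)=\mathrm{Ind}_G(\S^*)$, so a component is essential in one game iff it is essential in the other, and essentiality assigns the same solutions to $G$ and $\bar G$.

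The main obstacle I expect is the second step: verifying that the homotopy $\bar\BR_t$ genuinely has no fixed points off the face $\S$ throughout the deformation, and that it remains a well-behaved (upper semicontinuous, nonempty convex compact valued) correspondence so that the index is defined and homotopy-invariant. The subtlety is that \emph{weakly} dominated strategies may be best replies against some profiles, so one cannot simply argue ``added strategies are never best replies''; the retraction $r_t$ must be designed so that whenever an added strategy is a best reply, its dominating image in $\S$ is \emph{also} a best reply with the same or better payoff against the relevant opponent profile, which is exactly what weak dominance buys, but care is needed because $\bar\BR$ evaluates payoffs against profiles in $\bar\S$, not $\S$. Handling the bookkeeping of multiple simultaneously-added strategies per player, and ensuring the retraction is consistent across players, is the fiddly part; once that is set up correctly, the index-invariance arguments are standard (cf. the index computations in \cite{GW2005, DG2000}).
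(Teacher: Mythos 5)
Your argument is correct in substance and rests on the same mechanism as the paper's proof: homotopy invariance of the fixed-point index together with its commutativity (restriction-to-a-face) property, with condition (2) of Definition \ref{def waids} pinning the fixed-point set of every correspondence along the homotopy to the common Nash set, and weak dominance guaranteeing that the deformed correspondence still consists of best replies. The difference is in the intermediate device. The paper works with the single sub-correspondence $\tilde f(\bar \s)=\{\,\s\in\S : \s \text{ is a best reply to } \bar\s \text{ in } \bar G\,\}$, linearly homotopes it to $\bar\BR$ (fixed points are constant because $\tilde f\subseteq\bar\BR$ and values are convex), and then notes that $\tilde f$ maps into $\S$ and its restriction to $\S$ is \emph{exactly} the best-reply correspondence $f$ of $G$; one homotopy suffices. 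Your route, via the affine retraction $r_t$ onto dominating mixtures, reaches at $t=1$ only a convex-valued sub-correspondence of $f$ on $\S$, hence your second homotopy; that step is fine for the same reason as the first. One point you should nail down: Definition \ref{def ads} only says the added strategies are weakly dominated in $\bar G$, so the dominating mixture may itself put weight on added strategies, and your $r_t$ needs dominators lying in $\S_n$. This is true but requires an argument---e.g., every weakly dominated strategy is weakly dominated by an admissible strategy, and any mixture putting weight on an added strategy is itself weakly dominated, so admissible strategies of $\bar G$ have support in $S_n$. (The paper's $\tilde f$ avoids choosing dominators explicitly, though its nonemptiness relies on the cognate fact that a best reply in $\S_n$ always exists, which follows from the same admissibility argument.) With that lemma supplied, your proof goes through and yields the same conclusion, $\text{Ind}_{\bar G}(\S^*)=\text{Ind}_G(\S^*)$ for every component, hence equality of the essential solutions.
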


\begin{proof}
	Let $\bar G$ be a game obtained from a game $G$ by the addition weakly irrelevant dominated strategies.  The two games have the same set of components of equilibria.  It suffices to show that the index of each component is the same in each game.  Let $f$ and $\bar f$ be the best-reply correspondences of $G$ and $\bar G$.  Let $\tilde f: \bar \S \to \bar \S$ be the correspondence that assigns to each $\bar \s$ the set of $\s \in \S$ that are best replies against it in $\bar G$. $\tilde f$ is an upper semicontinuous correspondence with nonempty, compact, and convex values. It is linearly homotopic to $\bar f$ and the set of fixed points along the homotopy is constant.  Hence, by the homotopy axiom for index, the fixed-point index of each component of equilibria under the correspondence $\tilde f$ is the same as the index of the component in the game $\bar G$. Since $\tilde f$ maps into $\S$, the index of each component of equilibria under $\tilde f$ can be computed as the index under the restriction of $\tilde f$ to $\S$, which is in fact $f$.   
\end{proof}

We now show that stability satisfies Axiom $D^\ast$. The proof uses two definitions of stability that are equivalent to the original definition proposed in Mertens---see Theorems 4 and 10 of Govindan and Mertens \cite{GM2003}.  First we need some notation. For a game $G$, let $\BR$ be the graph of the best-reply correspondence, i.e.~
\[
\BR = \{ \, (\s, \t) \in \S \times \S \mid \t \, \text{is a best reply to} \, \s \, \text{in} \, G \, \},
\]
and let $\D$ be the diagonal of $\S\times \S$, i.e.~
\[
\D = \{\, (\s, \s) \in \S \times \S \, \}.
\]
For $0 \le \e \le 1$, let
\[
\BR_\e = \{\, (\s, \t) \in \BR \mid \s \ge (1-\e)\t \, \},
\]
and when $\e > 0$,
\[
\partial \BR_\e = \{\, (\s, \t) \in  \BR_\e \mid \s_{n,s_n} =  (1-\e)\t_{n,s_n} \, \text{for some} \, n, s_n \, \}.
\]
For $\e > 0$, we say that a strategy profile $\s$ is an $\e$-perfect equilibrium if it is completely mixed and assigns no more than $\e$ probability to the set of non-best replies, i.e.~ $\s$ belongs to $\proj(\BR_\e) \backslash \partial \S$, where $\proj$ is the projection $\BR$ onto the first factor.
  
Let $A$ be the affine space generated by $\S$ and let $\partial A = A \backslash (\S \backslash \partial \S)$. For $(X_\e, \partial X_\e) \subseteq (\BR_\e, \partial \BR_\e)$ and $K \ge \e^{-1}$, we define a map $\varphi_K : (X_\e, \partial X_\e) \to (A, \partial A)$  by $\varphi_K(\s, \t) = \t + K(\s - \t)$. If $K = \e^{-1}$, the image of $\varphi_K$ is contained in $\S$, but not necessarily otherwise.  Observe, too, that the maps $\varphi_K$ are (linearly) homotopic to one another.

\begin{proposition}\label{prop stability} 
	Let $\S^*$ be a subset of $\S$. The following statements are equivalent:
	\begin{enumerate}
		\item $\S^*$ is a stable set of $G$;
		\item There exists  $\e_0 > 0$ and for each $0 < \e \le \e_0$ a  closed semi-algebraic set $X_\e \subseteq \BR_\e$  such that:
			\begin{enumerate}
				\item $X_\e \backslash \partial X_\e$ is connected and dense in $X_\e$, where $\partial X_\e = X_\e \cap \partial \BR_\e$;
				\item  $\varphi_{\e^{-1}}: (X_\e, \partial X_\e) \to (\S, \partial \S)$ is essential in \v{C}ech cohomology with rational coefficients.
				\item $\e'  \le \e$ implies $X_{\e'} \subseteq X_\e$;
				\item $\cap_{\e > 0} X_\e = \{\, (\s, \s) \mid \s \in \S^*\, \}$.
			\end{enumerate}	
		\item There exists a sequence $\S^k$ of closures of $\e_k$-perfect equilibria that converges to $\S^*$ in the Hausdorff topology as $\e_k \to 0$ and such that:
			\begin{enumerate}
				\item  $\S^k \backslash \partial \S$ is connected and dense in $\S^k$;
				\item $\varphi_K: (\proj^{-1}(\S^k), \proj^{-1}(\S^k) \cap \partial \BR_{\e_k}) \to (A, \partial A)$ is essential for some $K \ge \e_k$
			\end{enumerate}	
	\end{enumerate}
\end{proposition}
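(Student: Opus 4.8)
To prove Proposition~\ref{prop stability}, the plan is to obtain all three equivalences by transcribing the characterizations of Mertens stability established by Govindan and Mertens \cite{GM2003} (Theorems~4 and~10 there) into the notation of this appendix, so that the only genuine work is the dictionary relating Mertens' parametrization by perturbed games to the parametrization by the graph $\BR$. Recall that a point of the graph $E$ of Definition~\ref{def stability} is a pair $(\eta,\s)$ with $\eta\in T_1$, say $\eta=\d\nu$ with $\nu\in\S$ and $0\le\d\le 1$, and $\s$ an equilibrium of the perturbed game $G(\eta)$; since each player's payoff is affine in the normalized free part $\tfrac{1}{1-\d}(\s_n-\eta_n)$ of his strategy, this says exactly that $\s\ge\eta$ and that $\t:=\tfrac{1}{1-\d}(\s-\eta)$ is a best reply to $\s$. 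Then $\s=\d\nu+(1-\d)\t\ge(1-\d)\t\ge(1-\e)\t$ whenever $\d\le\e$, so $(\eta,\s)\mapsto(\s,\t)$ carries $\proj^{-1}(T_\e)\cap E$ into $\BR_\e$, while $(\s,\t)\mapsto(\s-(1-\e)\t,\s)$ is a section of it (one checks the composite on $\BR_\e$ is the identity). Under this identification $\partial\BR_\e$ corresponds to the preimage of $\partial T_\e$, and a direct computation gives $\varphi_{\e^{-1}}(\s,\t)=\t+\e^{-1}(\s-\t)=\e^{-1}\bigl(\s-(1-\e)\t\bigr)$, which is precisely the perturbation direction $\nu$; hence $\varphi_{\e^{-1}}$ is the projection $\proj$ followed by the rescaling $\eta\mapsto\eta/\e$ that identifies $(T_\e,\partial T_\e)$ with $(\S,\partial\S)$. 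That the section induces an isomorphism on \v{C}ech cohomology respecting projections and boundaries --- so essentiality and connexity may be read off on either side --- is the substance of Theorem~4 of \cite{GM2003}.

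Granting the dictionary, I would prove $(1)\Leftrightarrow(2)$ by carrying Mertens' conditions across it. Condition~(1) of Definition~\ref{def stability} becomes condition~(d) of~(2), once one notes that the $\d=0$ slice of $X$ lies on the diagonal, so $\{\,\s\mid(0,\s)\in X\,\}=\{\,\s\mid(\s,\s)\in\cap_{\e>0}X_\e\,\}$. Monotonicity (c) is the assertion that the $\e$-slices of a single set $X$ are nested, which entails no loss of generality since one may replace $X_\e$ by $\cup_{\e'\le\e}X_{\e'}$, still closed and semialgebraic. The connexity clause of Definition~\ref{def stability} --- that every neighborhood of $X_0$ in $X$ contains a connected component off $\partial X_1$ whose closure is again a neighborhood of $X_0$ --- translates, for a nested semialgebraic family and all sufficiently small $\e$, into the requirement that $X_\e\setminus\partial X_\e$ be connected and dense in $X_\e$; and cohomological essentiality of $\proj^*$ on $(X_\e,\partial X_\e)$ is verbatim the essentiality of $\varphi_{\e^{-1}}^*$ after the rescaling. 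The one point needing care is the equivalence of the two forms of this connexity/minimality clause, which I would cite from Theorem~4 of \cite{GM2003}, where it is handled via curve selection and local triviality for the semialgebraic set $E$.

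Next I would establish $(2)\Leftrightarrow(3)$, which is Theorem~10 of \cite{GM2003} rewritten in the present notation. For $(2)\Rightarrow(3)$: given the nested family $\{X_\e\}_{0<\e\le\e_0}$, pick any $\e_k\downarrow 0$ and put $\S^k:=\proj(X_{\e_k})$, a set of $\e_k$-perfect equilibria whose closure is $\S^k$; then $\S^k\to\S^*$ in the Hausdorff topology by~(d), connectedness and density of $\S^k\setminus\partial\S$ come from the corresponding properties of $X_{\e_k}\setminus\partial X_{\e_k}$, and essentiality of $\varphi_K$ on $(\proj^{-1}(\S^k),\proj^{-1}(\S^k)\cap\partial\BR_{\e_k})$ follows from that of $\varphi_{\e_k^{-1}}$ together with the linear homotopy among the $\varphi_K$ and the fact that $(\S,\partial\S)\hookrightarrow(A,\partial A)$ is a cohomology isomorphism in the relevant degree. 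For $(3)\Rightarrow(2)$: from a sequence $\S^k$ of closures of $\e_k$-perfect equilibria converging to $\S^*$ one must manufacture a nested semialgebraic family. All the lifts $\proj^{-1}(\S^k)$ lie in the fixed semialgebraic set $\BR_1$ and have uniformly bounded description complexity, so a generic local triviality argument lets one interpolate a semialgebraic $X\subseteq E$ whose slices over a cofinal set of $\e$'s are the given sets, which is nested in $\e$, and whose $\d=0$ slice is $\S^*$; essentiality and connexity are inherited because each is detected on an arbitrarily thin slice.

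The step I expect to be the genuine obstacle is $(3)\Rightarrow(2)$: turning a \emph{sequence} of semialgebraic sets into a \emph{nested} semialgebraic family over the whole interval $(0,\e_0]$ requires gluing the $\S^k$ together while simultaneously respecting the semialgebraic structure, the monotonicity in $\e$, and the preservation of essentiality and of the connected-dense-interior property for \emph{all} small $\e$, not merely along the chosen sequence. This is exactly where Theorem~10 of \cite{GM2003} does the real work, and I would invoke it rather than reprove it; what then remains of the proposition is the bookkeeping that matches the hypotheses and conclusions of Theorems~4 and~10 of \cite{GM2003} with statements (1), (2) and (3).
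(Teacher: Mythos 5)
Your proposal is correct and takes essentially the same route as the paper: the paper offers no independent argument for Proposition \ref{prop stability}, presenting it as a transcription of Theorems 4 and 10 of Govindan and Mertens \cite{GM2003} into the best-reply-graph notation, which is exactly what you do (and your dictionary $(\eta,\s)\mapsto(\s,\tfrac{1}{1-\d}(\s-\eta))$ with $\varphi_{\e^{-1}}$ recovering the rescaled perturbation is the right translation). Deferring the substantive steps, in particular $(3)\Rightarrow(2)$, to \cite{GM2003} is precisely what the paper intends.
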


\begin{proposition}\label{prop Axiom D*}
	Stability satisfies Axiom $D^\ast$.
\end{proposition}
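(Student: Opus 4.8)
The plan is to deduce the result from the best-reply-graph characterization of stability, Proposition \ref{prop stability}(1)$\Leftrightarrow$(2), transporting a witnessing family between the two games through the multilinear map $f$ of Definition \ref{def saids}. Fix $\bar G$ obtained from $G$ by the addition of strongly irrelevant dominated strategies, with multilinear $f\colon\bar\S\to\S$ and neighborhood $\bar U$ of $\S$ in $\bar\S$ as in that definition, and identify $\S$ with the face of $\bar\S$ on which the added pure strategies $\bar S\setminus S$ carry probability zero. Two elementary observations come first. By Definition \ref{def waids}(2) the two games have the same Nash equilibria, and these all lie in $\S$; hence every stable set of either game is a closed connected subset of $\S$, so it suffices to prove that a set $\S^*\subseteq\S$ is stable for $G$ iff it is stable for $\bar G$. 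And since each added pure strategy is weakly dominated by a strategy in $\S_n$, it is a strict non-best-reply against every completely mixed profile; so wherever a profile $\bar\s$ is completely mixed the best-reply coordinate lies in $\S$, and, by Condition 2(b) of Definition \ref{def saids}, for $\bar\s\in\bar U$ the best replies of $n$ in $\bar G$ against $\bar\s_{-n}$ that lie in $\S_n$ are exactly the best replies of $n$ in $G$ against $f_{-n}(\bar\s)$.

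The core of the argument is to upgrade this to a fiber-preserving homotopy equivalence of pairs near the diagonal. Shrink $\bar U$ and pick $\e_0>0$ so small that every $(\bar\s,\bar\t)\in\bar\BR_{\e_0}$ lying in a fixed neighborhood $W$ of $\{(\s,\s):\s\in\S\}$ has $\bar\s,\bar\t\in\bar U$. On the locus of $\bar\BR_\e\cap W$ where $\bar\s$ is completely mixed (which contains the dense interior $X_\e\setminus\partial X_\e$ of any witnessing family) the best-reply coordinate $\bar\t$ is already in $\S$, so there $f\times f$ reduces to $(\bar\s,\bar\t)\mapsto(f(\bar\s),\bar\t)$; using $f|_\S=\mathrm{id}$, multilinearity of $f$, and the fact that each added strategy is $f$-identified on $\bar U$ with a fixed mixed strategy of the corresponding player, one checks that after a mild reparametrization $\e\mapsto\e'(\e)$ of the perturbation level this map carries $(\bar\BR_\e\cap W,\partial\bar\BR_\e\cap W)$ into $(\BR_{\e'},\partial\BR_{\e'})$, that a choice of best reply in $\bar G$ gives a section in the reverse direction, and that the two composites are homotopic to inclusions through maps of pairs commuting with the evaluation maps $\varphi_K$; the homotopies are built from convexity of the best-reply values and from $f|_\S=\mathrm{id}$, the directions collapsed by $f$ being precisely the ``duplicate'' directions created by $\bar S\setminus S$, which form contractible (cell-like) fibers. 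In particular the two pairs have isomorphic \v{C}ech cohomology, with the isomorphism intertwining $\varphi_K$ and $\bar\varphi_K$ and reconciling the top dimension relevant for $\bar G$ with that for $G$ through these cell fibers.

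Granting this, the transfer is routine. Given a witnessing family $\bar X_\e\subseteq\bar\BR_\e$ for $\S^*$ in $\bar G$ as in Proposition \ref{prop stability}(2), property (d) puts its tail (small $\e$) inside $W$, and $X_\e:=\overline{(f\times f)(\bar X_\e)}\subseteq\BR_{\e'(\e)}$ is a closed semi-algebraic family; multilinearity of $f$ and $f|_\S=\mathrm{id}$ preserve connectedness and density of the interior part, as well as the nesting; property (d) transfers because $f\times f$ fixes the diagonal over $\S^*\subseteq\S$; and essentiality of $\bar\varphi_{\e^{-1}}$ transfers to essentiality of $\varphi_{\e'(\e)^{-1}}$ by the cohomology isomorphism, a reparametrization back to $\varphi_{\e^{-1}}$ being harmless since the maps $\varphi_K$ are mutually homotopic. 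Running the section in place of $f\times f$ yields the converse. Hence $\S^*$ is stable for $G$ iff it is stable for $\bar G$, i.e.\ $\varphi(G)=\varphi(\bar G)$.

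The main obstacle is the second paragraph: producing the fiber-preserving homotopy equivalence honestly. The delicate points are that $f$ is preference-preserving only on $\bar U$ and is not order-preserving, so $f\times f$ need not respect the perturbation level and a reparametrization is forced; that a weakly---rather than strictly---dominated added strategy can be a tied best reply, so the best-reply sets of $G$ and $\bar G$ need not agree on $\partial\bar\S$ and one must confine attention to the completely-mixed locus; and that $\dim\bar\S>\dim\S$, so the cohomological essentiality conditions live a priori in different degrees and must be matched through the cell-like fibers of $f$. It is exactly Condition 2(b) of Definition \ref{def saids}---and not mere weak domination, which would only yield Property P3---that makes these discrepancies controllable; this is the technical shadow of the remark in Section 4 that Axiom $D^*$, unlike Axiom D, keeps beliefs, hence preference rankings, comparable across the two games.
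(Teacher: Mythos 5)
Your overall strategy (transport a witnessing family of Proposition \ref{prop stability} through $f$, separately in each direction) is the paper's strategy, but the step you flag as ``the main obstacle'' is precisely the content of the proof, and as sketched it rests on claims that cannot hold. First, a fiber-preserving homotopy equivalence of pairs whose homotopies ``commute with the evaluation maps $\varphi_K$'' would intertwine maps into $(\S,\partial\S)$ and into $(\bar\S,\partial\bar\S)$, whose relative \v{C}ech cohomologies are concentrated in different top degrees ($\dim\S$ versus $\dim\bar\S$); essentiality cannot be transferred by any such equivalence, and ``cell-like fibers'' do not repair this. What is needed is a degree-shifting argument: the paper proves that $f$ gives a $d$-ball bundle over $(\S,\partial\S)$ (Lemma \ref{lem ball bundle}) and invokes the Thom Isomorphism Theorem, together with an explicit homotopy between $\bar\varphi_K$ and a bundle-adapted map $\tilde\varphi^i$; in the other direction it uses $p$-essentiality (Mertens' theorem on restricting an essential map over the subpair $(\S,\partial\S)\subseteq(\bar\S,\partial\bar\S)$), not a homotopy equivalence. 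Second, your converse (``running the section in place of $f\times f$'') fails outright: a section lands in the face $\S\subseteq\partial\bar\S$, so the resulting map of pairs into $(\bar\S,\partial\bar\S)$ factors through $\partial\bar\S$ and is automatically inessential, and the witnessing sets for $\bar G$ must consist densely of profiles that tremble onto the added strategies ($\e$-perfect in $\bar G$). The paper instead builds $\bar\S^i$ as a connected component of $f^{-1}(\S^i\setminus\partial\S)\cap(\bar\S^{\d_i}\setminus\partial\bar\S)$, i.e.\ a genuine thickening into the added-strategy directions, and only then runs the Thom argument.

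Third, the ``mild reparametrization $\e\mapsto\e'(\e)$'' is not available: since $f_n(\t_n,\bar\s_{-n})\neq\t_n$ when $\bar\s_{-n}\notin\S_{-n}$ (this is exactly what Example \ref{example bq game} illustrates), $\bar\s\ge(1-\e)\t$ gives no uniform bound guaranteeing $f(\bar\s)\ge(1-\e')\t$, so $(f\times f)(\bar X_\e)$ need not sit inside any $\BR_{\e'(\e)}$. The paper handles this by introducing $\a(\bar\s,\t)=\min\{\d:(f(\bar\s),\t)\in\BR_\d\}$, applying Generic Local Triviality to the resulting semialgebraic set over the $\d$-coordinate, selecting the correct connected piece $W^1$, and then concluding not that $\S^*$ itself is directly witnessed but that $\S^*\subseteq\S^\e\subseteq\bar\S^\e$ for stable sets $\S^\e$ of $G$, so that stability of $\S^*$ follows from the compactness of the family of stable sets as $\e\downarrow 0$. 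So the proposal correctly identifies where the difficulty lies and which hypothesis (Condition 2(b) of Definition \ref{def saids}) must be used, but the mechanism you propose to close it (homotopy equivalence intertwining the $\varphi_K$'s, plus a section for the converse) is not a viable route; the gap is genuine.
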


Let $\bar G$ be obtained from $G$ by the addition of strongly irrelevant dominated strategies. Let $f: \bar \S \to \S$ be a map satisfying Condition 2 of Definition \ref{def saids}.  We show Proposition \ref{prop Axiom D*} using two claims.  In what follows $\bar \BR$ refers to the graph of the best-reply correspondence of $\bar G$.

\begin{claim}
	Every stable set of $\bar G$ is stable in $G$ as well.
\end{claim}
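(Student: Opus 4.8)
The plan is to reduce to the best-reply-graph characterization of stability in Proposition~\ref{prop stability}(2) and to push the witnessing data from $\bar G$ down to $G$ along the collapsing map $f$ supplied by Definition~\ref{def saids}. First, the preliminaries: since the strategies in $\bar S_n\setminus S_n$ are weakly dominated, they get zero probability in every Nash equilibrium, so $\bar G$ and $G$ have the same Nash equilibria, all lying in $\S$; in particular a stable set $\bar\S^*$ of $\bar G$ satisfies $\bar\S^*\subseteq\S$. Fix such a $\bar\S^*$ and let $\bar\e_0>0$, $\bar X_{\bar\e}\subseteq\bar\BR_{\bar\e}$ ($0<\bar\e\le\bar\e_0$) be the data of Proposition~\ref{prop stability}(2). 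Since $\bigcap_{\bar\e}\bar X_{\bar\e}=\{(\bar\s,\bar\s):\bar\s\in\bar\S^*\}\subseteq\S\times\S$, after shrinking $\bar\e_0$ we may assume $\bar X_{\bar\e}\subseteq\bar U\times\bar U$ for all $\bar\e\le\bar\e_0$, where $\bar U$ is the neighborhood of $\S$ in $\bar\S$ from Definition~\ref{def saids}; note that $\S$ is a face of $\bar\S$ (a product of faces), so $\bar U$ may be taken to deformation-retract onto $\S$.

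Second, the transfer of the best-reply data. Put $F=f\times f\colon\bar\S\times\bar\S\to\S\times\S$. Part 2(b) of Definition~\ref{def saids} says that against any $\bar\s\in\bar U$ player $n$'s preference order over all of $\S_n$ in $\bar G$ coincides with his preference order over $\S_n$ in $G$ against $f_{-n}(\bar\s)$; together with the facts that the multilinear $f$ maps $\bar S_n$ into $\S_n$ and that each weakly dominated strategy is never strictly better than its dominator, this yields: if $\bar\t$ is a best reply to $\bar\s$ in $\bar G$ with $\bar\s\in\bar U$, then $f(\bar\t)$ is a best reply to $f(\bar\s)$ in $G$. Moreover $f|_\S=\mathrm{id}$ and multilinearity make $f$ Lipschitz-close to the identity on a neighborhood of $\S$, so the perturbation clause passes up to a constant: $\bar\s\ge(1-\bar\e)\bar\t$ implies $f(\bar\s)\ge(1-\e)f(\bar\t)$ for a suitable $\e=\e(\bar\e)\to0$. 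Hence $F$ carries $(\bar X_{\bar\e},\partial\bar X_{\bar\e})$ into $(\BR_\e,\partial\BR_\e)$; set $X_\e$ to be the closure of $F(\bar X_{\bar\e})$, intersecting over $\bar\e'\le\bar\e$ if needed to make the family nested as in Proposition~\ref{prop stability}(2)(c). Properties (a), (c), (d) of Proposition~\ref{prop stability}(2) for $(X_\e)$ then follow from the corresponding properties of $(\bar X_{\bar\e})$: connectedness, density and semialgebraicity are preserved by the continuous semialgebraic map $F$, and $f|_\S=\mathrm{id}$ gives $\bigcap_\e X_\e=\{(\s,\s):\s\in\bar\S^*\}$.

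The remaining and decisive step is condition (b): that $\varphi_{\e^{-1}}\colon(X_\e,\partial X_\e)\to(\S,\partial\S)$ is essential in \v{C}ech cohomology. This is the only genuine difficulty, because $\dim\bar\S>\dim\S$, so the essentiality we have for $\bar G$ lives $k:=\dim\bar\S-\dim\S$ degrees above the statement we want, and $f$ does not even induce a map of pairs $(\bar\S,\partial\bar\S)\to(\S,\partial\S)$ (indeed $\S\subseteq\partial\bar\S$). The plan is to use that, near $\S$, $f$ is a semialgebraic map with convex (hence \v{C}ech-acyclic) point-inverses, so it is a Vietoris--Begle map inducing isomorphisms on the \v{C}ech cohomology of the relevant neighborhoods, and to pair this with the fact that $(\bar\S,\partial\bar\S)$ is, up to the $k$-fold suspension attached to the face inclusion $\S\hookrightarrow\bar\S$, the pair $(\S,\partial\S)$, the extra suspension coordinates being matched—over the part of $\bar X_{\bar\e}$ near the diagonal—by the perturbation directions pointing into the added dominated strategies, which sit as cells of $\bar X_{\bar\e}$ over $X_\e$. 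A relative K\"unneth/suspension computation then identifies essentiality of $\varphi_{\bar\e^{-1}}$ in degree $\dim\bar\S$ with essentiality of $\varphi_{\e^{-1}}$ in degree $\dim\S$; the homotopy invariance of $\varphi_K$ in the slope $K$ (noted after the definition of $\varphi_K$) absorbs the change from $\bar\e^{-1}$ to $\e^{-1}$. Carrying out this bookkeeping so that the boundary pairs are tracked correctly across the dimension gap—essentially Mertens's own invariance argument transcribed to the present local, ordinal setting—is the main obstacle; the rest is a routine transport of topological properties along $f$.
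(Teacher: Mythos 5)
There is a genuine gap, and it sits exactly where you locate it yourself: condition (b) of Proposition \ref{prop stability}(2) for the pushed-forward sets is never proved. Your sketch---Vietoris--Begle acyclicity of $f$ near $\S$ plus a suspension/K\"unneth identification in which the extra $k$ coordinates are ``matched by the perturbation directions pointing into the added dominated strategies, which sit as cells of $\bar X_{\bar\e}$ over $X_\e$''---rests on an unsupported structural claim. Nothing in Definition \ref{def saids} or in the stability data makes $\bar X_{\bar\e}\to F(\bar X_{\bar\e})$ a map with acyclic fibers, let alone exhibits $(\bar X_{\bar\e},\partial\bar X_{\bar\e})$ as a $k$-ball bundle over $(X_\e,\partial X_\e)$: $\bar X_{\bar\e}$ is just a semialgebraic subset of the best-reply graph, and its intersection with the fibers of $F=f\times f$ can be badly behaved, so essentiality of the image cannot be extracted this way. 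The paper never bridges the dimension gap by going forward along $f$; it goes the other way. The essential map $\bar\varphi_{\d^{-1}}:(\bar X_\d,\partial\bar X_\d)\to(\bar\S,\partial\bar\S)$ is restricted to the inverse image of the subpair $(\S,\partial\S)$ of the \emph{target}, and Mertens's theorem on essential maps (\cite{M1992b}, the $p$-essentiality property) guarantees this restriction is still essential; one then checks that this preimage, read inside $(\BR_\d,\partial\BR_\d)$, is contained in the sets $X_\d$ constructed via the auxiliary function $\a(\bar\s,\t)=\min\{\d\mid (f(\bar\s),\t)\in\BR_\d\}$ and generic local triviality, so essentiality of $\varphi_{\d^{-1}}$ on $(X_\d,\partial X_\d)$ follows by factoring the nonzero cohomology map through the inclusion. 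No suspension or fiber-acyclicity argument is needed; the ball-bundle construction you are reaching for (Lemma \ref{lem ball bundle}) is what the paper uses for the \emph{converse} claim, where one must go up in dimension, not down.

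Two further steps of your transfer are also not justified by Definition \ref{def saids}. First, ``if $\bar\t$ is a best reply to $\bar\s\in\bar U$ in $\bar G$ then $f(\bar\t)$ is a best reply to $f(\bar\s)$ in $G$'' does not follow: condition 2(b) equates preference orders only over strategies in $\S_n$ and puts no constraint linking the payoff of an added strategy $\bar\t_n$ to that of its image $f_n(\bar\t)$, so $f_n$ may send a (tied, weakly dominated) best reply of $\bar G$ to a strictly inferior strategy of $G$; hence $F(\bar X_{\bar\e})\subseteq\BR$ is not established. (The paper avoids this by collapsing only the first coordinate, $(\bar\s,\t)\mapsto(f(\bar\s),\t)$, and by measuring the required perturbation level through $\a$ rather than trying to preserve the inequality $\bar\s\ge(1-\bar\e)\bar\t$ under $f$; your Lipschitz argument for $f(\bar\s)\ge(1-\e)f(\bar\t)$ is likewise shaky, since $f_n$ is the identity only on profiles in $\S$, not on $\S_n$-components against arbitrary $\bar\s_{-n}$.) Second, the opening justification that weakly dominated strategies ``get zero probability in every Nash equilibrium'' is false in general; that the equilibria of $\bar G$ coincide with those of $G$ (and hence lie in $\S$) is Property P1, built into Definition \ref{def waids}, not a consequence of weak dominance.
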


\begin{proof}
	Let $\S^*$ be a stable set of $\bar G$. There exists $\e_0$ and semi-algebraic sets $\bar X_\e \subseteq \bar \BR_\e$ for $0 < \e \le \e_0$ satisfying the conditions of Statement 2 of Proposition \ref{prop stability}. For each $0 < \e \le \e_0$, let $\bar \S^\e$ be the set of $\s$ such that $(\s, \s) \in \bar X_\e$. We will show that there exists a stable set $\S^\e$ of $G$ with $\S^* \subseteq \S^\e \subseteq \bar \S^\e$. Since $\cap_{0 < \e \le \e_0} \bar \S^\e  = \S^*$, the compactness property of stable sets implies the stability of $\S^*$ in $G$. 
	
	Fix $0 < \e \le \e_0$ such that $\bar \s \in \bar U$ for each $(\bar \s, \t) \in X_\e$, where $\bar U$ is the neighborhood given by Condition 2 of Definition \ref{def saids}.  For each $(\bar \s, \t) \in \bar X_\e$, $(f(\bar \s), \t) \in \BR$. Let $\hat X$ be the set of $(\bar \s, \t) \in \bar X_\e$ such that $\bar \s \notin \partial \bar \S \backslash (\S \backslash \partial \S)$.  Define $\a: \hat X \to [0, 1]$ by $\a(\bar \s, \t) = \min \{\, \d \in [0, 1] \mid (f(\bar \s), \t) \in \BR_{\d}\, \}$. $\a$ is clearly a continuous function.  Let $W$ be the closure of the set of $(\d, f(\bar \s), \t)$ such that $(\bar \s, \t) \in \hat X$ and $\d \ge \a(\bar \s, \t)$. Observe that if $(\d, \s, \t) \in W$, then $(\d', \s, \t) \in W$ for all $\d' > \d$; also if $(\s, \t) \in \bar X_\d$ for $\d \le \e$, then $(\d, \s, \t) \in W$. Let $\partial W$ be the set of $(\d, \s, \t) \in W$ such that $\s \in \partial \S$. $f(\bar \s) \notin \partial \S$ if $\bar \s \notin \partial \bar \S \backslash (\S \backslash \partial \S)$;  therefore, $W \backslash \partial W$ is dense in $W$. 
	
	Let $\g: W \to [0, 1]$ be the projection onto the first factor: $\g(\d, \s, \t) = \d$.  As $W$ is semi-algebraic, by the Generic Local Triviality theorem (cf.~Theorem 9.3.2 in \cite{BCR1998}) there exist: (a) $0 < \d_0 \le 1$; (b) a semi-algebraic pair $(F, \partial F)$; (c)  a homeomorphism $h: (0, \d_0] \times (F, \partial F) \to ((\g^{-1}((0, \d_0]), (\g^{-1}((0, \d_0]) \cap \partial W)$ such that $\g\circ h$ is the projection onto the first factor.  $W \backslash \partial W$ being dense in $W$, we have that $F \backslash \partial F$ is dense in $F$. There now  exist finitely many pairs $(F^k, \partial F^k) \subseteq (F, \partial F)$ such that: the sets $F^k \backslash \partial F^k$ are the connected components of $F \backslash \partial F$; $F^k \backslash \partial F^k$ is dense in $F^k$ for each $k$; $\cup_k (F^k, \partial F^k) = (F, \partial F)$. For each $k$ let $(W^k, \partial W^k) = h((0, \d_0] \times (F^k, \partial F^k))$. As with $W$, if  $(\d, \s, \t) \in W^k$ for some $k$, then $(\d', \s, \t) \in W^k$ for all $\d < \d' \le \d_0$; and if $(\s, \t) \in \hat X \cap \bar X_\d$ for $\d \le \d_0$, then $(\d, \s, \t)$  belongs to $W^k$.
	
	There exists $0 < \e_1 < \min (\e, \d_0)$ such that $\a(\cdot) \le \d_0$ on the  (connected set) $\hat X \cap \bar X_{\e_1}$.  Hence there exists a unique $k$, say $k = 1$, such that $(\a(\bar \s, \t), f(\bar \s), \t) \in W^k \backslash \partial W^k$ for all $(\bar \s, \t) \in \hat X \cap X_{\e_1}$. 
	
	For each $0 < \d \le \e_1$, let $X_\d$ be the set of $(\s, \t)$ such that $(\d, \s, \t) \in W^1$.  Let $\S^\e$ be the set of $\s$ such that $(\s, \s) \in \cap_{0 < \d \le \e_1} X_\d$. We will show that $\S^\e$ contains $\S^*$ and that it is a stable set, i.e.~ that the sets $X_\d$ satisfy the properties of Proposition \ref{prop stability}.  To show that $\S^*$ is contained in $\S^\e$, take $\s \in \S^*$. There exists a sequence $(\d^k, \bar \s^k, \t^k)$  converging to $(0, \s, \s)$ such that  $(\bar \s^k, \t^k) \in \bar X_{\d^k} \backslash \partial \bar X_{\d^k}$ for each $k$. Therefore, for each $0 < \d \le \e_1$, $(\d, f(\bar \s^k), \t^k) \in W^1$ for large $k$, giving us that $(\d, \s, \s) \in X_\d$, which shows that $\s$ belongs to $\S^\e$. 
	
	The set $\S^\e$ was constructed to satisfy property (d) of  Statement 2 of Proposition \ref{prop stability}. We have to verify the other three properties.
	The $X_\d$'s are a nested sequence of sets in $\BR$ since $(\d, \s, \t) \in W^1$ implies $(\d', \s, \t) \in W^1$ for all $\d \le \d' \le \e_1$.  The connectedness property follows from the fact that $F^1 \backslash \partial F^1$ is connected and dense in $F^1$ and $(F^1, \partial F^1)$ is homeomorphic to $(X_\d, \partial X_\d)$.    As for essentiality, for each $0 < \d \le \e_1$,  the essentiality of $\bar \varphi_{\d^{-1}}: (\bar X_\d, \partial \bar X_\d) \to (\bar \S,  \partial \bar \S)$ implies the essentiality of its restriction to the inverse image, call it $(\hat X_\d, \partial \hat X_\d)$, of $(\S, \partial \S)$ (see the Theorem in  Mertens  \cite{M1992b}).  The pair $(\hat X_\d, \partial \hat X_\d)$, viewed now as being subset of $(\BR_\d, \partial \BR_\d)$ is contained in $(X_\d, \partial X_\d)$. Therefore, the map $\varphi_{\d^{-1}}$ from $(X_\d, \partial X_\d)$ to $(\S, \partial \S)$ is essential, and the proof of the claim is complete.
\end{proof} 

We need a preliminary lemma before our second claim.  The map $f: \bar \S \to \S$ has a unique multiaffine extension to a map from $\bar A$ to $A$, where $\bar A$ is the affine space generated by $\bar \S$. We still use $f$ to denote this extension. For each $0 \le \d \le 1$, let $\bar A^\d$  be the set of $\bar \s$ in $\bar A$ such that: (1) for each $n$, $\bar \s_{n, \bar s_n} \ge 0$ for all $\bar s_n \notin S_n$, and $\sum_{\bar s_n \notin S_n} \bar \s_{n, \bar s_n} \le \d$; (2) $f(\bar \s) \in \S$.  Let $\bar \S^\d = \bar A^\d \cap \bar \S$. For $\d > 0$, let $\partial A^\d$ (resp.~$\bar \partial \S^\d$) be the relative boundary of $\bar A^\d$ (resp.~$\bar \S^\d$) in $\bar A$.  Obviously $\bar A^\d$ is a compact subset of $\bar A$ for all $\d$, and when $\d = 0$, it is just the set $\S$ viewed as a subset of $\bar A$.  

Let $d$ be the difference in the dimensions of $\bar \S$ and $\S$.  Recall from  Appendix IV.2 of \cite{M1991} that a $d$-ball bundle is a triple $((E, \partial E), (B, \partial B), p)$ where $(E, \partial E)$ and $(B, \partial B)$ are compact pairs, $p: E \to B$ is a continuous map with $p^{-1}(\partial B) \subseteq \partial E$ and with the following property: for each $x \in B \backslash \partial B$, there is a neighborhood $U$ of $x$ in $B \backslash \partial B$ and a homeomorphism $h: U \times ([0, 1]^d, \partial ([0, 1]^d) \to (p^{-1}(U), p^{-1}(U) \cap \partial E)$ such that $p \circ h$ is the projection onto the first factor.

\begin{lemma}\label{lem ball bundle}
	For each sufficiently small  $\d > 0$, $((\bar A^\d, \partial \bar A^\d), (\S, \partial \S), f)$ is a $d$-ball bundle.
\end{lemma}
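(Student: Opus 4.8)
The statement is a local triviality claim for the map $f : \bar A^\d \to \S$ near $\d = 0$, so the natural approach is to produce, for small $\d$, an explicit fiber-bundle structure with fiber the $d$-cube. I would proceed in three stages: (1) identify the fiber over an interior point of $\S$ as a polytope that is combinatorially a $d$-cube; (2) show that this combinatorial type is locally constant over $\S\backslash\partial\S$, which yields local triviality by the usual convexity/retraction argument; (3) check the boundary condition $f^{-1}(\partial\S)\subseteq\partial\bar A^\d$.

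\textbf{Step 1: the fiber.} Fix $\s\in\S\backslash\partial\S$ and examine $f^{-1}(\s)\cap\bar A^\d$. Since $f$ is multiaffine and restricts to the identity on $\S$ (condition (a) of Definition \ref{def saids}), write a point of $\bar A$ as $(\s', \bar\s_{\text{new}})$ where $\bar\s_{\text{new}}$ collects the coordinates on $\bar S_n\backslash S_n$ across all $n$, and $\s'$ the coordinates on $S$. For fixed $\bar\s_{\text{new}}$, multiaffinity makes $f$ affine in $\s'$, in fact of the form $f(\s',\bar\s_{\text{new}}) = \s' + (\text{linear in }\bar\s_{\text{new}})$ up to normalization on each player's block; so the equation $f(\s',\bar\s_{\text{new}}) = \s$ determines $\s'$ uniquely as an affine function of $\bar\s_{\text{new}}$. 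Hence $f^{-1}(\s)\cap\bar A^\d$ is affinely isomorphic to the set of $\bar\s_{\text{new}}$ with $\bar\s_{n,\bar s_n}\ge 0$ and $\sum_{\bar s_n\notin S_n}\bar\s_{n,\bar s_n}\le\d$ for each $n$ — i.e.\ a product over $n$ of scaled simplices $\{x_n\ge 0,\ \sum x_n\le\d\}$, modified by finitely many inequalities $\s'_{n,s_n}(\bar\s_{\text{new}})\ge 0$ coming from the requirement that the recovered $\s'$ lie in $\S$. Because $\s$ is interior, those latter inequalities are strict at $\bar\s_{\text{new}}=0$, so for $\d$ small they are slack on the whole fiber and the fiber is exactly the product of scaled simplices, a polytope piecewise-linearly homeomorphic to $[0,1]^d$ with $d = \dim\bar\S - \dim\S$.

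\textbf{Step 2: local triviality.} For $\s$ ranging over a small neighborhood $U$ of a fixed interior point $\s_0$, the affine section $\bar\s_{\text{new}}\mapsto\s'(\bar\s_{\text{new}};\s)$ varies continuously (indeed affinely) in $\s$, and the defining inequalities of the fiber stay slack uniformly on $U$ for $\d$ small. Standard facts about parametrized convex bodies (e.g.\ the straight-line homotopy/Alexander trick used repeatedly by Mertens, Appendix IV of \cite{M1991}) then give a homeomorphism $h : U\times([0,1]^d,\partial[0,1]^d)\to(f^{-1}(U)\cap\bar A^\d, f^{-1}(U)\cap\partial\bar A^\d)$ commuting with projection to $U$. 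The identification of $\partial[0,1]^d$ with $\partial\bar A^\d$ uses that the relative boundary of $\bar A^\d$ over interior $\s$ is precisely where one of the simplex inequalities $\bar\s_{n,\bar s_n}=0$ or $\sum_{\bar s_n\notin S_n}\bar\s_{n,\bar s_n}=\d$ becomes tight — the slack inequalities from $\S$-membership do not contribute near $\s_0$. Uniformity of the choice of $\d$ over the compact $\S$ follows because finitely many inequalities are involved.

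\textbf{Step 3: the boundary map.} Finally, $f^{-1}(\partial\S)\subseteq\partial\bar A^\d$ holds automatically: if $f(\bar\s)\in\partial\S$ then $\bar\s$ is certainly not in the interior of $\bar A^\d$, since interior points of $\bar A^\d$ map (by the open-map property of the affine recovery in Step 1, valid for $\d$ small) into $\S\backslash\partial\S$. This is the only place condition 2(b) of Definition \ref{def saids} — via the structure forced on $f$ — is really needed beyond Step 1.

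\textbf{Main obstacle.} The delicate point is Step 2: making the choice of $\d$ and the trivializing neighborhoods uniform near $\partial\S$, where the "extra" inequalities cutting $\bar A^\d$ out of the product-of-simplices model become active and the fiber can degenerate (lose dimension). One must argue that $\partial\S$ itself is swallowed by $\partial\bar A^\d$ so that the ball-bundle condition over the boundary is vacuous there, and that over a collar of $\partial\S$ inside $\S\backslash\partial\S$ the model still holds for $\d$ chosen small relative to the distance to $\partial\S$ — then invoke a finite cover of the compact stratified space $\S$ to extract a single $\d$. Handling this cleanly is exactly the kind of semialgebraic-triviality bookkeeping that the cited Generic Local Triviality Theorem is designed for, so I would expect the actual proof to invoke \cite{BCR1998} Theorem 9.3.2 for the semialgebraic family $\{\bar A^\d\}$ and then upgrade "generic $\d$" to "all small $\d$" using that the bad set of $\d$ is semialgebraic of dimension $0$, i.e.\ finite, hence bounded away from $0$ after shrinking.
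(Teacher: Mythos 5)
There is a genuine gap, and it sits exactly where your Step 1 does the real work. You assert that for fixed coordinates on the added strategies the map $f$ is affine in the $S$-coordinates $\s'$, so that $f(\s',\cdot)=\s$ recovers $\s'$ uniquely and affinely. That is not justified by Definition \ref{def saids}: $f$ is only multiaffine, and the requirement that it restrict to the identity on $\S$ only forces every nonlinear monomial to carry at least one coordinate from some $\bar S_m\setminus S_m$. Once such a coordinate is frozen at a nonzero value, cross-player products of $S$-coordinates survive (e.g.\ with three players a term of the form $\bar\s_{3,\bar s_3^0}\cdot h(\bar\s_1,\bar\s_2)$ with $h$ bilinear is allowed), so the system you solve is genuinely nonlinear, and neither uniqueness of the recovered $\s'$ nor its affine dependence on the new coordinates comes for free; Step 2, which needs the sections to vary nicely and the inequalities to stay uniformly slack, inherits the problem. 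The missing idea—and essentially the entire content of the paper's proof—is an inverse-function argument: the map $\g(\bar\s)=\bigl(f(\bar\s),(\bar\s_{n,\bar s_n})_{n,\,\bar s_n\notin S_n}\bigr)$ has nonsingular derivative at every point of $\bar A^0\cong\S$, hence by compactness is a homeomorphism of $\bar A^\d$ onto its image for all sufficiently small $\d$, and the image is homeomorphic to $\S\times[0,\d]^d$. This yields the ball-bundle structure globally in one stroke (fibers, local triviality, and the boundary condition), with no local patching at all.

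Your ``main obstacle'' also rests on a misreading of $\bar A^\d$: condition (2) of its definition constrains $f(\bar\s)$ to lie in $\S$, not the $S$-coordinates of $\bar\s$ themselves. On a fiber $f^{-1}(\s)$ this condition holds automatically, so there are no extra inequalities cutting the fiber, no degeneration as $\s$ approaches $\partial\S$, and no need for a collar argument, a $\d$ chosen relative to the distance to $\partial\S$, or an appeal to generic local triviality in the parameter $\d$. Once $\g$ is seen to be a homeomorphism on $\bar A^\d$, the fiber over every $\s\in\S$ is a full product of scaled simplices and a single small $\d$ works uniformly over the compact set $\S$.
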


\begin{proof}
	Let $N_1$ be the nonempty subset of players $n$ for whom $\bar S_n \neq S_n$. Define $\g: \bar A^1 \to \S \times [0, \d]^{d}$ by $g(\bar \s) = (f(\bar \s), {(\bar \s_{\bar s_n})}_{n \in N_1, \bar s_n \notin S_n})$.  It is easily checked using the formula for $f$ that  the derivative of $\g$ is nonsingular at each point in $\bar A^0$. Hence for small $\d$,  $\g$ is a homeomorphism of $\bar A^\d$  to its image, which in turn is easily seen to be homeomorphic to $\S \times [0, \d]$.
\end{proof}

\begin{claim}
	Every stable set of $G$ is stable in $\bar G$ as well.
\end{claim}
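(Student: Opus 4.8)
The plan is to mirror the proof of the previous claim in reverse: starting from a stability witness for $\S^*$ in $G$, pull it back to $\bar G$ along the ball bundle $f$ of Lemma~\ref{lem ball bundle}. Fix a stable set $\S^*$ of $G$ and, by Proposition~\ref{prop stability}, semi-algebraic sets $X_\e\subseteq\BR_\e$, $0<\e\le\e_0$, satisfying conditions (a)--(d) of its statement (2); shrink $\d>0$ so that Lemma~\ref{lem ball bundle} applies and so that $\bar\S^\d$ lies in the neighborhood $\bar U$ of Definition~\ref{def saids}. First I would record a few facts about profiles near $\S$. Against a completely mixed $\bar\s\in\bar\S$, every pure strategy in $\bar S_n\setminus S_n$ is strictly inferior to the mixed strategy weakly dominating it, so the best replies of $\bar G$ to a completely mixed profile all lie in $\prod_n\S_n\subseteq\bar\S$. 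Hence, if $(\bar\s,\bar\t)\in\bar\BR_\e$ lies off $\partial\bar\BR_\e$, then $\bar\s$ is completely mixed, $\bar\t$ is supported in $\prod_n\S_n$, and $\bar\s\ge(1-\e)\bar\t$ forces the weight of $\bar\s$ on $\bar S_n\setminus S_n$ to be at most $\e$ for each $n$; thus $\bar\s\in\bar\S^\d\subseteq\bar U$ for $\e\le\d$, and Condition~2(b) of Definition~\ref{def saids} then gives that $\bar\t$ is a best reply to $\bar\s$ in $\bar G$ iff it is a best reply to $f(\bar\s)$ in $G$. Finally, since $f|_\S$ is the identity and $f$ is multiaffine, one checks $f(\bar\s)_{n,s_n}\ge(1-\d)^{N-1}\bar\s_{n,s_n}$ for $s_n\in S_n$, so $f$ sends completely mixed profiles of $\bar\S^\d$ to completely mixed profiles of $\S$, and a weight $\le\e$ on a non--best-reply of $\bar G$ to a weight of order $\e$ on the corresponding non--best-reply of $G$.

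With these facts in hand I would define, for each small $\e$, $\bar X_\e$ to be the closure in $\bar\BR_\e$ of the set of $(\bar\s,\bar\t)$ with $\bar\s$ completely mixed and $(f(\bar\s),\bar\t)\in X_{\e'}$, where $\e'$ is the bounded adjustment of $\e$ produced by the last fact, and set $\partial\bar X_\e=\bar X_\e\cap\partial\bar\BR_\e$. By the facts above, $(\bar\s,\bar\t)\mapsto(f(\bar\s),\bar\t)$ exhibits $\bar X_\e\setminus\partial\bar X_\e$ as (the interior of) the total space of the $d$-ball bundle of Lemma~\ref{lem ball bundle} pulled back along $X_\e\to\S$, $(\s,\t)\mapsto\varphi_{\e^{-1}}(\s,\t)$; in particular $\bar X_\e\setminus\partial\bar X_\e$ is a bundle with connected $d$-ball fibres over the connected, dense set $X_\e\setminus\partial X_\e$, hence connected and dense in $\bar X_\e$, giving condition (a). Nestedness, condition (c), is inherited from that of the $X_\e$, after adjusting $\e_0$ to absorb the bounded distortion of $\e'$. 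For condition (d): if $(\bar\s,\bar\t)\in\bar X_\e$ then $\bar\s$ puts weight at most $\e$ outside $\prod_n S_n$, so as $\e\downarrow 0$ this weight vanishes and $f(\bar\s)-\bar\s\to 0$; since $(f(\bar\s),\bar\t)\in X_{\e'}$ and $\bigcap_\e X_\e=\{(\s,\s):\s\in\S^*\}$, the limit set of the $\bar X_\e$ is precisely the diagonal over $\S^*$ regarded inside $\bar\S$. There remains condition (b), the essentiality of $\bar\varphi_{\e^{-1}}:(\bar X_\e,\partial\bar X_\e)\to(\bar\S,\partial\bar\S)$.

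For essentiality I would use that $\bar X_\e$ and $\bar\S$ have the same dimension (each exceeds its $G$-counterpart by $d$), so by the Theorem of Section~4.E of \cite{M1991} and Lemma~A.4 of \cite{GW2008} it suffices to show $\bar\varphi_{\e^{-1}}$ has a point of coincidence with every continuous map $\bar X_\e\to\bar\S$; this follows from the corresponding coincidence property of $\varphi_{\e^{-1}}:(X_\e,\partial X_\e)\to(\S,\partial\S)$ by the standard behaviour of essentiality under pullback along a ball bundle with contractible fibres satisfying $f^{-1}(\partial\S)\subseteq\partial\bar A^\d$ (Lemma~\ref{lem ball bundle}; cf.\ Appendix~IV.2 of \cite{M1991}). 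The main obstacle, and the step needing genuine care, is exactly this transfer: $\bar\varphi_{\e^{-1}}$ does not literally factor through the bundle projection $f$ -- the extrapolation factor $\e^{-1}$ drives images off $\bar\S^\d$ and $f$ is only multiaffine, not affine -- so before invoking the ball-bundle property one must first replace $\bar\varphi_{\e^{-1}}$ by a map homotopic to it (rel boundary) that does factor through $f$, or, exactly as in the proof of the previous claim, run a Generic Local Triviality argument on an auxiliary semi-algebraic set to straighten out the pullback and match up the two families of sets. Once condition (b) is in place, Proposition~\ref{prop stability} shows $\S^*$ is a stable set of $\bar G$, which with the previous claim completes the proof of Proposition~\ref{prop Axiom D*}.
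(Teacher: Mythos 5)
Your overall plan is in the right spirit---pull the witness for $\S^*$ back along the map $f$ and use Lemma \ref{lem ball bundle} to transfer essentiality---but the decisive step is missing, and you say so yourself: the transfer of essentiality from $\varphi_{\e^{-1}}$ on $(X_\e,\partial X_\e)$ to $\bar\varphi_{\e^{-1}}$ on $(\bar X_\e,\partial\bar X_\e)$ is only gestured at (``replace by a homotopic map that factors through $f$, or run a Generic Local Triviality argument''), and this is exactly where the paper does its real work. Moreover, the specific shortcut you propose for condition (b) is not sound as stated: reducing essentiality to the coincidence property via Section 4.E of \cite{M1991} and Lemma A.4 of \cite{GW2008} requires the domain and target pairs to have the same dimension, and nothing in Statement 2 of Proposition \ref{prop stability} pins down the dimension of $X_\e$ (hence of your pulled-back $\bar X_\e$); that equivalence is available in the body of the paper only because Proposition \ref{prop Q} produces $d$-pseudomanifolds for generic $u$, which is not at your disposal here. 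A second structural problem is your insistence on verifying Statement 2 with $K=\e^{-1}$ and target $(\bar\S,\partial\bar\S)$: the fibres of $f$ over $\S$ are controlled by Lemma \ref{lem ball bundle} only as a ball bundle with total space $\bar A^{\d}$ inside the affine space $\bar A$, not inside $\bar\S$, and the extrapolation in $\varphi_K$ leaves the simplex; both facts push you toward the affine target and a flexible $K$.

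This is precisely why the paper switches to Statement 3 of Proposition \ref{prop stability}: it builds nested sets $\bar\S^i$ of $2\e_i$-perfect equilibria of $\bar G$ (closures of the relevant connected components of $f^{-1}(\S^i\backslash\partial\S)\cap(\bar\S^{\d_i}\backslash\partial\bar\S^{\d_i})$), and for essentiality works with $\hat X^i\subset\bar A^{\d_i}\times\S$, for which $((\hat X^i,\partial\hat X^i),(X_{\e_i},\partial X_{\e_i}),\bar f)$ is a $d$-ball bundle. It then constructs an auxiliary map $\tilde\varphi^i=\tilde h\bigl(r\circ\varphi_K(f(\bar\s),\t),(\bar\s_{n,\bar s_n})_{\bar s_n\notin S_n}\bigr)$ satisfying $r\circ\varphi_K\circ\bar f=\tilde f\circ\tilde\varphi^i$, deduces its essentiality from that of $\varphi_K$ via the Thom Isomorphism Theorem (Appendix IV.3 of \cite{M1991}), and finally exhibits a homotopy of pairs between $\tilde\varphi^i$ and $\bar\varphi_K$; the freedom to take $K$ large and to map into $(\bar A,\partial\bar A)$ is what makes the retraction, the trivialization $\tilde h$, and the homotopy all stay within the pair. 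Until you carry out an analogue of this construction (or some other complete argument for your condition (b)), your proof has a genuine gap at its central step.
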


\begin{proof}
	Let $\S^*$ be a stable set of $G$.  There exists $\e_0 > 0$ and for every $0 < \e \le \e_0$ a set $X_\e$ satisfying the properties of Statement 2 of Proposition \ref{prop stability}.  Take a sequence  $\e_i$ of positive numbers converging to zero. For each $i$, $\S^i$ be the projection of $X_{\e_i}$ to $\S$. Each $\S^i$ is semialgebraic with $\S^i \backslash \partial \S$  connected and dense in it. The sequence $\S^i$ is nested and converges to $\S^*$.  For each $i$, there exists $\d_i > 0$ such that if $\bar \s \in \bar \S^{\d_i} \backslash \partial \bar \S$, and $(f(\bar \s), \t) \in \S^i$,  then $\bar \s$ is a $2\e_i$-perfect equilibrium of $\bar G$.  We can choose the sequence $\d_i$ to be monotonically decreasing and converging to zero.  $f^{-1}(\S^i \backslash \partial \S) \cap (\bar \S^{\d^i} \backslash \partial \bar \S^{\d_i})$ has finitely many connected components, exactly one of which contains $\S^i \backslash \partial \S$ in its closure. Call the closure of this component $\bar \S^i$.  
	
	We now verify that the sequence of sets $\bar \S^i$ satisfy the conditions of Statement 3 of Proposition \ref{prop stability}. For each $i$, $\bar \S^i \backslash \partial \bar \S$ is a connected set of $2\e_i$-perfect equilibria and is dense in $\bar \S^i$. The sets $\bar \S^i$ are nested and converge to $\S^*$. Hence, to prove that $\S^*$ is stable in $\bar G$, it is sufficient to show that each $\bar \S^i$ satisfies Condition 3(b) of Proposition \ref{prop stability} for large $i$. 
	
	Fix $i$ large enough such that $\d_i$ is small enough for Lemma \ref{lem ball bundle} to hold.  Let $\hat \S^i$ be the set of $\bar \s$ in $\bar A^{\d_i}$ such that $f(\bar \s) \in \S^i$.  Let $\hat X^i$ be the set of $(\bar \s, \t) \in \hat  \S^i \times \S$ such that $(f(\bar \s), \t) \in X_{\e_i}$.  Let $\partial \hat X^i$ be the union of  $f^{-1}(\partial X_{\e_i})$ and the set of points $(\bar \s, \t)$ for which $\bar \s \in \partial \bar \S^\d$. Then $((\hat X^i, \partial \hat X^i), (X_{\e_i}, \partial X_{\e_i}), \bar f$ is $d$-ball bundle, where $\bar  f: \hat X^i \to X_{\e_i}$ is the map that sends $(\bar \s, \t)$ to $(f(\bar \s), \t)$.  Fix $K \ge 2\e_i^{-1}$. The claim is now proved if we show that $\bar \varphi_K: (\hat X^i, \partial \hat X^i) \to (\bar A, \partial \bar A)$ is essential. 

	By a retraction of a point $x$ to a  compact convex set $C$, we mean that we map $x$ to the point $y \in C$ that minimizes the $\ell_2$-distance between between $x$ and points in $C$. Let $r: A \to \S$ be the retraction of $A$ to $\S$ and define $\bar r: \bar A \to \bar \S$ similarly.  Let $\tilde f: \bar \S \to \S$ be the retraction of $\bar \S$ to $\S$. (As usual, we view $\S$ as a subset of $\bar \S$.)  $((\bar \S, \partial \bar\S), (\S, \partial \S), \tilde f)$ is a $d$-ball bundle.  In fact, there exists a function $\tilde h: \S \times [0, \d]^d \to \bar \S$ such that $\tilde f \circ \tilde h$ is the projection onto the first factor and $\tilde h$ maps $(\S \backslash \partial \S) \times ([0, \d]^d, \partial ([0, 1]^d))$ homeomorphically onto $(\bar \S \backslash \partial \S, \partial \bar \S \backslash \partial \S)$.
	
	Define $\tilde \varphi^i: (\hat X^i, \partial \hat X^i) \to (\bar \S, \partial \bar \S)$ by $\tilde \varphi^i(\bar \s, \t) = \tilde h(r \circ \varphi_{K}(f(\bar \s), \t), {(\bar \s_{n,\bar s_n})}_{n\in N, \bar s_n \notin S_n})$, where $K \ge \e_i^{-1}$ is such that $\varphi_K(f(\s), \t) \notin (\S \backslash \partial \S)$ if $\bar \s \in \partial \hat X^i$.   Then $r \circ \varphi_{K} \circ \bar f  = \tilde f \circ \tilde \varphi^i$.   Since $\varphi_{\e_i}$ is essential, $\varphi_K$ and, therefore, $r \circ \varphi_K$ are essential. Hence,  by the Thom Isomorphism Theorem (cf.~the Theorem of Appendix IV.3 in \cite{M1991}),  $\tilde \varphi^i$ is essential. 
	
	To finish the proof, we we show that $\bar \varphi_{K}$ and  $\tilde \varphi^i$ are homotopic as maps between pairs. The two maps agree on the set of points of the form $(\s, \t) \in \bar X$;  therefore, if we construct a homeomorphism $\bar \psi$ between $\bar \S$ and a convex set $\tilde \S$ such that $\partial \bar \S \backslash (\S \backslash \partial \S)$ is mapped to a convex set, then there  a linear homotopy between $\bar \psi \circ \bar r \circ \bar \varphi_{K}$ and $\bar \psi \circ \tilde \varphi_{K}$.  Thus, $\bar \varphi_{K}$ and  $\tilde \varphi^i$ are homotopic, which gives us that $\bar \varphi_K$ is essential as well.
\end{proof}
	
\section{Verifying Axioms I and $I^\ast$}
	
	In this appendix, we prove that essentiality and stability satisfy Axioms I and $I^\ast$, respectively.  We start with Axiom I.
	
	\begin{proposition}\label{prop Axiom I}
	Essentiality satisfies Axiom I.
	\end{proposition}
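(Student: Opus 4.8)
The plan is to deduce Axiom~I for essentiality from three standard properties of the fixed-point index---homotopy invariance, additivity, and multiplicativity over products---together with the index theorem that the indices of all equilibrium components of a finite game sum to~$1$. Let $\tilde G:\tilde\S_\N\times\tilde\S_O\to\Re^{\N\cup O}$ weakly embed $G$ via multilinear maps $f_n:\tilde\S_n\times\tilde\S_O\to\S_n$, with face $\bar\S_O$ as in Definition~\ref{def weak embedding}; write $f=\prod_n f_n$ and let $\BR^G$, $\tilde\BR$ be the best-reply correspondences of $G$ and $\tilde G$. Since $\varphi$ assigns to a game its essential components, it is enough to show that a component $\tilde\S^*$ of equilibria of $\tilde G$ is essential iff $f(\tilde\S^*)$ is an entire component of equilibria of $G$ and is essential, and that every essential component of $G$ is the $f$-image of some essential component of $\tilde G$.

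First I would record the structural consequences of weak embedding. If $\tilde\s$ is an equilibrium of $\tilde G$ then $\tilde\s_O$ is an equilibrium of $\bar G_O^{\tilde\s_\N}$, hence lies in $\bar\S_O$ by Condition~(1); and for $\tilde\s$ with $\tilde\s_O\in\bar\S_O$, Conditions~(2)--(3) give that $\tilde\s_n$ is a best reply in $\tilde G$ iff $f_n(\tilde\s)$ maximizes $G_n(\cdot,f_{-n}(\tilde\s_{-n}))$ over all of $\S_n$, i.e.\ iff $f_n(\tilde\s)\in\BR^G_n(f(\tilde\s))$. Thus $f$ carries equilibria of $\tilde G$ to equilibria of $G$. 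Moreover, applying Kakutani, for each equilibrium $\s$ of $G$, to the correspondence $\tilde\s\mapsto\prod_n f_n(\cdot,\tilde\s_O)^{-1}(\s_n)\times\BR^{\bar G_O^{\tilde\s_\N}}(\tilde\s_O)$ (which is nonempty-convex-valued near $\tilde\S_\N\times\bar\S_O$, with any fixed point being an equilibrium of $\tilde G$ by Conditions~(2)--(3) since $\s_n\in\BR^G_n(\s_{-n})$), one sees that $f$ maps equilibria of $\tilde G$ onto equilibria of $G$; generic local triviality of these semialgebraic sets then shows $f$ restricted to the equilibrium graph of $\tilde G$ is open, so each component of $\tilde G$ is carried by $f$ onto a full component of $G$.

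Next I would compute indices. Fix a component $\S^*$ of $G$, let $\tilde K$ be the clopen set of equilibria of $\tilde G$ mapping into $\S^*$, a finite union $\tilde\S^*_1\sqcup\cdots\sqcup\tilde\S^*_r$ of components, and choose an isolating neighborhood $\tilde U\supseteq\tilde K$. Replace $\tilde\BR$ by the ``$f$-pullback'' correspondence $\Psi$ that agrees with $\tilde\BR$ on the outsider coordinates (there it is $\BR^{\bar G_O^{\tilde\s_\N}}(\tilde\s_O)$) and on the coordinates of $n\in\N$ equals $g_n\!\left(\BR^G_n(f(\tilde\s)),\tilde\s_O\right)$ for a continuous section $g_n(\cdot,\tilde\s_O)$ of $f_n(\cdot,\tilde\s_O)$. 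Along the Minkowski homotopy $H_t=(1-t)\tilde\BR+t\Psi$, every fixed point has $\tilde\s_O$ an outsiders' equilibrium and each insider coordinate inside the convex set $f_n(\cdot,\tilde\s_O)^{-1}\!\big(\BR^G_n(f(\tilde\s)_{-n})\big)$, hence is an equilibrium of $\tilde G$; so fixed points never meet $\partial\tilde U$ and $\mathrm{Ind}(\tilde\BR,\tilde U)=\mathrm{Ind}(\Psi,\tilde U)$. Over $\S^*$, $\Psi$ is, up to a fibrewise contraction along the $f$-fibres, the product of $\BR^G$ near $\S^*$ with the outsiders' equilibrium bundle $\tilde\s_\N\mapsto\mathrm{Eq}(\bar G_O^{\tilde\s_\N})$; multiplicativity and additivity of the index, applied fibrewise over the connected base, then yield
\[
\mathrm{Ind}(\tilde\S^*_j)=\mathrm{Ind}^G(\S^*)\cdot\iota_j,\qquad \textstyle\sum_j\iota_j=1,
\]
where $\iota_j$ is the index contributed by the outsiders over $\tilde\S^*_j$ and $\sum_j\iota_j=1$ is the index theorem for each $\bar G_O^{\tilde\s_\N}$. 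From the factorization: $\mathrm{Ind}(\tilde\S^*_j)\neq0$ forces $\mathrm{Ind}^G(\S^*)\neq0$, so $f(\tilde\S^*_j)=\S^*$ is essential; and $\mathrm{Ind}^G(\S^*)\neq0$ forces some $\iota_j\neq0$, so some essential $\tilde\S^*_j$ maps onto $\S^*$. Hence $\varphi(G)=\{f(\tilde\S^*):\tilde\S^*\in\varphi(\tilde G)\}$.

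The step I expect to be the main obstacle is the fibrewise product decomposition of the index used in the third paragraph. The outsiders' game $\bar G_O^{\tilde\s_\N}$ depends genuinely on $\tilde\s_\N$ (only its strategic, not its payoff, structure is inert), the section $g_n(\cdot,\tilde\s_O)$ twists with $\tilde\s_O$, and the $f_n$-fibres need not literally be products, so ``Fubini for the fixed-point index'' has to be justified rather than read off a product splitting. I would handle this with the degree-theoretic apparatus of \cite{GW2008} and the index properties in \cite{GPS1993, DG2000}: triangulate, approximate $\Psi$ by a map, and check that the index is preserved under the fibrewise homotopies that trivialize the $f$-fibres and reparametrize the outsiders' equilibrium bundle over the connected projection of $\tilde\S^*_j$ to $\tilde\S_\N$. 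The openness of $f$ on the equilibrium graph---needed to conclude that each $f(\tilde\S^*_j)$ is a full component---is a secondary point that follows from generic local triviality of the relevant semialgebraic sets.
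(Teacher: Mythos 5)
There is a genuine gap, and it is exactly the step you flag yourself: the ``fibrewise product decomposition'' $\mathrm{Ind}(\tilde\S^*_j)=\mathrm{Ind}(\S^*)\cdot\iota_j$ with $\sum_j\iota_j=1$ is not an available theorem, and your sketch does not establish it. The outsiders' game $\bar G_O^{\tilde\s_\N}$ varies with $\tilde\s_\N$, the fibres $f_n(\cdot,\tilde\s_O)^{-1}(\s_n)$ vary with $\tilde\s_O$, and the set of equilibria of $\tilde G$ over $\S^*$ need not fibre as (or deform to) a product; consequently the ``outsider index'' $\iota_j$ attached to a component $\tilde\S^*_j$ is not even well defined without the very decomposition you are trying to prove, and the per-fibre index theorem for $\bar G_O^{\tilde\s_\N}$ does not by itself give $\sum_j\iota_j=1$ across components of the total space. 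A second, related gap is your claim in the structural paragraph that $f$ restricted to the equilibrium graph of $\tilde G$ is open, so that each component is carried onto a \emph{full} component of $G$: Hardt-type generic local triviality gives a finite semialgebraic partition of the target over which the map is trivial, not openness, and Axiom I must hold for every embedding game, not a generic one. Your final conclusion ``$f(\tilde\S^*_j)=\S^*$ is essential'' leans on both of these unproved points.

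The paper's proof keeps your pullback idea (the correspondence $\hat\BR=(i\circ\BR\circ f\circ r)\times\tilde\BR_O$, built from a section $i$ of $f$ over $\bar\S_O$ and a retraction $r$ onto $\tilde\S_\N\times\bar\S_O$, linearly homotopic to $\tilde\BR$ without boundary fixed points) but avoids any Fubini-type index formula. For the direction ``essential in $\tilde G$ $\Rightarrow$ essential full component in $G$'', it argues by contradiction: if the component $\hat\S^*$ of $G$ containing $f(\tilde\S^*)$ had index zero, or if $f(\tilde\S^*)$ missed part of $\hat\S^*$, one can choose a map $g$ arbitrarily close to $\BR$ with no fixed points in $f(\tilde V)$ for a suitable neighborhood $\tilde V$ of $\tilde\S^*$; then $\hat g=(i\circ g\circ f\circ r)\times\tilde\BR_O$ is close to $\hat\BR$ and fixed-point free on $\tilde V$, contradicting the nonzero index of $\tilde\S^*$. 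For the converse it starts from $\BR\times\mathrm{Id}$ on $U\times\tilde\S_O$ (index $c=\mathrm{Ind}(\S^*)$), homotopes $\mathrm{Id}$ to the outsiders' best reply, and then uses the commutativity/restriction property of the index---because $\hat\BR$ maps into the section image $i(\S)\times\tilde\S_O$---to conclude that $\tilde\BR$ has index $c$ over $\tilde U$, so some essential component of $\tilde G$ maps into (and then, by the first argument, onto) $\S^*$. If you want to salvage your route, you would essentially have to reprove these two arguments anyway; the product factorization you propose is stronger than what is needed and is the part least likely to go through as stated.
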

	
	\begin{proof}
		Let $\tilde G$ weakly embed $G$ via the map $f:\tilde \S_\N \times \tilde \S_O \to \S$.  Let $\tilde \S^*$ be an essential component of $\tilde G$ and let $\S^* = f(\tilde \S^*)$.  $\S^*$ is obviously a connected set of equilibria. Let $\hat \S^*$ be the component of equilibria of $G$ that contains $\S^*$.  We will show that $\S^* = \hat \S^*$ and that its index is nonzero. We use $\BR$ and $\tilde \BR$ to denote the best-reply correspondences in $G$ and $\tilde G$, resp.
		
		Let $r_O: \tilde \S_O \to \bar \S_O$ be a retraction and let $r: \tilde \S_{\N} \times \tilde \S_{O} \to \tilde \S_{\N} \times \bar \S_O$ be the map $r(\tilde \s_{\sn}, \tilde \s_{\so}) = (\tilde \s_{\sn}, r_{\so}(\tilde \s_{\so})$.  Choose a closed neighborhood $\bar U$ of $\tilde \S^*$ in $\tilde \S_{\N} \times \bar \S_O$ that is disjoint from the other components of equilibria of $\tilde G$. Let $\tilde U = r^{-1}(\bar U)$.  For each $n$, there is a linear injection $i_n: \S_n \to \tilde  \S_n$ such that for each $\bar \s_O \in \bar \S_O$, $f(i_n(\s_n), \bar \s_O)  = \s_n$ for all $\s_n$. Define $\hat \BR_\N: \tilde  U \to  \tilde \S$ by $\hat \BR = i\circ \BR \circ f\circ r$, where $i = {(i_n)}_n$. Let $\hat \BR =  \hat \BR_\N \times \tilde  \BR_O$. The $f$-projection of the set of fixed points of $\hat \BR$ is contained in $\S^*$. $\hat \BR$ is homotopic to $\tilde \BR$ under a linear homotopy with no fixed points on the boundary of $\bar U$. Hence the index of $\tilde U$ under $\hat \BR$ is the same as that  under $\hat \BR$, and in particular nonzero.  
		
		Suppose either that $\S^* \neq \hat \S^*$ or the index of $\hat \S^*$ is zero.  Choose a closed neighborhood $\bar V \subseteq \bar U$ of $\tilde \S^*$ in $\tilde \S_{\N} \times \bar \S_O$ such that its $f$-projection to $\S$ does not contain $\hat \S^*$ if $\S^* \neq \hat \S^*$. Let $\tilde V = r^{-1}(\bar V)$. For each $\e > 0$, there exists a function $g: \S \to \S$ such that: (1) the graph of $g$ is within $\e$ of the graph of $\BR$; (2) $g$ has no fixed points in $f(\tilde V)$. Define $\hat g_\N:  \tilde V \to \tilde \S_\N$ by  $\hat g_\N = i\circ g \circ f \circ r$ and $\hat g = \hat g_\N \times \tilde \BR_O$.  Then $\hat g$ is within $\e$ of the graph of $\hat \BR$ and has no fixed points in $\tilde V$, which is impossible since the index of $\hat \BR$ is nonzero.  Hence $\tilde \S^* = \S^*$ and its index is nonzero.    	
		
		To prove the other direction, take an essential component $\S^*$ of $G$, whose index is, say, $c \neq 0$.  Let $U$ be a closed neighborhood of $\S^*$ that is disjoint from the other components of equilibria of $G$.   $\BR \times \text{Id}: U \times \tilde \S_O \to \S \times \S_O$ has index $c$ as well, where $\text{Id}$ is the identity map.  The index does not change if we replace $\text{Id}$ with $\bar \BR_O$ given by $\bar  \BR_O(\s, \bar \s_O) = \tilde \BR_O(i(\s)), \bar \s)$ or if we replace $\S$ with $i(\S)$. Thus, for the correspondence $\bar \BR: i(\S) \times \tilde \S_O \to i(\S) \times \tilde \S_O$ given by $\bar \BR(i(\s), \tilde \s_{\so}) = i(\BR(\s)) \times \bar \BR_O(i(\s), \tilde  \s_{\so})$ the index over $i(U) \times \tilde \S_O$ is still $c$.  
		
		Let $\bar U$ be the set of $\tilde \s \in  \tilde \S_{\N} \times \bar \S_O$ such that $f(\tilde \s_{\sn}, \tilde \s_{\so}) \in U$, and let $\tilde U = r^{-1}(\bar U)$, where $r$ is the retraction defined earlier in the proof. Let $\hat \BR_\N = i \circ \BR \circ f \circ r$ and $\hat \BR = \hat \BR_{\N} \times \tilde \BR_O$. The restriction of $\hat \BR_\N$ to $\tilde U \cap (i(\S) \times \tilde \S_O)$ is $\bar \BR$ and hence its index is $c$.  As $\hat \BR$ maps into $(i(\S) \times \tilde \S_O)$, the index of $\hat \BR$ over $\tilde U$ is also $c$.  As $\tilde \BR$ is linearly homotopic to $\hat \BR$ over $\tilde u$, its index, too, is $c$.   In particular, there is at least one essential component $\tilde \S^*$ whose $f$-image is contained in $\S^*$.  The same proof as above shows that $f(\tilde \S^*)$ must in fact be $\S^*$.    
	\end{proof}

\begin{proposition}\label{prop Axiom D*}
	Stability satisfies Axiom $I^*$.
\end{proposition}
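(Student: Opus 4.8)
The plan is to follow the template of the argument that stability satisfies Axiom $D^\ast$ --- the two-direction structure organised around the best-reply-graph characterisation of stable sets in Proposition \ref{prop stability} --- grafted onto GW's \cite{GW2012} analysis of embeddings, the genuinely new bookkeeping being the passage from the payoff equivalence used in those arguments to the mere preference equivalence available only on the neighbourhood $\tilde U$ of Definition \ref{def strong embedding}. Write $\tilde\BR$ and $\BR$ for the graphs of the best-reply correspondences of $\tilde G$ and $G$. First I would establish a reduction to $\tilde U$: by Condition 1 of Definition \ref{def strong embedding} the outsider component of every Nash equilibrium of $\tilde G$ lies in the face $\bar\S_O$, and by continuity of constrained equilibria of the induced outsider games under vanishing perturbations the outsider component of every $\e$-perfect equilibrium lies within $o(1)$ of $\bar\S_O$ as $\e\to 0$; hence, after shrinking $\e$, the semialgebraic sets $\tilde X_\e\subseteq\tilde\BR_\e$ of statement (2) of Proposition \ref{prop stability} may be taken to project into $\tilde U$. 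On $\tilde U$, Condition 3 together with the surjectivity Condition 2 --- and the bijections $i_n\colon S_n\to\bar S_n^\ast$ of Proposition 3.2 of GW \cite{GW2012}, on which $f_n$ does not depend on the outsiders --- identifies the part of $\tilde\BR_\e$ lying over $\tilde U$ with the perturbed best-reply graph of $G$ pulled back through $f$ on the insider coordinates, the outsider coordinates being slaved near $\bar\S_O$ exactly as in a GW embedding; and, just as in the $D^\ast$ argument, the strategies of $\tilde\S_O$ outside $\bar\S_O$ are used by no equilibrium and so create no new connected component. Thus a strong embedding reduces to a GW embedding, up to replacing payoff equalities by preference equalities, which is harmless because Proposition \ref{prop stability} refers only to $\BR$.

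For the direction ``$f$-image of a stable set of $\tilde G$ is stable in $G$'', I would take witnessing sets $\tilde X_\e$ for a stable set $\tilde\S^\ast$ of $\tilde G$, push them forward by $f$ on the insider coordinates and project out the outsiders; since $f_n(\cdot,\tilde\s_O)$ is affine, the perturbation inequalities defining $\tilde\BR_\e$ are carried to those defining $\BR_\e$, so the result is, for each small $\e$, a semialgebraic subset of $\BR_\e$ of $G$. Using the restriction theorem for essential maps (the Theorem of Mertens \cite{M1992b}) to preserve essentiality of the maps $\varphi_K$ under passage to the insider face, and then the Generic Local Triviality Theorem exactly as in the first claim of the $D^\ast$ proof, I would extract a connected sheet $\S^\e\subseteq\BR$ that is a stable set of $G$ with $f(\tilde\S^\ast)\subseteq\S^\e$; since $\bigcap_{\e}\S^\e=f(\tilde\S^\ast)$, the compactness property of stable sets yields that $f(\tilde\S^\ast)$ is itself stable in $G$.

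For the converse, ``every stable set $\S^\ast$ of $G$ is the $f$-image of a stable set of $\tilde G$'', I would take witnessing sets $X_\e\subseteq\BR_\e$ for $\S^\ast$ and pull them back through the composite bundle $\tilde\S_\N\times\bar\S_O\to\S$ built from (a) the contractible fibres $f_n(\cdot,\bar\s_O)^{-1}(\s_n)$ over the added insider strategies, treated verbatim as in Lemma \ref{lem ball bundle}, and (b) the outsider face $\bar\S_O$ fibred over $\S$, where over each insider profile the outsiders' induced game contributes a distinguished stable sheet --- this being precisely GW's embedding analysis. Pulling $X_\e$ back through this bundle and invoking the Thom Isomorphism Theorem (the Theorem of Appendix IV.3 of Mertens \cite{M1991}) transfers essentiality of $\varphi_K$ to the pulled-back sets; together with the reduction of the first paragraph --- added outsider strategies contribute no new component, outsider coordinates stay near $\bar\S_O$ --- these sets witness stability of a set $\tilde\S^\ast$ of $\tilde G$ with $f(\tilde\S^\ast)=\S^\ast$.

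The main obstacle, and the place where this proof genuinely exceeds both the $D^\ast$ argument and GW's invariance argument, is item (b) above: the outsiders arrive with their own, insider-dependent payoffs and best-reply dynamics, and one must show that these assemble, over the insider stable set, into a ball-bundle-type object to which the Thom isomorphism applies, while at the same time --- via Condition 1 and the vanishing-perturbation analysis --- creating no spurious essential sheets and no spurious connected components. Two further technical points I expect to absorb real work are the behaviour of $f_n(\cdot,\tilde\s_O)$ when $\tilde\s_O$ is merely close to, rather than lying in, $\bar\S_O$ (so that surjectivity holds only approximately), and the verification that the neighbourhood $\tilde U$ of Condition 3 contains every perturbed equilibrium that persists as $\e\to 0$, so that the preference equivalence is available wherever the argument uses it.
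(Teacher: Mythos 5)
Your high-level two-claim structure matches the paper's, but at the two places where the real work lies your sketch either asserts what must be proved or acknowledges an obstacle without resolving it, and the paper's actual mechanism is different. In the forward direction, pushing $\tilde X_\e$ forward by $f$ and ``carrying the perturbation inequalities'' is not a routine restriction argument: because $f_n(\tilde\s_n,\cdot)$ genuinely depends on the outsiders' strategies, the outsiders' trembles (which necessarily put weight off the face $\bar\S_O$) distort the $f$-image of the insiders' equilibrium strategies, so the image point is an equilibrium of a perturbed game of $G$ whose perturbation mixes the insiders' trembles with an outsider-induced term. The paper handles this by an explicit map $\varphi$ into the equilibrium graph of $G$ whose perturbation coordinate is $\eta_n=\e_n f_n(\tilde\t_n,\hat\s_O)+(1-\e_n)\sum_{\tilde s_O\notin\bar S_O}\hat\s_O(\tilde s_O)f_n(\tilde\s_n,\tilde s_O)$, and it must then restrict the relative rates of the two tremble sizes (perturbation sets with $\d_O\le\min_n\d_n$, later $\d_O^\beta\le\min_n\d_n$, justified by a Lojasiewicz inequality bounding the mass the outsiders place off $\bar\S_O$ in terms of $\d_O$), build an auxiliary map $h$ and a homotopy, and use $p$-essentiality to transfer essentiality of the projection. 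Your proposed ``reduction to a GW embedding'' would amount to pruning the outsiders' extra strategies, which is not licensed by Axiom $D^\ast$ or by any result in the paper (those strategies are neither the insiders' nor dominated, only irrelevant in the sense of P6/P7), and ``continuity of constrained equilibria'' does not by itself give the quantitative control that the Lojasiewicz step supplies.

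The deeper gap is in the converse direction. Your construction rests on the outsiders contributing, over each insider profile, ``a distinguished stable sheet'' that assembles into a ball-bundle to which the Thom isomorphism applies; you yourself flag this as the main obstacle, and it is precisely what fails in general: the induced outsider games $\bar G_O^{\tilde\s_\N}$ are arbitrary apart from Condition 1, their equilibrium components can vary discontinuously with $\tilde\s_\N$ and need not be essential, and there is no canonical continuous selection of stable sheets to fibre over $\S^*$. The paper's resolution is entirely different: it enlarges the base by the whole space $\tilde\GG_O$ of outsider payoffs and invokes a variant of the Kohlberg--Mertens structure theorem to show that the graph of outsider equilibria over $\hat P_{\e_1}\times\S\times\tilde\GG_O$ projects with degree one, hence is $p$-essential; taking the fibered product with the witness $X_\e$ for $\S^*$ (the Thom/ball-bundle argument of Lemma \ref{lem ball bundle 2} is used only for the insider coordinates of $f$) and then mapping back into the equilibrium graph of $\tilde G$ produces the essential sheet. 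Without that payoff-perturbation/structure-theorem idea, your outline has no way to certify essentiality of the outsider factor, so the converse direction does not go through as proposed.
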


Let $\tilde G$ embed $G$ strongly via the map $f$.  As with Axiom $D^*$, we prove Proposition \ref{prop Axiom D*} via two claims. The claim uses alternate characterizations of stability that obtain when we use other spaces of perturbations. In Definition \ref{def stability}, we used the space $T = \{\d \t \mid 0 \le \d \le 1, \t \in \S \, \}$ as the space of perturbations.  We could equivalently have used $[0, 1] \times \S$ as the space of perturbations, i.e.~vectors $(\d, \t) \in [0, 1] \times \S$ rather than vectors $\d\t$; or $[0, 1]^\N \times \S$, where each player $n$ has a different probability of tremble; or vectors ${(\d_n\t_n)}_{n \in \N}$ with $\d_n \in [0, 1]$ and $\t_n \in \S_n$ for each $n$. Each of these perturbations defines a perturbed game in one of two ways. Take a vector $\d\t$ in $T$, for example:  either we change the payoffs to a profile $\s$ as we did in our definition to that from $(1-\d)\s + \d\t$ or we could restrict the strategy space $\S(\d \t)$ to vectors $\s$ such that $\s_{n,s_n} \ge \d\t_{n,s_n}$ for all $n \in \N$ and $s_n \in S_n$.  The latter is what we call a perturbed equilibrium of the game $G(\d\t)$.  A definition of stability based on any of these spaces of perturbations is equivalent that based on any other in this class of perturbations---see Corollary 6 of Section 5 of \cite{M1991}.

In our case, with outsiders, we can add to the variants of the set of perturbations (using the same logic as in the above case). We could use perturbations $({(\d_{n}\tilde \t_{n})}_n, \d_{\so}\tilde \t_{\so})$ or where the outsiders have a common tremble probability and the insiders a player-specific one.  
%And, for such perturbations, we could restrict the strategy sets of the outsiders to $\tilde \S_O(\d_{\so}\tilde \t_{\so})$ while we perturb the payoffs when an an insider $n$ plays $\tilde \s_{n}$ to those from the  $(1-\d_n)\tilde \s_n +\d_n \tilde \t_n$.   
Finally, we also use the following tremble probabilities:  for a fixed $0 < \beta <1$,  trembles $\hat P_\e^\b$, which is the set of $ ({(\d_{n}\t_{n})}_n, \d_{\so} \t_{\so})$ satisfy the condition $\d_{\so}^\beta = \min_n \d_{n} \le \e$. For the last set of trembles, if there is a subset $\hat X_\e^\b$ of the graph over $\hat P_\e^\b$ that satisfies the condition of the essentiality, then there is $\tilde X_\e \supset \hat X_\e^\b$ of the graph over equilibria over the set  $\tilde P_\e$ of perturbations $({(\d_{n}\t_{n})}_n, \d_{\so} \t_{\so})$ with $\d_n \le \e$ and $\d_{\so} \le \e$ that satisfies the conditions for stability as well.

The proofs of the claims also use the concept of $p$-essentiality from Mertens \cite{M1992b}. A map $f: X \to Y$ between compact spaces is $p$-essential if for every compact pair $(Z, \partial Z)$, and map $g: Z \to Y$, letting $Q$ be the fibered product $\{(\, (x, z) \in X \times Z \mid f(x) = g(z) \, \}$ of $f$ and $g$, and  $q: (Q, \partial Q) \to (Z, \partial Z)$ the projection, with $\partial Q = q^{-1}(\partial Z)$, $\check{H}^*(q)$ is one-to-one.  An important property of $p$-essentiality is that  a composition of $p$-essential maps is $p$-essential.

In our context, typically $Y$ is a simplex of perturbations $P$, while $X$ is a subset of the graph of equilibria and $f$ the projection map. When $\proj:(X, \partial X) \to (P, \partial P)$ is essential in cohomology, then it is $p$-essential. $p$-essentiality also gives us that if $R$ is a simplex contained in $P$, then the essentiality of $\proj$ from $X$ to $P$ implies the essentiality of the restriction of the projection to the inverse image of $(R, \partial R)$ as well.

\begin{claim}
	Let $\tilde \S^*$ be a stable set of $\tilde G$. Then $\S^* \equiv f(\tilde \S^*)$ is stable in $G$.
\end{claim}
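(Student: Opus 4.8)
The plan is to transport the defining data of a stable set from $\tilde G$ to $G$ along $f$, following the template of the first claim in the proof that stability satisfies Axiom~$D^\ast$, with the three clauses of Definition~\ref{def strong embedding} now playing the role of the conditions defining the addition of strongly irrelevant dominated strategies, and with the extra outsider coordinates absorbed by retracting onto the face $\bar\S_O$. Fix data witnessing that $\tilde\S^*$ is stable in $\tilde G$: by Proposition~\ref{prop stability} there is $\e_0>0$ and, for $0<\e\le\e_0$, a closed semi-algebraic $\tilde X_\e\subseteq\tilde\BR_\e$ with connected dense interior, with $\varphi_{\e^{-1}}:(\tilde X_\e,\partial\tilde X_\e)\to(\tilde\S,\partial\tilde\S)$ essential, nested in $\e$, and with $\bigcap_\e\tilde X_\e$ equal to the diagonal of $\tilde\S^*$. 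Since $\tilde\S^*\subseteq\tilde\S_\N\times\bar\S_O$ by clause~(1) of Definition~\ref{def strong embedding}, shrinking $\e_0$ we may assume every $(\tilde\s,\tilde\t)\in\tilde X_{\e_0}$ has $\tilde\s\in\tilde U$, the neighborhood of $\tilde\S_\N\times\bar\S_O$ from clause~(3); it is convenient to use one of the equivalent perturbation spaces with separate insider and outsider trembles, so that the outsider coordinates of the data already lie arbitrarily near $\bar\S_O$.

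First I would check that $f$ carries the best-reply structure of $\tilde G$ on $\tilde U$ to that of $G$. For $\tilde\s\in\tilde U$, clauses~(2)--(3) make each affine surjection $f_n(\cdot,\tilde\s_O):\tilde\S_n\to\S_n$ an order isomorphism from the payoff function $\tilde\t_n\mapsto\tilde G_n(\tilde\t_n,\tilde\s_{-n})$ to $\s_n\mapsto G_n(\s_n,f_{-n}(\tilde\s))$; hence if $\tilde\t$ is a best reply to $\tilde\s$ in $\tilde G$ then, writing $\s=f(\tilde\s)$ and $\t_n=f_n(\tilde\t_n,\tilde\s_O)$, the profile $\t$ is a best reply to $\s$ in $G$, and multilinearity of $f$ turns $\tilde\s\ge(1-\e)\tilde\t$ into $\s\ge(1-\e)\t$, so $(\s,\t)\in\BR_\e$. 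Exactly as in the $D^\ast$ argument I would then restrict to the part $\hat X$ of $\tilde X_{\e_0}$ away from the artificial boundary, set $\a(\tilde\s,\tilde\t)=\min\{\d:(\s,\t)\in\BR_\d\}$, form the closure $W$ of $\{(\d,\s,\t):(\tilde\s,\tilde\t)\in\hat X,\ \d\ge\a(\tilde\s,\tilde\t)\}$, apply the Generic Local Triviality Theorem to $W\to[0,1]$, decompose into finitely many pieces with connected dense interiors, select the piece $W^1$ met by a thin slice $\hat X\cap\tilde X_{\e_1}$, and define $X_\d=\{(\s,\t):(\d,\s,\t)\in W^1\}$ for $\d\le\e_1$ together with $\S^\e=\{\s:(\s,\s)\in\bigcap_\d X_\d\}$. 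Taking limits of points of $\tilde X_{\d^k}$ approaching the diagonal of $\tilde\S^*$ shows $\S^*=f(\tilde\S^*)\subseteq\S^\e$, while $\S^\e\subseteq f(\proj(\tilde X_\e))$ and $\bigcap_\e f(\proj(\tilde X_\e))=\S^*$; so once each $\S^\e$ is shown to be stable in $G$, the compactness property of stable sets yields the stability of $\S^*$.

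The remaining, and main, task is to verify the conditions of Proposition~\ref{prop stability} for $(X_\d)$. Nestedness and connectedness/density follow mechanically from the properties of $W^1$ and its local-triviality fibre, as in the $D^\ast$ case, so the real content is essentiality of $\varphi_{\d^{-1}}:(X_\d,\partial X_\d)\to(\S,\partial\S)$. For this I would invoke the $p$-essentiality machinery: the restriction of $f$ to the face $\tilde\S_\N\times\bar\S_O$ admits a semi-algebraic section (fix $\tilde\s_O^{0}\in\bar\S_O$ and take sections of the surjections $f_n(\cdot,\tilde\s_O^{0})$), hence $f:(\tilde\S_\N\times\bar\S_O,\partial)\to(\S,\partial\S)$ is $p$-essential; composing it with a boundary-respecting retraction of $\tilde\S$ onto $\tilde\S_\N\times\bar\S_O$ and with the essential, hence $p$-essential, map $\varphi_{\e^{-1}}$ on $(\tilde X_\e,\partial\tilde X_\e)$ produces a $p$-essential, hence essential, map into $(\S,\partial\S)$; since its image lies, up to the homotopy implicit in the $\a$-reparametrisation, inside $(X_\d,\partial X_\d)$, and since an ambient pair containing a sub-pair with essential projection has essential projection, $\varphi_{\d^{-1}}$ on $(X_\d,\partial X_\d)$ is essential. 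I expect the genuinely delicate points to be exactly those that are fiddliest in the $D^\ast$ proof: arranging the retraction onto $\tilde\S_\N\times\bar\S_O$ to be compatible with boundaries (where the freedom to choose the perturbation space helps), confirming that $X_\d$ is a $d$-pseudomanifold with $d=\dim\S$ so that the sub-pair argument is meaningful, and checking that the composed essential map really factors through $(X_\d,\partial X_\d)$ up to homotopy of pairs.
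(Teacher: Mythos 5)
Your plan transplants the Axiom $D^\ast$ argument, but the step on which everything else rests---``if $\tilde\t$ is a best reply to $\tilde\s$ in $\tilde G$ then $\t_n=f_n(\tilde\t_n,\tilde\s_O)$ is a best reply to $\s=f(\tilde\s)$ in $G$, so $(\s,\t)\in\BR_\e$''---fails in the strong-embedding setting, and this is precisely where it differs from $D^\ast$. In Definition \ref{def saids} the order-isomorphism 2(b) is stated for strategies $\s_n,\t_n$ in the \emph{original} set $\S_n$, on which $f$ is the identity, so a $\bar G$-best reply against an interior profile is literally a $G$-best reply. In Definition \ref{def strong embedding}, by contrast, surjectivity of $f_n(\cdot,\tilde\s_O)$ is guaranteed only for $\tilde\s_O\in\bar\S_O$, whereas the stability data $\tilde X_\e$ consist of $\e$-perfect profiles in which the outsiders necessarily put positive weight off the face $\bar\S_O$. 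For such $\tilde\s_O$ the image $f_n(\tilde\S_n,\tilde\s_O)$ is in general a proper subset of $\S_n$ (in Example \ref{example invariance}, once $o_1$ trembles onto Middle, every image strategy of Strong is forced to put at least that weight on Beer), so clause (3) only makes $f_n(\tilde\t_n,\tilde\s_O)$ optimal \emph{within that image}, not in $G$. Consequently $(\s,\t)$ need not lie in $\BR$ at all, your function $\a(\tilde\s,\tilde\t)=\min\{\d:(\s,\t)\in\BR_\d\}$ is not well defined on $\hat X$, and the sets $X_\d$, the connexity transfer, and the essentiality argument built on them collapse. Retracting the outsider coordinates onto $\bar\S_O$ does not repair this: the retracted profile is no longer an ($\e$-perfect) equilibrium of $\tilde G$, and restricting the $\tilde G$-data to the lower-dimensional face of perturbations with zero outsider trembles is not licensed by $p$-essentiality.

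The paper's proof is organized around exactly this obstruction. It works with the equilibrium graph over the Mertens perturbation space (separate insider trembles $\d_n\t_n$ and an outsider tremble $\d_{\so}\t_{\so}$) and defines a map $\varphi$ into the graph of perturbed equilibria of $G$ in which the outsiders' off-face weight is \emph{converted into part of the strategy perturbation} of $G$: the induced perturbation is $\eta_n=\e_n f_n(\tilde\t_n,\hat\s_{\so})+(1-\e_n)\sum_{\tilde s_{\so}\notin\bar S_O}\hat\s_{\so}(\tilde s_{\so})f_n(\tilde\s_n,\tilde s_{\so})$, so the image profile is an equilibrium of $G(\eta)$ rather than an unperturbed best-reply pair. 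It then restricts to perturbations with $\d_{\so}\le\min_n\d_n$ so that this outsider-induced component is dominated by the insiders' own trembles, and obtains essentiality by comparing (via $p$-essentiality and a boundary-respecting homotopy, the map $\hat h$ with the cutoff function $t$) the pushed-forward projection with the projection of the constructed set $X=\varphi(\tilde Y)$. If you want to salvage your outline, you would have to replace your best-reply transfer with this perturbation-absorbing construction and add the control on the relative size of outsider versus insider trembles; without those two ingredients the argument does not go through.
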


\begin{proof}
Let $\tilde T = {[0, 1]}^{\N} \times [0, 1] \times \tilde \S_\N \times \S_{\so}$.  For each $(\tilde \e_{\scriptscriptstyle \N}, \tilde \e_{\so}, \tilde \t_{\sn}, \tilde \t_{\so})$ in $\tilde T$, there is a perturbed game where the strategy sets are the same as in $\tilde G$ but where the payoffs to a profile $(\tilde \s, \tilde \s_o)$ are the payoffs in $\tilde G$ to the profile $({((1-\tilde \e_n)\tilde \s_n + \tilde \e_n \tilde \t_n)}_{n}, (1-\tilde \e_{\so})\tilde \s_{\so} + \tilde \e_{\so} \tilde \t_{\so})$.  Let $\tilde E \subseteq \tilde T \times \tilde \S_{\sn} \times \tilde \S_{\so}$ be closure of the graph of the equilibria of perturbed games $(\tilde \e_{\sn}, \e_{\so}, \tilde \t_{\sn}, \tilde \t_{\so}) \in \tilde T \backslash \partial \tilde T$, and let $\tilde \proj: \tilde E \to \tilde T$ be the natural projection.

For each $\e \ge 0$,  let $\tilde T_{\e}$ be the set  of $(\tilde \d_{\sn}, \tilde \d_{\so}, \tilde \t_{\sn}, \tilde \t_{\so}) \in \tilde T$ such that $\d_n \le \e$ for each $n$, and $\d_{\so \le \e}$; for $\e > 0$, let $\partial \tilde T_{\tilde \e}$ be the boundary of $\tilde T_\e$. For each subset $\tilde Y$ of $\tilde E$ and $\e > 0$, let $(\tilde Y_{\e}, \partial \tilde Y_{\e}) = {\tilde \proj}^{-1}(\tilde T_{\e}, \partial \tilde T_{\e}) \cap \tilde Y$ and let $\tilde Y_0 = \proj^{-1}(\tilde T_0) \cap \tilde Y$. 

Since $\tilde \S^*$ is stable, there exists a subset $\tilde Y$ of $\tilde E$ such that: (1) $\tilde \S^*$ is the set of $(\tilde \s_{\sn}, \tilde \s_{\so})$ such that $\tilde Y_{0}$ is the set of $(0, 0, \t_{\sn}, \t_{\so}, \tilde \s_{\sn}, \tilde \s_{\so}) \in \tilde Y$ for some $(\t_{\sn}, \t_{\so})$; (2) for each neighborhood $\tilde V$ of $\tilde Y_{0}$ in $\tilde Y$, there is a component of $\tilde V \backslash \partial \tilde Y_{1}$ whose closure is a neighborhood of $\tilde Y_{0}$ in $\tilde Y$; (3) the projection from $(\tilde Y_{\e}, \partial \tilde Y_{\e})$ to $(\tilde T_{\e}, \partial \tilde T_{\e})$ is essential in \v{C}ech cohomology with coefficients in $\mathbb{Z}_p$ for some (and then all smaller) $\e > 0$.  We can assume, if necessary by replacing $\tilde Y$ with a subset $\tilde Y_\e$ for a sufficiently small $\e$ that for $(\tilde \d_{\sn}, \tilde \d_{\so}, \tilde \s_{\sn},  \tilde \s_{\so}, \tilde \s_\N, \tilde \s_O)$ in $\tilde Y$, $({((1-\d_{n})\s_n + \d_n \t_n)}_n, (1-\tilde \d_{\so}), \tilde \s_{\so} + \d_{\so}\tilde \t_{\so})$ belongs to $\tilde U$, the neighborhood of $\tilde \S_\N \times \bar \S_O$ given in Definition \ref{def strong embedding}.  

Let $P$ be the set of ${(\e_n\t_n)}_{n \in \N}$ such that $0 \le \e_n \le 1$ and $\t_n \in \S_n$ for each $n$.  Each $\eta \in P$ defines a perturbed game where the set of strategies is $\S(\eta)$.  Let $E$ be the equilibrium graph over $P$.  Define a map $\varphi: \tilde Y \to E$ as follows.  $\varphi(\e_{\sn}, \e_{\so}, \tilde \t_{\sn}, \tilde \t_{\so}, \tilde \s_{\sn}, \tilde \s_{\so}) = (\eta, \s)$ where, letting $\hat \s_{\so} = (1-\e_{\scriptscriptstyle O})\tilde \s_{\so} + \e_{\so} \tilde \t_{\so}$, for each $n$: (1) $\eta_{n} = \e_n f_n(\tilde \t_n, \hat \s_{\so}) + (1-\e_n) \sum_{\tilde s_{\scriptscriptstyle o} \notin \bar \S_{\so}} \hat \s_{\so}(\tilde s_{\so}) f_n(\tilde \s_n, \tilde s_{\so})$; (2) $\s_n = f_n((1-\e_n)\tilde \s_n + \e_n \tilde \t_n, \hat \s_{\so})$.    It is easily checked that $\varphi$ is a well-defined map into $E$.  Let $X = \varphi(\tilde Y)$. $\S^*$ is the set of $\s$ such that $(0, \s) \in X_0$.   Also, the connexity property holds for $X$.  We merely have to show that the projection from $X$ to $P$ is essential for some positive $\e$.

For  each $\e \ge 0$,  let $\tilde P_{\e}$ be the set  of $(\tilde \d_{\sn}, \tilde \d_{\so}, \tilde \t_{\sn}, \tilde \t_{\so}) \in \tilde T_1$ such that $\tilde \d_{\so} \le \min_n \tilde \d_n$ and $\max_n \d_n \le \e$; for $\e > 0$, let $\partial \tilde P_{\tilde \e}$ be the boundary of $\tilde P_\e$. Let $(\tilde X_\e, \partial \tilde X_\e) = \tilde Y_\e \cap (\tilde \proj^{-1}(\tilde P_\e, \partial \tilde P_\e))$.  As $\tilde P_\e$ is a simplex contained in $T_\e$ and the projection from $\tilde Y_\e$ to $\tilde T_\e$ is essential for all small $\e > 0$,  the projection from $(\tilde X_\e, \partial \tilde X_\e)$ to $(\tilde P_\e, \partial \tilde P_\e)$ is essential for all such $\e$.

Define a function $h: \tilde P_{\e} \to \prod_n \Re^{S_n}$ by: $h(\d_{\sn}, \d_{\so}, \tilde \t_{\sn}, \tilde \t_{\so}) = \eta$ where, letting $\hat \t_{\so}^*$ be the uniform profile of the outsiders, for each $n$, $\eta_{n} = \d_n f_n(\tilde \t_n, \hat \t_o^*) + \d_o \sum_{\tilde s_{\scriptscriptstyle o} \notin \bar \S_{\so}} \hat \t_{\so}^*(\tilde s_{\so}) f_n(\tilde \s_n, \tilde \t_{\so}^*)$. Let $\hat P_\e = h^{-1}(P_\e)$. $\hat P_\e$ is a full-dimensional subset of $\tilde P_{\e}$ for small $\e$, and the map $h: \hat P_\e \to P_\e$ is $p$-essential. Observe that for $\eta = h(\d_{\sn}, \d_{\so}, \tilde \t_{\sn}, \tilde \t_{\so})$ if $\sum_{s_n} \eta_{n, s_n} = \e$ for some $n$, then necessarily $\d_n \ge .5\e$. Construct a continuous function $t:[0, \e] \to [0, 2]$ such that $t(\d) = 1$ if $\d \le \frac{\e}{4}$; $t(\d) = \frac{\e}{\d}$ if $\d \ge \frac{\e}{2}$; and  $t$ is linear on $[\frac{\e}{4}, \frac{\e}{2}]$. Modify the map $h$ to a map $\hat h$ with $\hat h_n(\cdot) = t(\sum_{n,s_n} h_{n,s_n}(\cdot))h_n(\cdot)$.  Clearly the map $\hat h$ is also $p$-essential.

Obviously, the projection from $\hat X \equiv \tilde \proj^{-1}(\hat P_\e) \cap \tilde X_\e$ to $\hat P_\e$, and thus its composition with $\hat h$, is also $p$-essential.  Hence $\hat h \circ \tilde \proj: (\hat X, \partial \hat X) \to (P_\e, \partial P_\e)$ is essential where $\partial \hat X$ is the inverse of $\partial P_\e$ under $\hat h \circ \tilde \proj$.  When comparing $\hat h \circ \tilde \proj$ with the map $\proj \circ \varphi$ as maps from $\hat X$ to $P_1$, we see that for $\eta = \hat h \circ \tilde \proj(\d_{\sn}, \d_{\so}, \tilde \t_{\sn}, \tilde \t_{\so}) \in \partial P_\e$ : (1) if $\eta$ belongs to $\partial P_1$ (i.e.~$\eta_{n,s_n} = 0$ for some $n, s_n$, then under $\proj \circ \varphi$, the same property holds; (2) if  $\sum_{s_n} \eta_{n,s_n} = \e$, then the corresponding sum under the image of $\proj \circ \varphi$ is at least $\frac{\e}{2}$.  Therefore, by retracting $\proj \circ \varphi$ from $P_1$ to $P_\e$, we see that they are linearly homotopic as maps from $(\hat X, \partial \hat X)$ to $(P_\e, \partial P_\e)$.  Therefore, $\proj \circ \varphi$ is essential, implying that $\proj$ is an essential map.  Thus we have shown that  $\S^*$ is stable.
\end{proof}

%Before we prove the second claim, we need a lemma similar to Lemma C.2 of GW. Its proof is similar as well and is therefore omitted. We need some notation for the lemma. For $\d > 0$, let $\tilde \S_O^\d$ be the set of $\tilde \s_{\so}$ such that $\sum_{s_o \notin \bar S_O} \tilde \s_{\so, s_o} \le \d$. Let  $d = \text{dim}(\tilde \S_{\N}) - \text{dim}(\S)$ and let $B = {[0,1]}^d$ with $B$ being a singleton if $d = 0$. 

%\begin{lemma}\label{lem ball bundle 2}  
%	There exists $\d > 0$ and a function $h: \S \times B \times \tilde \S_{O}^\d \to \tilde \S$ such that: (1) $f\circ h$ is the projection onto the first factor; (2) if $\s \notin \partial \S$, then for each $\tilde \s_o \in \tilde \S_O^\d$, $h(\s, \cdot, \tilde \s_{\so})$ is injective,  and it is surjective as well if $\tilde \s_o \in \bar \S_O$. 
%\end{lemma}

%It follows from the proof of this lemma that, taking $P$ and $\tilde P_{\N}$ to be  the set of perturbations ${(\d_n\t_n)}_{n \in \N}$ and  ${(\tilde \d_n\tilde \d_n)}_{n \in \N}$, resp., we can extend $h$ to a map, denoted still $h$,  from $P \times B \times \tilde \S$ to $\tilde P$ by $h_n(\d_n\t_n, \tilde \s_{\so}) = \d_n h_n(\t_n, b_n, \tilde \s_{\so})$.   

We need the following lemma to prove the claim in the other direction.  It is similar in spirit to Lemma \ref{lem ball bundle} and, especially, Lemma C.2 of GW; therefore, its proof is omitted.    For $\d > 0$, let $\tilde \S_O^\d$ be the set of $\tilde \s_{\so} \in \tilde \S_O$ such that for each outsider the probability under $\tilde \s_{\so}$ of a pure strategy that does not belong to his factor of $\bar \S_O$ is no more than $\d$.  Let $d = \text{dim}(\tilde \S_\N) - \text{dim}(\S)$ and $(B, \partial B) = ({[0, 1]}^d, \partial {[0, 1]}^d)$, with the convention that $B$ is a singleton set and $\partial B = \emptyset$, if $d = 0$.

\begin{lemma}\label{lem ball bundle 2}
	
	There exists $\d_{\so} > 0$ and a semialgebraic function $h: \S \times B \times \tilde  \S_O^{\d_{\so}} \to \tilde \S$ such that for each $\tilde \s_{\so} \in \tilde \S_O^{\d_{\so}}$:
	\begin{enumerate}
		\item $f\circ h(\s, b,  \tilde \s_{\so}) = \s$  for all $(\s, b) \in \S \times B$;
		%\item $h(\cdot, \cdot, \s_{\so})$ is 1-1;
		%\item $h(\S, B, \tilde \s_{\so}) = \tilde \S$ if $\s_{\so} \in \bar \S_O$. 
		\item  for each $\s \in \S \backslash \partial \S$ and $\tilde \s_{\so}$ such that $\sum_{\tilde s_{so} \notin \bar S_O} \tilde \s_{\so} <  \text{min}_{n,s_n} \s_{n,s_n}$,  $h(\s, \cdot, \tilde \s_{\so})$ maps $B \backslash \partial B$ homeomorphically on to the set of $\tilde \s_{\sn} \in \tilde \S_\N \backslash \partial \tilde \S_\N$ such that $f(\tilde \s_{\sn}, \tilde \s_{\so}) = \s$.
	\end{enumerate}
\end{lemma}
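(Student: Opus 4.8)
The plan is to mimic the proof of Lemma \ref{lem ball bundle} and of Lemma C.2 of GW, producing $h$ as the inverse of an explicit semialgebraic coordinate map that is, on the region of interest, a fiberwise homeomorphism over $\S\times\tilde\S_O^{\d_O}$. First I would extract the pure-strategy structure of $f$ forced by Condition (2) of Definition \ref{def strong embedding}: since $f_n(\cdot,\tilde\s_O)$ is surjective for every $\tilde\s_O\in\bar\S_O$, Proposition 3.2 of GW (applied to the face $\bar\S_O$) supplies, for each $n$, a subset $\bar S_n^*\subseteq\tilde S_n$ and a bijection $i_n:\bar S_n^*\to S_n$ with $f_n(\bar s_n,\tilde\s_O)$ equal to the vertex $i_n(\bar s_n)$ of $\S_n$ for all $\bar s_n\in\bar S_n^*$ and all $\tilde\s_O\in\bar\S_O$. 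This gives the linear section $\bar i_n:\S_n\to\tilde\S_n$, $\bar i_n(\s_n)=\sum_{\bar s_n\in\bar S_n^*}\s_{n,i_n(\bar s_n)}e_{\bar s_n}$, with $f_n(\bar i_n(\s_n),\tilde\s_O)=\s_n$ on $\bar\S_O$. Setting $\hat S_n:=\tilde S_n\setminus\bar S_n^*$ one has $|\hat S_n|=\dim\tilde\S_n-\dim\S_n=:d_n$ and $\sum_n d_n=d$, so $B$ may be identified with $\prod_n[0,1]^{d_n}$.

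Next I would set up the coordinate map. Let $\tilde A_\N$ be the affine span of $\tilde\S_\N$ and $A$ that of $\S$, and define the semialgebraic map $G:\tilde A_\N\times\tilde\S_O\to A\times\Re^d\times\tilde\S_O$ by $G(\tilde\s_\N,\tilde\s_O)=\big(f(\tilde\s_\N,\tilde\s_O),\ (\tilde\s_{n,\hat s_n})_{n,\,\hat s_n\in\hat S_n},\ \tilde\s_O\big)$, which is multiaffine in $\tilde\s_\N$ for fixed $\tilde\s_O$. The kernel of the projection to the $\hat S_n$-coordinates is the span of $\{e_{\bar s_n}:n\in\N,\ \bar s_n\in\bar S_n^*\}$, and the restriction of $\tilde\s_\N\mapsto f(\tilde\s_\N,\tilde\s_O)$ to this subspace is, for $\tilde\s_O\in\bar\S_O$, the affine isomorphism inverse to $\prod_n\bar i_n$; hence the partial Jacobian of $G$ in the $\tilde\s_\N$-directions is invertible at every point of $\tilde A_\N\times\bar\S_O$. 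By semialgebraic continuity this persists on a neighborhood, so one can choose $\d_O>0$ so that $G$ is a local homeomorphism at each point of $\tilde\S_\N\times\tilde\S_O^{\d_O}$ whose $\hat S_n$-coordinates are small; exactly as in Lemma \ref{lem ball bundle}, a connectedness plus generic-local-triviality argument then promotes this to a homeomorphism from a suitable neighborhood of $\tilde\S_\N\times\bar\S_O$ in $\tilde\S_\N\times\tilde\S_O^{\d_O}$ onto $\S\times\big(\prod_n[0,\rho_0(\s)]^{d_n}\big)\times\tilde\S_O^{\d_O}$, where the admissible cube-radius $\rho_0(\s)$ is allowed to taper to $0$ as $\s\to\partial\S$.

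Finally I would define $h$ and check the two properties. Composing $G^{-1}$ with a linear rescaling of the $\Re^d$-factor by a semialgebraic factor proportional to $\min_{n,s_n}\s_{n,s_n}$ (times a fixed small constant) turns the image into $B=\prod_n[0,1]^{d_n}$ and produces $h:\S\times B\times\tilde\S_O^{\d_O}\to\tilde\S$. Property (1), $f\circ h(\s,b,\tilde\s_O)=\s$, is read off the first coordinate of $G$. For Property (2): when $\s\in\S\setminus\partial\S$ and $\tilde\s_O$ places total mass strictly less than $\min_{n,s_n}\s_{n,s_n}$ outside $\bar S_O$, one checks from the equations $f_n(\tilde\s_n,\tilde\s_O)=\s_n$ that the $\bar S_n^*$-coordinates of any solution are determined by its $\hat S_n$-coordinates and are strictly positive exactly on this range of masses; thus the set of $\tilde\s_\N\in\tilde\S_\N\setminus\partial\tilde\S_\N$ with $f(\tilde\s_\N,\tilde\s_O)=\s$ is the open $d$-cell parameterized by the free $\hat S_n$-coordinates, and $h(\s,\cdot,\tilde\s_O)$ restricted to $B\setminus\partial B$ is a continuous bijection onto it with semialgebraic, hence continuous, inverse.

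The main obstacle will be the boundary bookkeeping in the middle step: arranging a single $\d_O$ and a section defined on all of $\S\times B\times\tilde\S_O^{\d_O}$, including for $\s$ on $\partial\S$ (where the fibers of $f(\cdot,\tilde\s_O)$ collapse onto lower faces) and for $\tilde\s_O\notin\bar\S_O$ (where surjectivity of $f_n(\cdot,\tilde\s_O)$ is no longer a hypothesis and must be recovered from openness of full rank). Letting the cube-radius $\rho_0(\s)$ vanish semialgebraically as $\s\to\partial\S$ is precisely what keeps the solved $\bar S_n^*$-coordinates nonnegative throughout. Verifying that the resulting $h$ is globally well defined, continuous, and semialgebraic is routine but lengthy, which — as for Lemma \ref{lem ball bundle} and Lemma C.2 of GW — is why the details are omitted here.
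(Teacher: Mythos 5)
Your first two steps are the right start and match the spirit of the paper's cited analogues (Lemma \ref{lem ball bundle} and Lemma C.2 of GW): extract the bijections $i_n:\bar S_n^*\to S_n$ from Proposition 3.2 of GW on the face $\bar\S_O$, and observe that the coordinate map $G(\tilde\s_{\sn},\tilde\s_{\so})=\bigl(f(\tilde\s_{\sn},\tilde\s_{\so}),(\tilde\s_{n,\hat s_n})_{n,\hat s_n\in\hat S_n},\tilde\s_{\so}\bigr)$ has nonsingular partial Jacobian in the insider directions at every point over $\bar\S_O$, hence on a uniform neighborhood $\tilde\S_O^{\d_{\so}}$. The gap is in the last step. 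Already for $\tilde\s_{\so}\in\bar\S_O$ (allowed in property (2), since the off-face mass is zero), the fiber $\{\tilde\s_{\sn}\in\tilde\S_\N: f(\tilde\s_{\sn},\tilde\s_{\so})=\s\}$, read in the free coordinates $y=(\tilde\s_{n,\hat s_n})$ after solving out the $\bar S_n^*$-coordinates, is the polytope $\{\,y\ge 0:\ \sum_{\hat s_n}y_{\hat s_n}f_n(\hat s_n,\tilde\s_{\so})(s_n)\le\s_n(s_n)\ \text{for all } n,s_n\,\}$, which is a coordinate box only in special cases; e.g.\ two added strategies of the same player mapped to the same mixture produce a constraint $y_1+y_2\le c$, a simplex. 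So $G$ does not carry the relevant region onto a product $\S\times\prod_n[0,\rho_0(\s)]^{d_n}\times\tilde\S_O^{\d_{\so}}$, and composing $G^{-1}$ with a single scalar rescaling proportional to $\min_{n,s_n}\s_{n,s_n}$ cannot deliver property (2): your $h(\s,\cdot,\tilde\s_{\so})$ covers only a sub-box of the fiber (so ``homeomorphically \emph{onto}'' fails), or, if the scale is enlarged to cover it, leaves $\tilde\S_\N$ (so even (1) fails). The actual content of the lemma is a semialgebraic, fiber-preserving homeomorphism of the \emph{fixed} cube $B$ onto the \emph{entire} fiber polytope, uniformly in $(\s,\tilde\s_{\so})$; that requires something like a gauge/radial parameterization of the convex fibers about a semialgebraically chosen interior point (this is what the GW C.2-type construction provides), not a linear rescaling of coordinates.

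A second, related weakness is your handling of $\tilde\s_{\so}\notin\bar\S_O$. Full rank of the Jacobian gives local openness near the canonical section, but it does not ``recover'' surjectivity of $f_n(\cdot,\tilde\s_{\so})$ onto the closed simplex $\S_n$: off the face, $f_n(\bar s_n,\tilde\s_{\so})$ need no longer equal the vertex $i_n(\bar s_n)$, so the image of $\tilde\S_n$ can miss neighborhoods of vertices of $\S_n$ for every $\d_{\so}>0$, and then the fiber over such boundary $\s$ is empty and your construction cannot satisfy (1) there. This part has to be argued differently — e.g.\ by exploiting the interplay between the off-face mass of $\tilde\s_{\so}$ and the interiority of $\s$ encoded in the hypothesis of property (2), and by the way $h$ is actually invoked in Step 7 (where the outsider argument is essentially retracted to the face) — rather than by ``openness of full rank'' plus a tapering cube radius. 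As it stands, both the fiber-coverage step and the off-face surjectivity step are genuine gaps, not routine bookkeeping.
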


Let $C\S$, $C\tilde \S_\N$, and $C\tilde \S_O$ be, resp., the convex cones generated by $\S$, $\tilde \S_\N$, and $\tilde \S_O^{\d_{\so}}$. Then $f$ extends to a multilinear function from $C\tilde \S_\N \times C \tilde \S_O^{\d_o}$ to $C\S$ in the obvious way. $h$ also extends to a map from $C\S \times B \times (C\tilde \S_O \backslash 0)$ to $C\tilde \S_\N$, denoted still $h$,  using the formula: $h(\l\s, b, \tilde \mu \tilde \s_{\so}) = \tilde \mu^{-1}\l h(\s, b, \tilde \s_{\so})$ for $\l \ge 0, \mu > 0$.  $f \circ h$ is still the projection on the first factor. The map continues to be a homeomorphism between $B \backslash \partial B$ and its image for any fixed $\l, \tilde \mu > 0$, $\s \in \S$, $\tilde \s_{\so} \in \tilde \S_O$, with  $\sum_{\tilde s_{so} \notin \bar S_O} \tilde \s_{\so} <  \text{min}_{n,s_n} \s_{n,s_n}$.

\begin{claim}
	Let $\S^*$ be a stable set of $G$.  There exists a stable set $\tilde \S^*$ of $\tilde G$ such that $f(\tilde \S^*) = \S^*$.
\end{claim}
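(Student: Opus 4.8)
The plan is to mirror the Appendix~A argument that every stable set of $G$ is stable in $\bar G$, replacing the ball bundle of Lemma~\ref{lem ball bundle} with that of Lemma~\ref{lem ball bundle 2}. Start from a stable set $\S^*$ of $G$ and, by Statement~3 of Proposition~\ref{prop stability}, pick a nested sequence $\S^i$ of closures of $\e_i$-perfect equilibria of $G$, with $\S^i\backslash\partial\S$ connected and dense in $\S^i$, converging to $\S^*$, and with $\varphi_{K_i}$ essential on $\proj^{-1}(\S^i)$ for suitable $K_i\ge\e_i^{-1}$. Fix $\d_{\so}>0$ and the semialgebraic lift $h\colon\S\times B\times\tilde\S_O^{\d_{\so}}\to\tilde\S$ of Lemma~\ref{lem ball bundle 2}, with $(B,\partial B)=({[0,1]}^d,\partial{[0,1]}^d)$ and $d=\dim(\tilde\S_\N)-\dim(\S)$, together with its extension to the convex cones.

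The candidate lifted set is $\tilde\S^*=\{\,(\tilde\s_\N,\tilde\s_O)\mid\tilde\s_O\in\bar\S_O,\ \tilde\s_O\text{ an equilibrium of }\bar G_O^{\tilde\s_\N},\ f(\tilde\s_\N,\tilde\s_O)\in\S^*\,\}$. Condition~1 of Definition~\ref{def strong embedding} places each outsider equilibrium in the face $\bar\S_O$, so that $(\tilde\s_\N,\tilde\s_O)$ lies in $\tilde\S_\N\times\bar\S_O\subseteq\tilde U$, and then Condition~3 makes insider best replies in $\tilde G$ correspond to those in $G$; hence $\tilde\S^*$ consists of Nash equilibria of $\tilde G$. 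Condition~2 (surjectivity of $f_n(\cdot,\tilde\s_O)$ for $\tilde\s_O\in\bar\S_O$) and part~(2) of Lemma~\ref{lem ball bundle 2} give $f(\tilde\S^*)=\S^*$ and show that $f$ restricted to $\tilde\S^*$ has open $d$-cells as fibres over $\S^*\backslash\partial\S$, so $\tilde\S^*$ is connected. For each large $i$ I would then take $\bar\S^i$ to be the closure of the component of $f^{-1}(\S^i\backslash\partial\S)\cap(\tilde\S_\N\times\tilde\S_O^{\d_{\so}})$ (with the outsider part running over completely mixed equilibria of the induced perturbed outsider games) that contains $\tilde\S^*$ in its closure; after shrinking $\e_i$ so that the relevant profiles all lie in $\tilde U$, one checks that $\bar\S^i\backslash\partial\tilde\S$ is a connected, dense set of $2\e_i$-perfect equilibria of $\tilde G$, that the $\bar\S^i$ are nested, and that they converge to $\tilde\S^*$.

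It remains to verify Condition~3(b) of Proposition~\ref{prop stability}. The sets $\hat X^i=\{(\tilde\s_\N,\tilde\s_O,\t):(f(\tilde\s_\N,\tilde\s_O),\t)\in\proj^{-1}(\S^i)\}$ (restricted to $\tilde\S_O^{\d_{\so}}$ and to trembles small enough to remain in $\tilde U$), paired with the natural boundary, form a $d$-ball bundle over $\proj^{-1}(\S^i)$ by Lemma~\ref{lem ball bundle 2}. Since $\varphi_{K_i}$ is essential downstairs it is $p$-essential, and so is its composition with the bundle projection; the Thom Isomorphism Theorem (the Theorem of Appendix~IV.3 of~\cite{M1991}) upgrades this to essentiality of the map $\bar\varphi_{K}\colon(\hat X^i,\partial\hat X^i)\to(\bar A,\partial\bar A)$, where $\bar A$ is the affine hull of $\tilde\S$, after a retraction through $\bar\S$ and a linear homotopy exactly as in the corresponding $D^*$ argument. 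Proposition~\ref{prop stability} then certifies $\tilde\S^*$ as a stable set of $\tilde G$, and we already have $f(\tilde\S^*)=\S^*$.

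The main obstacle is the treatment of the outsiders: one must arrange that the outsider coordinates of the lifted perturbed equilibria sit in $\tilde\S_O^{\d_{\so}}$ and near the face $\bar\S_O$ for small trembles—this is what makes part~(2) of Lemma~\ref{lem ball bundle 2} applicable and keeps the fibre dimension equal to $d$ rather than $d+\dim\tilde\S_O$—and that the whole lifted profile stays in the neighborhood $\tilde U$ of Definition~\ref{def strong embedding} so that the best-reply translation (Condition~3) holds. This needs a preliminary lemma that perturbed equilibria of $\tilde G$ have outsider parts converging to $\bar\S_O$ as the trembles vanish, together with a careful coupling of the insider and outsider tremble magnitudes (in the style of the perturbation sets $\hat P_\e^\beta$ and $\tilde P_\e$ used in the first claim), so that the outsider perturbations are subordinate to the insider ones.
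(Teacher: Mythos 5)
There is a genuine gap, and it is exactly at the point you flag as ``the main obstacle'' but then leave to an unspecified preliminary lemma: the outsiders' equilibrium correspondence. Your lifted sets $\hat X^i=\{(\tilde\s_\N,\tilde\s_O,\t):(f(\tilde\s_\N,\tilde\s_O),\t)\in\proj^{-1}(\S^i)\}$ are constrained only through the insiders, so they are a $d$-ball bundle over $\proj^{-1}(\S^i)$ by Lemma \ref{lem ball bundle 2}, but they are not subsets of the ($\e$-perturbed) equilibrium graph of $\tilde G$: membership in that graph additionally requires $\tilde\s_O$ to be an ($\e$-)equilibrium of the induced outsider game $\bar G_O^{\tilde\s_\N}$ (perturbed), and for fixed $(\tilde\s_\N,\eta)$ that solution set is an arbitrary compact semialgebraic set, not a cell varying as a bundle. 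Consequently the Thom isomorphism applied to the insider ball bundle cannot deliver Condition 3(b) of Proposition \ref{prop stability} for $\tilde G$: the target there is the affine hull of $\tilde\S=\tilde\S_\N\times\tilde\S_O$ (equivalently, the full perturbation space of $\tilde G$, which includes outsider trembles), whose outsider directions your total space simply does not cover---the fibre you add has dimension $d=\dim\tilde\S_\N-\dim\S$, with nothing supplying the $\dim\tilde\S_O$ directions. Closeness of the outsider coordinates to $\bar\S_O$ (your proposed preliminary lemma, which in the paper is the Lojasiewicz inequality of Step 2 together with the coupling $\d_{O}^{\b}\le\d_n$ of Step 3) controls \emph{where} the outsider equilibria sit, but it does not make the projection essential through the outsider equilibrium conditions.

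The paper's proof supplies precisely this missing ingredient: in Steps 7--8 it considers the graph of equilibria of the perturbed outsider games over the space $\tilde\GG_O$ of \emph{all} outsider payoffs (jointly with the ball-bundle lift of the insiders), shows via a variant of the Kohlberg--Mertens structure theorem that this graph is homeomorphic to the base and that the projection has degree one, hence is $p$-essential, and then takes the fibered product with the actual payoff function of $\tilde G$ to transfer essentiality from the graph over $G$ to a set inside the equilibrium graph of $\tilde G$. Only after that does it extract the stable set: $\tilde\S^*$ is obtained as the zero-fibre of a connected piece of a Mertens-style decomposition (Lemma 2 of \cite{M1992a}) of the graph over $\tilde P_\e$, rather than being defined outright as the full preimage $\{(\tilde\s_\N,\tilde\s_O):\tilde\s_O\in\bar\S_O \text{ equilibrium of } \bar G_O^{\tilde\s_\N},\, f(\tilde\s)\in\S^*\}$---a set whose connectedness your sketch asserts but does not justify, since the outsider-equilibrium fibres can be disconnected and vary discontinuously with $\tilde\s_\N$. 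So the reduction to ``Axiom $D^*$ argument plus Lemma \ref{lem ball bundle 2} plus Thom'' does not go through; the degree-one/structure-theorem step over $\tilde\GG_O$, combined with the Lojasiewicz-based subordination of outsider to insider trembles and a final homotopy comparing the constructed projection with the true one, is the essential content you would still need to produce.
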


\begin{proof}
	It is sufficient to prove the claim under the additional assumption that $\S^*$ is semi-algebraic, as the collection of stable sets is compact in Hausdorff topology and the subcollection of semi-algebraic stable sets is dense in it.  The proof proceeds in steps.
	
	\smallskip
	
	\noindent{\bf Step 1.} For each $0 < \e \le 1$, let $\tilde P_\e$ be the set of $({(\tilde \d_n \tilde \t_n)}_n, \tilde \d_{\so}\tilde \t_{\so})$ such that   $0 \le \tilde \d_n \le \e$ for all $n$,  $\tilde \d_{\so}  \le \e$, $\tilde \t_{\sn} \in \tilde \S_{\sn}$, and $\tilde \t_{\so} \in \tilde \S_{\so}$. Let $\partial \tilde P_\e$ be its topological boundary.  Let $\tilde E$ be the  closure of the set of $(\tilde \eta_{\sn}, \tilde \eta_{\so}, \tilde \s_{\sn}, \tilde \s_{\so}) \in (\tilde P_1 \backslash \tilde \partial P_1) \times \tilde \S_{\sn} \times \tilde \S_{\so}$ such that $(\tilde \s_{\sn}, \tilde \s_{\so})$ is an equilibrium of the game where the strategy sets are  $\tilde \S_{\N}(\eta_{\N}) \times \tilde \S_O(\eta_{\so})$.  Let $\tilde \proj: \tilde E \to \tilde P_1$ be the projection map.
	
	Let $\tilde E_0$ be the set of points in $\tilde E$ of the form $(0,0, \tilde \s, \tilde \s_o)$. Let $\tilde E_o^*$ be the set of points of the form $(0, 0, \tilde \s, \tilde \s_{\so})$ and with $f(\tilde \s) \in \S^*$. We can take a semi-algebraic triangulation of $\tilde E$ such that $\tilde E_0^*$ is a full subcomplex.  Let $\tilde X^*$ be the union of the simplices of this triangulation that have a face in $\tilde E_0^*$. $\tilde X^*$ is semi-algebraic.  By Lemma 2 of \cite{M1992a}, there exists $\tilde \e_o$ and a finite number of subsets $\tilde X^k$ of $\tilde X^*$ such that: (1) $\cup_k \tilde X^k = \tilde X^*$; (2) for all $0 < \e \le \tilde \e_o$ and $k$, $\tilde X_\e^k \backslash \partial \tilde X_\e^k$ is connected and dense in $\tilde X_\e^k$, and $\tilde X_\e^k \cap \tilde X_\e^l \subseteq \partial E_1$ for $l \neq k$. To prove our theorem, it is sufficient  to show that for some $k$: (1) $\S^*$ is the set of profiles $f(\tilde \s_{\sn}, \tilde \s_{\so})$ such that   $(0,0, \tilde \s_{\sn}, \tilde \s_{\so}) \in X^k$; (2) $\tilde X^k$ satisfies the essentiality property.

	\smallskip

	\noindent{\bf Step 2.} Define two functions $\g_1, \g_2: \tilde X^* \to [0,1]$ as follows.  $\g_1(\eta_{\sn}, \d_{\so}\t_{\so}, \tilde \s_{\sn}, \tilde \s_{\so}) = \d_o$;   $\g_2(\eta_{\N}, \d_{\so}\tilde \t_{\so}, \tilde \s_{\sn}, \tilde \s_{\so}) = \sum_{s_o \notin \bar S_o} \tilde \s_{\so}(s_{\so})$ is the probability that a profile not in $\bar S_O$ is chosen under $\tilde \s_o$ . Then $\g_1^{-1}(0) \subseteq \g_2^{-1}(0)$, as for each $\tilde \s_{\sn}$, the equilibria of the game $\tilde G_{\so}^{\tilde \s_{\sn}}$ is contained in $\bar \S_O$. By the Lojasiewicz Inequality---cf. Corollary 2.6.7 of \cite{BCR1998}---there exist $0 < \tilde \e_1 \le \tilde \e_0$, and $0 < \a < 1$ such that $\g_2 \le \g_1^{\a}$ on $\tilde X_{\tilde \e_1}^*$.

	\smallskip
	
	%For each $\tilde \eta_o$, and $\s$ 
	%For each $0 < \e \le 1$, let $\tilde P_\e$ be the set of $({(\tilde \d_n \tilde \t_n)}_n, \tilde \d_{\so}\tilde \t_{\so})$ such that   $0 \le \tilde \d_n \le \e$ for all $n$,  $\tilde \d_o  \le \e$, $\tilde \t_{\sn} \in \tilde \S_{\sn}$, and $\tilde \t_{\so} \in \tilde \S_{\so}$. Let $\partial \tilde P_\e$ be its topological boundary.  Let $\tilde E$ be the  closure of the set of $(\tilde \eta_{\sn}, \tilde \eta_o, \tilde \s_{\sn}, \tilde \s_o) \in (\tilde P_1 \backslash \tilde \partial P_1) \times \tilde \S_{\sn} \times \tilde \S_{\so}$ such that $(\tilde \s_{\sn}, \tilde \s_{\so})$ is an equilibrium of the game with the strategy sets restricted by the tremble vector $(\eta_{\sn}, \eta_{\so})$.  
	
	\noindent {\bf Step 3.}
	Fix now $0 < \beta < \a$ where $\a$ is given by Step 2. For each $0 < \e  \le \tilde \e_1$, where $\tilde \e_1$ is as in Step 2, let $\tilde P_\e^\beta$ be the set of $({(\d_n\tilde \t_n)}_n, \d_o\tilde \t_o) \in \tilde P_\e$ such that for all $n$,  $\d_o^\b \le \d_n$, and let $\partial \tilde P_\e^\b$ be its boundary.  Let $(\tilde X_\e^{k, \b}, \partial \tilde X_\e^{k, \b}) = \tilde X_\e^k \cap \tilde \proj^{-1}(\tilde P_\e^\b, \partial \tilde P_\e^\b)$ be the corresponding subset of $\tilde X_\e^k$ for perturbations in $\tilde P_\e^\b$. We prove in this step  that for each $k$, the essentiality of the projection from $\tilde X_\e^k$ to $\tilde P_\e$ is equivalent to that of the projection from $\tilde X_\e^{k, \b}$ to $\tilde  P_{\e, \b}$. 
	
	For a proof in one direction, observe that the essentiality of the projection from $\tilde X_\e^k$ implies that  it  is $p$-essential,  and thus the projection from $(\tilde X_\e^{k, \b}, \partial \tilde X_\e^{k, \b})$ to $(\tilde P_{\e}^\b, \partial \tilde P_{\e}^\b)$ is essential. Going the other way, the essentiality of the projection from $\tilde X_\e^{k, \b}$ implies its $p$-essentiality and thus the essentiality of the projection from $\hat X_\e^{k, \b}$ to $\hat P_\e^{k, \b}$, where $\hat P_\e^{k, \b}$ is the set of $({(\d_n\tilde \t_n)}_n, \d_o\tilde \t_o)$ such that $\d_o = \min_n \d_n^\beta$, and $\hat X_\e^k$ is the inverse image in $\tilde X_\e^{k, \b}$ of $\hat P_\e^\b$ under the projection map. The set $\hat P_\e^{k, \b}$ is another set of perturbations for stability, and essentiality with this set of perturbations is equivalent to that for the set $\tilde P_\e^{k, \b}$ of perturbations.  Thus, the essentiality of the projection from $\tilde X_\e^k$ to $\tilde P_\e$ is equivalent to that from $\tilde X_\e^{k, \b}$ to $\tilde P_\e^\b$. 
	
	\smallskip
	
	\noindent {\bf Step 4.}
	For each positive integer $L$ and $\e > 0$, let $\tilde P_{\e, L}^\b$ be the set of $(\d\tilde \t, \tilde \d_o \tilde \t_o) \in \tilde P_\e^\b$ such that $\tilde \t_{n,s_n} \ge L^{-1}$ for each $n$ and $s_n$ and let $\tilde X_{\e, L}^{k, \b}$ be the corresponding subset of $\tilde X_\e^{k, \b}$.  If $L > |\tilde S_n|$, then $\tilde P_{\e, L}^\b$ is a nonempty, full-dimensional subset of $\tilde P_\e^\b$. For such large $L < L'$, the inclusion $(\tilde P_\e^\b, \tilde P_\e^\b \backslash (\tilde P_{\e,L'}^\b \backslash \partial \tilde P_{\e, L'}^\b)) \subset (\tilde P_\e^\b, \tilde P_\e^\b \backslash (\tilde P_{\e,L}^\b \backslash \partial \tilde P_{\e, L}^\b))$ induces an isomorphism of their cohomology groups, and $(\tilde P_\e^\b, \partial \tilde P_\e^\b) = \cap_L (\tilde P_\e^\b, \tilde P_\e^\b \backslash (\tilde P_{\e,L}^\b \backslash \partial \tilde P_{\e, L}^\b))$.  We have a similar situation with the sets $\tilde X_{\e, L}^{k, \b}$.  It is then clear that essentiality of the projection for $(\tilde X_{\e}^{k, \b}, \partial \tilde X_{\e}^{k, \b})$ to $(\tilde P_\e^\b, \partial \tilde P_\e^\b)$  is equivalent to the essentiality of the projection from $(\tilde X_{\e, L}^{k, \b}, \partial \tilde X_{\e, L}^{k, \b})$ to $(\tilde P_{\e, L}^\b, \partial \tilde P_{\e, L}^\b)$  for large $L$. For the rest of the proof, fix a large enough $L$ so that this property holds.

	\smallskip

	\noindent{\bf Step 5.} We use $P$ to denote the space of perturbations $\eta \in \prod_n \Re^{S_n}$ and consider perturbed games to obtained by restricting strategy sets to be  $\S(\eta)$  for each perturbation vector $\eta$.  $E$ is the  graph of perturbed equilibria over $P$. There exists  a semi-algebraic set $X \subseteq E$ and $0 < \e_o < 1$ such that: (1) $X_0  = \{\, 0 \, \} \times \S^*$; (2) for each $0 < \e  \le \e_o$, $X_\e \backslash \partial X_\e$ is connected and dense in $X_\e$; (3) the projection map $\proj: (X_\e, \partial X_\e) \to (P_\e, \partial X_\e)$ is essential for all $0 < \e \le \e_0$.

	\smallskip

	\noindent {\bf Step 6.}
	Fix $0 < \e_1 \le \e_0, \tilde \e_1, \d_{\so}$, where $\e_o$ is from Step 5, $\tilde \e_1$ is from Step 2, and $\d_{\so}$ is as in Lemma \ref{lem ball bundle 2}. %such that $\e_1 + \tilde \e_1 {(\frac{\e_1}{L})} < \tilde \e_1$ and $L^{-1}\d(1-\d) \ge \d^{\a_0\a_1}$ for all $0 < \d \le \e_1$.  
	Then the projection from $X_{\e_1}$ to $P_{\e_1}$ is essential.   Let $\tilde P_{\so, \e}$ be the set of $\d_{\so}\t_{\so}$ such that $\d_{\so} \le \e_1$ Let $\hat P_{\e_1}$ be the set of $(\eta_{\sn}, b, \tilde b, \d_o \tilde \t_o) \in P_{\e_1} \times B \times B \times \tilde P_{o,\e}$ such that  $\d_o^{\a} \le \min_n \eta_{n,s_n}$  for all $n, s_n$, where $B$ is as in Lemma \ref{lem ball bundle 2}.  Let $\partial \hat P_{\e_1}$ be the  set of points $(\eta_{\sn}, b, \tilde b, \d_o\tilde \t_o) \in \hat P_{\e_1}$ such that: (1) $(\eta_{\sn}, b) \in \partial (P_{\e_1, L} \times B)$; or (2) $\d_o\tilde \t_o \in \partial \tilde P_{o, \e_1}$; or (3) $\d_o^{\a} = \eta_{n,s_n}$ for some $n, s_n$. Let $\hat X_{\e_1}$ be the set of $((\eta_\N, b,  \tilde b, \d_o\tilde \t_o), \s) \in \hat P_\e \times \S$ such that $(\eta_\N, \s) \in X_\e$ and let $\hat \proj$ be the projection map from $\hat X_{\e_1}$ to $\hat P_{\e_1}$.  Let $\partial \hat X_{\e_1}$ be the inverse image in $\hat X_{\e_1}$ of $\partial \hat P_{\e_1}$ under $\hat \proj$.  The essentiality of $\proj$ from $(X_{\e_1}, \partial X_{\e_1})$ to $(P_{\e_1}, \partial P_{\e_1})$ implies that of $\hat \proj$ from $(\hat X_{\e_1}, \partial \hat X_{\e_1})$ to $(\hat P_{\e_1}, \partial \hat P_{\e_1})$.
	
	\smallskip
	
	\noindent {\bf Step 7.}
	Let $\tilde \GG_{\so}$ be the space of all the payoffs for the outsiders over the strategy space $\tilde S_{\sn} \times \tilde S_{\so}$.  For each $\hat G \in \tilde \GG_{\so}$, and $\tilde \s_{\sn} \in \tilde \S_{\sn}$,  we have a well-defined game $\hat G^{\tilde \s_{\sn}}$. Given $\hat G$, $\tilde \s_{\sn} \in \tilde \S_{\N}$, and $(\eta_{\sn}, b, \tilde b, \tilde \eta_{\so}) \in \hat P_{\e_1}$, we define a perturbed game $\hat G^{\tilde \s_{\sn}}(\eta_o)$ as the game where the strategy space is the set $\tilde \S_O(\eta_{\sn}, \eta_{\so})$ of profiles in  $\tilde \S_O(\eta_{\so})$ such that for each outsider $o_i$ and $\tilde s_{o_i}$ that is not in $\bar S_{o_i}$, $\tilde \s_{o_i, \tilde s_{o_i}} \le {|\bar S_O|}^{-1}\eta_{n,s_n}^{\a^{-1}}$ for all $n, s_n$. 
	
	Given $(\eta_{\sn}, b, \tilde b, \eta_{\so})$ and $(\s, \tilde \s_o)$ with $\s_{\sn} \ge \eta_{\sn}$ and  $\tilde \s_o \in \tilde \S_O(\eta_{\sn}, \eta_{\so})$, define $\hat \s_{\N}(\eta_{\sn}, \tilde b, \s, \tilde \s_{\so}) = h(\s_{\sn} - \eta_{\sn}, \tilde b, \hat \s_{\so})$ where $\hat \s_{\so}$ is the projection of $\tilde \s_{\so}$ to the coordinates of strategies in $\bar S$; and define $\tilde \eta(\eta_{\sn}, b, \tilde b, \s, \tilde \s_{\so})$ to be $h(\eta - \sum_{\tilde s_o \notin \bar S_{\so}} \tilde \s_{\tilde s_{\so}}f(\tilde s_{\so}, \hat \s_{so}), b, \tilde \s_{\so})$.
	
	Let $\E$ be the set of $((\eta_{\sn}, b, \tilde b, \d_o \tilde \t_o), \s,  \hat G_O, \tilde \s_{\sn}, \tilde \s_{\so}) \in \hat P_{\e_1} \times \S  \times \tilde \GG_O \times \tilde \S_{\N} \times \tilde \S_O$ such that: (1) $\tilde \s_{\so}$ is an equilibrium, of $\hat G^{\tilde \s_{\sn}}(\eta_o)$; (2) $\tilde \s_{\sn} = \hat \s_{\sn}(\eta_{\sn}, \tilde b, \s, \tilde \s_{\so}) + \tilde \eta_{\sn}(\eta_{\sn}, b, \tilde b, \s, \tilde \s_{\so})$. A variation of the KM structure theorem---see the Proposition in KM---shows that $\E$ is homeomorphic to $\hat P_{\e_1} \times \S \times \tilde \GG_{\so}$ and that the projection map $\tilde g$ has degree one; thus $\tilde g$ is $p$-essential. %Thus the restriction of the projection map to the inverse image of $\S \times B \times \tilde P_o \times \{\, \tilde G \, \}$ is $p$-essential.
	
	\smallskip
	
	\noindent{\bf Step 8.} 
	Let $g: \hat X_{\e_1} \to \hat P_{\e_1} \times \S \times \GG_{\so}$ be given by $(\eta_{\N}, b, \tilde b, \tilde \eta_{\so}, \s) = (\eta_{\sn}, b, \tilde b, \tilde \eta_{\so}, \s, \tilde G)$. Let  $\hat Y_{\e_1}$ be the fibered product of $\tilde g$ and $g$. Since $\tilde g$ is $p$-essential, the projection $\proj^g$ from $(\hat Y_{\e_1}, \partial \hat Y_{\e_1})$ to $(\hat X_{\e_1}, \partial \hat X_{\e_1})$ is essential, where $\partial \hat Y_{\e_1}$ is the inverse image of $\partial X_{\e_1}$ under this projection.  (In particular this projection is onto.) Therefore, for all $0 < \e \le \e_1$, the projection $\proj^g$ from $(\hat Y_{\e}, \partial \hat Y_{\e})$ to $(\hat X_{\e}, \partial \hat X_{\e})$, and then $\hat \proj \circ \proj^g$ to $(\hat P_\e, \partial \hat P_\e)$, is essential as well.
	
	Again applying Lemma 2 of \cite{M1992a}, there exists $0 < \e \le \e_1$ and a subset $\hat Z_\e$ of $\hat Y_\e$  such that for each $0 < \d \le \e$, the projection from $\hat Z_\d$ to $\hat X_\d$ is essential and $\hat Z_\e$ satisfies the connexity condition, i.e., for all $\d \le \e$, $\hat Z_\d \backslash \partial Z_\d$ is connected and dense in $\hat Z_\d$.   
	
	\smallskip
	
	\noindent {\bf Step 9.}
	%The map $f$ extends uniquely to a multilinear map on $\Re_+^{\tilde S_n} \times \Re_{+}^{\tilde S_O}$, which is still denoted $f$. Likewise, 
	%The restriction of $h$ to $\S \times B \times \bar \S_O$ is such that for each $\tilde \s_{\so} \in \bar \S_O$, $h$ maps $\S \times B$ homeomorphically onto $\tilde \S_{\N}$, by Lemma \ref{lem ball bundle 2}. We can extend this restriction  to a map over $\Re_+^{S} \times B \times \Re_+^{\bar S_O}$ so that we have the formula: 
	%\[
	%h_n(\eta, b_n, \tilde \eta_{\so}) = \sum_{s_n}\eta_{n,s_n}{(\sum_{\tilde s_{\so}}\eta_{\tilde s_{\so}})}^{-1}h_n({(\sum_{s_n} \eta_{s_n})}^{-1} \eta_n, b_n, {(\sum_{\tilde s_{\so}}\eta_{\tilde s_{\so}})}^{-1}\tilde \eta_{\so}),
	%\]
	%when both $\sum_{s_n} \eta_{s_n}$ and $\sum_{\tilde s_{\so}} \eta_{\tilde s_{\so}}$ are positive, by scaling, and this extension is also denoted $h$. For any fixed $0 \neq \tilde \eta_O  \in \Re_+^{\bar S_O}$, $h$ is a homeomorphism between pairs $(\eta, b)$ and their images.  
	Now, define a map $H: \hat Z_\e$ to $\tilde E_{\e}$ by: $H((\eta_{\N},  b, \tilde b, \tilde \eta_o, \s, \tilde \s_{\N}, \tilde \s_o)) = (\tilde \eta_{\sn}, \tilde \eta_o, \tilde \s_{\sn}, \tilde \s_o)$ with $\tilde \eta_{\sn} = \tilde \eta_{\sn}(\eta_{\sn}, b, \tilde b, \s, \tilde \s_{\so})$.
 %	\[
%	\tilde \eta_{\sn} = h((\eta   -  \sum_{\tilde s_o \notin \bar S_O}\tilde \s_{\tilde s_o}f(\tilde \s_{\sn}, \tilde s_{\tilde s_o})), b, \hat \eta_{\so}^1),
%	\]%
%	where $\hat \eta_{\tilde s_{\so}} = \tilde \s_{\tilde s_{\so}}$ if $\tilde s_{\so} \in \bar S_O$ and zero otherwise.
%	For $(\eta_{\sn},  b, \d_{\so}\tilde \t_{\so}) \in \hat P_\e$,  $\d_{\so}^{\a} \le  \eta_{n,s_n}$, and if $\tilde \s_{\so}$ is an equilibrium of $\tilde G^{\tilde \s_{\sn}}(\d_{\so}\tilde \t_{\so})$ then we also  $\sum_{\tilde s_{\so} \notin \bar S_{\so}} \tilde \s_{s_{\so}} \le \d_{\so}^{\a}$. Thus, in the above equation, $\eta   -  \sum_{\tilde s_o \notin \bar S_O}\tilde \s_{\tilde s_o}f(\tilde \s_{\sn}, \tilde s_{\tilde s_o}) \ge 0$ and hence $\tilde \eta_{\sn} \ge 0$.  Obviously, $\sum_{\tilde s_n} \tilde \eta_{\tilde s_n} = {(1-\sum_{\tilde s_o \notin \bar S_O} \tilde \s_{\tilde \s_o})}^{-1}(\sum_{n, s_n} \eta_{n,s_n} - \sum_{\tilde \s_o \notin \bar S_O} \tilde \s_{\tilde s_{\so}}) \le  \e$.  Thus, $(\tilde \s_{\sn}, \tilde \s_{\so}) \in \tilde P_{\e}$ and $(\tilde \eta_{\sn}, \tilde \s_{\so}, \tilde \s_{\sn}, \tilde \s_{\so}) \in \tilde E_{\e}$, giving us that $H$ is a well-defined function. 
Let $\hat E_\e = H(\hat Z_\e)$. $H$ maps $(\hat Z_\e \backslash \partial \hat Z_\e)$ homeomorphically onto its image.   As $\proj^g$ from $\hat Z_\e$ to $\hat X_\e$ is onto, we also have that $\S^*$ is the set of $f(\tilde \s_{\sn}, \tilde \s_o)$ such that $(0, 0, \tilde \s_{\sn}, \tilde \s_o) \in \hat E_\e$. Thus $\hat  E_\d \equiv H(\hat Z_\d)$ is contained in $\tilde X^*$ for all small $\d$. By the connectedness property of $\hat E_\d$,  there exists a unique $k$, say $k = 1$, such that $\tilde X^k$ contains $\hat E_\d$ for all small $\d > 0$. Obviously, $\S^* = f(\tilde \S^*)$, where $\tilde X_0^1 = \{\, (0, 0) \, \} \times \tilde \S^*$. There remains to show that $\tilde X^1$ satisfies the essentiality condition for stability.  	
	
	\smallskip
	
	\noindent{\bf Step 10.}
	Choose $\d_1 > 0$ small such that  $\hat E_{\d_1}$ is contained in $\tilde X_{\tilde \e_1}^1$.  Choose also $0 < \d < \d_1$ such that  and $\d^\a \le (1-\d^\a)\d^\b L$ and $\d^{\frac{\a}{\b}} \le L^{-1}\d$). We conclude the proof by showing that the projection from $\tilde X_{\d, L}^{1, \b}$ to $\tilde P_{\d, L}^\b$ is essential.  Define $\psi: \hat P_{\d_1} \to \tilde P_{\d_1}$ by $\psi(\eta_{\sn},  b, \tilde b, \tilde \eta_o) = (\tilde \eta_{\sn}, \tilde \eta_o)$ where $\tilde \eta_{\sn}  = \tilde \eta(\eta_{\sn}, b, \tilde b, \s^*, \tilde \s_{\so}^*)$, with 
	$\s^*$ being the profile of uniform distributions  and $\tilde \s_{\so}^*$ a strategy profile with support  $\bar S_O$. %$\tilde \s_{\sn}^*$ the uniform mixed strategy with support $\tilde S$; and $\hat \eta_{\tilde s_{\so}}^* = {|\bar S_{\so}|}^{-1}(1- \sum_{\tilde s_{\so} \notin \bar \tilde S_{\so}} \tilde \eta_{\tilde s_{\so}})$ if $\tilde s_{\so} \in \bar S_{\so}$ and zero otherwise. 
	Let $(Q_{\d_1}, \partial Q_{\d_1}) = \psi(\hat P_{\d_1}, \partial \hat P_{\d_1})$. $\psi$ induces  a homeomorphism between $\hat P_{\d_1} \backslash \partial \hat P_{\d_1}$ and its image $Q_{\d_1} \backslash \partial Q_{\d_1}$.  Then, $\psi \circ \proj^g \circ H^{-1}$ is an essential map from $(\hat E_{\d_1}, \partial \hat E_{\d_1})$ to $(Q_{\d_1},  \partial Q_{\d_1})$. 
	
	We now have that  $\tilde P_{\d,L}^\beta$ is contained in $Q_{\d_1}$ by our choice of $\d$. Since $Q_\d$ and $\tilde P_{\d, L}^\beta$ are both simplices, the inclusion $(Q_{\d_1},  \partial Q_{\d_1}) \subseteq (Q_{\d_1}, Q_{\d_1} \backslash (P_{\d, L}^\b \backslash \partial P_{\d, L}^\b))$ induces an isomorphism of the cohomology groups.  Thus, $\psi \circ \proj^g \circ H^{-1}$ is essential as a map from $(\hat E_\d, \partial \hat E_\d) \to (Q_{\d_1}, Q_{\d_1} \backslash (\tilde P_{\d,L}^\b \backslash \partial P_{\d,L}^\b))$. %where $\partial \hat E_\d^\b$ is the inverse image of $Q_\d \backslash (\tilde P_{\d,L}^\b \backslash \partial \tilde P_{\d,L}^\b)$ under $\psi \circ \proj^g \circ H^{-1}$.
	
	For $t \in [0,1$], and $(\eta_{\sn}, \eta_{\so}, \tilde \s_{sn}, \tilde \s_{so}) \in \hat E_{\d_1}$, define $F^t(\eta_{\sn}, \eta_{\so}, \tilde \s_{sn}, \tilde \s_{so}) = (\tilde \eta_{\sn}^t, \eta_{\so})$ where $\tilde \eta_{\sn}^t = \tilde \eta(\eta_{\sn}, b, \tilde b, t\s + (1-t)\s^*,  t\tilde \s_{\so} + (1-t)\tilde \s_{\so}^*)$.
	%\[
	%\tilde \eta_{\sn}^t =  h((\eta - \sum_{\tilde s_o \notin \bar S_O}\tilde \eta_{\tilde s_o}^t f(\tilde \s{\sn}^t, \tilde s_{\tilde s_o}), %b, \hat \eta^t).
	%\] 
	%where $\tilde \eta_{\so}^t = t\tilde \eta_{\so} + (1-t)\tilde \s_{\so}$; $\tilde \s_{\sn}^t = t\tilde \s_{\sn}^* + (1-t)\tilde \s_{\sn}$; %$\tilde \s_{\so}^t = t\tilde \s_{\so}^* + (1-t)\tilde \s_{\so}$ and  $\hat \eta^t = t(\hat \eta_{\so}^*) + (1-t) \hat \eta_{\so}^1$, with %$\hat \eta_{\so}^1$ and $\hat \eta_{\so}^*$ as defined in Steps 9 and 10, respectively.
	$F^t$ defines a homotopy between $\psi \circ \hat \proj \circ \proj^g \circ H^{-1}$ and $\tilde \proj$, which is the projection map from $\tilde X_\e$ to $\tilde P_\e$.  Hence, $\tilde \proj$ is essential as a map from $(\hat E_{\d_1}, \partial \hat E_{\d_1})$ to $(Q_{\d_1}, Q_{\d_1} \backslash (\tilde P_{\d,L}^\b \backslash \partial P_{\d,L}^\b))$.  This implies that the projection map from $X_{\d, L}^\b$ to $P_{\d, L}^\b$ is essential.
\end{proof}

\end{document}